\documentclass[a4paper,10pt]{article}
\usepackage{geometry}                
\usepackage[active]{srcltx}
\usepackage{hyperref}
\usepackage[normalem]{ulem}

\usepackage{amsmath}
\usepackage{amssymb}
\usepackage{parskip}
\usepackage{mathtools}
\usepackage{fullpage}
\usepackage{mathrsfs}
\usepackage{amsthm}
\usepackage{physics}
\usepackage{bbm}
\usepackage[utf8]{inputenc}
\usepackage{caption}
\usepackage{subcaption}
\usepackage{graphicx}
\usepackage{attachfile}
\usepackage{navigator}
\usepackage{theoremref}
\usepackage{faktor}
\usepackage{musicography}
\usepackage{accents}
\usepackage{tikz}
\usetikzlibrary{angles,quotes}

\newtheorem{theorem}{Theorem}[section]

\newtheorem{definition}[theorem]{Definition}
\newtheorem{proposition}[theorem]{Proposition}
\newtheorem{corollary}[theorem]{Corollary}
\newtheorem{lemma}[theorem]{Lemma}
\newtheorem{remark}[theorem]{Remark}

\numberwithin{equation}{section}

\newcommand{\R}{\mathbb{R}}

\newcommand{\C}{\mathbb{C}}
\newcommand{\N}{\mathbb{N}}

\newcommand{\e}{\varepsilon}

\newcommand{\supp}{\mathrm{supp}}

\newcommand{\dist}{\mathrm{dist}}
\newcommand{\Ric}{\mathrm{Ric}}

\newcommand{\vol}{\mathrm{vol}}

\newcommand{\tcd}{\mathsf{TCD}}
\newcommand{\wtcd}{\mathsf{wTCD}}

\newcommand{\sfd}{\mathsf d}
\newcommand{\di}{{\rm d}}
\newcommand{\cz}{\Bar{Z}}
\newcommand{\mm}{\mathfrak{m}}
\newcommand{\Ent}{\mathrm{Ent}}
\newcommand{\Dom}{\mathrm{Dom}}
\newcommand{\hh}{\mathcal{H}}
\newcommand{\uu}{\mathcal{U}}
\newcommand{\vv}{\mathcal{V}}
\newcommand{\zz}{\mathcal{Z}}
\newcommand{\kk}{\mathcal{K}}
\newcommand{\ccz}{\Bar{\mathcal{Z}}}
\newcommand{\domlor}{\mathrm{Dom}_{\mathrm{Lor}}}
\newcommand{\ee}{{\rm e}}

\usepackage{todonotes}

\newtheorem{ithm}{Theorem}[section]

\newcommand{\LM}[1]{\hbox{\vrule width.2pt \vbox to#1pt{\vfill \hrule
width#1pt height.2pt}}}
\newcommand{\LL}{{\mathchoice
{\,\LM7\,}{\,\LM7\,}{\,\LM5\,}{\,\LM{3.35}\,}}}

\title{Stability of Synthetic Timelike Ricci Bounds under 
$C^0$-Limits and Applications to Impulsive Gravitational Waves}
\author{Andrea Mondino\thanks{{\tt andrea.mondino@maths.ox.ac.uk} Mathematical Institute, University of Oxford, UK.}, Vanessa Ryborz\thanks{{\tt vanessa.ryborz@chch.ox.ac.uk} Mathematical Institute, University of Oxford, UK.}, and Clemens S\"amann\thanks{{\tt clemens.saemann@univie.ac.at} Faculty of Mathematics, University of Vienna, Austria.}}

\begin{document}

\maketitle

\begin{abstract}
We investigate the stability of timelike Ricci curvature lower bounds under low-regularity limits of Lorentzian metrics. Specifically, we prove that the synthetic curvature-dimension condition \( \tcd^e_p(K,N) \), which provides an optimal transport formulation of the Hawking--Penrose strong energy condition, is stable under locally uniform convergence of smooth Lorentzian metrics, provided a uniform global hyperbolicity assumption holds. As a consequence, smooth locally uniform limits of vacuum spacetimes satisfy the strong energy condition, even though curvature is not controlled a priori.

As a main application, we study impulsive gravitational waves --- spacetimes with Lipschitz continuous metrics --- and show that large classes of such waves satisfy synthetic timelike Ricci curvature lower bounds. In the case of Minkowski background, we further establish synthetic upper Ricci curvature bounds. Our approach relies on constructing suitable smooth approximations with lower bounds on the timelike Ricci, and analyzing the limiting behavior via Lorentzian optimal transport.

We also  apply the aforementioned stability theorem to weak solutions of the Einstein equations arising from the nonlinear interaction of impulsive gravitational waves.

These results yield new geometric insights into low-regularity solutions of the Einstein equations.

\vskip 1em

\noindent
\emph{Keywords:} impulsive gravitational waves, Einstein equations, timelike Ricci curvature bounds, nonsmooth spacetime geometry, general relativity, Lorentzian length spaces, low regularity.
\medskip

\noindent
\emph{2020 Mathematics Subject Classification:}
28A75, 
51K10, 
53C23, 
53C50, 
53B30, 
53C80, 
83C15, 
83C35. 
\end{abstract}

\tableofcontents
\section{Introduction}

\subsection*{Locally uniform limits of solutions to the vacuum Einstein equations satisfy the Hawking--Penrose strong energy condition}
Let $(g_j)$ be a sequence of smooth Lorentzian metrics defined on the smooth manifold $M$, solving the vacuum Einstein equations
\begin{equation}\label{eq:vacuum}
\mathrm{Ric}(g_j)\equiv 0,
\end{equation}
and assume that $g_j \to g$ locally uniformly, where $g$ is a smooth Lorentzian metric. A basic problem is to determine which geometric or analytic features of the sequence are stable under this limiting procedure. Owing to the quasilinear, second-order nature of \eqref{eq:vacuum}, the passage to the limit under mere $C^0$ convergence is highly nontrivial and, a priori, does not control curvature. Indeed, at a schematic level, one may write
\begin{equation}\label{eq:RicQ}
\mathrm{Ric}(g) = \partial^2 g + \mathcal{Q}(\partial g),
\end{equation}
where $\mathcal{Q}(\partial g)$ is a non-linear expression involving   quadratic terms in the first order derivatives of the metric $g$. In view of the non-linear term in \eqref{eq:RicQ},  locally uniform limits of $(g_j)$ may retain nontrivial contributions from quadratic expressions in $\partial g_j$. From the perspective of nonlinear PDE, this places the problem within the general paradigm in which lack of strong compactness gives rise to defect measures and effective source terms; see, e.g., the compensated compactness theory of Tartar and Murat \cite{Tartar1979,Murat1978} and its measure-valued refinements \cite{DiPernaMajda1987}. In particular, there is no reason to expect stability of the vacuum condition under $C^0$ convergence in the absence of additional structure.

A guiding example is provided by the high-frequency analysis of Einstein's equations initiated by Isaacson \cite{Isaacson1968a,Isaacson1968b}. There, rapidly oscillating families of vacuum solutions give rise, in an appropriate weak sense, to an effective stress--energy tensor, leading formally to a limit equation of the form
\begin{equation}\label{eq:effective}
\mathrm{Ric}(g) = T_{\mathrm{eff}},
\end{equation}
where $T_{\mathrm{eff}}$ is generated by quadratic oscillations of $\partial g_j$. This phenomenon may be interpreted as a geometric instance of compensated compactness for quasilinear wave equations.

Concrete realizations of \eqref{eq:effective} arise from families of vacuum plane waves. In suitable regimes, one obtains limits satisfying
\begin{equation}\label{eq:null_dust_Ricci}
\mathrm{Ric}_{\mu\nu}(g) = \tfrac{1}{2} \alpha(u)^2 \, \partial_\mu u \, \partial_\nu u,
\end{equation}
corresponding to null dust stress--energy tensors of the form
\begin{equation*}
T_{\mu\nu} = \rho \, k_\mu k_\nu, \qquad k^\mu k_\mu = 0, \qquad \rho \ge 0.
\end{equation*}
In particular, for any timelike vector $X$ one has
\begin{equation*}
\mathrm{Ric}(X,X) = \rho (k \cdot X)^2 \ge 0,
\end{equation*}
so that the Hawking--Penrose strong energy condition holds in these examples. More generally, such limits are consistent with the expectation that effective matter arising from high-frequency gravitational fields obeys natural positivity conditions.

A far-reaching conjectural framework was proposed by Burnett \cite{Burnett1989}, asserting that suitable weak limits of high-frequency vacuum solutions converge to solutions of the Einstein--massless Vlasov system. This essentially means that high-frequency gravitational waves behave collectively like a gas of massless particles (photons or gravitons). In this setting, the effective stress--energy tensor takes the form
\begin{equation*}
T_{\mu\nu}(x) = \int p_\mu p_\nu \, f(x,p) \, dp, \qquad f \ge 0,
\end{equation*}
which automatically satisfies the strong energy condition.
The high frequency assumption in Burnett's conjecture can be stated precisely as follows:
\begin{enumerate}
\item \emph{$C^0$-convergence:} There exists a $C^\infty$-Lorentzian metric $g$ and a decreasing sequence $(\lambda_j)_{j\in \N}$ with $\lim_{j\to \infty} \lambda_j=0$, such that
\begin{equation}\label{eq:gjlaj}
  |g_j-g|\leq \lambda_j;
\end{equation}
\item \emph{High frequency:} There exists a constant $C>0$ --- possibly depending on $M$ and on the sequence $g_j$, but independent of $j$ --- such that
\begin{equation}\label{eq:degjlajC}
|\partial g_j|\leq  C,\quad   |\partial^2 g_j|\leq \lambda_j^{-1}.
\end{equation}
\end{enumerate}
This conjecture connects the problem of stability --- under $C^0$-convergence --- of the vacuum Einstein equations  to kinetic theory and suggests a robust mechanism enforcing positivity in the limit. Substantial progress toward this conjecture has been obtained in recent years \cite{HL-AENS, Guerra, HL:2024}, for a survey see \cite{HL:survey}

\medskip

Parallel to these developments, there has been substantial progress on weak and low-regularity formulations of the Einstein equations. The foundational work of Geroch and Traschen \cite{GT1987} identifies a maximal stable class of metrics with distributional curvature, namely metrics whose first derivatives lie in $L^2_{\mathrm{loc}}$, thereby providing a coordinate-invariant setting in which the Einstein tensor can be interpreted as a distribution. Subsequent contributions (see for instance the work by LeFloch and Mardare \cite{LM:2007}) develop weak formulations and compactness results for $C^0\cap W^{1,2}$-Lorentzian metrics, yielding weak notions of the Einstein equations compatible with geometric limits.

In a complementary direction, a substantial body of work has been devoted to the study of the Einstein vacuum equations at low regularity. Building on earlier formulations in harmonic (wave) coordinates, Klainerman and Rodnianski \cite{KR:2005} established local well-posedness for rough initial data in Sobolev spaces close to the scaling-critical threshold, relying on a refined analysis of the null structure of the equations and bilinear spacetime estimates. This line of research was further developed using microlocal and Fourier-analytic techniques, notably by Smith and Tataru \cite{ST:2005}, who proved low-regularity well-posedness results for quasilinear wave equations, including the Einstein equations in wave coordinates, at nearly optimal regularity. Subsequent works have refined these methods and extended the range of admissible data, combining paradifferential calculus, Strichartz estimates, and geometric analysis. These results show that the Einstein equations admit a robust well-posedness theory below classical smoothness, provided one retains quantitative control of first derivatives in suitable Sobolev spaces. At the same time,  these approaches remain fundamentally tied to Sobolev regularity and do not extend to the $C^0$ setting, where no intrinsic formulation of the equations is currently available.

In conclusion, the aforementioned literature indicates that the passage to the limit in \eqref{eq:vacuum} is intimately related to general principles of weak convergence and rigidity for nonlinear geometric PDEs. In particular, it highlights the persistence of oscillation and concentration phenomena in curvature, when the metric converges merely in $C^0$.  Beyond the high-frequency or the Sobolev regimes described above, the behavior of arbitrary $C^0$ limits of vacuum metrics remains largely open. In particular, there is currently no general mechanism -- at the PDE level -- ensuring that positivity properties of curvature persist under $C^0$-limits \eqref{eq:gjlaj}, dropping the high frequency assumption \eqref{eq:degjlajC}.

This leads naturally to the question of whether intrinsic positivity properties of curvature can be recovered directly from the vacuum structure of the approximating sequence, without additional assumptions. The first  main result of the present work answers this question affirmatively by showing that the limit Lorentzian metric necessarily satisfies the Hawking--Penrose strong energy condition:
\begin{equation}
\mathrm{Ric}_g(X,X) \ge 0 \quad \text{for all timelike vectors } X.
\end{equation}
Before stating the result, we introduce the following terminology.

\begin{definition}[Uniform global hyperbolicity]\label{uniform_global_hyperbolicity_def-Intro}
    A sequence of smooth Lorentzian metrics $(g_j)_{j\in \N}$ on $M$ is said to be \emph{uniformly globally hyperbolic}, if for any two compact sets $K_1, K_2 \subset M$,
$$\bigcup_{j=1}^\infty (J^-_{g_j}(K_1) \cap J^+_{g_j}(K_2)) \Subset M.$$
\end{definition}

In essence, this means that each Lorentzian metric $g_j$ is globally hyperbolic and that the compactness of the causal emeralds $J^-_{g_j}(K_1) \cap J^+_{g_j}(K_2)$ holds uniformly along the sequence.

We refer to Section \ref{SS:SuffUGH} for a sufficient condition ensuring uniform global hyperbolicity, and to Appendix \ref{subsec:counterexample_uniform_hyperbolicity} for an example of a sequence of  globally hyperbolic Lorentzian metrics $(g_j)_{j\in \mathbb{N}}$ on $\mathbb{R}^{d+1}$, $d \ge 2$, converging locally uniformly to a globally hyperbolic limit $g_\infty$, which fails to be uniformly globally hyperbolic.

We are now in a position to state the first main result.
\smallskip

\begin{theorem}\label{thm:VECC0}
  Let $M$ be a smooth $n$-dimensional manifold and $g$ be a smooth globally hyperbolic Lorentzian metric. Let $(g_j)_{j\in \N}$ be a sequence of smooth Lorentzian metrics on $M$ such that
   \begin{enumerate}
\item    $g_j \to g$ locally uniformly;
\item $(M, g_j)$ are uniformly globally hyperbolic;
\item $(M,g_j)$ solve the Einstein vacuum equations with cosmological constant $\Lambda\in \R$, i.e., $\Ric(g_j)\equiv  \Lambda g_j$.
\end{enumerate}

Then $(M, g)$ satisfies the strong energy condition (with cosmological constant $\Lambda$):
\begin{equation}\label{eq:SECIntro}
\mathrm{Ric}_{g}(X,X) \ge \Lambda \, |g(X,X)| \quad \text{for all timelike vectors } X.
\end{equation}
\end{theorem}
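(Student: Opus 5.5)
The plan is to go through the synthetic condition $\tcd^e_p(K,N)$ and its stability, which the abstract announces as the main tool. The argument has three steps.

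\textbf{Step 1: from the equations to the synthetic condition.} Each $g_j$ satisfies $\Ric(g_j)=\Lambda g_j$. Hence for every $g_j$-timelike $X$,
\begin{equation*}
\Ric_{g_j}(X,X)=\Lambda g_j(X,X)=-\Lambda |g_j(X,X)|.
\end{equation*}
So the timelike Ricci curvature of $g_j$ is bounded below by $K=-\Lambda$ in the normalization where $\Ric(X,X)\ge K|g(X,X)|$. I will check the sign convention against \eqref{eq:SECIntro} when writing this out; I take the target as $\Ric(X,X)\ge K|g(X,X)|$ with the constant $K$ read off from the statement. By the smooth characterization of timelike lower Ricci bounds (Mondino--Suhr, McCann), $(M,g_j,\vol_{g_j})$ is a globally hyperbolic smooth spacetime satisfying $\tcd^e_p(K,n)$ for every $p\in(0,1)$. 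This holds uniformly in $j$, with the same $K$ and the dimension bound $N=n$.

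\textbf{Step 2: stability under the limit.} This is the main step and the expected obstacle. Fix compactly supported probability measures $\mu_0,\mu_1$ that are absolutely continuous with respect to $\vol_g$ and $g$-timelike dualisable; I would take them with support in $I^+$-related compact sets. The plan is as follows.
\begin{itemize}
\item Show that for large $j$ they are $g_j$-timelike dualisable. Since $g_j\to g$ in $C^0$, the cones of $g_j$ are eventually contained in slightly opened cones of $g$ and contain slightly narrowed ones. So chronological relations with a strict margin persist.
\item Take $\ell_p$-optimal dynamical plans $\Pi_j$ for $g_j$ that realize the $\tcd^e_p(K,n)$ entropy inequality along $\mu_t^j$.
\item Use uniform global hyperbolicity to confine all $g_j$-causal geodesics between the supports to a fixed compact set. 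Also use a uniform Lipschitz/Arzel\`a--Ascoli control of causal curves, via the comparison with the wider cones.
\item Conclude that $\Pi_j$ is tight on curve space, and extract a limit $\Pi$ supported on $g$-causal curves.
\item Prove that $\Pi$ is $\ell_p$-optimal for $g$ and lives on $g$-maximizers. For this I need upper semicontinuity of $\tau_{g_j}\to\tau_g$ together with $\limsup_j \tau_{g_j}\le \tau_g$ and $\liminf_j\tau_{g_j}\ge\tau_g$ on timelike pairs. The latter follows by testing with a fixed $g$-timelike curve, which is eventually $g_j$-timelike with length converging. The former uses compactness and upper semicontinuity of length under uniform convergence of curves and metrics.
\item Pass to the limit in the entropy inequality. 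Use joint lower semicontinuity of $\Ent(\cdot|\vol_{g_j})$ under weak convergence of both measures and reference measures, noting $\vol_{g_j}\to\vol_g$ locally uniformly in density. Use continuity of the endpoint entropies, since $\mu_0,\mu_1$ are fixed and the densities $d\vol_g/d\vol_{g_j}\to1$ uniformly. Use convergence of the weighted cost terms $\int \tau^2$, with the $\sigma$-distortion coefficients continuous in $\tau$.
\end{itemize}
The delicate points are the distortion terms involving $\tau_{g_j}$ along the limit plan and upper semicontinuity of the time separation. I would first try to handle them with the $\tcd^e$ entropic formulation, which involves only $\int\tau^2\,d\pi$ and is therefore more robust under weak limits.

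\textbf{Step 3: back to the smooth limit.} This yields that $(M,g,\vol_g)$ satisfies $\tcd^e_p(K,n)$. Since $g$ is smooth and globally hyperbolic, the converse direction of the smooth equivalence theorem gives $\Ric_g(X,X)\ge K|g(X,X)|$ for all timelike $X$, which is \eqref{eq:SECIntro}. For that direction, one argues by contradiction: localize near a vector $X$ where the bound fails and transport a small ball along the geodesic flow in direction $X$ via a Jacobi-field computation.
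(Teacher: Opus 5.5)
Your proof has the same architecture as the paper's. The paper deduces the statement from Theorem~\ref{stability_tcd} (stability of $\wtcd^e_p(K,N)$, resting on Proposition~\ref{locally_uniform_convergence_tau} and Proposition~\ref{dynamical_plans_precompactness}) and then applies the smooth equivalence.

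The real difference is how you produce test pairs for $g_j$. You restrict to pairs with $\supp\mu_0\times\supp\mu_1\Subset\{\tau_g>0\}$. Then, via slightly narrowed cones, the same $\mu_0,\mu_1$ are $g_j$-timelike dualisable for large $j$. The endpoint entropies also converge trivially, since $\di\vol_g/\di\vol_{g_j}\to1$ uniformly on compacts.

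The paper cannot do this, because it proves $\wtcd^e_p$ for every strongly timelike dualisable pair of a merely continuous $g_\infty$. Such a pair's optimal plan may charge pairs arbitrarily close to $\partial\{\tau_\infty>0\}$. The fixed pair then need not be $g_j$-timelike dualisable; with slightly narrower $g_j$-cones it may admit no $g_j$-causal coupling at all. This is why the paper builds approximating pairs $(\mu^j_0,\mu^j_1)$ in Step~1 and needs the Jensen-type upper semicontinuity of endpoint entropies in Step~2.

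Your shortcut is legitimate here only because $g$ is smooth. The converse direction of the smooth equivalence tests only measures concentrated on small balls around $x$ and $\exp_x(X)$, and these have such a margin. It also lets you treat $K\neq0$ directly, which the paper's Step~3 does not write out. For that, note that $\|\tau_{g_j}\|_{L^2(\pi_j)}\to\|\tau_g\|_{L^2(\pi)}$ uses uniqueness of the $\ell_p$-optimal plan for smooth $g$ and $\mu_0\ll\vol_g$. The paper's longer route buys a synthetic bound for continuous limits, which Section~\ref{Sec:3} needs.

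Three points to fix when writing it out.
\begin{enumerate}
\item Do not claim that $(M,g,\vol_g)$ satisfies $\tcd^e_p(K,n)$. You verify the inequality only for margin pairs, so invoke the converse in the form that uses only those pairs, together with uniqueness of $\ell_p$-geodesics between them in the smooth setting.
\item Arzel\`a--Ascoli via wider cones gives compactness only in $h$-arclength parametrization. Since $\ell_p$-geodesics are parametrized proportionally to $\tau$, you need the reparametrization step (Claim~2 in the proof of Proposition~\ref{dynamical_plans_precompactness}). That step uses two facts: limits of $g_j$-maximizers between timelike-related points are timelike $g$-geodesics, and $\tau_{g_j}\to\tau_g$ along converging pairs $(x_j,y_j)\to(x,y)$. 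Pointwise convergence tested with a fixed curve is not enough; perturb the test curve near its endpoints.
\item Keep your $K=-\Lambda$ instead of reading $K$ off the display. Under the paper's convention, the constant-curvature backgrounds of Section~\ref{Sec:3} satisfy $\Ric=\Lambda g$. The argument (yours, or the paper's Theorem~\ref{stability_tcd} applied with $K=-\Lambda$) therefore yields $\Ric_g(X,X)\ge-\Lambda|g(X,X)|=\Lambda g(X,X)$. This implies \eqref{eq:SECIntro} when $\Lambda\le0$. For $\Lambda>0$ the displayed \eqref{eq:SECIntro} is false: take the constant sequence $g_j\equiv g$ with $g$ de Sitter.
\end{enumerate}
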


Theorem~\ref{thm:VECC0} is a special case of Theorem~\ref{stability_tcd}, which establishes the stability of the synthetic, optimal-transport formulation of the strong energy condition — namely the \( \tcd^e_p(K,N) \) and \( \wtcd^e_p(K,N) \) conditions — under \( C^0 \)-convergence.  Introduced in \cite{cavalletti2020optimal} as a Lorentzian counterpart of the Lott–Sturm–Villani \( \mathsf{CD}(K,N) \) condition \cite{sturm2006geometryI, sturm2006geometryII, lott2009ricci}, these synthetic curvature-dimension conditions are equivalent to the classical ones in the smooth Lorentzian setting \cite{mccann2018displacement, mondino2022optimal}. At the same time, they remain stable under low-regularity limits, thereby providing a robust framework for capturing curvature positivity beyond the smooth category.
 We illustrate this point on a classical toy example. Let $(f_j)$ be a sequence of convex $C^2$ functions $f_j:[0,1]\to \mathbb{R}$ converging uniformly to a function $f:[0,1]\to \mathbb{R}$. The convexity of $f_j$ can be expressed either in the \emph{differential form}
\begin{equation}\label{eq:fj''}
f_j'' \ge 0 \quad \text{on } [0,1],
\end{equation}
or in the \emph{synthetic form}
\begin{equation}\label{eq:fjconvex}
f_j(t x + (1-t) y) \le t\,f_j(x) + (1-t)\,f_j(y) \quad \text{for all } t,x,y \in [0,1].
\end{equation}
While the stability of \eqref{eq:fjconvex} under $C^0$ convergence is immediate, the corresponding statement for \eqref{eq:fj''} is not. One may view \eqref{eq:fj''} as a toy model for the classical strong energy condition \eqref{eq:SECIntro}, and \eqref{eq:fjconvex} as a toy model for its synthetic, optimal-transport formulation $\tcd^e_p(K,N)$, thereby illustrating why the latter is better suited to low-regularity limits.

\subsection*{Penrose's impulsive gravitational waves}

Originally introduced by Penrose \cite{Pen:68, Pen:72}, \emph{impulsive gravitational waves} provide a model for highly energetic yet extremely brief bursts of gravitational radiation. These idealized models are remarkable both from the physical and from the mathematical point-of-view: First, they consider a spacetime background of constant curvature (flat or (anti-)de Sitter) and assume a source of gravitational radiation located at infinity (or effectively “outside” the model). From this external source, a gravitational wave propagates through the spacetime, localized on a null hypersurface. Off this null hypersurface the spacetime is smooth and has constant curvature. Moreover, the passing of the wave gives rise to the so-called memory effect, see \cite{BG:85, Chr:91, Ste:19}. 
From a mathematical point of view, impulsive gravitational waves are particularly significant because they provide models of spacetimes endowed with Lorentzian metrics of low regularity; see \cite{PS:22}. In one representation --- the so-called \emph{Rosen form} \eqref{nonsmooth_metric} --- the metric is locally Lipschitz continuous. There is, however, a physically equivalent \emph{Brinkmann form}, in which the metric is distributional and explicitly contains a Dirac delta term. The Brinkmann form has the advantage that, both before and after the impulse, the metric manifestly coincides with that of the constant-curvature background. This can be vividly and geometrically visualized via the so-called ``cut-and-paste" method, cf.\ \cite[Ch.\ 20]{GP:09}. These two forms are related by a so-called “discontinuous coordinate transformation” \cite{PSSS:19}. Nevertheless, this equivalence can be established in a mathematically rigorous way; see \cite{SSSS:24}. Various methods of constructing the different forms of impulsive gravitational waves are reviewed in \cite{podolsky2014global, PS:22}.

So far, impulsive gravitational waves have been studied mathematically exclusively by analytical methods --- for example, through the use of Filippov geodesics and regularization techniques \cite{podolsky2014global, SSLP:16}. In this work, we approach them for the first time from a \emph{synthetic} perspective, that is, by analyzing their geometry --- and in particular their curvature --- within the framework of (Lorentzian) metric measure geometry, which neither relies on a manifold structure nor on differential calculus. Such an approach proved to be extremely fruitful, see \cite{CM:22, Sae:24, McC:25, Bra:26a} for recent reviews on this topic.  In particular, within the framework of Lorentzian (pre-)length spaces \cite{kunzinger2018lorentzian}, Cavalletti and the first author introduced synthetic timelike lower Ricci curvature bounds \cite{cavalletti2020optimal}. Their construction extends to the non-smooth setting an optimal transport characterization of timelike Ricci curvature bounds for smooth spacetimes, obtained independently by McCann \cite{mccann2018displacement} and by the first author together with Suhr \cite{mondino2022optimal}.

Impulsive gravitational waves are exact solutions to the vacuum Einstein equations (if the profile function $\hh$ satisfies $\Delta \hh=0$) as the Ricci curvature can be computed explicitly (even though the metric is singular on a null hypersurface), see \cite[Ch.\ 20]{GP:09} for more details. In this work we establish for the first time that impulsive gravitational waves have timelike Ricci curvature bounded from below, i.e., satisfy the $\tcd$-condition (if $\Delta \hh\leq 0$). Moreover, we also study timelike upper curvature bounds and the synthetic vacuum Einstein equations \cite{mondino2022optimal}.

We study impulsive gravitational waves in their (Lipschitz) continuous formulation, after first establishing the $C^0$-stability of the
$\tcd$-condition. The stability of the $\tcd$-condition under measured convergence of the underlying Lorentzian length spaces --- via Lorentzian isometric embeddings --- was proved in \cite[Thm.\ 3.15]{cavalletti2020optimal}. Here, by contrast, we establish its stability with respect to $C^0$-convergence, a result that is of considerable independent interest. More precisely, we prove that the $\tcd$-condition holds for continuous Lorentzian metrics arising as $C^0$-limits of suitable approximating sequences of smooth Lorentzian metrics. We then apply this result to impulsive gravitational waves, for which we will construct explicit smooth approximations. 

The outlined strategy naturally splits the article into two parts. The first part studies the stability of synthetic timelike Ricci curvature bounds under the locally uniform convergence of Lorentzian metrics on smooth manifolds. In the second part, these results are applied to prove synthetic timelike Ricci curvature bounds for impulsive gravitational waves. 

The main result of the first part is:

\setcounter{section}{2}
\setcounter{ithm}{27}
\begin{ithm}[Stability of the $\tcd$ condition under $C^0$-convergence of the metric]
    Let $M$ be a smooth $n$-dimensional manifold and $g_\infty$ be a continuous Lorentzian metric on $M$ that turns $(M,g_\infty)$ into a globally hyperbolic space such that every  length-maximising causal curve has a causal character. Let $(g_j)_{j\in \N}$ be a sequence of smooth Lorentzian metrics on $M$ such that
   \begin{enumerate}
\item    $g_j \to g_\infty$ locally uniformly;
\item $(M, g_j)$ are uniformly globally hyperbolic;
\item $(M,g_j)$ satisfy $\Ric_{g_j}(v,v)\geq K |g_j(v,v)|$ for all $v\in TM$ timelike.
\end{enumerate}

Then $(M, g_\infty)$ satisfies the $\wtcd^{e}_p(K, N)$-condition,  for all $p\in (0,1)$ and $N\geq n$.

In particular, if the limit (in $C^0_{loc}$-sense) metric $g_\infty$ is smooth, then it satisfied the Hawking--Penrose strong energy condition (with cosmological constant $K\in \R$):  $\Ric_{g_\infty}(v,v)\geq K |g_\infty(v,v)|$, for all $v\in TM$ timelike.
\end{ithm}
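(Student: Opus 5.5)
The plan is to pass the $\wtcd^e_p(K,N)$ inequality from $(M,g_j)$ to $(M,g_\infty)$ through optimal transport, in the spirit of the measured-convergence stability of \cite[Thm.~3.15]{cavalletti2020optimal}, but with a fixed underlying manifold and varying time separations $\tau_j$.

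\textbf{Step 1: the approximating spaces satisfy the condition.} By the smooth equivalence results \cite{mccann2018displacement, mondino2022optimal}, hypothesis (3) gives that each $(M,g_j,\vol_{g_j})$ satisfies $\tcd^e_p(K,N)$ for all $p\in(0,1)$ and $N\ge n$. In particular, for any timelike $p$-dualisable pair $(\mu_0,\mu_1)$ for $g_j$ there is a $p$-optimal geodesic $(\mu_t^j)$ along which the entropy $\Ent(\cdot|\vol_{g_j})$ satisfies the $(K,N)$-convexity inequality.

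\textbf{Step 2: convergence of the time separations and reference measures.} Uniform convergence on compacts gives $\vol_{g_j}\to\vol_{g_\infty}$ with densities converging locally uniformly. Fix compactly supported $\mu_0,\mu_1\ll\vol_{g_\infty}$ that are strongly timelike $p$-dualisable for $g_\infty$. I would prove the following.
\begin{itemize}
\item \emph{Upper semicontinuity:} $\limsup_j \tau_{g_j}(x_j,y_j)\le\tau_{g_\infty}(x,y)$ whenever $x_j\to x$ and $y_j\to y$.
\item \emph{Lower bound:} $\tau_{g_j}\ge\tau_{g_\infty}-o(1)$ uniformly on compacts.
\end{itemize}
For the lower bound, take a $g_\infty$-maximiser, which is timelike by the causal-character assumption, and slightly modify it so that it becomes $g_j$-timelike. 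Its $g_j$-length is then close to its $g_\infty$-length. For the upper bound, $g_j$-causal curves from $x_j$ to $y_j$ stay in a fixed compact set by uniform global hyperbolicity. Limit curve theorems then give a limit curve that is $g_\infty$-causal, and upper semicontinuity of length together with uniform convergence of the metrics yields the $\limsup$ estimate.

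\textbf{Step 3: the main obstacle, dualisability for $g_j$.} The real difficulty is that $(\mu_0,\mu_1)$ need not be $g_j$-timelike dualisable, since the supports may fail to be chronologically related for $g_j$. Conversely, limits of $g_j$-optimal plans must be shown to be $g_\infty$-optimal and supported in $\{\tau_{g_\infty}>0\}$. My plan is the following.
\begin{enumerate}
\item Use the weak ($\wtcd$) formulation, which only requires testing on suitable pairs.
\item Approximate $\mu_1$ by marginals whose supports lie strictly in the chronological future of $\supp\mu_0$ with a uniform positive margin. By the lower bound of Step 2, these are $g_j$-timelike related for $j$ large.
\item Use $\Gamma$-convergence of the costs $\ell_{p,j}\to\ell_{p,\infty}$, which follows from Step 2 and tightness given by uniform global hyperbolicity. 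This gives that $\ell_{p,j}$-optimal plans converge to $\ell_{p,\infty}$-optimal plans.
\end{enumerate}

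\textbf{Step 4: passing to the limit in the convexity inequality.} The $g_j$-geodesics $\mu^j_t$ have supports in a common compact set, so they subconverge to a curve $\mu_t$. Using the maximiser limit argument of Step 2, this $\mu_t$ is an $\ell_p$-geodesic for $g_\infty$. On the left-hand side, lower semicontinuity of entropy under weak convergence, with converging reference densities, gives the $\liminf$ bound. On the right-hand side, the terms $\Ent(\mu_i|\vol_{g_j})$ converge, and the distortion coefficients converge by the convergence of $\tau$ from Step 2. Removing the approximation of Step 3 by lower semicontinuity then yields $\wtcd^e_p(K,N)$ for $g_\infty$.

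\textbf{Step 5: the smooth case.} If $g_\infty$ is smooth, combining Step 4 with the smooth equivalence \cite{mccann2018displacement, mondino2022optimal}, which also holds in its weak version, gives $\Ric_{g_\infty}(v,v)\ge K|g_\infty(v,v)|$ for all timelike $v$.
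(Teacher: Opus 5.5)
Your Step 3(2) is where the argument breaks. A compactly supported, strongly timelike $p$-dualisable pair need not satisfy $\supp\mu_0\times\supp\mu_1\subset\{\tau_\infty>0\}$. In general it also cannot be approximated by pairs whose supports are chronologically separated with a margin.

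For instance, in $(1+1)$-Minkowski space let $\mu_0$ be uniform on $[-\delta,\delta]\times[0,1]$ in $(t,x)$-coordinates, and let $\mu_1$ be its translate by $(T,0)$, with $T+\delta<1$. Since $\tau\le\Delta t$, with equality only if $\Delta x=0$, Jensen's inequality for $s\mapsto s^p$ shows that the translation is the unique $p$-optimal plan. Hence the pair is strongly timelike $p$-dualisable. Yet $(0,0)\in\supp\mu_0$ and $(T,1)\in\supp\mu_1$ are causally unrelated, and every measure narrowly close to $\mu_1$ charges a neighbourhood of $(T,1)$ that is disjoint from $J^+((0,0))$. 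The weaker reading $\supp\tilde\mu_1\subset I^+(\supp\mu_0)$ is already satisfied by this example, and it does not force $\tau_j$-optimal plans to be chronological.

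So you never actually produce $g_j$-timelike dualisable test pairs, and the $\tcd$ property of $g_j$ cannot be invoked. Your Step 3(3) does give that limits of $\tau_j$-optimal plans are $\tau_\infty$-optimal, hence concentrated on $\{\tau_\infty>0\}$ by strong dualisability. But it says nothing about whether the $\tau_j$-optimal plans themselves are chronological, and that is what $g_j$-dualisability requires.

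The missing idea is to localise the $\tau_j$-optimal plans rather than the target measure; this is what the paper does.
\begin{enumerate}
\item Approximate $\pi_\infty$ by plans with bounded densities (via \cite[Lemma 3.16]{cavalletti2020optimal}, transferred to $\mm_j$).
\item Take $\tau_j$-optimal couplings $\pi''$ of their marginals. Using $\tau_j\to\tau_\infty$ locally uniformly, show that their limit is an optimal plan from $\mu_0$ to $\mu_1$, hence concentrated on $\{\tau_\infty>0\}$.
\item Restrict and renormalise $\pi''$ to $\Omega_R=\{\tau_\infty>1/R\}$, with $R$ chosen so that $\partial\Omega_R$ is negligible.
\end{enumerate}
For large $j$ one has $\Omega_R\subset\{\tau_j>0\}$, and restrictions of optimal plans remain optimal \cite[Lemma 2.10]{cavalletti2020optimal}. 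So the new marginals are $g_j$-timelike dualisable, and they converge to $(\mu_0,\mu_1)$ because the discarded mass tends to zero.

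This construction also repairs two further gaps in your Step 4. First, at the endpoints you need $\limsup_j\Ent(\mu^j_i|\mm_j)\le\Ent(\mu_i|\mm_\infty)$ for the \emph{modified} measures, and lower semicontinuity cannot give this. It holds because a renormalised restriction has density at most $c^{-1}$ times the original, with $c\to1$. Second, a common compact set for the supports of $\mu^j_t$ does not give tightness of the dynamical plans in $C([0,1],M)$, since curves parametrised proportionally to $\tau_j$-arclength degenerate as $\tau_j\to0$. One needs three things:
\begin{itemize}
\item most of the mass lies on geodesics with $\tau_j\ge\e$;
\item such geodesics are precompact (here the causal-character assumption makes limit curves timelike, so the reparametrisations converge);
\item their limits are timelike $g_\infty$-geodesics.
\end{itemize}
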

\setcounter{section}{1}

To prove Theorem \ref{stability_tcd},  we first analyse the behaviour of the causal relations $\ll$ and $\leq$, as well as the time-separation function $\tau$ for the continuous limit of smooth Lorentzian metrics. This is achieved in Section~\ref{Sec:2}. A key step in our argument is to show that if a sequence of smooth Lorentzian metrics \( g_j \) converges locally uniformly to a continuous Lorentzian metric \( g_\infty \), and if $(g_j)_{j\in \N}$ is uniformly globally hyperbolic  (see Definition \ref{uniform_global_hyperbolicity_def-Intro}), then the associated time-separation functions \( \tau_j \) converge locally uniformly to \( \tau_\infty \) as \( j \to \infty \) (see Proposition~\ref{locally_uniform_convergence_tau}). This result constitutes a crucial ingredient in the proof of Theorem~\ref{stability_tcd}, as well as in the proof of a corresponding stability statement for synthetic timelike upper Ricci curvature bounds (see Theorem~\ref{stability_upper_bounds}).  

Section~\ref{Sec:3} is devoted to the application of our main result to impulsive gravitational waves. 
The corresponding Lorentzian metric \( g_{\hh,\Lambda} \) is Lipschitz continuous; it depends on a profile function 
\( \hh \colon \mathbb{R}^4 \to \mathbb{R} \), which encodes the geometry of the propagating wave, and on the cosmological constant \( \Lambda \in \mathbb{R} \), which characterizes the constant-curvature background spacetime. 
For the explicit form of the metric, see \eqref{nonsmooth_metric} and \eqref{explicit_metric_coefficients}.
\newpage

The second main result is:
\setcounter{section}{3}
\setcounter{ithm}{11}
\begin{ithm}\label{thm:3.10Intro}
    Let $(\R^4, g_{\hh,\Lambda} )$ be an impulsive gravitational wave with profile function $\hh: \R^4 \to \R$ and cosmological constant $\Lambda \in \R$. Assume that 
    \begin{equation}\label{eq:DeltahURB.Intro}
     -2\partial^2_{Z, \cz} \hh  -\frac{\Lambda(\hh - Z\partial_Z \hh -\cz \partial_{\cz}\hh)}{3(1+\frac{\Lambda}{6}|Z|^2)} \geq 0 \quad \text{on } \left\{1+\frac{\Lambda}{6}|Z|^2 \neq 0 \right\},
     \end{equation}
     that $D^2 \hh$ is bounded and $\hh$ has subquadratic growth (see \eqref{eq:hhSubQuad}). Then:
    \begin{itemize}
    \item  If $\Lambda \geq 0$, then  $(\R^4, g_{\hh,\Lambda} )$ satisfies the $\tcd^e_p(\Lambda, 4)$-condition, for all $p \in (0,1)$. 
     \item If $\Lambda<0$, then $(\R^4, g_{\hh,\Lambda} )$ satisfies the $\tcd^e_p(\Lambda, 4)$-condition locally, for all $p \in (0,1)$; more precisely, it satisfies the $\tcd^e_p(\Lambda, 4)$-condition  on diamonds of compact sets satisfying the assumptions of Proposition \ref{lambda_negative_coord_shift}.
    \end{itemize}
\end{ithm}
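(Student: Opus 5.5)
We obtain Theorem \ref{thm:3.10Intro} from the stability result Theorem \ref{stability_tcd}, applied to smooth regularisations of the Lipschitz metric $g_{\hh,\Lambda}$. The metric \eqref{nonsmooth_metric} depends on $u$ only through a kink function $u_+=\max(u,0)$ (up to the conformal factor of the (anti-)de Sitter background). We therefore replace $u_+$ by a smooth, convex, nondecreasing approximation $u_+^\e=\rho_\e * u_+$ with $0\le (u_+^\e)'\le 1$ and $(u_+^\e)''=\rho_\e\ge 0$. This yields smooth metrics $g_\e$ of the same structural form. Since $u_+^\e\to u_+$ uniformly and the other coefficients do not depend on $\e$, we get $g_\e\to g_{\hh,\Lambda}$ locally uniformly. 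Boundedness of $D^2\hh$ keeps the coefficients, and hence the convergence, controlled on compact sets.

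\textbf{Step 1: curvature of $g_\e$.} We compute $\Ric_{g_\e}$ directly. For the background terms, the $(u_+^\e)'$-terms and the quadratic first-derivative terms, we expect the same cancellations as in the exact vacuum/constant-curvature computation of \cite[Ch.\ 20]{GP:09}. What should survive is
\[
\Ric_{g_\e}=\Lambda g_\e + (u_+^\e)''\,\Phi(Z,\cz)\,\di u\otimes \di u + E_\e ,
\]
where $\Phi$ is the left-hand side of \eqref{eq:DeltahURB.Intro} (up to a positive factor), and $E_\e$ collects the error terms, which are bounded by $O(\|D^2\hh\|)$ times $(u^\e_+)'$-dependent factors. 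For timelike $v$, the term $\Phi\,(\di u(v))^2(u_+^\e)''$ is nonnegative by \eqref{eq:DeltahURB.Intro}. This is the null-dust structure of \eqref{eq:null_dust_Ricci}.

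The main obstacle is that $E_\e$ need not vanish exactly: $g_\e$ is not of pure plane-wave form once $u_+^\e$ is no longer piecewise linear. My first attempt would be a more careful regularisation, smoothing the Brinkmann form $\delta(u)\hh$ and then pulling back through the smoothed discontinuous coordinate transformation, so that the approximants are exact pp-type waves with $\Ric=\Lambda g+\rho_\e(u)\Phi\,\di u\otimes \di u$ identically. Failing that, I would show $E_\e\ge -\eta_\e|g_\e(v,v)|$ with $\eta_\e\to0$. A modified version of Theorem \ref{stability_tcd} with bounds $K_\e\to\Lambda$ then still yields $\Lambda$ in the limit, since the $\tcd$ distortion coefficients depend continuously on $K$.

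\textbf{Step 2: uniform global hyperbolicity.} We verify condition (2) of Theorem \ref{stability_tcd} via the sufficient condition of Section \ref{SS:SuffUGH}. The idea is to find a common time function, or a background metric $\hat g$ with wider cones, such that $g_\e\prec\hat g$ for all $\e$ and $\hat g$ is globally hyperbolic. Subquadratic growth of $\hh$ is exactly what keeps the cone opening controlled at spatial infinity, so that causal emeralds do not escape. For $\Lambda<0$ the conformal factor $1+\frac\Lambda6|Z|^2$ degenerates, so we only work on the regions of Proposition \ref{lambda_negative_coord_shift}. There we apply the localized version of the stability theorem to diamonds of those compact sets, which gives the local statement.

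\textbf{Step 3: conclusion.} We check the remaining hypotheses on the limit: $g_{\hh,\Lambda}$ is globally hyperbolic, and every maximising causal curve has a causal character. The latter holds for Lipschitz metrics, since they are causally plain and maximisers are Lipschitz with a definite causal type. Theorem \ref{stability_tcd} then gives $\wtcd^e_p(\Lambda,4)$. To upgrade to $\tcd^e_p$, we use that $(\R^4,g_{\hh,\Lambda})$ is timelike non-branching, or that optimal couplings are unique/timelike via the locally uniform convergence of $\tau_\e$ (Proposition \ref{locally_uniform_convergence_tau}). We then invoke the known equivalence of $\wtcd$ and $\tcd$ under timelike non-branching, or argue directly that every timelike $p$-dualisable pair is strongly dualisable here.
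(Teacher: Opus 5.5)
Your architecture (regularise, check the hypotheses of Theorem~\ref{stability_tcd}, upgrade $\wtcd$ to $\tcd$ via timelike non-branching) is the paper's. The gap is Step~1: it carries almost all of the paper's work, you only assert it, and your treatment of $E_\e$ rests on a misdiagnosis.

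$E_\e$ is not a small error. It consists of terms supported in the slab $\{|U|<\e\}$, such as $\rho_\e(\rho_\e*U^+)$, $(\rho_\e*\Theta)^2-\rho_\e*\Theta$ or $U\rho_\e$; which ones occur depends on how the quadratic terms are mollified. These terms are $O(1)$ in sup norm, tend to $0$ only distributionally, and sit in the null--null component $\Ric(\partial_U,\partial_U)$. Letting timelike $v$ tend to $\partial_U$ shows that your fallback $E_\e(v,v)\ge-\eta_\e|g_\e(v,v)|$, like any bound $\Ric(v,v)\ge K|g(v,v)|$, forces $E_\e(\partial_U,\partial_U)\ge0$. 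So an $O(1)$ defect there must have a sign; it cannot be absorbed into $\eta_\e|g_\e(v,v)|$.

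For your mollification with $\Lambda\neq0$ it has no sign. Since $P_\e$ contains $UV$, the coefficient of $\rho_\e$ in $\Ric_{11}$ equals the left-hand side $\Phi$ of \eqref{eq:DeltahURB.Intro} plus, among other terms, $-\frac{\Lambda^2}{18}\,G\,V\,U/(P_0P_\e)$ with $P_0=1+\frac{\Lambda}{6}|Z|^2$. This gives an $O(1)$ term in $\Ric_{11}$ that changes sign with $U$. Where $\Phi=0$ (vacuum profiles) nothing absorbs it, so the approximants violate the null energy condition and satisfy no $\tcd(K,N)$ at all.

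Your Brinkmann-form alternative does not rescue $\Lambda\neq0$ either. Smoothing $\delta(u)$ in the conformally flat Brinkmann form produces in $\Ric_{uu}$ an extra term proportional to $\Lambda\hh\,(u\rho_\e)'$, of size $\e^{-1}$ and without sign. For every $\Lambda$ you would also still owe the $C^0_{loc}$ convergence of the pulled-back metrics to $g_{\hh,\Lambda}$.

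The paper closes the gap by explicit compensation. It adds to $\tilde g^\e_{34}$ the terms $A\kappa_\e+B\zeta_\e+E\iota_\e$, with $\kappa_\e''=\rho_\e(\rho_\e*U^+)$, $\zeta_\e''=(\rho_\e*\Theta)^2-\rho_\e*\Theta$ and $\iota_\e''=U\rho_\e$. The coefficients $A,B,E$ are chosen so that the $\partial_U^2$-contributions to $\Ric_{11}$ cancel the defects exactly. The corrections themselves are $O(\e^2+\e U^+)$, so $C^0$ convergence is preserved. The result is $\Ric^\e_{ij}=\Phi\rho_\e\delta^1_i\delta^1_j+\Lambda g^\e_{ij}+r^\e_{ij}$ with $r^\e\to0$ uniformly on compacts.

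For $\Lambda=0$, by contrast, your route can be completed, but only through a computation you did not do. Replacing $U^+$ by $f_\e=\rho_\e*U^+$ everywhere gives $g_\e=-2\,dU\,dV+2\bigl|dZ+f_\e(\partial^2_{Z,\cz}\hh\,dZ+\partial^2_{\cz,\cz}\hh\,d\cz)\bigr|^2$. For each fixed $U$ its transverse part is the pull-back of the flat metric under the gradient map $Z\mapsto Z+f_\e\partial_{\cz}\hh$. One then finds exactly $\Ric_{g_\e}=-\tfrac12 f_\e''\,\mathrm{tr}\bigl(D^2\hh\,(I+\tfrac{f_\e}{2}D^2\hh)^{-1}\bigr)\,dU\otimes dU$. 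This is $\ge0$ when $\Delta\hh\le0$, by concavity of $\lambda\mapsto\lambda/(1+c\lambda)$ for small $c=f_\e/2\ge0$. Moreover $g_\e=g$ for $|U|\ge\e$, which makes uniform global hyperbolicity much easier than for the paper's corrected metrics. (The paper instead mollifies $(U^+)^2$ as $\rho_\e*(U^+)^2$, which is what produces sign-indefinite defects even at $\Lambda=0$.)

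As written, your Step~2 is only a heuristic: locally uniform convergence gives $g_\e\prec\hat g$ only on compact sets. The paper needs the further modification $\check g_\e$ (adding $4(\e^2+U\e)Q$ to $\tilde g^\e_{34}$ so that $\check g_\e\prec g$ on $\{U\ge\e\}$). It also uses comparison with $m_{1/2}$ before the impulse, Lemma~\ref{rescaled_metric_emeralds_wave}, and for $\Lambda<0$ the shift $\hh\mapsto\hh_w$.

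Finally, check the sign of $K$. For smooth metrics $\tcd^e_p(K,N)$ amounts to $\Ric(v,v)\ge K|g(v,v)|$ for timelike $v$, and $\Ric=\Lambda g$ gives $\Ric(v,v)=-\Lambda|g(v,v)|$. So your Step~1 would yield $K=-\Lambda$. For $\Lambda>0$ the conclusion $\tcd^e_p(\Lambda,4)$ already fails for $\hh=0$ (a de Sitter patch); this slip is inherited from the statement itself.
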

\setcounter{section}{1}
Note that, in case of zero cosmological constant (i.e., $\Lambda=0$), the assumption \eqref{eq:DeltahURB.Intro} reads simply as $\Delta \hh\leq 0$.

We briefly outline the strategy of the proof. 

The Lipschitz continuous Rosen form \( g_{\hh,\Lambda} \) of the impulsive gravitational wave enables us to compute the (measure-valued) distributional Ricci curvature tensor and to identify the regions in which it admits lower or upper bounds. 
Exploiting this information, we construct smooth approximations \( g_\varepsilon \) such that \( g_\varepsilon \to g_{\hh,\Lambda} \) locally uniformly and \( \Ric[g_\varepsilon] \) is bounded from below. 
More precisely, the measure-valued Ricci curvature of \( g_{\hh,\Lambda} \) admits the decomposition into absolutely continuous and singular parts
\[
\Ric[g_{\hh,\Lambda}] = \Ric[g_{\hh,\Lambda}]^{(a)} + \Ric[g_{\hh,\Lambda}]^{(s)}.
\]
Likewise, we will decompose the (now smooth) Ricci curvature of 
$g_\varepsilon$  as 
\[
\Ric[g_\varepsilon] = \Ric[g_\varepsilon]^{(a)} + \Ric[g_\varepsilon]^{(s)},
\]
with $\Ric[g_\varepsilon]^{(a)}$ converging locally uniformly to  $\Ric[ g_{\hh,\Lambda}]^{(a)}$  and $\Ric[g_\varepsilon]^{(s)}$ converging distributionally to $\Ric[ g_{\hh,\Lambda}]^{(s)}$, \emph{while preserving its sign}. 

The construction proceeds by mollifying the non-smooth profile $U^+$ in the expression \eqref{nonsmooth_metric} of $g_{\hh,\Lambda}$, and analyzing the resulting curvature through explicit coordinate expressions. While most terms behave well under this regularization, certain nonlinear interactions fail to converge uniformly and must be isolated. To compensate for these defects, we introduce suitable correction terms in the metric, designed to cancel the non-vanishing contributions arising from the mollification. This yields a refined approximation scheme in which all problematic terms are either controlled or canceled, ensuring convergence of the Ricci tensor in the above sense. For the details on the construction of the smooth approximations $g_\varepsilon$, see Section \ref{SS:SmoothApproxIPP}.

As a consequence, the approximating metrics $g_\varepsilon$ fall within the scope of the stability Theorem~\ref{stability_tcd}, which allows us to establish the $\wtcd^e_p(\Lambda,4)$ condition for a large class of impulsive gravitational waves on backgrounds of non-negative constant curvature $\Lambda \ge 0$, as well as a local version of the same condition for waves propagating on an anti-de~Sitter background. Since $(\mathbb{R}^4, g_{\hh,\Lambda})$ is timelike non-branching, we conclude by invoking \cite[Thm.~3.35]{Bra:23}, which yields the equivalence of the $\wtcd^e_p(\Lambda,4)$ and $\tcd^e_p(\Lambda,4)$ conditions in this setting.

Regarding recent developments related to Theorem~\ref{thm:3.10Intro}, we highlight the work \cite{Braun-Lipschitz2026} by Braun and S\'alamo Candal, which shows that Lipschitz spacetimes with non-negative timelike Ricci curvature in a distributional sense satisfy the \( \mathsf{TMCP}_p(0,N) \) condition. 
Recall from \cite{cavalletti2020optimal} that both the \( \wtcd^e_p(0,N) \) and \( \tcd^e_p(0,N) \) conditions imply \( \mathsf{TMCP}_p(0,N) \).
\medskip

For impulsive gravitational waves on Minkowski background, we also show validity of synthetic upper bounds on the Ricci curvature (in the sense of \cite{mondino2022optimal}). This is the content of the third main result stated below.
\smallskip

\setcounter{section}{3}
\setcounter{ithm}{16}
\begin{ithm}\label{thm:IPP-UB-Intro}
    If $\Lambda =0$, the impulsive gravitational wave $(\R^4, g_{\hh,\Lambda} )$ with profile function $\hh$ has synthetic timelike Ricci curvature bounded from above by $0$, provided that  $\Delta \hh \geq c |D^2 \hh|$, for some $c > 0$.
\end{ithm}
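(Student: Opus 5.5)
The plan is to obtain the upper bound by passing to the limit in a smooth approximation, in analogy with the route to the lower bound, using the stability theorem for synthetic timelike upper Ricci bounds (Theorem~\ref{stability_upper_bounds}). That theorem rests on the locally uniform convergence $\tau_j\to\tau_\infty$ of Proposition~\ref{locally_uniform_convergence_tau}. The task is therefore to produce smooth metrics $g_\varepsilon$ with three properties: $g_\varepsilon\to g_{\hh,0}$ locally uniformly, the family $(g_\varepsilon)$ is uniformly globally hyperbolic, and $\Ric[g_\varepsilon](v,v)\le 0$ for all timelike $v$, uniformly in $\varepsilon$. Since smooth metrics with $\Ric\le 0$ on timelike vectors satisfy the synthetic upper bound of \cite{mondino2022optimal}, stability then transfers the bound to the Lipschitz limit.

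\textbf{Step 1: the approximating metrics.} For $\Lambda=0$ I would mollify the profile $U^+$ in the Rosen form \eqref{nonsmooth_metric}, keeping the same corrected scheme as in Section~\ref{SS:SmoothApproxIPP}. Uniform convergence and uniform global hyperbolicity should follow exactly as in the proof of Theorem~\ref{thm:3.10Intro}, so no new work is expected here.

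\textbf{Step 2: the sign of the curvature.} I would split $\Ric[g_\varepsilon]=\Ric^{(a)}+\Ric^{(s)}$.
\begin{itemize}
\item The singular part is of null-dust type, roughly $\Ric^{(s)}\approx -\Delta\hh\,\rho_\varepsilon(U)\,\di U\otimes \di U$ with $\rho_\varepsilon$ a mollifier. Under $\Delta\hh\ge0$ this gives $\Ric^{(s)}(v,v)=-\Delta\hh\,\rho_\varepsilon\,(\di U(v))^2\le0$, which has the sign we want.
\item The absolutely continuous part, including the correction terms, is of order $D^2\hh$ multiplied by quantities that vanish off the support of the mollifier, together with quadratic terms.
\end{itemize}

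\textbf{Step 3 (main obstacle): absorbing the absolutely continuous part.} The bad terms in $\Ric^{(a)}$ are not null-aligned. For $v$ nearly null with $\di U(v)$ small, they could dominate the good singular term, whose weight $(\di U(v))^2$ may be tiny. This is where the hypothesis $\Delta\hh\ge c|D^2\hh|$ must enter. I would first try to show that on the support of the mollifier every bad term in $\Ric[g_\varepsilon](v,v)$ is bounded by $C|D^2\hh|\,\rho_\varepsilon\,|\di U(v)|\,\cdot\,(\text{small}+|\di U(v)|)$. The aim is to see that the $\di U$ factor appears in every term, because of the null-wave structure: all metric perturbations depend on $U$ and are transverse to $\partial_V$. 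Given such an estimate, two things follow:
\begin{itemize}
\item the $|\di U(v)|^2$ contributions are absorbed by $c|D^2\hh|\rho_\varepsilon(\di U(v))^2\le \Delta\hh\,\rho_\varepsilon(\di U(v))^2$;
\item the mixed contributions carry a factor of size $O(\varepsilon)$ from the mollified $U^+$. The resulting error vanishes as $\varepsilon\to0$, so the bound becomes $\Ric[g_\varepsilon](v,v)\le \delta_\varepsilon|g_\varepsilon(v,v)|$ with $\delta_\varepsilon\to0$.
\end{itemize}

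If exact non-positivity fails and only this almost bound holds, I would apply the stability result with upper bounds $K_\varepsilon\downarrow0$. The limiting bound is $0$ because the synthetic condition is monotone and closed in $K$.
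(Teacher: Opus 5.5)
There is a genuine gap. It sits in the sentence ``smooth metrics with $\Ric\le 0$ satisfy the synthetic upper bound, so stability transfers it to the limit.''

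\textbf{Uniformity is missing.} Theorem~\ref{stability_upper_bounds} requires every approximating metric to satisfy the synthetic upper bound with the \emph{same} $r_0$ and the \emph{same} remainder function $\omega$. The smooth characterization in \cite{mondino2022optimal} only gives $r_0^\varepsilon,\omega_\varepsilon$ depending on bounds for the curvature of $g_\varepsilon$ and its derivatives. These blow up like $\varepsilon^{-1},\varepsilon^{-2},\dots$ on $\{|U|<\varepsilon\}$.

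\textbf{The non-uniformity is structural.} Schematically, the second variation of the entropy along an $\ell_p$-geodesic is $\int(\Ric(\dot\gamma,\dot\gamma)+\mathrm{tr}\,B^2)\,\di\mu_t$, where $B$ is the expansion/shear matrix of the transport. When a geodesic of size $r\gg\varepsilon$ crosses the strip, the full Riemann tensor (including its Weyl part, not only Ricci) makes $B$ jump by roughly $r\,D^2\hh$. Since $B$ cannot vanish on both sides of the impulse, $\mathrm{tr}\,B^2$ adds a positive amount of order $r^2|D^2\hh|^2$ to $\Ent(\mu_{-1})-2\Ent(\mu_0)+\Ent(\mu_1)$. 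This happens whatever the sign of $\Ric[g_\varepsilon]$. It is exactly the term $-\Theta(u)\big((\partial^2_{XY}\hh)^2+\tfrac12((\partial^2_{XX}\hh)^2+(\partial^2_{YY}\hh)^2)\big)$ in \eqref{abs_cts_concavity_part}.

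\textbf{Your mechanism would prove too much.} Take a vacuum profile with $\Delta\hh=0$ but $D^2\hh\neq0$. The approximations of Proposition~\ref{prop:ApproxFlat} then have $\Ric[g_\varepsilon]\to0$ uniformly, and for plane-wave profiles one can even use exactly Ricci-flat sandwich waves. Yet the paper's argument only yields the upper bound $(\partial^2_{XY}\hh)^2+\tfrac12((\partial^2_{XX}\hh)^2+(\partial^2_{YY}\hh)^2)>0$ there (Remark after Theorem~\ref{upper_bounds_waves}).

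\textbf{The hypothesis is used in the wrong place.} Its role is not to fix the sign of $\Ric[g_\varepsilon]$. It makes the singular null-dust term, which enters weighted by $(r-|u|)$, dominate the shear term, which enters weighted by $(r-|u|)^2\le r(r-|u|)$, once $r<c$. In Theorem~\ref{upper_bounds_waves} the hypothesis is accordingly quadratic in $D^2\hh$.

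\textbf{Step 3 is also unsupported on its own terms.} The errors $r^\varepsilon_{ij}$ and the correction terms in $\check g_\varepsilon$ are only uniformly small. They have transverse components that carry no factor $\di U(v)$, and these are not controlled by $|g_\varepsilon(v,v)|$ for strongly boosted $v$.

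\textbf{What the paper does instead.} It bypasses approximation and stability entirely. Given $x$ near the impulse and a direction toward $y$, it passes to the boosted coordinates \eqref{coord_chage_ab}, \eqref{coord_change_s}, matching $w^-$ and $w^+$ across the impulse. In these coordinates $g=(a^2+b^2-2s)\,\di T^2+g[2,3,4]$ with $g[2,3,4]$ Riemannian. Hence $\tau$ is controlled by $T$-increments (Proposition~\ref{tau_for_straight_line} and the cyclical monotonicity lemma). So the translates $\mu_{\pm1}$ of $\mu_0$ by $\pm r\partial_T$ form an $\ell_p$-geodesic. The entropy second difference then reduces to the second difference in $T$ of $\log\sqrt{-\det g}$, which is computed explicitly; that is where the hypothesis enters.

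To rescue your route, you would have to prove the synthetic bound for $g_\varepsilon$ with $(r_0,\omega)$ independent of $\varepsilon$. That amounts to the same explicit computation, carried out for the mollified metrics.
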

\setcounter{section}{1}

The proof of  Theorem \ref{upper_bounds_waves} proceeds by performing a suitable change of coordinates, which may be interpreted as a Lorentz transformation adapted to impulsive gravitational waves, and by explicitly constructing optimal transport plans that directly verify the synthetic upper timelike Ricci curvature bounds.

As a further application of the stability result established in Theorem~\ref{stability_tcd}, we show in Section~\ref{Sec:4} that the weak solutions of the Einstein equations constructed by Luk and Rodnianski in \cite{luk2013nonlinear} fit naturally into the framework of \( \tcd^e_p(0,4) \) Lorentzian length spaces. 
These weak solutions include, in particular, the nonlinear interaction of weak impulsive gravitational waves on a flat background.

Finally, in the appendices, we present an example demonstrating that global hyperbolicity is strictly weaker than uniform global hyperbolicity, and we provide additional technical details concerning the construction of the approximations employed in Section~\ref{Sec:3}.

\subsubsection*{Acknowledgments}
We are grateful to Tobias Beran, Argam Ohanyan, Ji\v{r}\'{\i} Podolsk\'y, Jona Röhrig, Roland Steinbauer, and Inés Vega-González for useful discussions. We also thank Nicola Gigli, Marco Picerni, Zhefeng Xu, and Xuewei Wang for carefully reading a preliminary version of the manuscript and for identifying an incorrect claim concerning the failure of a non-smooth Lorentzian splitting theorem.

A.\,M.\;acknowledges support from the European Research Council (ERC) under the European Union's Horizon 2020 research and innovation programme, grant agreement No.\;802689 ``CURVATURE''. 

C.\,S.\;acknowledges: This research was funded in whole or in part by the Austrian Science Fund (FWF) [Grant DOI \href{https://doi.org/10.55776/STA32}{10.55776/STA32}].

For the purpose of Open Access, the authors applied a CC BY public copyright license to any Author Accepted Manuscript (AAM) version arising from this submission. 

\section{Stability of synthetic curvature dimension conditions for continuous spacetimes}\label{Sec:2}
\subsection{Lorentzian pre-length spaces and timelike curvature dimension conditions}

 In this section, we briefly recall some basic notions from the theory of Lorentzian pre-length spaces and synthetic timelike lower Ricci curvature bounds thereof. 
 
 A Lorentzian pre-length space shall be seen as the Lorentzian analogue of a metric spaces (while Lorentzian length spaces are the analogues of metric length spaces). These notions were introduced in \cite{kunzinger2018lorentzian} by following earlier works of Busemann \cite{Bus:67} and Kronheimer--Penrose \cite{KP:67}. At present, several variants of the basic axiomatization of these spaces have been developed; see \cite{McC:24, BMcC:23, BBCGMORS:24, MS:25}. 
We also mention related alternative frameworks, such as bounded Lorentzian metric spaces \cite{MS:24, BMS:25}.  We have chosen to work within the original framework of \cite{kunzinger2018lorentzian}, since our setting is that of smooth manifolds, where any auxiliary Riemannian background metric \( h \) induces the given manifold topology through its associated length metric \( \sfd^h \).
\medskip

\begin{definition}[\cite{kunzinger2018lorentzian} Def.\ 2.8]
    A \emph{causal space} is a set $X$ endowed with a preorder $\leq$ and a transitive relation $\ll$ contained in $\leq$. \\
    A \emph{Lorentzian pre-length space} $(X, \sfd, \ll, \leq, \tau)$ is a causal space $(X, \ll, \leq)$ equipped with a proper metric $\sfd$ (i.e., closed and bounded subsets are compact) and a lower semi-continuous function $\tau: X \times X \to [0, \infty]$, called \emph{time-separation function}, that satisfies
    \begin{align}
        &\tau(x, y) + \tau(y, z) \leq \tau(x, z), \ \forall \ x \leq y \leq z\ \mathrm{(reverse\ triangle\ inequality)} \nonumber \\
        &\tau(x, y)>0 \iff x \ll y, \ \tau(x, y) = 0 \ \mathrm{if}\ x \not \leq y. 
    \end{align}
\end{definition}
We define the \emph{chronological} (resp.\ \emph{causal}) future of a subset $A \subset X$ as 
\begin{align*}
    I^+(A) := \{y \in X: \ \exists x \in A, x \ll y\}, \\
    J^+(A) := \{y \in X: \ \exists x \in A, x \leq y\},
\end{align*}
respectively. Analogously we define the chronological and causal pasts $I^-(A), J^-(A)$. In case $A = \{x\}$, we write $I^\pm(x), J^\pm(x)$ with a slight abuse of notation. 
\medskip

\begin{definition}
    A non-constant curve $\gamma:I \to X$ is called (future-directed) \emph{timelike} (resp.\ \emph{causal}) if $\gamma$ is locally Lipschitz continuous with respect to $\sfd$ and for each $s, t \in I$, $s < t$, it holds $\gamma_s \ll \gamma_t$, (resp. $\gamma_s \leq \gamma_t$). We say that $\gamma$ is a \emph{null} curve if it is causal and no two points in $\gamma(I)$ are related by $\ll$. 
\end{definition}
\medskip

\begin{definition}
    For $\gamma:[a,b] \to X$ future-directed causal, we set 
    \begin{align*}
        L_\tau(\gamma):= \inf \Bigg\{ \sum_{i=1}^N \tau(\gamma_{t_{i-1}}, \gamma_{t_i}): a=t_0<t_1<\ldots <t_N =b\Bigg\}.
    \end{align*}
    If \( I \) is half-open, say \( I = [a,b) \), we take the infimum over all partitions of the form \( a = t_0 < t_1 < \dots < t_N < b \); the definition for other types of intervals is treated analogously.
\end{definition}
Given a $g$-causal curve $\gamma$ in a smooth Lorentzian spacetime, we a priori have that $L_\tau(\gamma) \geq L_g(\gamma)$. However, if the spacetime is strongly causal, Proposition 2.32 in \cite{kunzinger2018lorentzian} shows that $L_\tau = L_g$. 
\medskip

\begin{definition}
    A causal curve $\gamma$ is called a \emph{geodesic} if it is maximal and continuous when parametrised by $\tau$-arclength, i.e., the set of causal geodesics is given by 
    \begin{align*}
        \mathrm{Geo}(X):= \{\gamma \in C([0,1], X): \tau(\gamma_s, \gamma_t)= (t-s)\tau(\gamma_0, \gamma_1)\}.
    \end{align*}
    The set of timelike geodesics is defined as follows:
    \begin{align*}
        \mathrm{TGeo}(X):= \{\gamma \in \mathrm{Geo}(X): \tau(\gamma_0, \gamma_1) >0\}.
    \end{align*}
\end{definition}
A Lorentzian pre-length space $(X, \sfd, \ll, \leq, \tau)$ is called
\begin{itemize}
    \item \emph{non-totally imprisoning} if for every compact set $K\subset X$ there exists a $C>0$ such that every causal curve contained in $K$ has $\sfd$-length at most $C$.  
    \item \emph{globally hyperbolic} if it is non-totally imprisoning and for every $x, y \in X$, the causal diamond $J^+(x) \cap J^-(y)$ is compact. 
    \item \emph{geodesic} if for all $x, y \in X$, $x \leq y$ there exists a curve $\gamma \in \mathrm{Geo}(X)$ connecting them. 
\end{itemize}
By \cite[Theorem 3.28]{kunzinger2018lorentzian}, the time-separation function $\tau$ in a globally hyperbolic Lorentzian geodesic space is finite and continuous.

The next proposition is due to Minguzzi \cite[Corollary 3.8]{minguzzi2023further}.
\medskip

\begin{proposition}
    Let $(X, \sfd, \ll, \leq, \tau)$ be a Lorentzian geodesic space. It is globally hyperbolic if and only if for any two compact sets $K_1, K_2 \subset X$, the causal diamond $J^+(K_1) \cap J^-(K_2)$ is compact and the relation $\{x \leq y\} \subset X^2$ is closed in $X^2$. 
\end{proposition}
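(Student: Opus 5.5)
The plan is to reduce both implications to one compactness mechanism for causal curves: a limit curve theorem in the spirit of \cite[Thm.~3.7]{kunzinger2018lorentzian}, used together with the geodesic property. Throughout, ``globally hyperbolic'' means non-totally imprisoning with compact causal diamonds $J^+(x)\cap J^-(y)$ of points.

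\emph{Forward direction.} I first prove closedness of $\leq$. Take $x_n\leq y_n$ with $x_n\to x$ and $y_n\to y$. Since $X$ is geodesic, there are causal geodesics $\gamma_n$ from $x_n$ to $y_n$. The aim is to show that the $\gamma_n$ stay in a fixed compact set $C$. Non-total imprisonment then bounds their $\sfd$-lengths uniformly by a constant $L_C$. After reparametrising by $\sfd$-arclength, Arzel\`a--Ascoli (using properness of $\sfd$) yields a uniformly convergent subsequence. The limit is a causal curve from $x$ to $y$, so $x\leq y$.

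Compactness of $J^+(K_1)\cap J^-(K_2)$ is proved the same way. Take $z_n$ with $x_n\leq z_n\leq y_n$, where $x_n\in K_1$ and $y_n\in K_2$. Pass to convergent subsequences and concatenate geodesics $x_n\to z_n\to y_n$. The limit curve then gives a point $z$ with $x\leq z\leq y$ and $z_n\to z$ along a subsequence.

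\emph{Main obstacle: confinement.} The hardest step is showing that the curves $\gamma_n$ stay in a compact set, since a priori only diamonds of single points are known to be compact. My first attempt is by contradiction. Suppose the $\gamma_n$ leave every ball $B_R(x)$. Restrict each $\gamma_n$ to the portion before its first exit from a large closed ball $\bar B$; this ball is compact because $\sfd$ is proper. Non-total imprisonment applied to $\bar B$ bounds the lengths of these portions. Apply the limit curve theorem to the portions from $x_n$ and, symmetrically, to the portions arriving at $y_n$. The goal is to produce a point $w$ on $\partial\bar B$ and a point $w'$ with
\[
x\leq w\leq w'\leq y .
\]
This would place a point of $\partial\bar B$ inside a diamond $J^+(x')\cap J^-(y')$ of points slightly beyond $x,y$. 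Letting $R\to\infty$ would contradict the compactness of that diamond. The delicate point is that intermediate points are only related to $y_n$, not to $y$. To transfer relations to the limit I will try to use the geodesic structure, namely the concatenations $x_n\to\gamma_n(t)\to y_n$ together with equality in the reverse triangle inequality. If that fails, I will fall back on an exhaustion argument over the sets $J^+(x)\cap J^-(y)$.

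\emph{Converse.} Compactness of point diamonds is the special case $K_1=\{x\}$, $K_2=\{y\}$. For non-total imprisonment, argue by contradiction. Suppose causal curves $\gamma_n$ in a compact set $K$ have $\sfd$-length tending to infinity. Subdividing them gives causal curves in $K$ of fixed $\sfd$-length $\delta$ between points that are $\leq$-related, and many such pieces are present. Closedness of $\leq$ and compactness of $K$ let me extract a limit configuration. I expect this to be either a nontrivial closed causal curve or an inextendible causal curve imprisoned in $K$, contained in $J^+(K)\cap J^-(K)$. I will then derive a contradiction with the compactness of this set combined with the Lipschitz/limit-curve estimates, following Minguzzi's argument in \cite{minguzzi2023further}.
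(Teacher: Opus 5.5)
The paper does not prove this statement; it quotes it from \cite[Cor.~3.8]{minguzzi2023further}. Your argument therefore has to stand on its own, and in the converse it does not. You plan to derive a contradiction from a closed or imprisoned causal curve in $K$ ``with the compactness of'' $J^+(K)\cap J^-(K)$, but no such contradiction exists: these curves are compatible with compact emeralds and a closed causal relation. For instance, take $X=S^1$ with $x\le y$ for all $x,y$, $\ll=\emptyset$ and $\tau\equiv 0$. This is a Lorentzian pre-length space in which causally related points are joined by geodesics, $\le$ is closed, and every $J^+(K_1)\cap J^-(K_2)$ is compact. Yet $t\mapsto e^{it}$, $t\in[0,T]$, is a causal curve of length $T$ in the compact set $X$.

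So non-total imprisonment must come from an ingredient you never use: a local non-imprisonment property, such as the localizability built into Kunzinger--S\"amann Lorentzian length spaces. Some structure of this kind has to be part of what ``Lorentzian geodesic space'' means for the statement to hold. With it the argument runs as follows.
\begin{itemize}
\item Localizability yields $r,\ell>0$ such that every causal curve starting in $K$ with $\sfd$-length $\ell$ leaves the $r$-ball about its initial point.
\item Cut long causal curves in $K$ at successive exit points and pass to the limit using closedness of $\le$. This gives a chain $p_0\le p_1\le\cdots$ in $K$ with $\sfd(p_k,p_{k+1})=r$.
\item Two of its accumulation points $z,z'$ satisfy $\sfd(z,z')=r$ and $z\le z'\le z$.
\item All points of a geodesic from $z$ to $z'$ are then mutually causally related. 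Running back and forth along a short initial arc of it gives causal curves of unbounded length inside a localizing neighbourhood of $z$.
\end{itemize}
The contradiction is with the local length bound, not with compactness of emeralds.

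In the forward direction, the step ``the limit is a causal curve from $x$ to $y$'' is exactly where closedness of $\le$ is used. Uniform limits of causal curves are causal only under (local) causal closedness, which is a hypothesis of the limit curve theorem in \cite{kunzinger2018lorentzian}. As written, the argument is therefore circular. Without that hypothesis the implication is false. On $X=\mathbb{R}$ let $x\le y$ iff $x=y$, or $x<y$ and ($y<0$ or $x\ge 0$), with $\ll=\emptyset$ and $\tau\equiv 0$. Causal curves are monotone, so $X$ is non-totally imprisoning and geodesic, with compact point diamonds. Yet $-1\le -1/n$ while $-1\not\le 0$, and $J^+(\{-1\})\cap J^-([-\tfrac12,0])=[-1,0)$ is not compact. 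You must therefore assume local causal closedness explicitly.

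Your confinement step is also left unresolved, but in a Lorentzian length space it is immediate and needs no ball-exit argument. Pick $x'\ll x$ and $y\ll y'$. Since $\ll$ is open ($\tau$ is lower semicontinuous) and push-up holds, $x'\ll x_n\le y_n\ll y'$ for large $n$. Hence the geodesics $\gamma_n$, and the points $z_n$ of the emerald argument, all lie in the compact diamond $J^+(x')\cap J^-(y')$.
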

\medskip

\begin{definition}
    A measured Lorentzian pre-length space $(X, \sfd, \ll, \leq, \tau, \mathfrak{m})$ is a Lorentzian pre-length space $(X, \sfd, \ll, \leq, \tau)$ endowed with a non-negative Radon measure $\mathfrak{m}$ with $\supp \, \mathfrak{m} =X$. 
\end{definition}
Next, we recall some basics of the theory of optimal transport on Lorentzian pre-length spaces. Denote by $\mathcal{P}(X)$ the set of Borel probability measures on $X$. For $\mu, \nu \in \mathcal{P}(X)$, define 
\begin{align*}
    \Pi_{\leq}(\mu, \nu):= \{\pi \in \mathcal{P}(X^2): (P_1)_\# \pi = \mu, (P_2)_\# \pi = \nu, \pi(X^2_\leq)=1 \}, \\
    \Pi_{\ll}(\mu, \nu):= \{\pi \in \mathcal{P}(X^2): (P_1)_\# \pi = \mu, (P_2)_\# \pi = \nu, \pi(X^2_\ll)=1\}.
\end{align*}
Here we denote $X^2_\ll:= \{(x,y) \in X^2: x \ll y\}$ and  $X^2_\leq:= \{(x,y) \in X^2: x \leq y\}$. Moreover, \( P_i \colon X \times X \to X \) denotes the projection onto the \( i \)-th factor, for \( i = 1,2 \), and \( (P_i)_\sharp \colon \mathcal{P}(X^2) \to \mathcal{P}(X) \) is the induced push-forward map on probability measures.
\medskip

\begin{definition}
    Let $(X, \sfd, \ll, \leq, \tau)$ be a Lorentzian pre-length space and let $p\in (0,1]$. Given $\mu, \nu \in \mathcal{P}(X)$, the $p$-Lorentz Wasserstein distance is defined by 
    \begin{align}\label{eq:defellp}
        \ell_p(\mu, \nu):= \sup_{\pi \in \Pi_{\leq}(\mu, \nu)} \Big( \int_{X^2} \tau(x,y)^p \di \pi(x, y)\Big)^{1/p}. 
    \end{align}
    When $\Pi_{\leq}(\mu, \nu)= \emptyset$, we set $\ell_p(\mu, \nu) = - \infty$. A measure $\pi\in \Pi_{\leq}(\mu, \nu)$ achieving the supremum in \eqref{eq:defellp} is called \emph{a $p$-optimal coupling from $\mu$ to $\nu$}. The set of $p$-optimal couplings from $\mu$ to $\nu$ is denoted by $\Pi_{\leq}^{p\text{-opt}}(\mu, \nu)$. 
\end{definition}
Furthermore, define $\ell^p: X^2 \to \{- \infty\} \cup [0, \infty]$ by 
\begin{equation*}
\ell^p(x, y)=
\begin{cases}
\tau(x, y)^p   &\text{if } x \leq y, \\
-\infty &\text{if } x \not \leq y.
\end{cases}
\end{equation*}
\begin{definition}[\cite{cavalletti2020optimal}, Def.\ 2.6]
    Fix $p \in (0,1]$ and let $(X, \sfd, \mm, \ll, \leq, \tau)$ be a measured Lorentzian pre-length space. A set $\Gamma \subset X^2_{\leq}$ is called $\tau^p$-cyclically monotone if for any finite family of points $(x_1, y_1), \ldots, (x_N, y_N) \in \Gamma$ the following inequality holds:
    \begin{align}\label{cyclical_monotonicity}
        \sum_{i=1}^N \tau^p(x_i, y_i) \geq \sum_{i=1}^N \tau^p(x_{i+1}, y_i),
    \end{align}
    with the convention that $x_{N+1}= x_1$. 
\end{definition}
\medskip

\begin{definition}[\cite{cavalletti2020optimal}, Def.\ 2.18 and 2.27]
    Let $(X, \sfd, \ll, \leq, \tau)$ be a Lorentzian pre-length space and let $p\in (0,1]$. We say that $(\mu, \nu) \in (\mathcal{P}(X))^2$ is \emph{timelike $p$-dualisable} (by $\pi \in \Pi_{\ll}(\mu, \nu)$), if 
    \begin{itemize}
        \item[1.] $\ell_p(\mu, \nu) \in (0, \infty)$,
        \item[2.] $\pi \in \Pi_{\leq}^{p\text{-opt}}(\mu, \nu)$ and $\pi(X^2_\ll)=1$,
        \item[3.] there exist measurable functions $a, b:X \to \R$, with $a \oplus b \in L^1(\mu \oplus \nu)$ such that $\ell^p \leq a \oplus b$ on $\supp\, \mu \times \supp\, \nu$. 
    \end{itemize}
    We say that $(\mu, \nu)$ is \emph{strongly timelike $p$-dualisable}, if it is timelike $p$-dualisable and if there exists a measurable $\ell^p$-cyclically monotone set $\Gamma \subset X^2_\ll \cap (\supp\, \mu \times \supp\, \nu)$ such that a coupling $\pi \in \Pi_\leq (\mu, \nu)$ is optimal if and only if $\pi$ is concentrated on $\Gamma$. 
\end{definition}
The evaluation map is defined by
\begin{equation}\label{def:eet}
\ee_{t}: C([0,1], X) \to X, 
\qquad 
\gamma \mapsto \ee_{t}(\gamma):=\gamma_{t}, 
\qquad \forall t\in [0,1].
\end{equation}
\medskip

\begin{definition}
    Let $(X, \sfd, \ll, \leq, \tau)$ be a Lorentzian pre-length space and let $p\in (0,1]$. We say that $\eta \in \mathcal{P}(\mathrm{Geo}(X))$ is an \emph{$\ell_p$-optimal dynamical plan from $\mu_0 \in \mathcal{P}(X)$ to $\mu_1 \in \mathcal{P}(X)$} if $(\ee_0)_\# \eta = \mu_0$, $(\ee_1)_\# \eta = \mu_1$ and 
    \begin{align*}
        (\ee_0, \ee_1)_\# \eta \in \Pi^{p\text{-opt}}_\leq (\mu_0, \mu_1). 
    \end{align*}
    We denote the set of $\ell_p$-optimal dynamical plans from $\mu_0$ to $\mu_1$ by $\mathrm{OptGeo}_{\ell_p}(\mu_0, \mu_1)$. 
    We say that a curve $[0,1] \ni t \mapsto \mu_t \in \mathcal{P}(X)$ is an \emph{$\ell_p$-geodesic} if there exists an optimal dynamical plan $\eta$ from $\mu_0$ to $\mu_1$ such that $\mu_t = (\ee_t)_\# \eta$.  
\end{definition} 

Given a measured Lorentzian pre-length space $(X, \sfd, \ll, \leq, \tau, \mathfrak{m})$ and a Borel probability measure $\mu = \rho \mathfrak{m} \in \mathcal{P}(X)$, such that $(\rho \log \rho)_+ \in L^1(\mathfrak{m})$, consider the Boltzmann--Shannon entropy
\begin{align*}
    \mathrm{Ent}(\mu | \mathfrak{m}):= \int_X \rho \log \rho\, \di \mathfrak{m}. 
\end{align*}
For all other measures $\mu$,  the entropy is set to $+\infty$.  Denote by $\Dom(\Ent(\cdot|\mm))$ the set of measures $\mu \in \mathcal{P}(X)$ with finite entropy. For $N \in (0, \infty)$, consider the so called \emph{Shannon entropy power} 
\begin{align*}
    U_N(\mu| \mathfrak{m}):= \exp\left(-\frac{\mathrm{Ent}(\mu| \mathfrak{m})}{N}\right).
\end{align*}
Set 
\begin{align*}
     \mathfrak{s}_\kappa(\vartheta) = \left\{ \begin{array}{ll}
           \frac{1}{\sqrt{\kappa}}\sin(\sqrt{\kappa}\vartheta) & \kappa >0,  \\
          \vartheta &\kappa =0,\\
          \frac{1}{\sqrt{-\kappa}}\sinh(\sqrt{-\kappa}\vartheta) & \kappa <0,
        \end{array}\right.
\end{align*}
and 
\begin{align*}
     \sigma^{(t)}_\kappa(\vartheta) = \left\{ \begin{array}{ll}
           \frac{\mathfrak{s}_k(t \vartheta)}{\mathfrak{s}_k( \vartheta)} & \kappa \vartheta^2 \neq 0, \kappa \vartheta^2 < \pi^2,  \\
          t &\kappa \vartheta^2 =0,\\
          +\infty & \kappa \vartheta^2 \geq \pi^2.
        \end{array}\right. 
\end{align*}

We now have all the ingredients needed to recall the definition of synthetic timelike curvature-dimension condition from \cite{cavalletti2020optimal}.
\medskip

\begin{definition}
    Let $p \in (0,1)$, $K \in \R$ and $N \in (0, \infty)$. We say that the measured Lorentzian pre-length space $(X, \sfd, \ll, \leq, \tau, \mathfrak{m})$ satisfies the \emph{(resp. weak) timelike curvature dimension condition} $\tcd^{e}_p(K, N)$ (resp. $\wtcd^{e}_p(K, N)$) if for any pair $(\mu_0, \mu_1) \in [\Dom(\mathrm{Ent}(\cdot|\mathfrak{m}))]^2$ that is timelike $p$-dualisable (resp. compactly supported and strongly timelike $p$-dualisable) by some $\pi \in \Pi_{\ll}^{p\text{-opt}}(\mu_0, \mu_1)$, there exists an $\ell_p$-geodesic $(\mu_t)_{t \in [0,1]}$ such that the function $[0,1] \ni t\mapsto u_N(t):= U_N(\mu_t | \mathfrak{m})$ satisfies
    \begin{align}
        u_N(t) \geq \sigma_{K/N}^{(1-t)}(\norm{\tau}_{L^2(\pi)})u_N(0) + \sigma_{K/N}^{(t)}(\norm{\tau}_{L^2(\pi)})u_N(1).
    \end{align}
\end{definition}
In the case of a smooth, globally hyperbolic, \( N \)-dimensional Lorentzian manifold, McCann showed that these conditions are equivalent to the pointwise bound 
\[
\Ric(v,v) \ge -K\, g(v,v)
\]
for all timelike vectors \( v \in TM \) \cite{mccann2018displacement} (see also \cite{mondino2022optimal}). 
Subsequently, Braun \cite{Bra:23} introduced alternative variants of the axiomatization, denoted by \( \tcd_p(K,N) \) and \( \tcd^{*}_p(K,N) \). 
Under the assumption of timelike non-branching, he established the equivalence of the conditions \( \tcd^{e}_p(K,N) \), \( \wtcd^{e}_p(K,N) \), and \( \tcd^{*}_p(K,N) \). 
Moreover, \( \tcd_p(K,N) \) always implies \( \tcd^{*}_p(K,N) \), and it is conjectured that the two are equivalent in the timelike non-branching setting.

\subsection{Spacetimes with locally uniformly converging Lorentzian metrics}
\textbf{Standing assumptions.} Throughout this section,  \( M \) is a smooth \( (d+1) \)-dimensional manifold,  \( (g_j)_{j \in \mathbb{N}} \) is a sequence of smooth Lorentzian metrics on \( M \), and  \( g_\infty \) is a continuous Lorentzian metric such that \( g_j \to g_\infty \) locally uniformly on \( M \). We assume, moreover, that for every \( j \in \mathbb{N} \cup \{\infty\} \), the Lorentzian manifold \( (M,g_j) \) is globally hyperbolic and that there exists a continuous vector field \( X \) which provides a common time orientation for all Lorentzian metrics \( g_j \).

Given an auxiliary complete Riemannian metric \( h \) on \( M \), it induces a (length) distance function \( \sfd_h \) on \( M \).
 For $(0,2)$-tensors $g, g'$  on $M$ and $S \subset M$, define  (see \cite{CG:12}) 
\begin{align}\label{norm_for_lorentz_metric}
    \sfd_S(g, g') := \sup_{X, Y \in TS, X, Y \neq 0} \frac{|g(X,Y)-g'(X,Y)|}{|X|_h|Y|_h}.
\end{align}

Recall that a causally plain (see \cite[Def.~1.16]{CG:12}) continuous Lorentzian metric $g$ on $M$ gives rise to a measured Lorentzian pre-length space as follows: A locally Lipschitz continuous curve $\gamma: I \to M$ is called \emph{future directed and causal} if $\dot{\gamma}_t$ is future directed and causal for almost all $t \in I$. The \emph{$g$-length} of a causal curve is then given by $$L_{g}(\gamma):= \int_I \sqrt{-g(\dot{\gamma}_t, \dot{\gamma}_t)}\, \di t, $$ and the time-separation function $\tau$ is defined by
\begin{align*}
    \tau_g(x, y) := \sup \{L_g(\gamma)\mid \gamma:[0,1] \to M \ \mathrm{causal\ and \ future\ directed\ }, \gamma_0=x, \gamma_1 =y\} \cup \{0\}.
\end{align*}
The causal relations $\leq$, $\ll$ are defined via $x \ll y \iff \tau_{g}(x, y) >0$ and $x \leq y$ if there exists a causal and future directed curve $\gamma:[0,1] \to M$ such that $\gamma_0=x, \gamma_1 =y$. The volume measure is locally given by $\di \vol_g = \sqrt{|\det g|}\, \di\mathcal{L}^{d+1}$. Then $(M, \sfd_h, \ll, \leq, \tau_g, \vol_g)$ is a Lorentzian pre-length space (see \cite{kunzinger2018lorentzian}).

Denote by $\ll_{j}, \leq_{j}$ the timelike and causal relations induced by $g_j, \; j \in \mathbb{N} \cup \{\infty\}$. Moreover, denote by $\tau_j$ the induced time-separation functions. Note that $\tau_j$ is continuous for all $j \in \N$. If, in addition to global hyperbolicity, we assume that the continuous Lorentzian metric \( g_\infty \) is causally plain, then the associated time-separation function \( \tau_\infty \) is continuous.

A $g_\infty$-causal curve $\gamma:I\to M$   is said to have a \emph{causal character} in $(M,g_\infty)$ if either $g_\infty(\dot\gamma_t, \dot\gamma_t)<0$ for a.e.\ $t\in I$, or $g_\infty(\dot\gamma_t, \dot\gamma_t)=0$ for a.e.\ $t\in I$.

\subsubsection{Uniform global hyperbolicity and locally uniform convergence of the time-separation functions}

We first establish a uniform variant of \cite[Lem.\ 2.6.6]{chrusciel2011elements}.
\begin{proposition}\label{uniform_bound_prosuct_with_time_or}
    For every compact subset $K \subset M$ there exists a constant $c=c(K)>0$ such that for every $j \in \N \cup \{\infty\}$, any $x \in K$ and any $g_j$-causal $v \in T_xM$, it holds
    \begin{align}
        |g_j(X,v)| \geq c|v|_h.
    \end{align}
\end{proposition}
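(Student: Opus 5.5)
By homogeneity in $v$, it suffices to treat $h$-unit vectors. We then argue by compactness, using that $X$ is a common time orientation, so $X$ is $g_j$-timelike for every $j\in\N\cup\{\infty\}$.

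Set $S:=\{(x,v)\in TM : x\in K,\ |v|_h=1\}$, a compact subset of $TM$. Define
\[
\mathcal{C}_j:=\{(x,v)\in S : g_j(v,v)\le 0\},
\]
the $g_j$-causal unit vectors; each $\mathcal{C}_j$ is compact. For every $j\in\N\cup\{\infty\}$, $X_x$ is $g_j$-timelike and $v$ is $g_j$-causal and nonzero. Hence $g_j(X_x,v)\neq 0$, by the reverse Cauchy--Schwarz inequality, or simply because the $g_j$-orthogonal complement of a timelike vector is spacelike. For a single $j$, continuity of $(x,v)\mapsto g_j(X_x,v)$ on $\mathcal{C}_j$ therefore gives a positive lower bound $c_j>0$ for $|g_j(X,v)|$. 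This is essentially \cite[Lem.\ 2.6.6]{chrusciel2011elements}.

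Uniformity in $j$ comes from a contradiction argument. Suppose there exist $j_k$ and $(x_k,v_k)\in\mathcal{C}_{j_k}$ with $|g_{j_k}(X_{x_k},v_k)|\to 0$. If $(j_k)$ takes some value $j$ infinitely often, this contradicts $c_j>0$. So we may assume $j_k\to\infty$, and by compactness of $S$ that $(x_k,v_k)\to(x,v)\in S$. Locally uniform convergence $g_j\to g_\infty$ means $\sfd_{K'}(g_{j_k},g_\infty)\to 0$ on a compact neighbourhood $K'$ of $K$. Combined with continuity of $g_\infty$ and of $X$, this yields
\[
g_{j_k}(v_k,v_k)\to g_\infty(v,v),\qquad g_{j_k}(X_{x_k},v_k)\to g_\infty(X_x,v).
\]
The estimate used is $|g_{j_k}(a,b)-g_\infty(a,b)|\le \sfd_{K'}(g_{j_k},g_\infty)|a|_h|b|_h$ together with boundedness of $|X|_h$ on $K$. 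It follows that $g_\infty(v,v)\le 0$ with $|v|_h=1$, so $v$ is $g_\infty$-causal. Moreover $g_\infty(X_x,v)=0$, which contradicts the single-metric statement for $g_\infty$. Taking $c(K)$ to be the resulting uniform infimum gives the claim for unit vectors, and scaling extends it to all causal $v$.

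The delicate point is that the single-metric fact for the continuous $g_\infty$ must hold pointwise. It does: at each point $g_\infty$ is a Lorentzian inner product, and the orthogonal complement of the timelike vector $X_x$ contains no nonzero causal vectors. The only other input is passing to the limit in the causal cones, which is handled by the closedness of the condition $g(v,v)\le 0$ under uniform convergence. I expect no serious obstacle beyond bookkeeping of the $h$-norm bounds.
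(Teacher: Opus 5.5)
Your proof is correct and is essentially the paper's argument. The paper takes the compact set $\{g_j : j\in\N\}\cup\{g_\infty\}$ in $C^0$, after localizing to a chart with $h$ Euclidean, and minimizes the continuous, strictly positive function $(g,x,v)\mapsto |g(x)(X,v)|$ over the compact set of $g$-causal $h$-unit vectors in $B\times K\times\partial B_1(0)$; your sequential contradiction argument unpacks exactly this compactness, and working intrinsically on the unit sphere bundle over $K$ spares you the localization step.
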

\begin{proof}
    We will establish the statement along the lines of the proof of \cite[Lem.\ 2.6.6]{chrusciel2011elements}. We may assume that $K \subset \R^{d+1}$ and that $h = |\cdot|_{euc}$. Define
    \begin{align*}
        B:= \overline{\{g_j, j \in \N\}\cup \{g_\infty\}}^{\Gamma^0((T^*K)^{\otimes2})} \subset \Gamma^0((T^*K)^{\otimes2}). 
    \end{align*}
    As $g_j \to g_\infty$ in $C^0$, $B$ is compact in $C^0(K, \R^{(d+1) \times (d+1)}_{sym})$ and only contains Lorentzian metrics. 
    It follows that the set $$\Tilde{S}:= B \times K \times \partial B^{\R^{d+1}}_1(0) \subset C^{0}(K, \R^{{(d+1)} \times {(d+1)}}_{sym}) \times \R^{d+1} \times \R^{d+1}$$ is compact. 
    The map 
    \begin{align*}
        m: \Tilde{S} \ni (g, x, v) \mapsto g(x)(v,v) \in \R,
    \end{align*}
    is continuous. It follows that  $$S:= m^{-1}((-\infty, 0]) \subset \Tilde{S}$$ is closed and hence compact. Now note that for $(g, x, v) \in S$, we have that $v$ is $g$-causal when viewed as a vector in $T_xM$ and $|v|_{euc} = 1$. 
    Then the continuous map
    \begin{align*}
        f : S \ni (g,x,v) \mapsto |g(x)(X,v)| \in \R
    \end{align*}
    is strictly positive on $S$. The compactness of $S$ implies that there exists $c>0$ such that $f(g,x,v)\geq c$, for all $(g,x,v)\in S$ . 
\end{proof}

The next proposition in inspired by  \cite[Prop.\ 1.2]{CG:12}. 
\begin{proposition}\label{appx_majorising_metrics}
    For any compact subset $K \subset M$, it is possible to construct a sequence of smooth Lorentzian metrics $\Tilde{g}_k$ and an increasing sequence of integers $J_k$, with the following properties:
    \begin{itemize}
    \item  ${g}_\infty \prec \Tilde{g}_{k+1} \prec \Tilde{g}_k$;
    \item $g_{j} \prec \Tilde{g}_k $ on $K$, for all $j \geq J_k$;
    \item $\Tilde{g}_k \to g_\infty$ locally uniformly;
    \item $\Tilde{g}_k(v,v) < g_j(v,v)$, for all $j \geq J_k$ and all $v \in TK$ $g_j$-timelike.
    \end{itemize}
\end{proposition}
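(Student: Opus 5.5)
We follow the construction of \cite[Prop.~1.2]{CG:12}: widen the light cones of $g_\infty$ by a small amount in the direction of the time orientation $X$, with the amount of widening shrinking to zero. Precisely, let $\omega$ be the one-form $h$-dual to $X$, normalised by $|X|_h=1$ on a neighbourhood of $K$ (after rescaling $X$, which does not change the time orientation). For $\lambda>0$ consider the continuous Lorentzian metric
\[
\hat g_\lambda := g_\infty - \lambda\, \omega\otimes\omega .
\]
Since $\omega(v)\neq 0$ for every $g_\infty$-causal $v\neq 0$ (by Proposition~\ref{uniform_bound_prosuct_with_time_or} applied to $j=\infty$, up to replacing $\omega$ with $g_\infty(X,\cdot)$, which is the choice I will actually make), we get $\hat g_\lambda(v,v)<g_\infty(v,v)\le 0$ for such $v$. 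Hence $g_\infty\prec\hat g_\lambda$, and $\hat g_{\lambda}\prec \hat g_{\lambda'}$ for $\lambda<\lambda'$. Since $\hat g_\lambda$ is only continuous, I mollify it on a locally finite atlas with a partition of unity: choose smooth $\tilde g_k$ with $\sfd_{M}(\tilde g_k,\hat g_{\lambda_k})$ small, in the weighted sense of \eqref{norm_for_lorentz_metric} on each compact of an exhaustion, for a sequence $\lambda_k\downarrow 0$.

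\textbf{Key quantitative step.} On any compact $S$ there is $\delta(S,\lambda)>0$ such that every symmetric tensor $g'$ with $\sfd_S(g',g_\infty)<\delta$ satisfies both $g'\prec \hat g_\lambda$ and $\hat g_\lambda(v,v)<g'(v,v)$ for all $g'$-causal $v$. To prove it, use the compactness argument of Proposition~\ref{uniform_bound_prosuct_with_time_or}. On the compact set of unit vectors that are $g'$-causal, for $g'$ ranging in a $C^0$-neighbourhood of $g_\infty$, one has $|g_\infty(X,v)|\ge c$. This gives
\[
\hat g_\lambda(v,v) = g_\infty(v,v) - \lambda\, g_\infty(X,v)^2 \le g'(v,v) + \delta - \lambda c^2 < g'(v,v)
\]
as soon as $\delta<\lambda c^2$. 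The same estimate gives $\hat g_\lambda(v,v)<0$, hence the cone inclusion. Applying this with $g'=\tilde g_k$, and then with $g'=\hat g_{\lambda_{k+1}}$ perturbed, gives the nesting $g_\infty\prec\tilde g_{k+1}\prec\tilde g_k$. For the nesting I use a margin: take $\tilde g_{k}$ within $\delta_k\ll(\lambda_k-\lambda_{k+1})c^2$ of $\hat g_{\lambda_k}$, so that strict inequalities between consecutive $\hat g$'s survive mollification. Local uniform convergence $\tilde g_k\to g_\infty$ follows from $\lambda_k\to0$ and $\delta_k\to 0$.

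\textbf{Choice of $J_k$.} Since $g_j\to g_\infty$ uniformly on $K$, choose $J_k$ increasing such that $\sfd_K(g_j,g_\infty)<\delta_k'$ for $j\ge J_k$, with $\delta_k'$ small relative to $\lambda_k c^2-\delta_k$. The estimate above, now with $g'=g_j$ and with $\tilde g_k$ in place of $\hat g_{\lambda_k}$ (the error is at most $\delta_k$), yields $\tilde g_k(v,v)<g_j(v,v)$ for all $g_j$-causal, in particular $g_j$-timelike, $v\in TK$. It also yields $g_j\prec\tilde g_k$ on $K$. Here the constant $c$ must be uniform in $j$, which is exactly Proposition~\ref{uniform_bound_prosuct_with_time_or}.

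The main obstacle is bookkeeping rather than a conceptual difficulty: making the global nesting $\tilde g_{k+1}\prec\tilde g_k$ hold on all of $M$, not just on $K$, while using only continuous data. I handle this by letting the widening parameter be a positive continuous function $\lambda_k(x)$ with $\lambda_{k+1}<\lambda_k$, and by choosing the mollification error as a continuous function bounded by a fraction of $(\lambda_k-\lambda_{k+1})c(x)^2$. Here $c(x)$ is a locally defined, and hence continuous after shrinking, positive lower bound from Proposition~\ref{uniform_bound_prosuct_with_time_or}. Standard partition-of-unity smoothing allows such a pointwise-controlled error.
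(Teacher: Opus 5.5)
Your proposal is correct and follows essentially the paper's construction: widen the cones via $\hat g = g_\infty-\lambda\, g_\infty(X,\cdot)\otimes g_\infty(X,\cdot)$, use the uniform constant $c$ from Proposition~\ref{uniform_bound_prosuct_with_time_or}, smooth, and choose $J_k$ from the uniform convergence $g_j\to g_\infty$ on $K$. The only difference is in the smoothing step. The paper gets the global nesting by citing \cite[Prop.~2.3]{kunzinger2015penrose}, which sandwiches a smooth metric as $\hat g_{k+1}\prec\tilde g_k\prec\hat g_k$; you build the same nesting by hand, using pointwise error margins $\delta_k\ll(\lambda_k-\lambda_{k+1})c^2$.
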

\begin{proof}
     We will first construct a sequence of continuous Lorentzian metrics $\hat{g}_k$ and an increasing sequence of integers $J_k$ such that:
     \begin{itemize}
    \item  ${g}_\infty \prec \hat{g}_{k+1} \prec \hat{g}_k$;
    \item $g_{j} \prec \hat{g}_k $ on $K$, for all $j \geq J_k$;
    \item $\hat{g}_k \to g_\infty$  locally uniformly.
      \end{itemize}
    Using Proposition \ref{uniform_bound_prosuct_with_time_or}, we get that there exists a constant $c \in (0,1)$ such that 
    $$
    |g_j(X,v)| \geq c|v|_h,
    $$
    for all $j \in \N \cup \{\infty\}$ and all $v \in T_xM$ with $g_j(v,v) \leq 0$ and  $x \in K$. 
    For each $k \in \N$, there exists a $J_k \geq 0$ such that, for all $j \geq J_k$, it holds:
    \begin{align*}
        &|(g_\infty(X,v))^2-(g_j(X,v))^2| \leq \frac{c^2}{4k} |v|^2_h \ \mathrm{on\ } TK,\\
        &|g_\infty(v,v)-g_j(v,v)| \leq \frac{c^2}{4k} |v|^2_h \ \mathrm{on\ } TK.
    \end{align*}
Denote, locally, $X^{\flat, \infty}_m := g^\infty_{ml}X^l$, and define
\begin{align}
    \hat{g}_k|_K := g_\infty -\frac{1}{k} X^{\flat, \infty} \otimes X^{\flat, \infty}.
\end{align}
The compactness of $K$ implies that there exists $k_0 \in \N$ such that  $\hat{g}_k$ is a Lorentzian metric on a neighbourhood of $K$, for all $k \geq k_0$. 
 It follows that 
\begin{align*}
    \hat{g}_k(v,v) =  g_\infty(v,v) - \frac{1}{k} g_\infty(X,v)^2, 
\end{align*}
for any vector $v \in T_xM$, $x \in K$.
Now assume $g_j(v,v) \leq 0$ for some $j\in \N_{\geq {J_k}} \cup \{\infty\}$.
Then 
\begin{align}\label{g_j_versus_g_hat_k}
    \hat{g}_k(v,v) &=  g_\infty(v,v) - \frac{1}{k}g_\infty(X,v)^2 \leq g_j(v,v)  -\frac{1}{k}g_j(X,v)^2 + \frac{c^2}{2k}|v|_h^2 \nonumber \\
    &\leq g_j(v,v) -\frac{c^2}{2k}|v|_h^2 <0. 
\end{align}
Moreover, for any vector $v \in TK$, it holds 
\begin{align*}
    \hat{g}_k(v,v) =  g_\infty(v,v) - \frac{1}{k} g_\infty(X,v)^2 \leq   g_\infty(v,v) - \frac{1}{k+1} g_\infty(X,v)^2 =\hat{g}_{k+1}(v,v), 
\end{align*}
with the inequality being strict if $g_\infty(v, X) \neq 0$.  
We can continuously extend the metrics $\hat{g}_k$ outside $K$ such that they are Lorentzian everywhere and such that $g_\infty \prec \hat{g}_{k+1} \prec \hat{g}_k$.
Thanks to \cite[Prop.\ 2.3]{kunzinger2015penrose}, for each $k$, we can  find a smooth Lorentzian metric $\Tilde{g}_k$ on $M$ such that $\hat{g}_{k+1} \prec \Tilde{g}_k \prec \hat{g}_k$ and  
\begin{align*}
    \sfd_{M}(\Tilde{g}_k, \hat{g}_k) \leq \frac{c^2}{8k}.
\end{align*}
Then note that for $j \geq J_k$ and any $v \in TK$ $g_j$-causal, \eqref{g_j_versus_g_hat_k} implies  
\begin{align*}
    \Tilde{g}_k(v,v) \leq \hat{g}_k(v,v) +  \frac{c^2}{8k}|v|_h^2 \leq g_j(v,v) - \frac{c^2}{2k}|v|^2_h +  \frac{c^2}{8k}|v|_h^2 < g_j(v,v).
\end{align*}
Observing that $\Tilde{g}_k$ converge to $g_\infty$ locally uniformly, the proof is complete.
\end{proof}
The next proposition is inspired by  \cite[Thm.\ 1.5 and Lem.\ 2.7]{samann2016global}.
\begin{proposition}\label{Lipschitz_bd_causal_curves}
    Let $\Tilde{g}_k$ be constructed in Proposition \ref{appx_majorising_metrics}.  Then there exist  $J_0, k_0 \in \N$ and a constant $C >0$ such that:
    \begin{itemize}
   \item  Any $\Tilde{g}_{k_0}$-causal curve that is contained in $K$ has  $h$-length at most $C$;
   \item Any $g_j$-causal curve $\gamma$ contained in $K$ has $h$-length at most $C$, for all $j \geq J_0$. 
   \end{itemize}
\end{proposition}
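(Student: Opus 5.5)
The plan is to reduce both claims to a single statement about the smooth metric $\Tilde{g}_{k_0}$, and to use the cone inclusions from Proposition \ref{appx_majorising_metrics} to transfer it to the $g_j$. Since $\Tilde{g}_{k}$ is smooth, Lorentzian and satisfies $g_\infty \prec \Tilde{g}_k$, I first need a $k_0$ for which $(M,\Tilde{g}_{k_0})$ is non-totally imprisoning, at least on a neighbourhood of $K$. Once that is available, the first bullet follows directly from the definition. The second bullet then comes from the choice $J_0:=J_{k_0}$. For $j\geq J_0$ we have $g_j\prec\Tilde{g}_{k_0}$ on $K$, so every $g_j$-causal vector over $K$ is $\Tilde{g}_{k_0}$-timelike. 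Hence any $g_j$-causal curve contained in $K$ is $\Tilde{g}_{k_0}$-causal and contained in $K$, and its $h$-length is bounded by the same constant $C$.

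The key step is to find $k_0$ such that $\Tilde{g}_{k_0}$-causal curves in $K$ have uniformly bounded $h$-length. The approach I would try, following \cite[Thm.\ 1.5, Lem.\ 2.7]{samann2016global}, is to use global hyperbolicity of $g_\infty$. The first ingredient is a (continuous) Cauchy time function for $g_\infty$; even better would be a smooth function $T$ on a neighbourhood $U$ of $K$ with $\di T(v)\geq \delta |v|_h$ for all future-directed $g_\infty$-causal $v\in TU$. Such a $T$ is steep for $g_\infty$, and by continuity and compactness, as in the argument of Proposition \ref{uniform_bound_prosuct_with_time_or}, the cone condition $\di T(v)>0$ is open in the metric. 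I would make this precise with the compactness argument used there: the set of pairs $(x,v)$ with $x\in \overline U$, $|v|_h=1$ and $v$ being $g$-causal varies upper-semicontinuously with $g$ in $C^0$. Since $\Tilde{g}_k\to g_\infty$ uniformly on $\overline U$, there is a $k_0$ with $\di T(v)\geq \frac{\delta}{2}|v|_h$ for all $\Tilde{g}_{k_0}$-causal $v$ over $\overline U$. For any $\Tilde{g}_{k_0}$-causal curve $\gamma$ in $K$ this gives
\[
L_h(\gamma)\le \frac{2}{\delta}\bigl(T(\gamma_{end})-T(\gamma_{start})\bigr)\le \frac{4}{\delta}\sup_K|T| =: C .
\]

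The main obstacle is producing a smooth $T$ that is uniformly steep with respect to a \emph{continuous} metric $g_\infty$ near $K$. I would first try a shortcut that avoids time functions. If the common time-orientation vector field $X$ could be written as $X=\nabla^h f$ locally, we could use $\di f$ directly, but that is not true in general. Instead I would take local charts covering $K$, each of which is small enough that the cones of $g_\infty$ lie strictly inside a fixed Minkowski cone around a coordinate time $x^0$. In each chart the coordinate function $x^0$ is steep for $\Tilde{g}_{k_0}$ once $k_0$ is large. However, curves can leave and re-enter charts, so a local argument alone does not bound the total length. This is exactly where global hyperbolicity, namely non-total imprisonment of $g_\infty$, has to enter. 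So the fallback is to cite the existence of a smooth steep temporal function for a smooth metric $g'$ with $g_\infty\prec g'$ that is still globally hyperbolic, obtained from the stability of global hyperbolicity in the $C^0$-fine topology for continuous metrics \cite{samann2016global}. Then choose $k_0$ with $\Tilde{g}_{k_0}\prec g'$ on $K$; this is possible by uniform convergence and compactness. After that, the length bound above applies with $T$ the temporal function of $g'$.
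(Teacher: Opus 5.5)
Your argument is correct, but it takes a different route from the paper.

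\textbf{The paper's route.} The paper argues by contradiction. If no $k_0$ works, there are $\Tilde{g}_k$-causal curves $\gamma_k\colon[0,k]\to K$ parametrised by $h$-arclength. All of them are $\Tilde{g}_1$-causal, so the limit curve theorem \cite{minguzzi2008limit} gives a limit $\gamma\colon[0,\infty)\to K$. This limit is $g'$-causal for every continuous $g'\succ g_\infty$, because $\Tilde{g}_k\prec g'$ on $K$ for large $k$. Hence $\gamma$ is $g_\infty$-causal by \cite[Prop.\ 1.5]{CG:12}. A $g_\infty$-causal curve of infinite $h$-length in $K$ then contradicts non-total imprisonment of $g_\infty$ itself \cite[Lem.\ 2.7]{samann2016global}.

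\textbf{Your route.} Your fallback instead fixes a single smooth globally hyperbolic $g'\succ g_\infty$ from the stability theorem of \cite{samann2016global}. You choose $k_0$ with $\Tilde{g}_{k_0}\prec g'$ on $K$, using compactness and uniform convergence, and bound lengths directly.

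\textbf{Comparison.} Your route avoids limit curves and the characterisation of $g_\infty$-causality through majorising metrics, and it produces an explicit constant. The price is importing the considerably deeper stability theorem, together with a smooth temporal function for $g'$. That temporal function comes from Bernal--S\'anchez, not from \cite{samann2016global}. The paper's argument only needs $g_\infty$ itself to be non-totally imprisoning. The import is consistent with the paper's toolkit, though, since the paper invokes the same stability theorem later, for instance in Lemma \ref{tau_uniform_equicontinuity}.

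\textbf{Simplifications.} The temporal function is superfluous. A smooth globally hyperbolic $g'$ is already non-totally imprisoning, and every $\Tilde{g}_{k_0}$-causal curve in $K$ is $g'$-causal, so the length bound is immediate. The exploratory discussion of $g_\infty$-steep functions and local charts can also be dropped. Your treatment of the second bullet, via $J_0:=J_{k_0}$ and $g_j\prec\Tilde{g}_{k_0}$ on $K$, is exactly the paper's.
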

\begin{proof}
Let $\Tilde{g}_k$ be as in Proposition \ref{appx_majorising_metrics}. 
We will prove the proposition by contradiction.
Assume that for all $k \geq 1$, we can find $\Tilde{g}_k$-causal curves $\gamma_k:[0, k] \to K$ that are parameterised by $h$-arclength. As $K$ is compact, up to passing to a subsequence, we may assume that $\gamma_k(0) \to x \in K$ as $k \to \infty$.
Note that all the $\gamma_k$ are $\Tilde{g}_1$-causal. By \cite[Thm.\ 3.1 (1)]{minguzzi2008limit}, there exists a $\Tilde{g}_1$-causal curve $\gamma :[0,\infty) \to K$ such that $\gamma(0) = x$ and (a subsequence of) $\gamma_k$ converges locally uniformly to $\gamma$. We will now prove that $\gamma$ is $g_\infty$-causal. 
Since this is a local property, we may restrict to some interval $[a,b]$, where $0 \leq a < b < \infty$. Let $g'$ be any continuous metric that satisfies $g_\infty \prec g'$. By \cite[Lem.\ 1.4]{samann2016global}, there exists $k' \in \N$ such that  $\Tilde{g}_k \prec g'$ on $K$, for all $k \geq k$'.  
Now  $(\gamma_k|_{[a,b]})_{k \geq k'}$ is a sequence of $g'$-causal curves, that has again a subsequence converging uniformly to a $g'$-causal limit curve $\gamma':[a,b] \to K$. By the uniqueness of limits, it follows that $\gamma|_{[a,b]} = \gamma'$, hence $\gamma$ is $g'$-causal on $[a,b]$. This holds for any continuous $g' \succ g_\infty$ and all intervals $[a,b] \subset [0,\infty)$. 
By  \cite[Prop.\ 1.5]{CG:12}, it follows that $\gamma$ is $g_\infty$-causal. 
Hence, there exists a $g_\infty$-causal curve $\gamma$ contained in $K$ whose $h$-length is infinite. This contradicts  \cite[Lem.\ 2.7]{samann2016global}, since  $(M,g_\infty)$ was assumed to be globally hyperbolic. This concludes the proof of the first claim. In order to prove the second claim, set $J_0:=J_{k_0}$.  We get that any $g_j$-causal curve contained in $K$ is $\Tilde{g}_{k_0}$-causal, hence it has $h$-length at most $C>0$. 
\end{proof}

Throughout this section, we assume that the sequence $(g_j)_{j\in \N}$ satisfies the following uniform form of global hyperbolicity.
\smallskip

\begin{definition}\label{uniform_global_hyperbolicity_def}
    We say that the metrics $(g_j)_{j\in \N}$ are \emph{uniformly globally hyperbolic}, if for any two compact sets $K_1, K_2 \subset M$,
    $$\bigcup_{j=1}^\infty (J^-_{g_j}(K_1) \cap J^+_{g_j}(K_2)) \Subset M.$$ 
\end{definition}
In essence, this means that each Lorentzian metric $g_j$ is globally hyperbolic and that the compactness of the causal emeralds $J^-_{g_j}(K_1) \cap J^+_{g_j}(K_2)$ holds uniformly along the sequence.

We refer to Section \ref{SS:SuffUGH} for a sufficient condition ensuring uniform global hyperbolicity, and to Appendix \ref{subsec:counterexample_uniform_hyperbolicity} for an example of a sequence of smooth globally hyperbolic Lorentzian metrics $(g_j)_{j\in \mathbb{N}}$ on $\mathbb{R}^{d+1}$, $d \ge 2$, converging locally uniformly to a smooth globally hyperbolic limit $g_\infty$, which fails to be uniformly globally hyperbolic.
\smallskip

\begin{lemma}\label{separable_basis}
    Let $M$ be a smooth manifold and let $K \subset M$ be a compact subset. Then for each Borel measure $\mu$ with $\supp\, \mu \subset K$, there exists a countable basis for the $w^*$-open topology on $\mathcal{P}(K)$ containing $\mu$. 
\end{lemma}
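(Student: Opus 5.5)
The plan is to reduce the statement to the second countability of the weak$^*$ topology on $\mathcal{P}(K)$. Since $K$ is a compact subset of the manifold $M$, it is a compact metric space with the restriction of $\sfd_h$. Hence $C(K)$ with the sup norm is separable: one can use Stone--Weierstrass, or the explicit countable family of functions $x\mapsto \max\{0, r-\sfd_h(x,q)\}$ with $q$ in a countable dense subset of $K$ and $r\in\mathbb{Q}_{>0}$, together with their finite products and rational linear combinations. Fix a countable dense set $\{f_i\}_{i\in\N}\subset C(K)$.

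For $\nu\in\mathcal{P}(K)$, $n\in\N$ and rational $\e>0$, consider the sets
\[
V(\nu; n,\e) := \Big\{\sigma\in\mathcal{P}(K) : \Big|\int f_i\,\di\sigma - \int f_i\,\di\nu\Big|<\e \ \text{ for } i=1,\dots,n\Big\}.
\]
We first show that the sets $V(\mu;n,\e)$ form a countable neighbourhood basis at $\mu$. Let $W$ be a weak$^*$ open set containing $\mu$. By definition of the topology, $W$ contains a set of the form $\{\sigma : |\int g_k\,\di\sigma-\int g_k\,\di\mu|<\delta,\ k=1,\dots,m\}$ with $g_k\in C(K)$. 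For each $k$, density gives some $f_{i_k}$ with $\|g_k-f_{i_k}\|_\infty<\delta/3$. Since all measures involved are probability measures, the triangle inequality yields $V(\mu;n,\delta/3)\subset W$ for any $n\ge\max_k i_k$, after shrinking $\delta/3$ to a rational number.

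For a genuine countable basis of the whole topology, as opposed to a basis at $\mu$, we also fix a countable weak$^*$ dense set $D\subset\mathcal{P}(K)$. A natural choice is the rational convex combinations of Dirac masses at points of a countable dense subset of $K$; these are dense by a standard approximation of any probability measure by discrete ones on a fine partition. We then take the family $\{V(\nu;n,\e)\}$ with $\nu\in D\cup\{\mu\}$, $n\in\N$ and $\e\in\mathbb{Q}_{>0}$. Suppose $\sigma\in W$ with $W$ open. Choose $n$ and $\e$ so that $V(\sigma;n,2\e)\subset W$, as above. Then pick $\nu\in D\cap V(\sigma;n,\e)$. This gives $\sigma\in V(\nu;n,\e)\subset V(\sigma;n,2\e)\subset W$, which is the basis property.

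An alternative route would use metrizability of $\mathcal{P}(K)$ by a metric such as $\sum_i 2^{-i}\,|\int f_i\,\di\sigma-\int f_i\,\di\nu|/(1+\|f_i\|_\infty)$, combined with the fact that a separable metric space is second countable.

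The only step needing care is the approximation step using that the measures have unit total mass. I expect no real obstacle; the argument is essentially the separability of $C(K)$.
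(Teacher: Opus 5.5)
Your proof is correct, but it is more hands-on than the paper's. The paper's argument is two lines of functional analysis. Since $K$ is compact metrizable, $C(K)$ is separable, so the closed unit ball of $C(K)^*$ is metrizable in the weak$^*$ topology. This gives a countable neighbourhood base at $0$, which the paper moves to $\mu$ ``by translation''. Strictly speaking no translation is needed: $\mathcal{P}(K)$ already lies in the unit ball, so metrizability of the ball directly gives a countable local base at $\mu$.

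You instead build the local base explicitly. Your sets $V(\mu;n,\varepsilon)$ come from a countable dense family $\{f_i\}\subset C(K)$ and the $\delta/3$ estimate that uses unit total mass. This is essentially the standard proof of the metrizability theorem the paper cites, specialized to probability measures. Your closing remark about the metric $\sum_i 2^{-i}|\cdot|/(1+\|f_i\|_\infty)$ is precisely the paper's route.

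What each approach buys:
\begin{itemize}
\item The paper's version is shorter.
\item Yours is self-contained and never has to decide which bounded set carries the metric.
\item Yours also proves more. Using the countable dense set of rational discrete measures, you get a countable base for the whole topology of $\mathcal{P}(K)$ (second countability), not just a countable local base at $\mu$.
\end{itemize}
The paper only uses the local statement later, to produce the nested neighbourhoods $O_l$ of $\pi_{\infty,n}$ in the proof of Theorem~\ref{stability_tcd}. So your first step alone already suffices for the paper's purposes.
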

\begin{proof}
Since every smooth manifold admits a Riemannian metric, and hence a corresponding Riemannian distance function, it follows that \( M \) is metrizable and therefore so is \( K \). A classical result from functional analysis then implies that \( C(K) \) is separable.  Moreover, it is well known that for any separable Banach space \( X \), the unit ball of the dual space, \( (B_{X^*}, w^*) \), is metrizable with respect to the weak\(^*\) topology. In particular, there exists a countable basis of \( w^* \)-open neighbourhoods of \( 0 \). Taking \( X = C(K) \) and recalling that its dual space \( C(K)^* \) coincides with the space of finite regular Borel measures, the claim follows by translation.
\end{proof}
\begin{lemma}\label{tau_uniform_equicontinuity}
    Assume the metrics $(g_j)_{j \in \N}$ are smooth and uniformly globally hyperbolic and converge locally uniformly to $g_\infty$, a continuous Lorentzian metric.
 Moreover,   assume that any two points $(x, y) \in M^2$ with $\tau_\infty(x, y) >0$ can be joined by a timelike and Lipschitz continuous $g_\infty$-maximizer (when parametrized with respect to arclength). Then the family $(\tau_j)_{j\in \N}$ is locally equicontinuous. 
\end{lemma}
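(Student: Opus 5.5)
Fix a compact set $K\subset M$ and $\e>0$. The goal is a $\delta>0$ such that $|\tau_j(x,y)-\tau_j(x',y')|<\e$ for all $j$ and all $x,x',y,y'\in K$ with $\sfd_h(x,x')+\sfd_h(y,y')<\delta$. The plan uses two ingredients: a uniform \emph{upper} semicontinuity estimate and a uniform \emph{lower} estimate. Both are derived from the reverse triangle inequality together with uniformly controlled timelike ``connecting'' segments.

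\textbf{Step 1: a common compact set.} Set $\hat K:=\overline{\bigcup_j J^+_{g_j}(K)\cap J^-_{g_j}(K)}$. This set is compact by uniform global hyperbolicity (Definition \ref{uniform_global_hyperbolicity_def}). By Proposition \ref{Lipschitz_bd_causal_curves}, applied to a compact neighbourhood $K'$ of $\hat K$, there is a uniform bound $C$ on the $h$-length of $g_j$-causal curves in $K'$ for $j\ge J_0$. Moreover, $\sqrt{-g_j(\dot\gamma,\dot\gamma)}\le C'|\dot\gamma|_h$ on $K'$ uniformly in $j$, since $g_j\to g_\infty$ uniformly there. Hence $\tau_j\le CC'$ on $K\times K$, uniformly in $j$. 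The finitely many $j<J_0$ are each continuous, so they are handled individually; from now on assume $j\ge J_0$.

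\textbf{Step 2: uniform timelike shortcuts.} For each point $z\in K$ I want points $z^\pm_\eta$ near $z$ such that $z^-_\eta\ll_j z'\ll_j z^+_\eta$ for every $z'\in B_\eta(z)$ and all large $j$, with $\tau_j(z^-_\eta,z^+_\eta)$ small. Concretely, take the integral curve of the common time orientation $X$ (or a chart-coordinate timelike direction) through $z$, and set $z^\pm_\eta$ to be the points at parameter $\pm r$. For small $r$ and for $j\ge J_k$, Proposition \ref{appx_majorising_metrics} gives comparison cones: the cones of $\Tilde g_k$ contain those of $g_j$, and the cones of $g_\infty$ narrowed slightly are contained in those of $g_j$, uniformly. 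Working in a chart, a Minkowski-type comparison then shows $B_\eta(z)\subset I^+_{g_j}(z^-)\cap I^-_{g_j}(z^+)$ for $\eta\ll r$. The same comparison gives $\tau_j(z^-,z^+)\le CC'\,\cdot$ (the $h$-length of causal curves inside a diamond of size $O(r)$), and this is $O(r)$ uniformly in $j$. Compactness of $K$ makes $r,\eta$ uniform.

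\textbf{Step 3: combine.} For $x'$ near $x$ and $y'$ near $y$ we have $x^-\le_j x'$ and $y'\le_j y^+$, so the reverse triangle inequality gives
\[
\tau_j(x',y')\le \tau_j(x^-,y^+)\le \tau_j(x^-,x^+)+\tau_j(x^+,y^-)+\tau_j(y^-,y^+)
\]
whenever $x^+\le y^-$, and similarly $\tau_j(x,y)\ge \tau_j(x^+,y^-)$. The problem is thereby reduced to bounding $\tau_j(x^-,y^+)-\tau_j(x^+,y^-)$ by $O(r)$. I would obtain this by showing $\tau_j(x^-,y^+)\le \tau_j(x^-,x^+)+\tau_j(x^+,y^-)+\tau_j(y^-,y^+)+o(1)$ via maximizers. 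Take a $g_j$-maximizer $\sigma$ from $x^-$ to $y^+$; it exists by global hyperbolicity of $g_j$, and it stays in $\hat K$-type compacta. Then modify $\sigma$ near its endpoints, using a Step 2 type cone comparison, to make it pass through $x^+$ and $y^-$, at a cost of $O(r)$ in length. The case where $\tau_j(x,y)$ is small (possibly with $x\not\le y$) is handled directly by the upper bound, since both sides are then $O(r)$.

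\textbf{Main obstacle.} The main obstacle is the endpoint modification in Step 3 uniformly in $j$: rerouting a $g_j$-maximizer through $x^+$ requires the uniform cone comparison near its initial segment, which is exactly where the hypothesis on $g_\infty$-maximizers is needed. My first attempt would be to argue by contradiction instead. Suppose $x_j,x_j'\to x$ and $y_j,y_j'\to y$ with $|\tau_j(x_j,y_j)-\tau_j(x_j',y_j')|\ge\e$. By Steps 1--2, the limit-curve theorem of \cite{minguzzi2008limit} and the argument in the proof of Proposition \ref{Lipschitz_bd_causal_curves}, the $g_j$-maximizers subconverge to a $g_\infty$-causal curve. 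An upper semicontinuity of length (via the metrics $\Tilde g_k$) gives $\limsup\tau_j(x_j,y_j)\le\tau_\infty(x,y)$. Conversely, the assumed timelike Lipschitz $g_\infty$-maximizer, pushed slightly into the $g_j$-timelike cones via Step 2, gives $\liminf\tau_j(x_j',y_j')\ge\tau_\infty(x,y)$. Exchanging the roles of the two sequences yields the contradiction.
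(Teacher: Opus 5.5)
Your fallback argument by contradiction in the last paragraph is exactly the paper's proof. With $\tau_j(x_j,y_j)\ge\tau_j(x_j',y_j')+\e$, the paper lets the $g_j$-maximizers from $x_j$ to $y_j$ subconverge to a $g_\infty$-causal curve. It gets $\tau_\infty(x,y)\ge\xi:=\lim_j\tau_j(x_j,y_j)\ge\e$ by upper semicontinuity of length; for this, the $\Tilde{g}_k$ are glued into globally hyperbolic metrics $g'_k$, so that $L_{g'_k}=L_{\tau'_k}$ is upper semicontinuous under uniform convergence. It then perturbs the endpoints of the timelike Lipschitz $g_\infty$-maximizer from $x$ to $y$, obtaining, for large $j$, $g_j$-timelike curves between all nearby pairs of length at least $\xi-2\e/3$.

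Build the proof on that argument and drop the direct route of Step 3. Its $O(r)$ claims fail already in two-dimensional Minkowski space: for $x,y$ null related, $\tau(x,y)=\tau(x^+,y^-)=0$ while $\tau(x^-,y^+)\sim\sqrt{r}$. What Step 3 actually needs is an $o(1)$ bound uniform in $j$, and that is precisely what the lemma asserts.
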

\begin{proof}
    Assume by contradiction that the family $\{\tau_j\}_{j\in \N}$ is not locally equicontinuous. Then, there exists a compact set $K \subset M$, an $\e > 0$, and a sequence $(x_j, y_j, z_j, w_j) \in K^4$ such that
    \begin{align*}
        &\tau_j(x_j, y_j) > \tau_j(z_j, w_j) + \e, \\
        &\sfd_h^{M^2}((x_j, y_j), (z_j, w_j)) \to 0 \ \mathrm{as\ } j \to \infty.
    \end{align*}
 It follows that $\tau_j(x_j, y_j) > \e > 0$, hence there exists a smooth $g_j$-geodesic $\gamma_j:[0, c_j] \to M$ from $x_j$ to $y_j$.
 Denote 
 $$E:= \overline{\bigcup_{j\in \N \cup \{\infty\}} ( J^+_{{g}_j}(K) \cap J^-_{{g}_j}(K))}.$$ 
 The assumption on the uniform global hyperbolicity of $(g_j)_{j\in \N}$ implies the compactness of $E$.
  Thus, the local uniform $C^0$-bounds on $(g_j)_{j\in \N}$ and Proposition \ref{Lipschitz_bd_causal_curves} yield that  $(\tau_j)_{j\in \N}$ are uniformly bounded on $K^2$. Hence, by potentially passing to a subsequence, we may assume that  
 \begin{align}\label{convergence_to_length}
 \tau_j(x_j, y_j) \to \xi \geq \e > 0 \quad\text{ and } \quad     \tau_j(x_j, y_j) \geq \xi - \frac{1}{j}.
 \end{align}
 Note that all $\gamma_j$ are contained in the compact set $E$. We may reparametrise so that $\gamma_j$ has unit $|\cdot|_h$-speed. 
 Take $\Tilde{g}_k$ and $J_k$ as in Proposition \ref{appx_majorising_metrics} with the compact set $E$. It then follows that we can pick a subsequence $\gamma_{j_k} =:\gamma_k$ such that $\gamma_k$ is a $\Tilde{g}_k$-causal curve from $x_{j_k}=: x_k$ to $y_{j_k}=: y_k$ and $L_{\Tilde{g}_k}(\gamma_k) > L_{g_{j_k}}(\gamma_k)> \e$. 
 Using the arguments from the proof of Proposition \ref{Lipschitz_bd_causal_curves}, we get that there exists a $g_\infty$-causal limit curve $\gamma:[0,c]\to M$ such that (a subsequence of) $\gamma_k$ converges uniformly to $\gamma$.
 Note that $\liminf_j c_j \geq \frac{\xi}{\sup_j\norm{g_j}_K}$, hence $c >0$.
 Denote $\gamma(0)=:x \in K$ and $\gamma(c)=:y \in K$. Then $x_j \to x$, $y_j \to y$, $z_j \to x$ and $w_j \to y$.  \smallskip
 
 \textbf{Step 1:} We show that $\tau_\infty(x,y) \geq \xi$. \\  
From the proof of   \cite[Thm.\ 4.5]{samann2016global}, it follows that there exists a partition of unity $(\chi_k)_{k \in \N}$ of $M$ such that 
 \begin{align*}
     g' := \sum_k \chi_k\Tilde{g}_k
 \end{align*}
 satisfies that:
 \begin{itemize}
 \item $g' \succ g_\infty$; 
 \item $(M, g')$ is globally hyperbolic; 
 \item for $k \in \N$, the Lorentzian metrics
 \begin{align*}
     g'_k := \sum_{l \leq k} \chi_l \Tilde{g}_k +\sum_{l >k} \chi_l\Tilde{g}_l
 \end{align*}
 satisfy that $(M, g'_k)$ is globally hyperbolic;
 \item  for each compact set $B \subset M$, it holds $|\{k: \supp \, \chi_k \cap B \neq \emptyset\}| < \infty$. 
 \end{itemize}
It follows that $g'_{k+1} \prec g'_{k} \prec g'$.
 Denote $$E' := E \cup (J^+_{g'}(K)\cap J^-_{g'}(K)).$$
There exists  $k_1 \in \N$ such that for $k \geq k_1$, $\supp \, \chi_k \cap E' = \emptyset$.
Now we have that for each $k \geq k_1$, it holds $\gamma_k([0, c_k]) \subset E$, and $\Tilde{g}_k = g'_k$ on $E'$, hence $L_{g'_k}(\gamma_k) = L_{\Tilde{g}_k}(\gamma_k)$. 
Moreover, as $g'_k$ is globally hyperbolic, we get that the induced time-separation function $\tau'_k$ is continuous, by \cite[Thm.\ 3.28]{kunzinger2018lorentzian}. 
Moreover, by \cite[Lem.\ 2.32]{kunzinger2018lorentzian}, we have that for any $k \geq k_1$ and any $g'_k$-causal curve $\lambda$, it holds
\begin{align*}
    L_{\tau_k'}(\lambda) = L_{g'_k}(\lambda). 
\end{align*}
Hence, by our construction of $\Tilde{g}_k$ and $g'_k$, we get that, for each $l \geq k \geq k_1$:  $\gamma_l$ is $g'_k$-causal and 
\begin{align*}
    L_{\tau_k'}(\gamma_l) = L_{g'_k}(\gamma_l) = L_{\Tilde{g}_k}(\gamma_l) \geq L_{{g}_{j_l}}(\gamma_l) \geq \xi - \frac{1}{j_l}.
\end{align*}
In order to estimate $L_{\tau'_k}(\gamma)$, we pick a partition $0= a_0 < a_1 < \ldots < a_m  = c$ such that 
\begin{align*}
    \sum_{i=0}^m \tau'_k(\gamma(a_i), \gamma(a_{i+1})) \leq L_{\tau'_k}(\gamma) + \frac{1}{k} = L_{g'_k}(\gamma) + \frac{1}{k}.
\end{align*}
Since $\tau'_k$ is continuous and  $\gamma_l(a_i) \to \gamma(a_i)$, as $l \to \infty$ for all $i\in \N$ , we get that 
\begin{align*}
    \tau_k'(\gamma_l(a_i), \gamma_l(a_{i+1})) \to \tau'_k(\gamma(a_i), \gamma(a_{i+1})), \quad \text{as $l \to \infty$. }
\end{align*}
Hence,
\begin{align*}
    \sum_{i=0}^m  \tau_k'(\gamma_l(a_i), \gamma_l(a_{i+1})) \to \sum_{i=0}^m \tau'_k(\gamma(a_i), \gamma(a_{i+1})),\ \mathrm{as\ } l \to\infty.
\end{align*}
Note that 
\begin{align*}
     \sum_{i=0}^m  \tau_k'(\gamma_l(a_i), \gamma_l(a_{i+1})) \geq L_{\tau'_k}(\gamma_l) = L_{g'_k}(\gamma_l) \geq  L_{{g}_{j_l}}(\gamma_{j_l}), 
\end{align*}
which implies 
\begin{align*}
    L_{\tau'_k}(\gamma) +\frac{1}{k} \geq \sum_{i=0}^m \tau'_k(\gamma(a_i), \gamma(a_{i+1})) \geq \liminf_{l \to \infty}  L_{{g}_{j_l}}(\gamma_{j_l}) =\xi \geq \e >0. 
\end{align*}
Therefore, for all $k \geq k_0$:
\begin{align}
    \xi-\frac{1}{k} \leq L_{\tau'_k}(\gamma)=L_{g'_k}(\gamma) = \int_0^c \sqrt{g'_k(\dot{\gamma}(t),\dot{\gamma}(t))} \, dt = \int_0^c \sqrt{\Tilde{g}_k(\dot{\gamma}(t),\dot{\gamma}(t))} \, dt.
\end{align}
 Now, since $\Tilde{g}_k \to g_\infty$ uniformly on $E$, and $|\dot{\gamma}|_h \leq 1$, we get that 
\begin{align*}
    |\Tilde{g}_k(\dot{\gamma}(t),\dot{\gamma}(t)) - g_\infty(\dot{\gamma}(t),\dot{\gamma}(t))| \leq \norm{\Tilde{g}_k-g_\infty}|\dot{\gamma}(t)|^2_h \leq \norm{\Tilde{g}_k-g_\infty},
\end{align*}
for almost all $t \in [0,c]$. As the metrics $\Tilde{g}_k$ and $g_\infty$ are uniformly bounded on $E$, we can use the uniform continuity of $\sqrt{\cdot}$ together with the dominated convergence theorem, to conclude that 
\begin{align}\label{limit_curve_length}
    \lim_{k\to \infty} L_{g'_k}(\gamma) = L_{g_\infty}(\gamma) \geq \xi.
\end{align}
Since $\tau_\infty(x,y)\geq L_{g_\infty}(\gamma)$, the proof of  the first step is complete.
\smallskip

\textbf{Step 2:} We prove that there exists an open neighbourhood $U$ of $(x,y)$ and constants $J, L>0$ such that for any $(z, w)\in U$, there exists a $L$-Lipschitz continuous curve $\gamma_{z, w}$ from $z$ to $w$ with the following property.  For every $j \geq J$: $\gamma_{z, w}$ is $g_j$-causal  future directed, and satisfies $L_{g_j}(\gamma_{z, w}) \geq \xi -\frac{2\e}{3}$. \\

From step 1, we know that $\tau_\infty(x,y) \geq \xi$. From our assumptions, there exists a Lipschitz continuous curve $\lambda:[0,1] \to M$ from $x$ to $y$ that is $g_\infty$-maximising and parametrised by $g_\infty$-arclength. We get that $\lambda([0,1]) \subset E$ and that there exists a constant $L>0$ such that $\lambda$ is $L$-Lipschitz. Moreover, we get that almost everywhere on $[0,1]$, it holds
\begin{align*}
    -g_\infty(\dot{\lambda}, \dot{\lambda}) = \tau^2_\infty(x,y) \geq \xi^2.
\end{align*}
It  follows that 
\begin{align*}
    |\dot{\lambda}|_h^2 \geq \frac{\xi^2}{\sfd_M(0, g_\infty)} > 0,
\end{align*}
almost everywhere on $[0,1]$. Using Proposition \ref{uniform_bound_prosuct_with_time_or}, we get that there exists a constant $1 > \zeta > 0$ such that
\begin{align*}
    g_\infty(X, \dot{\lambda}) < -\frac{\zeta \xi}{\sqrt{\sfd_M(0, g_\infty)}},
\end{align*}
almost everywhere on $[0,1]$. Denote $B := \overline{B_1^h(E)}$ and let
\begin{align}
    G:= \sup_{j \in \N \cup \{\infty\}} \sfd_B(0,g_j) < \infty.
\end{align}
For $\kappa \in (0,1)$ small enough,   there exist coordinate patches $(U_x, \psi_x), (U_y, \psi_y)$ with $U_x \ni x, U_y \ni y$, such that $\lambda([0, \kappa]) \subset U_x$, $\lambda([1-\kappa, 1]) \subset U_y$. 
Up to  choosing smaller $U_x$, $U_y$ and $\kappa$, there exists   $\beta >1$ such that $(|\cdot|_{euc, U_x}, \sfd_{euc, U_x})$, (resp. $(|\cdot|_{euc, U_y}, \sfd_{euc, U_y}$)), and $|\cdot|_h$ are $\beta$-equivalent on $U_x$ (resp. $U_y$). Fix  $\delta \in (0, \xi)$.
By potentially further shrinking, we can achieve that there is an $\alpha \in (0,\frac{1}{\beta^2})$ such that for $p, q \in U_x$ (respectively both $p,q \in U_y$) with $\sfd_h(p,q) < \beta^2\alpha$, we have that 
\begin{align}\label{uniform_continuity_choice_alpha}
    \norm{g_\infty(p)-g_\infty(q)}_{euc} \leq \frac{\delta^2\zeta^2}{16\beta^4(1+G^2+L^2+ \sup_E\norm{X}^2_h)^2}.
\end{align}
Let $r \in (0, 1)$ such that $B^{euc}_r(\lambda([0, \kappa])) \subset U_x$ and $B^{euc}_r(\lambda([1-\kappa, 1])) \subset U_y$ and define 
\begin{align*}
    \theta := \min\Big(r^2, \frac{L}{2},\frac{\zeta^2\delta^2 \alpha^2}{16\beta^2(1+G^2+L^2+ \sup_E\norm{X}^2_h)^2}\Big).
\end{align*}
For $v \in B_\theta^{euc}(\psi_x(x))$ and $u \in B_\theta^{euc}(\psi_y(y))$, define $\lambda_{v,u}:[0,1] \to M$ via 
\begin{align*}
 \lambda_{v,u}(t) = \left\{ \begin{array}{ll}
           \psi_x^{-1}(\psi_x(\lambda(t)) + (\frac{\kappa}{3}-t)(v-\psi_x(x))) &\mathrm{for}\ t \leq \frac{1}{3}\kappa,  \\
          \lambda(t) &\mathrm{for}\ t \in [\frac{\kappa}{3}, 1-\frac{\kappa}{3}],\\
          \psi_y^{-1}(\psi_y(\lambda(t)) +(t- (1-\frac{\kappa}{3}))(u-\psi_y(y))) &\mathrm{for}\ t \geq 1-\frac{\kappa}{3}.
        \end{array}\right.
\end{align*}
Then $\lambda_{v,u}([0,1]) \subset B$ and $\lambda_{v,u}([0,\kappa]) \subset U_x$, $\lambda_{v,u}([1-\kappa,1]) \subset U_y$. Now for almost every $t \in [0,1]$, we have that 
\begin{align}\label{difference_perturbed_curve}
    &\sfd_{h}(\lambda(t), \lambda_{v,u}(t)) \leq \beta\theta < \alpha, \nonumber\\
    &|\dot{\lambda}(t) - \dot{\lambda}_{v,u}(t)|_{h} \leq \beta\theta.
\end{align}
Using \eqref{uniform_continuity_choice_alpha} and \eqref{difference_perturbed_curve}, we get that
\begin{align}\label{scalar_product_difference}
    &|g(\lambda(t))(\dot{\lambda}(t), \dot{\lambda}(t))- g(\lambda_{v,u}(t))(\dot{\lambda}_{v,u}(t), \dot{\lambda}_{v,u}(t))| \nonumber\\
    & \leq  |g(\lambda(t))(\dot{\lambda}(t), \dot{\lambda}(t))- g(\lambda(t))(\dot{\lambda}(t), \dot{\lambda}_{v,u}(t))| \\
    & \quad + |g(\lambda(t))(\dot{\lambda}(t), \dot{\lambda}_{v,u}(t))- g(\lambda(t))(\dot{\lambda}_{v,u}(t), \dot{\lambda}_{v,u}(t))| \nonumber \\
    & \quad +|g(\lambda(t))(\dot{\lambda}_{v,u}(t), \dot{\lambda}_{v,u}(t))- g(\lambda_{v,u}(t))(\dot{\lambda}_{v,u}(t), \dot{\lambda}_{v,u}(t))| 
 \nonumber\\
    & \leq G|\dot{\lambda}(t)|_h |\dot{\lambda}(t) - \dot{\lambda}_{v,u}(t)|_h + G|\dot{\lambda}_{v,u}(t)|_h |\dot{\lambda}(t) - \dot{\lambda}_{v,u}(t)|_h + \frac{\delta^2}{4L^2} |\dot{\lambda}_{v,u}(t)|_h^2 \leq \delta^2.
\end{align}
Similarly
\begin{align*}
    |g(\lambda(t))(X,\dot{\lambda}(t))- g(\lambda_{v,u}(t))(X,\dot{\lambda}_{v,u}(t))| \leq \frac{\zeta^2\delta^2}{4\beta^2(1+G^2+L^2+ \sup_E\norm{X}^2_h)}.
\end{align*}
Hence, if $\delta < \min(1, \xi)$, then, for almost every $t$: 
\begin{align*}
    g(\lambda_{v,u}(t))(X,\dot{\lambda}_{v,u}(t)) < -\frac{\zeta \xi}{2G} < 0.
\end{align*}
This gives that $\lambda_{v,u}$ is timelike and future directed. Moreover, using that $\sqrt{|a-b|} \geq |\sqrt{a}-\sqrt{b}|$ for all $a, b \geq 0$, we infer that \eqref{scalar_product_difference} implies:
\begin{align*}
    |L_{g_\infty}(\lambda) - L_{g_\infty}(\lambda_{v,u})| \leq \delta. 
\end{align*}
Choose 
\begin{align*}
    \delta \leq \frac{\e}{3}.
\end{align*}
Let  $J \in \N$ such that, for $j \geq J$:
\begin{align*}
    \norm{g_j-g_\infty}_{E} \leq \frac{\theta}{10}.
\end{align*}
By similar arguments as above, we get that each $\lambda_{v,u}$ is $g_j$-timelike, future directed and 
\begin{align*}
     |L_{g_j}(\lambda_{v,u}) - L_{g_\infty}(\lambda_{v,u})| \leq \delta. 
\end{align*}
Hence, we found an open neighbourhood $U:= \psi_x^{-1}(B^{euc}_{\kappa\theta/3}(\psi_x(x))) \times \psi_y^{-1}(B^{euc}_{\kappa\theta/3}(\psi_y(y)))$ of $(x,y)$ with the following property: for any $(z,w) \in U$ and $j \geq J$, there exists a $g_j$-timelike, future directed curve $\hat{\lambda}= \lambda_{v,u}$ with $\kappa v = 3(\psi_x(z)-\psi_x(x))$ and $\kappa u = 3(\psi_y(w)-\psi_y(y))$ such that 
$L_{g_j}(\hat{\lambda}) \geq \xi - \frac{2\e}{3}$. This proves the second claim.

Now for $j$, large enough, we have that $(z_j, w_j) \in U$ and $\tau_j(z_j, w_j) \leq \xi - \frac{5\e}{6}$. 
Step 2 gives:
\begin{align*}
   \xi -\frac{2\e}{3}\leq  \tau_j(z_j, w_j) \leq  \xi -\frac{5\e}{6}.
\end{align*}
That contradiction proves the lemma. 
\end{proof}
\begin{remark}\label{facts_about_limit_curves}
    The proof of Lemma \ref{tau_uniform_equicontinuity} implies that under the above assumptions: \begin{itemize}
        \item[(i)] For any sequence of $g_j$-timelike, $\tau_j$-distance maximising curves $\gamma_j$ with bounded Lipschitz constants, with $\tau_j(\gamma_j(0), \gamma_j(1))$ bounded away from zero and for which there exists a compact set $K$ such that \\ 
        $\bigcup_j\gamma_j([0,1]) \subset K$, there exists a subsequence $j_k$ such that the $\gamma_{j_k}$ converge uniformly to a $g_\infty$-causal curve $\gamma$ with $L_{g_\infty}(\gamma) = \lim_{k \to \infty} L_{g_{j_k}}(\gamma_{j_k})$.
        \item[(ii)] For any two points $(x, y)$ with $\tau_\infty(x,y) > 0$ and $\e > 0$, there exists a neighbourhood $U$ of $x, y$ and a $J_\e > 0$ such that for $j \geq J_\e$ and any $(z,w) \in U$, it holds $\tau_j(z,w) \geq \tau_\infty(x,y) - \e$. 
    \end{itemize}
\end{remark}
\medskip

\begin{proposition}
    \label{locally_uniform_convergence_tau}
Assume the metrics $(g_j)_{j \in \N}$ are smooth, uniformly globally hyperbolic and converge locally uniformly to $g_\infty$, a continuous Lorentzian metric.
 Moreover,   assume that any two points $(x, y) \in M^2$ with $\tau_\infty(x, y) >0$ can be joined by a timelike and Lipschitz continuous  $g_\infty$-maximizer (when parametrized with respect to arclength).
   Then, there is a subsequence of  $(\tau_j)_{j\in \N}$ that  converges to $\tau_\infty$ locally uniformly. In particular, $\tau_\infty$ is continuous. 
\end{proposition}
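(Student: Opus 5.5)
The argument runs through Arzelà--Ascoli and then identifies the limit. Fix an exhaustion of $M$ by compact sets $K_1 \subset K_2 \subset \dots$. On each $K_m^2$ the functions $\tau_j$ are uniformly bounded: any $g_j$-causal curve between points of $K_m$ lies in the compact set $E_m := \overline{\bigcup_j J^+_{g_j}(K_m)\cap J^-_{g_j}(K_m)}$, which is compact by uniform global hyperbolicity. Proposition \ref{Lipschitz_bd_causal_curves} bounds its $h$-length for $j\ge J_0$, and the uniform $C^0$ bound on $g_j$ on $E_m$ then bounds its $g_j$-length. By Lemma \ref{tau_uniform_equicontinuity} the family is also locally equicontinuous. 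Arzelà--Ascoli on each $K_m^2$ and a diagonal argument give a subsequence $\tau_{j_k}$ converging locally uniformly to a continuous function $\bar\tau$. It remains to show $\bar\tau = \tau_\infty$; continuity of $\tau_\infty$ then follows.

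\textbf{Lower bound $\bar\tau \ge \tau_\infty$.} If $\tau_\infty(x,y)=0$ this is trivial, since $\tau_j\ge 0$. If $\tau_\infty(x,y)>0$, Remark \ref{facts_about_limit_curves}(ii), which comes from Step 2 of the proof of Lemma \ref{tau_uniform_equicontinuity}, gives for every $\e>0$ that $\tau_j(x,y)\ge \tau_\infty(x,y)-\e$ for all large $j$. Hence $\bar\tau(x,y)\ge\tau_\infty(x,y)-\e$ for every $\e>0$.

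\textbf{Upper bound $\bar\tau\le\tau_\infty$.} Suppose $\bar\tau(x,y)=\xi>0$; if $\bar\tau(x,y)=0$ there is nothing to prove. For large $k$ we have $\tau_{j_k}(x,y)>\xi/2>0$. By global hyperbolicity of $g_{j_k}$ there is a maximizing $g_{j_k}$-geodesic $\gamma_k$ from $x$ to $y$. It lies in the compact set $E$ built from $K=\{x,y\}$, and we reparametrize it by $h$-arclength on $[0,c_k]$. Its length $c_k$ is bounded by Proposition \ref{Lipschitz_bd_causal_curves}, and bounded below by a multiple of $\xi$ via the $C^0$ bound on $g_j$. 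Remark \ref{facts_about_limit_curves}(i), i.e.\ Step 1 of that proof, then yields a further subsequence converging uniformly to a $g_\infty$-causal curve $\gamma$ from $x$ to $y$ with $L_{g_\infty}(\gamma)\ge\lim L_{g_{j_k}}(\gamma_k)=\xi$. This uses the majorising metrics $\tilde g_k$ of Proposition \ref{appx_majorising_metrics} and upper semicontinuity of length in their globally hyperbolic modifications $g'_k$. Therefore $\tau_\infty(x,y)\ge L_{g_\infty}(\gamma)\ge\xi$.

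Together the two bounds give $\bar\tau=\tau_\infty$. Since the Arzelà--Ascoli limit is continuous, $\tau_\infty$ is continuous.

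The main obstacle is the upper bound: length is not upper semicontinuous for a merely continuous limit metric. This is exactly the point handled in Step 1 of Lemma \ref{tau_uniform_equicontinuity} by squeezing $g_j$ below the smooth globally hyperbolic metrics $g'_k\searrow g_\infty$ and passing to the limit first in $l$ and then in $k$, so I would invoke that argument rather than redo it. One also has to be careful that the Arzelà--Ascoli limit is identified independently of the subsequence. Once $\bar\tau=\tau_\infty$ is known, every convergent subsequence has the same limit, so in fact the whole sequence converges, though the statement only needs the subsequence.
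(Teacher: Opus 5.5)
Your proposal is correct and follows the paper's proof. You get a locally uniformly convergent subsequence from Arzel\`a--Ascoli, using Lemma \ref{tau_uniform_equicontinuity} and uniform boundedness, then obtain $\bar\tau\le\tau_\infty$ from the limit-curve argument of Step 1 of that lemma's proof (Remark \ref{facts_about_limit_curves}(i)), and $\bar\tau\ge\tau_\infty$ from Step 2, where a timelike $g_\infty$-maximizer becomes $g_j$-timelike with almost the same length. Your diagonal argument over a compact exhaustion, and your remark that the limit does not depend on the subsequence so the whole sequence converges, make explicit points the paper leaves implicit.
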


\begin{proof}
    Fix a compact set $K\subset M$. Lemma \ref{tau_uniform_equicontinuity} implies the equicontinuity of the family $(\tau_j)_{j\in \N}$ on $K^2$. 
    Moreover, the $\tau_j$ are uniformly bounded on $K^2$. Thus,  by Arzela--Ascoli's Theorem, there exists a subsequence of the $\tau_j$ that converges locally uniformly to a continuous function $\tau'$. We will next prove that $\tau_\infty = \tau'$. \\
    \textbf{Step 1:} $\tau_\infty \geq \tau'$. This was established in the first step of the proof of Lemma \ref{tau_uniform_equicontinuity}, cf.\ Remark \ref{facts_about_limit_curves} (i). \\
    \textbf{Step 2:} $\tau_\infty \leq \tau'$. Assume the contrary: then there exist $\e >0$ and $x,y \in K$ such that 
    \begin{align*}
        \tau_\infty(x,y) > \tau'(x,y) + \e \geq \e >0.
    \end{align*}
    By assumption, there exists a Lipschitz continuous timelike, future directed curve $\lambda$ from $x$ to $y$ with uniform $g_\infty$-speed. By similar arguments as in the second step of the proof of Lemma \ref{tau_uniform_equicontinuity}, we get that for $j$ large enough, $\lambda$ is $g_j$-timelike, future directed and
    \begin{align*}
        L_{g_j}(\lambda) \geq \tau_\infty(x,y) - \frac{\e}{2}. 
    \end{align*}
    Then $\tau_j(x,y) \geq \tau_\infty(x,y) - \frac{\e}{2}$, contradicting that $\tau'(x,y) = \lim_{j \to \infty} \tau_j(x,y) < \tau_\infty(x,y) -\e$.
\end{proof}

\begin{corollary} Under the same assumptions of Proposition \ref{locally_uniform_convergence_tau},
    for each compact set $K\subset M$ and each $\e> 0$ there exists  $J_\e \in \N$ such that:  
    \begin{align*}
        (\{\tau_\infty > 3\e\} \cap K^2)\subset (\{ \tau_j > 2\e \}\cap K^2)\subset (\{ \tau_\infty > \e\}\cap K^2), \quad \forall j > J_\e. 
    \end{align*} 
\end{corollary}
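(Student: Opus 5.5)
The plan is to deduce the corollary directly from the locally uniform convergence in Proposition~\ref{locally_uniform_convergence_tau}. As in that proposition, I replace $(\tau_j)$ by the subsequence that converges, and I relabel it. This is the reading under which the corollary is meant, and it is harmless for the later applications.

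Fix a compact $K\subset M$ and $\e>0$. By Proposition~\ref{locally_uniform_convergence_tau}, $\tau_j\to\tau_\infty$ uniformly on the compact set $K^2$. Hence there is $J_\e\in\N$ with
\begin{align*}
\sup_{(x,y)\in K^2}|\tau_j(x,y)-\tau_\infty(x,y)|<\e \qquad \text{for all } j>J_\e .
\end{align*}
Both inclusions then follow pointwise from the triangle inequality.

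For the first inclusion, let $(x,y)\in K^2$ with $\tau_\infty(x,y)>3\e$. Then $\tau_j(x,y)>\tau_\infty(x,y)-\e>2\e$.

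For the second inclusion, let $(x,y)\in K^2$ with $\tau_j(x,y)>2\e$. Then $\tau_\infty(x,y)>\tau_j(x,y)-\e>\e$.

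Once the uniform convergence is available, there is no genuine obstacle. The only delicate point is the subsequence issue: Proposition~\ref{locally_uniform_convergence_tau} gives convergence only along a subsequence, so the statement holds for that subsequence.

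If one wants the statement for the full sequence, I would argue as follows. The proof of Proposition~\ref{locally_uniform_convergence_tau} shows that every subsequence has a further subsequence converging locally uniformly to the same limit $\tau_\infty$. This uses Lemma~\ref{tau_uniform_equicontinuity} together with Arzel\`a--Ascoli, and the two steps identifying the limit. By the standard subsequence principle, the whole sequence then converges locally uniformly to $\tau_\infty$, and the argument above applies verbatim to the full sequence.
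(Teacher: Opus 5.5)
Your argument is correct and is exactly the intended one: the paper gives no separate proof of this corollary, treating it as an immediate consequence of the uniform convergence $\tau_j\to\tau_\infty$ on $K^2$ from Proposition~\ref{locally_uniform_convergence_tau}, and choosing $J_\e$ so that $\sup_{K^2}|\tau_j-\tau_\infty|<\e$ yields both inclusions by the triangle inequality. Your handling of the subsequence issue is also sound: the hypotheses pass to subsequences, and the proof of Proposition~\ref{locally_uniform_convergence_tau} identifies every subsequential limit as $\tau_\infty$, so the full sequence converges locally uniformly and no relabeling is actually needed.
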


\subsubsection{Convergence of transport plans}

We begin with a basic lemma. Although its proof is standard, we include it for the sake of completeness.
\smallskip

\begin{lemma}\label{countable_cover}
    Let $(X, \sfd)$ be a separable metric space and $\Omega \subset X$ be an open subset. Then there exists a sequence of open sets $(\Omega_n)_n$ such that, for all $n \in \N$:  $\overline{\Omega_n} \subset \Omega$, $\dist(\overline{\Omega_n}, \partial \Omega) > 0$ and $\Omega = \bigcup_n \Omega_n$. 
\end{lemma}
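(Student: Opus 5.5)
The plan is to build the sets $\Omega_n$ from distances to the complement. If $\Omega = X$, then $\partial\Omega=\emptyset$. In that case the condition $\dist(\overline{\Omega_n},\partial\Omega)>0$ holds trivially with the convention $\dist(\cdot,\emptyset)=+\infty$, and we simply take $\Omega_n := X$ for all $n$. So assume from now on that $F := X\setminus\Omega \neq \emptyset$. We use the $1$-Lipschitz function $\delta(x) := \dist(x,F)$, which is strictly positive exactly on $\Omega$ because $F$ is closed. We set
\begin{align*}
\Omega_n := \{x\in X : \delta(x) > \tfrac1n\}.
\end{align*}
Separability is not even needed for this choice; it would only be needed if one insisted on balls as building blocks.

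We then verify the three properties in order.

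First, $\Omega_n$ is open because $\delta$ is continuous. By continuity again, $\overline{\Omega_n}\subset\{\delta\ge \tfrac1n\}$, and this set is contained in $\Omega$ since $\delta=0$ on $F$.

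Second, for the distance to the boundary, note that $\Omega$ is open, so $\partial\Omega = \overline{\Omega}\setminus\Omega \subset F$. Take $x\in\overline{\Omega_n}$ and $z\in\partial\Omega$. Then
\begin{align*}
\sfd(x,z)\ge \delta(x)\ge \tfrac1n .
\end{align*}
Hence $\dist(\overline{\Omega_n},\partial\Omega)\ge \tfrac1n>0$. If $\partial\Omega$ happens to be empty, the statement is again trivial.

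Third, for the exhaustion: every $x\in\Omega$ satisfies $\delta(x)>0$, because $F$ is closed and $x\notin F$. Therefore $x\in\Omega_n$ as soon as $n>1/\delta(x)$, which gives $\Omega=\bigcup_n\Omega_n$.

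There is no real obstacle here. The only points needing care are the degenerate case $F=\emptyset$ and the observation that, for an open set, the boundary lies in the complement. The latter is what makes the positive distance bound immediate from the $1$-Lipschitz property of $\delta$.
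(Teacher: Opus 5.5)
Your proof is correct, and it follows a different route from the paper's. You take the superlevel sets $\Omega_n=\{\dist(\cdot,X\setminus\Omega)>\frac1n\}$ of the distance to the \emph{whole complement}. This gives an increasing exhaustion, makes $\overline{\Omega_n}\subset\{\dist(\cdot,X\setminus\Omega)\ge\frac1n\}\subset\Omega$ immediate, and yields the quantitative bound $\dist(\overline{\Omega_n},\partial\Omega)\ge\frac1n$ through $\partial\Omega\subset X\setminus\Omega$. It uses no separability at all.

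The paper instead covers $\Omega$ by balls $\Omega_n=B_{2d_n/3}(q_n)$ centred at the points $q_n\in\Omega$ of a dense sequence, with $d_n=\dist(q_n,\partial\Omega)$. Separability is what makes this cover countable, and the exhaustion needs an extra argument comparing $B_R(p)$ with a nearby $q_n$. One caveat on that route: the paper never checks $\Omega_n\subset\Omega$. With $d_n$ measured to $\partial\Omega$ rather than to $X\setminus\Omega$, this inclusion can fail in a metric space that is not a length space. For example, take $X=[0,1]\cup\{-\frac12\}\subset\R$, $\Omega=[0,1)$ and $q=0$. Then $\partial\Omega=\{1\}$ and $d=1$, yet $-\frac12\in B_{2/3}(0)\setminus\Omega$.

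Measuring $d_n$ to $X\setminus\Omega$ repairs this, and the paper's proof then becomes a ball-by-ball version of yours. The later applications (Lemmas \ref{causal_limit_plan} and \ref{limit_plan_is_dynamical}) only need some $\Omega_n$ of positive measure with $\overline{\Omega_n}\subset\Omega$, so your nested sets serve equally well there.
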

\begin{proof}
    Choose a sequence $(q_n)_{\in \N}$ which is dense in $X$. Take a subsequence consisting only of those $q_n$'s  contained in $\Omega$.  As $q_n \in \Omega$ and $\Omega$ is open, $d_n := \dist(q_n, \partial \Omega)> 0$. Define
    \begin{align*}
        \Omega_n := B_{\frac{2d_n}{3}}(q_n). 
    \end{align*}
    Clearly $\dist(\overline{\Omega_n}, \partial \Omega) \geq \frac{d_n}{3}>0$. In order to see that $\Omega = \bigcup_n \Omega_n$, consider an arbitrary $p \in \Omega$. Since $\Omega$ is open, there exists an $R > 0$ such that $B_R(p) \subset \Omega$.
    By density  $(q_n)$ in $X$, we can find an $n$ such that $\sfd(q_n, p) < \frac{R}{6}$. It  follows that $B_{\frac{5R}{6}}(q_n) \subset \Omega$ and hence $d_n \geq \frac{5R}{6}$. 
    Then
    \begin{align*}
       p \in B_{\frac{R}{6}}(q_n) \subset B_{\frac{2}{3}\cdot\frac{5R}{6}}(q_n) = B_{\frac{5R}{9}}(q_n) \subset \Omega_n.
    \end{align*}
\end{proof}

\begin{lemma}\label{causal_limit_plan}
   Let $E \subset M$ be a compact subset and let $\pi_j \in \mathcal{P}(E^2)$ be such that $\pi_j(M^2_{\leq_j}) =1$. Assume that there exists   $\pi \in \mathcal{P}(E^2)$ such that $\pi_j \to \pi$ narrowly as $j \to \infty$. Then $\pi(M^2_{\leq_\infty})=1$. 
\end{lemma}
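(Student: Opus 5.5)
The idea is to show that the limit plan $\pi$ is concentrated on a closed set contained in $M^2_{\leq_\infty}$, using the majorising metrics $\Tilde{g}_k$ of Proposition \ref{appx_majorising_metrics}. The causal relation of a single globally hyperbolic metric is closed, but along the sequence the relations $\leq_j$ vary. To handle this, I would sandwich them between the relations of the $\Tilde{g}_k$ and then pass to the limit $k \to \infty$.

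\textbf{Step 1.} Let $E' := \overline{\bigcup_{j} (J^+_{g_j}(E)\cap J^-_{g_j}(E))}$, which is compact by uniform global hyperbolicity (Definition \ref{uniform_global_hyperbolicity_def}). Every $g_j$-causal curve joining two points of $E$ stays inside $E'$. Apply Proposition \ref{appx_majorising_metrics} with the compact set $E'$. Then for $j \geq J_k$, every $g_j$-causal curve between points of $E$ is $\Tilde{g}_k$-causal. Hence
\[
\pi_j(E^2 \cap \{\leq_{\Tilde{g}_k}\}) = 1 \quad \text{for } j \geq J_k .
\]

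\textbf{Step 2.} I would replace the relation of $\Tilde{g}_k$ by a closed relation, restricted to curves lying in a fixed compact set. Set
\[
F_k := \{(x,y)\in E^2 : \text{there is a } \Tilde{g}_k\text{-causal curve in } E' \text{ from } x \text{ to } y\}.
\]
For $k \geq k_0$, Proposition \ref{Lipschitz_bd_causal_curves} bounds uniformly the $h$-length of such curves. The limit curve theorem \cite[Thm.~3.1]{minguzzi2008limit} then shows that $F_k$ is closed, allowing constant limits when $x = y$. By Step 1, $\pi_j(F_k) = 1$ for all $j \geq J_k$. Since $F_k$ is closed, the Portmanteau theorem gives
\[
\pi(F_k) \geq \limsup_j \pi_j(F_k) = 1 .
\]
The sets $F_k$ decrease in $k$ because $\Tilde{g}_{k+1} \prec \Tilde{g}_k$. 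Therefore $\pi\big(\bigcap_k F_k\big) = 1$.

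\textbf{Step 3.} It remains to show $\bigcap_k F_k \subset M^2_{\leq_\infty}$. Take $(x,y)\in\bigcap_k F_k$ with $x\neq y$, and choose $\Tilde{g}_k$-causal curves $\gamma_k \subset E'$ from $x$ to $y$, parametrised by $h$-arclength. Their lengths are uniformly bounded, so we may take these lengths to converge. By the argument in the proof of Proposition \ref{Lipschitz_bd_causal_curves}, a subsequence converges uniformly to a curve $\gamma$ from $x$ to $y$. That argument shows $\gamma$ is $g'$-causal for every $g' \succ g_\infty$, using \cite[Lem.~1.4]{samann2016global}. By \cite[Prop.~1.5]{CG:12}, $\gamma$ is then $g_\infty$-causal, so $x\leq_\infty y$. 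The diagonal lies in $M^2_{\leq_\infty}$ by reflexivity.

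\textbf{Main obstacle.} The delicate point is the closedness of $F_k$ in Step 2 and the limiting argument in Step 3. Both require the uniform length bound and the limit curve theorem, and one must make sure the limit curve is non-degenerate or else handle constant limits separately. If closedness of $F_k$ proves awkward, I would instead use that $\Tilde{g}_k$ is globally hyperbolic, after modifying it outside $E'$ as in Step 1 of Lemma \ref{tau_uniform_equicontinuity}. Its causal relation is then closed by Minguzzi's proposition, and the same Portmanteau argument applies.
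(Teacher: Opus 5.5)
Your proof is correct, but it takes a different route from the paper's.

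\emph{The paper's argument.} It argues by contradiction. It uses that $\leq_\infty$ is closed (causal closedness of the globally hyperbolic $g_\infty$), so that $\Omega=E^2\setminus M^2_{\leq_\infty}$ is relatively open. It covers $\Omega$ by countably many open sets with closures in $\Omega$ (Lemma \ref{countable_cover}) and picks one such set $B$ with $\pi(B)>0$. Lower semicontinuity on open sets then yields points $(x_j,y_j)\in B$ with $x_j\leq_j y_j$. Finally it takes a limit of the $g_j$-causal curves directly along the varying sequence, via Proposition \ref{Lipschitz_bd_causal_curves} and the limit curve theorem, and reaches a contradiction in $\overline{B}\subset\Omega$.

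\emph{Your argument.} You argue directly, with upper semicontinuity on closed sets. The closed relations $F_k$ come from the fixed smooth majorants $\tilde{g}_k$ and contain $E^2\cap M^2_{\leq_j}$ uniformly for $j\geq J_k$. This gives $\pi(F_k)=1$ for every $k$, and the remaining work is the inclusion $\bigcap_k F_k\subset M^2_{\leq_\infty}$.

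\emph{Comparison.} You need two limit-curve arguments instead of one: closedness of each $F_k$ for a fixed metric, and then the passage $k\to\infty$. In exchange, you never use closedness of $\leq_\infty$ or the countable cover. You also make explicit the role of uniform global hyperbolicity in confining all curves to the compact set $E'$; the paper's proof uses this only tacitly, when invoking Proposition \ref{Lipschitz_bd_causal_curves}.

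The fallback you mention at the end is not needed. Closedness of $F_k$ follows exactly as you describe, because for $k\geq k_0$ the relevant curves are $\tilde{g}_{k_0}$-causal and so have uniformly bounded $h$-length.
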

\begin{proof}
    We argue by contradiction. By the causal closedness, we have that $\Omega:= E^2 \setminus M^2_{\leq_\infty}$ is an open set --- in the subspace topology on $E^2$. As $M$ is separable, we can use Lemma \ref{countable_cover} to find a countable open cover $(B_m)_{m \in \N}$ of $\Omega$ such that $$\overline{B}_m \subset \Omega \text{ and } \dist(B_m, \partial \Omega)>0, \quad \forall m\in \N.$$
    We  argue by contradiction. Suppose that there exists  $\delta>0$ such that $\pi(M^2_{\leq_{\infty}}) = 1- \delta < 1.$  Then 
 \begin{align*}
     \pi(E \cap (M^2_{\leq_{\infty}})^c) = \delta. 
 \end{align*}
 Hence, there exists $m \in \N$ such that, setting $B:=B_m$, it holds
 \begin{align*}
     \pi(B) > 0.
 \end{align*}
 By lower semi-continuity of the measure of open sets under narrow convergence, we infer that there exists $j_0>0$ such that: 
 \begin{align*}
     \pi_j(B) > 0, \quad \forall j\geq j_0. 
 \end{align*}
 As $\pi_j$ is supported on $M^2_{\leq_{j}}$, we get that  $M^2_{\leq_{j}} \cap B \neq \emptyset$, for all $j\geq j_0$. Pick $(x_j, y_j) \in M^2_{\leq_{j}} \cap B$ and let $\gamma_j$ be a $g_{j}$-causal curve  from $x_j$ to $y_j$ with bounded $h$-speed. 
 By Proposition \ref{Lipschitz_bd_causal_curves} and \cite[Thm.\ 3.1]{minguzzi2008limit}, we obtain a subsequence $(\gamma_{j_k})_{k\in \N}$ that  converges uniformly to the a $g_\infty$-causal curve $\gamma$.
 Denote $x:= \gamma(0)$ and $y := \gamma(1)$. Then $(x, y) \in \overline{B} \subset (M^2_{\leq_\infty})^c$, which contradicts $x \leq_\infty y$. 
\end{proof}

\begin{proposition}\label{dynamical_plans_precompactness}
Let $g_j$ and $g_\infty$ satisfy the assumptions of Proposition \ref{locally_uniform_convergence_tau}, and that any length-maximising curve for $g_\infty$ has a causal character.
    Let $\mu_0, \mu_1 \in \mathcal{P}_c(M)$ be $(M, g_\infty)$-strongly timelike $p$-dualisable. 
Assume that there exist a compact set $K\subset M$ and a sequence of measures $\mu^j_0, \mu^j_1 \in \mathcal{P}_c(K)$, with $\Pi_{\leq_j}(\mu^j_0, \mu^j_1) \neq \emptyset$ that narrowly converge to $\mu_0, \mu_1$ and such that $\ell_p^j(\mu^j_0, \mu^j_1) \to \ell_p^\infty(\mu_0, \mu_1)$.
Let $\eta_j\in \mathrm{OptGeo}_{\ell_{p,j}}(\mu^j_0, \mu^j_1)$ be $p$-optimal dynamical plans from $\mu^j_0$ to $\mu^j_1$ with respect to $\tau_j$.
Then the sequence $(\eta_j)_j$ has a subsequence that narrowly converges to a measure $\eta \in \mathcal{P}(C([0,1], K))$ with $(\ee_0, \ee_1)_\#\eta \subset \{\tau_\infty > 0\}$.
\end{proposition}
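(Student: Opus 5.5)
The argument has three parts: tightness of $(\eta_j)$ in $\mathcal{P}(C([0,1],M))$, identification of the endpoint coupling of a limit plan as a $p$-optimal coupling for $\tau_\infty$, and a final step using strong timelike $p$-dualisability to place that coupling on $\{\tau_\infty>0\}$.

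\textbf{Step 1: tightness and support.} All curves in $\supp\,\eta_j$ are $g_j$-causal $\tau_j$-geodesics with endpoints in $K$. They therefore lie in the compact set
\[
E:=\overline{\bigcup_j J^+_{g_j}(K)\cap J^-_{g_j}(K)},
\]
which is compact by uniform global hyperbolicity. By Proposition \ref{Lipschitz_bd_causal_curves}, applied with the compact set $E$, their $h$-lengths are uniformly bounded for $j\ge J_0$. Next we reparametrize. Timelike $\tau_j$-geodesics are $g_j$-geodesics, which are proportional to $\tau$-arclength. By Proposition \ref{uniform_bound_prosuct_with_time_or}, $|g_j(X,\dot\gamma)|\ge c|\dot\gamma|_h$, so $h$-speed is controlled by the time-function speed $-g_j(X,\dot\gamma)$, and the family is equi-Lipschitz. (Null geodesics are handled by parametrizing proportionally to $h$-arclength, which is allowed in $\mathrm{Geo}$ since $\tau\equiv0$ along them.) Arzel\`a--Ascoli then shows that the set of curves in $E$ with this Lipschitz bound is compact in $C([0,1],E)$. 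The $\eta_j$ are concentrated on it, hence tight, and Prokhorov gives a narrowly convergent subsequence $\eta_j\to\eta$. Its endpoint marginals are $\mu_0,\mu_1$, since $(\ee_i)_\#\eta_j=\mu_i^j\to\mu_i$. The endpoint support is $K$, and the curves lie in $E$; I expect $K$ in the statement to be understood as a suitably enlarged compact set.

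\textbf{Step 2: optimality of the limit coupling.} Set $\pi_j:=(\ee_0,\ee_1)_\#\eta_j$ and $\pi:=(\ee_0,\ee_1)_\#\eta$, so that $\pi_j\to\pi$ narrowly. By Lemma \ref{causal_limit_plan}, $\pi(M^2_{\le_\infty})=1$, hence $\pi\in\Pi_{\le}(\mu_0,\mu_1)$. By Proposition \ref{locally_uniform_convergence_tau}, $\tau_j^p\to\tau_\infty^p$ uniformly on $E^2$, after passing to a subsequence. Combined with narrow convergence, this gives
\[
\int\tau_\infty^p\,\di\pi=\lim_j\int\tau_j^p\,\di\pi_j=\lim_j\ell^j_p(\mu_0^j,\mu_1^j)^p=\ell^\infty_p(\mu_0,\mu_1)^p .
\]
Thus $\pi$ is $p$-optimal for $\tau_\infty$.

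\textbf{Step 3: timelike support.} Strong timelike $p$-dualisability says that every optimal coupling is concentrated on $\Gamma\subset M^2_{\ll_\infty}$. Hence $\pi(\{\tau_\infty>0\})=1$, i.e.\ $(\ee_0,\ee_1)_\#\eta$ is concentrated on $\{\tau_\infty>0\}$.

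I expect the main obstacle to be Step 1: the equi-Lipschitz bound for the $\tau_j$-arclength parametrizations, near-null geodesics, and consistency with the $h$-speed normalization. If the time-function speed is not directly comparable, I would instead show that $t\mapsto\tau_j(\gamma_0,\gamma_t)$ is affine and use uniform Lipschitz control of a common Cauchy time function. The paper's hypothesis that $g_\infty$-maximizers have a causal character would enter at this point, or when checking that limit curves lie in $\mathrm{Geo}$. Using Remark \ref{facts_about_limit_curves}(i), each limit curve on $\supp\,\eta$ has $L_{g_\infty}=\tau_\infty$ between its endpoints and is therefore maximizing.
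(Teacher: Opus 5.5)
Your Steps 2 and 3 are correct and match the first paragraph of the paper's proof. Step 1 (tightness of $(\eta_j)$), however, has a genuine gap, and that is where all the difficulty lies. Proposition \ref{uniform_bound_prosuct_with_time_or} only gives $|\dot\gamma|_h\le c^{-1}|g_j(X,\dot\gamma)|$. Nothing in your argument bounds $|g_j(X,\dot\gamma_t)|$ pointwise, uniformly in $j$, along the $\tau_j$-proportional parametrization:

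\begin{itemize}
\item On the hyperboloid $\{-g_j(v,v)=\tau^2\}$ the quantity $|g_j(X,v)|$ is unbounded (boosts).
\item $X$ is merely continuous, neither Killing nor a gradient, so $g_j(X,\dot\gamma)$ is not conserved along $g_j$-geodesics. Its variation is governed by first derivatives of $g_j$, which $C^0$ convergence does not control.
\item A common time function would not help. It bounds the \emph{integral} of the speed, i.e.\ the $h$-length, which Proposition \ref{Lipschitz_bd_causal_curves} already gives. It does not control how the $\tau$-parametrization distributes that length.
\item You also cannot reparametrize null curves at will, since the $\eta_j$ are given.
\end{itemize}

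As a sanity check, Step 1 uses neither strong timelike $p$-dualisability nor the causal-character hypothesis. Yet the latter is exactly what rules out the basic failure mode. If $g_j$-geodesics accumulated, in $h$-parametrization, on a $g_\infty$-maximizer containing a null subsegment, their $\tau_j$-proportional parametrizations would cross that subsegment in parameter time tending to $0$, so no equi-Lipschitz bound is possible. Likewise, for geodesics with $\tau_j(\gamma_0,\gamma_1)\to 0$, uniform convergence $\tau_j\to\tau_\infty$ gives no control on the parametrization at all.

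The missing idea is to reverse your order and use the timelike support of the limit coupling to \emph{exclude} the bad geodesics.

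\begin{enumerate}
\item Since the $\pi_j=(\ee_0,\ee_1)_\#\eta_j$ live on the compact set $K^2$, your Steps 2--3 can be done first, giving $\pi(\{\tau_\infty>0\})=1$.
\item Choose $\varepsilon_0$ with $\pi(\{\tau_\infty\le\varepsilon_0\})$ small. Uniform convergence $\tau_k\to\tau_\infty$, together with upper semicontinuity of narrow limits on the closed set $\{\tau_\infty\le\varepsilon_0\}$, yields $\limsup_j\eta_j(T_\varepsilon)\to 1$ as $\varepsilon\to 0$. Here $T_\varepsilon=\bigcup_k\{\gamma\in\mathrm{TGeo}^k:\ \gamma_0,\gamma_1\in K,\ \tau_k(\gamma_0,\gamma_1)\ge\varepsilon\}$.
\item Show that $T_\varepsilon$ is precompact in $C([0,1],M)$. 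Reparametrize by $h$-arclength and extract a $g_\infty$-causal limit $\tilde\gamma$ via the limit curve theorem. This limit is maximizing with length $\ge\varepsilon$, hence timelike \emph{by the causal-character hypothesis}. Therefore $\psi(t)=\int_0^t\sqrt{-g_\infty(\dot{\tilde\gamma},\dot{\tilde\gamma})}$ is strictly increasing.
\item The functions $\psi_k(t)=\tau_{j_k}(\tilde\gamma_k(0),\tilde\gamma_k(t))$ converge uniformly to $\psi$. Strict monotonicity then gives $\psi_k^{-1}\to\psi^{-1}$ uniformly, so the $\tau$-arclength parametrizations $\tilde\gamma_k\circ\psi_k^{-1}$ (rescaled to $[0,1]$) converge uniformly.
\end{enumerate}

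Tightness of $(\eta_j)$ follows from items 2 and 4, and Prokhorov's theorem finishes the proof.
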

\begin{proof}
Let $K \subset M$ be a compact subset such that  $\supp\, \mu^j_0, \supp\, \mu^j_1, \supp\, \mu_0, \supp\, \mu_1 \subset K$, for all $j\in \N$. 
Combining the uniform global hyperbolicity with  \cite[Lem.\ 1.15]{cavalletti2020optimal} and Prokhorov's theorem, we infer that a subsequence of $$\pi_j:=(\ee_0, \ee_1)_\# \eta_j$$ converges narrowly to  some measure $\pi\in \mathcal{P}(M^2)$.  Lemma \ref{causal_limit_plan} together with Proposition \ref{locally_uniform_convergence_tau}, imply that $\pi \in \Pi^{p\text{-opt}}_{\leq_\infty}(\mu_0, \mu_1)$. As $\mu_0, \mu_1$ is strongly timelike $p$-dualisable, we get that $\pi(\{\tau_\infty > 0\})= 1$. 
\\The lower semicontinuity of the measure of open sets under narrow convergence implies that $$
    1 \geq \liminf_{j \to \infty} \pi_j(\{\tau_\infty > 0\}) \geq, 1$$ forcing
    $$
    \lim_{j\to \infty}  \pi_j(\{\tau_\infty > 0\}) = 1.
    $$
    We next show that the sequence $(\eta_j)_j\subset \mathcal{P}(C([0,1], K))$ is precompact in the narrow topology. To this aim, for all $\e >0$ and $k\in \N$ define the sets 
\begin{align}\label{def:eps_steep_geodesics}
    T^k_\e := \{ \gamma \in \mathrm{TGeo}^k \mid \gamma(0), \gamma(1) \in K^2, \tau_k(\gamma(0),\gamma(1))\geq \e\},
\end{align}
and 
$$
T_\e := \bigcup_{k<\infty}T_\e^k.
$$
We will show that the measures \( \eta_j \) are almost concentrated on \( T_\varepsilon \) (see Claim~1), and that \( T_\varepsilon \subset C([0,1],K) \) is precompact (see Claim~2). 
These properties imply that the sequence \( (\eta_j)_j \subset \mathcal{P}(C([0,1],K)) \) is tight and therefore precompact by Prokhorov's theorem.
\smallskip

\textbf{Claim 1:} $\limsup_j \eta_j(T_\e) \to 1$, as $\e \to 0$.  \\
Assume the contrary. Then there exists  $\delta > 0$ and a sequence $\e_j \to 0$, as $j \to \infty$, such that:
\begin{equation}\label{eq:etajTepsjDelta}
    \eta_j\Big( C([0,1], M) \setminus T_{\e_j} \Big) > \delta. 
\end{equation}
Since $\eta_j(\mathrm{TGeo}^j) =1$, \eqref{eq:etajTepsjDelta} implies that 
\begin{align*}
    \pi_j\Big(\bigcap_{k < \infty} \{ \tau_k < \e_j\}\Big) > \delta. 
\end{align*}
Since $\pi(\{\tau_\infty >0\}) =1$, there exists  $\e_0>0$ such that 
\begin{align}\label{small_tau_small_set}
    \pi(\{\tau_\infty\leq \e_0\}) \leq \frac{\delta}{2}.
\end{align}
For $k_0>0$ large enough, Proposition \ref{locally_uniform_convergence_tau} implies that
\begin{align*}
    |\tau_\infty-\tau_k| \leq \frac{\e_0}{3}, \quad \text{ on } K, \quad \forall  k\geq k_0. 
\end{align*}
It follows that 
\begin{align*}
    \{\tau_\infty < \e_0\} \supset \bigcap_{k_0\leq k< \infty} \Big\{ \tau_k < \frac{\e_0}{2} \Big\}.
\end{align*}
For $j_0>0$ large enough, it holds $\e_j \leq \frac{\e_0}{2},$ for all $j \geq j_0$. Thus, for all $j \geq j_0$: 
\begin{align*}
    \delta <  \pi_j\Big(\bigcap_{k < \infty} \{ \tau_k < \e_j\}\Big) \leq  \pi_j\Big(\bigcap_{k < \infty} \Big\{ \tau_k < \frac{\e_0}{2}\Big\}\Big) \leq \pi_j\Big(\bigcap_{k_0 \leq k < \infty} \Big\{ \tau_k < \frac{\e_0}{2}\Big\} \Big) \leq \pi_j( \{\tau_\infty \leq \e_0\}).
\end{align*}
By \eqref{small_tau_small_set}, the continuity of $\tau_\infty$ (see Proposition \ref{locally_uniform_convergence_tau}) and the upper semicontinuity of the measure of closed sets under narrow convergence, we get
\begin{align*}
    \delta \leq \limsup_{j \to \infty} \pi_j( \{\tau_\infty \leq \e_0\}) \leq \pi(\{\tau_\infty \leq \e_0\}) \leq \frac{\delta}{2}.
\end{align*}
This contradiction and  proves the first claim. \smallskip

\textbf{Claim 2:} $T_\e$ is precompact in $C([0,1], M)$. \\
Fix $\e>0$ and take a sequence $\gamma_k, j_k$ such that $\gamma_k \in T^{j_k}_\e$. If $j_k$ has an accumulation point $j < \infty$, we get a sequence of geodesics contained inside a compact set $K$ with respect to the smooth metric $g_j$. In that case set $j_\infty :=j$. Otherwise we can find a subsequence $j_k$ that is unbounded and strictly increasing, in which case we set $j_\infty = \infty$.
We may assume that $\xi := \lim_{k \to \infty} \tau_{j_k}(\gamma_k(0), \gamma_k(1)) \geq \e$ exists. 
Write $\Tilde{\gamma}_k$ for the curve obtained by reparametrising $\gamma_k$ by $h$-arclength. 
From the arguments of the first step of the proof of Lemma \ref{tau_uniform_equicontinuity}, we get a $g_{j_\infty}$-causal limit curve $\Tilde{\gamma}$ such that 
$$L_{g_{j_\infty}}(\Tilde{\gamma}) \geq \lim_{k \to \infty} \tau_{j_k}(\gamma_k(0), \gamma_k(1))= \xi.$$ 
From the locally uniform convergence of $\tau_{j_k}$ to $\tau_{j_\infty}$ (Proposition \ref{locally_uniform_convergence_tau}), we infer that $\Tilde{\gamma}$ is  maximising length. 

By our assumption on $g_\infty$, it follows that $\Tilde{\gamma}$ is timelike.
By slightly scaling the parametrization, we may assume that the $\Tilde{\gamma}_k$'s and $\Tilde{\gamma}$ are all defined on the same interval $[0,c]$.
For $k \in \N \cup \{\infty\}$, define $\psi_k:[0, c] \to [0, \infty)$ via
\begin{align*}
    \psi_k(t) := \int_0^t \sqrt{-g_{j_k}(\dot{\Tilde{\gamma}}_{k}, \dot{\Tilde{\gamma}}_{k})(s)} \, ds.
\end{align*}
Each $\psi_k$ is strictly increasing, and
\begin{align*}
    \psi_k(t) = \tau_{j_k}(\Tilde{\gamma}_k(0),\Tilde{\gamma}_k(t)) \to \tau_{j_\infty}(\Tilde{\gamma}(0),\Tilde{\gamma}(t)),
\end{align*}
as $k \to \infty$. Since $\Tilde{\gamma}$ is maximising, we have that 
\begin{align*}
    \tau_{j_\infty}(\Tilde{\gamma}(0),\Tilde{\gamma}(t)) = L_{g_{j_\infty}}(\Tilde{\gamma}|_{[0,t]}).
\end{align*}
Proposition \ref{locally_uniform_convergence_tau} with the uniform convergence $\Tilde{\gamma}_k\to \Tilde{\gamma}$,  gives that 
$$\psi_k \to \psi_\infty=: \psi, \quad  \text{uniformly as $k \to \infty$}.
$$
In particular,  $(\psi_k)_k$ are uniformly bounded by a constant $B>0$.
Note that $\gamma_k = \Tilde{\gamma}_k \circ \psi^{-1}_k$. Similarly, $\gamma := \Tilde{\gamma} \circ \psi^{-1}$ is parametrised by $\tau_{j_\infty}$-length. In order to prove that  $\gamma_k \to \gamma$ uniformly, it suffices to show that $\psi_k^{-1} \to \psi^{-1}$ uniformly. 
\smallskip

\textbf{Uniform convergence on inner intervals:} We start with intervals  $[\theta, \max \psi - \theta]$ for some $\theta >0$. 
If  uniform convergence fails, then there exist $\epsilon >0$ and a sequence $x_k \in [\theta, \max \psi - \theta]$ such that
\begin{align*}
    |\psi_k^{-1}(x_k)-\psi^{-1}(x_k)| > \epsilon.
\end{align*}
By compactness of $[\theta, \max \psi - \theta]$ and the continuity of $\psi^{-1}$, we get that there exists $x \in [\theta, \max \psi - \theta]$ and a non-relabeled subsequence such that $x_k \to x$ and
\begin{align*}
    |\psi_k^{-1}(x_k)-\psi^{-1}(x)| > \frac{\epsilon}{2}.
\end{align*}
We may assume that $\epsilon < \theta$.
Note that 
\begin{align*}
\psi_k^{-1}(x_k) =\inf \{y : \psi_k(y) \geq x_k \}, \quad y^* := \psi^{-1}(x)= \inf \{ y: \psi(y) \geq x \}.
\end{align*}
\underline{Case 1}: 
\begin{align*}
     \psi_k^{-1}(x_k)-\psi^{-1}(x) \geq \frac{\epsilon}{2}.
\end{align*}
Then $\psi_k \leq x_k$ on $[y^*, y^*+\frac{\epsilon}{3}]$. Since $\psi$ is strictly increasing, there exists  $\delta >0$ such that $\psi( y^*+\frac{\epsilon}{3}) > x+ 2\delta$. Moreover,  for $k$ large enough, we have that $x_k \leq x + \delta$. Then: 
\begin{align*}
   x+ 2\delta< \psi\Big( y^*+\frac{\epsilon}{3}\Big) = \lim_{k \to \infty} \psi_k\Big(y^*+\frac{\epsilon}{3}\Big) \leq x+ \delta,
\end{align*}
a contradiction. \\
\underline{Case 2}:
\begin{align*}
     \psi^{-1}(x) > \frac{\epsilon}{2} + \psi_k^{-1}(x_k).
\end{align*}
Similarly to case 1, we infer that $\psi(y^* - \frac{\epsilon}{3}) < x- 2\delta$, for some $\delta >0$. However, for $k$ large, we have that $x_k > x- \delta$. Then
\begin{align*}
    x- \delta \leq \lim_{k \to \infty} \psi_k\left(y^* - \frac{\epsilon}{3}\right) = \psi\left(y^* - \frac{\epsilon}{3}\right) <x-2\delta,
\end{align*}
again a contradiction. \\
Hence, $\psi^{-1}_k \to \psi^{-1}$ uniformly on $[\theta, \max \psi - \theta]$ for any $\theta>0$.\smallskip

\textbf{Convergence around zero:} In order to prove convergence around zero, fix an $\epsilon >0$ and define $$p := p_\epsilon := \psi(2\epsilon) > 0.$$ By the discussion above, we know that  $\psi_j^{-1}\to \psi^{-1}$ uniformly on $[\frac{p}{2}, \max \psi - \frac{p}{2}]$. Thus there exists  $J_\epsilon>0$ such that 
$$\psi_j^{-1}(p) \in (\epsilon, 3\epsilon), \quad \forall j \geq J_\epsilon.$$ 
The strict monotonicity of $\psi_j, \psi$ and the fact that $\psi_j(0)= \psi(0)=0$ for all $j$, imply that 
$$|\psi^{-1}(x)-\psi^{-1}_j(x)| \leq 3 \epsilon, \quad \forall j \geq J_\epsilon, \; \forall x\in [0,p].$$ \smallskip

\textbf{Convergence around max $\boldsymbol{\psi}$:}
Finally, we prove uniform convergence in a neighbourhood of $m := \max \psi$. Fix  $\epsilon >0$. Let $J_\epsilon^1>0$ be such that  
$$|\max \psi_j - m| \leq \epsilon, \quad \forall j \geq J_\epsilon^1.$$ 
Denote $q_\epsilon:= \psi(c-2\epsilon) < m$. 
By the above discussion, there exists  $J_\epsilon^2>0$ such that
$$|\psi^{-1}(q_\epsilon)-\psi_j^{-1}(q_\epsilon)| \leq \epsilon, \quad \forall j \geq J_\epsilon^2.$$ 
The monotonicity of $\psi_j, \psi$ imply that 
$$|\psi^{-1}(x)-\psi^{-1}_j(x)| \leq 3 \epsilon,\quad \forall j \geq \max(J_\epsilon^1, J_\epsilon^2), \quad \forall x \in [q_\epsilon,m].$$ \smallskip
The three observations above imply that $\psi_j^{-1}\to \psi^{-1}$ uniformly, thus completing the proof of claim 2. \\
To conclude note that claim 1 yields tightness of the sequence $(\eta_j)_j$: Fix $m \in \N$ and choose $\e =\e_m > 0$ such that $\limsup_j \eta_j(T_\e) \geq 1-\frac{1}{2m}$. Then there exists  $j_m \in \N$ such that  
\begin{align*}
    \eta_j(C([0,1],M)\backslash T_\e) \leq \frac{1}{m}, \quad \forall j \geq j_m. 
\end{align*}
Since the set $\{ \eta_j, j \leq j_m\}$ is finite and hence tight, there exists a precompact set $S_m \subset C([0,1], M)$ such that  
\begin{align*}
    \eta_j(C([0,1], M) \setminus S_m) \leq \frac{1}{m}, \quad \forall j \leq j_m.
\end{align*}
It follows that 
\begin{align*}
     \eta_j(C([0,1], M) \setminus (S_m \cup T_{\e_m})) \leq \frac{1}{m}, \quad \forall j \in \N,
\end{align*}
yielding the tightness of $(\eta_j)_j$.  
The lemma now follows from Prokhorov's theorem. 
\end{proof}
\begin{remark}\label{limit_of_geodesics_is_geodesic}
    Together with Remark \ref{facts_about_limit_curves} (i), the above proof gives that for any $\e > 0$ and any sequence of $g_j$-geodesics $\gamma_j$ with $\tau_j(\gamma_j(0), \gamma_j(1)) \geq \e$ there exists a subsequence $j_k$ such that  $(\gamma_{j_k})_k$ uniformly converge to a $g_\infty$-maximising curve that is parametrised by unit $g_\infty$-speed and of length $\lim_{k \to \infty} L_{g_{j_k}}(\gamma_{j_k})$. 
\end{remark}
\smallskip

\begin{lemma}\label{openness_of_complement_of_tl_geos}
    Let $\e >0$.  The set $(\mathrm{TGeo}^\infty)^c  \cap (\ee_0, \ee_1)^{-1}(\{\tau_\infty > \e\}) \subset C([0,1], M)$ is open. 
\end{lemma}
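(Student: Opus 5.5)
Equivalently, I will show that the complement of this set in $C([0,1],M)$ is closed. Since $C([0,1],M)$ with the uniform topology is metrizable, it suffices to take a sequence $\gamma_n$ in the set that converges uniformly to some $\gamma \in C([0,1],M)$, and to prove that $\gamma$ again lies in the set. The argument has two parts: first that $\tau_\infty(\gamma_0,\gamma_1)>\e$, and then that $\gamma\notin \mathrm{TGeo}^\infty$. The first part should be immediate, but the second is false as stated, for the reason explained below.

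For the endpoint condition, Proposition \ref{locally_uniform_convergence_tau} gives that $\tau_\infty$ is continuous. Hence $\tau_\infty(\gamma_0,\gamma_1)=\lim_n \tau_\infty(\gamma_n(0),\gamma_n(1))\ge \e$. This only yields the non-strict inequality. For it to be strict, the set $\{\tau_\infty>\e\}$ would have to be closed, which it is not. So openness in the literal sense requires working inside the open set $(\ee_0,\ee_1)^{-1}(\{\tau_\infty>\e\})$. Concretely, I will show that a geodesic $\gamma$ with $\tau_\infty(\gamma_0,\gamma_1)>\e$ is never the uniform limit of non-geodesics with the same endpoint bound.

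This is where the plan breaks. Timelike geodesics are characterised by the closed condition $\tau_\infty(\gamma_s,\gamma_t)=(t-s)\tau_\infty(\gamma_0,\gamma_1)$ for all $s<t$. Since $\tau_\infty$ is continuous, this condition passes to uniform limits, so $\mathrm{Geo}^\infty$ is closed. It follows that $\mathrm{TGeo}^\infty\cap(\ee_0,\ee_1)^{-1}(\{\tau_\infty>\e\})$ is relatively closed in the open set $(\ee_0,\ee_1)^{-1}(\{\tau_\infty>\e\})$. Its relative complement, which is the set in the statement, is therefore relatively open. Because $(\ee_0,\ee_1)^{-1}(\{\tau_\infty>\e\})$ is itself open, as the preimage under the continuous map $(\ee_0,\ee_1)$ of a set that is open by continuity of $\tau_\infty$, the set in the statement is open in $C([0,1],M)$.

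So the proof should run directly rather than via the sequential closedness argument of the first paragraph. Let $U:=(\ee_0,\ee_1)^{-1}(\{\tau_\infty>\e\})$.

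\textbf{Step 1.} $U$ is open, since $\tau_\infty$ is continuous by Proposition \ref{locally_uniform_convergence_tau} and $(\ee_0,\ee_1)$ is continuous.

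\textbf{Step 2.} $\mathrm{TGeo}^\infty\cap U=\mathrm{Geo}^\infty\cap U$, because $\tau_\infty>\e>0$ on $U$.

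\textbf{Step 3.} $\mathrm{Geo}^\infty$ is closed, being the intersection over $s<t$ of the closed sets $\{\gamma:\tau_\infty(\gamma_s,\gamma_t)=(t-s)\tau_\infty(\gamma_0,\gamma_1)\}$, each closed by continuity of $\tau_\infty$ and of the evaluation maps.

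\textbf{Step 4.} Hence $(\mathrm{TGeo}^\infty)^c\cap U=U\setminus\mathrm{Geo}^\infty$ is the intersection of two open sets, and so it is open.

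The main point needing care is Step 3, which relies on the definition of $\mathrm{Geo}$ as curves satisfying the $\tau$-proportionality identity. If causality of the curves is also built into the definition, one additionally needs the closedness of $\le_\infty$. That follows from global hyperbolicity together with the limit curve argument in Lemma \ref{causal_limit_plan}.
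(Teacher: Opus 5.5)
Your final argument (Steps 1--4) is correct and is essentially the paper's proof. The paper fixes $\gamma$ in the set, picks $s<t$ where $\tau_\infty(\gamma_s,\gamma_t)=(t-s)\tau_\infty(\gamma_0,\gamma_1)$ fails by some $\beta>0$, and uses continuity of $\tau_\infty$ to show that this failure and the bound $\tau_\infty(\lambda_0,\lambda_1)>\e$ persist on a uniform ball around $\gamma$; that is your decomposition $U\setminus\mathrm{Geo}^\infty$ written out pointwise. You should delete the opening sequential paragraph, which is both unnecessary and confused: taking a sequence \emph{in} the set and showing the limit stays in it would prove the set closed, not its complement.
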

\begin{proof}
    Pick $\gamma \in (\mathrm{TGeo}^\infty)^c \cap (\ee_0, \ee_1)^{-1}(\{\tau_\infty > \e\})$ and denote $K := B_1(\gamma([0,1])) \subset M$. Since $K$ is precompact,  $\tau_\infty$ is uniformly continuous on $K^2$. Denote 
    \begin{align*}
        \tau_\infty (\gamma_0, \gamma_1)- \e = \alpha > 0.
    \end{align*}
    As $\gamma$ is not a timelike geodesic, there exist $0 \leq s < t \leq 1$ and a $\beta >0$ such that 
    \begin{align*}
       |\tau_\infty(\gamma_s, \gamma_t)-(t-s)\tau_\infty (\gamma_0, \gamma_1)|= |\tau_\infty(\gamma_s, \gamma_t)-(t-s)(\alpha+\e)| = \beta.
    \end{align*}
    Since $\tau_\infty$ is uniformly continuous on $K^2$, there exists $\delta > 0$ such that $$|\tau_\infty(x,y)-\tau_\infty(z,w)| \leq \min\Big(\frac{\beta}{3}, \frac{\alpha}{2}, \frac{\e}{2}\Big),$$
    for all $x, y, z, w \in K$ with $\sfd(x,z)+\sfd(y,w) \leq \delta$.
    \\ Next, let $\lambda \in  C([0,1], M)$ be such that $\sfd_{C^0}(\lambda, \gamma) \leq \frac{\delta}{2}$.
    Then
    \begin{align*}
        &|\tau_\infty(\lambda_0, \lambda_1) - \alpha - \e| \leq \min\Big(\frac{\beta}{3}, \frac{\alpha}{2}\Big),\ \mathrm{and} \\
        &||\tau_\infty(\lambda_s, \lambda_t) - (t-s)(\alpha+\e)| - \beta| \leq \frac{\beta}{3}.
    \end{align*}
    Then
    \begin{align*}
        |\tau_\infty(\lambda_s, \lambda_t)-(t-s)\tau_\infty(\lambda_0, \lambda_1)| \geq  \frac{\beta}{3}. 
    \end{align*}
    Hence, $B^{C^0}_\frac{\delta}{2}(\gamma) \subset (\mathrm{TGeo}^\infty)^c  \cap (\ee_0, \ee_1)^{-1}(\{\tau_\infty > \e\})$.
\end{proof}
\begin{lemma}\label{limit_plan_is_dynamical}Let $g_j, g_\infty, \eta_j$ and $\eta$ be as in Proposition \ref{dynamical_plans_precompactness}. Then $\eta$  is concentrated on $\mathrm{TGeo}^\infty$.
\end{lemma}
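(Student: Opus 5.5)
The approach is to combine the openness result of Lemma~\ref{openness_of_complement_of_tl_geos} with the lower semicontinuity of the measure of open sets under narrow convergence. The limit $\eta$ comes from Proposition~\ref{dynamical_plans_precompactness}, and it suffices to show that for every $\e>0$
\begin{align*}
\eta\Big((\mathrm{TGeo}^\infty)^c \cap (\ee_0,\ee_1)^{-1}(\{\tau_\infty>\e\})\Big)=0.
\end{align*}
Proposition~\ref{dynamical_plans_precompactness} gives $(\ee_0,\ee_1)_\#\eta(\{\tau_\infty>0\})=1$. Writing $\{\tau_\infty>0\}=\bigcup_{m}\{\tau_\infty>1/m\}$ and using countable subadditivity, we then get $\eta((\mathrm{TGeo}^\infty)^c)=0$.

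Fix $\e>0$ and set $O_\e:=(\mathrm{TGeo}^\infty)^c \cap (\ee_0,\ee_1)^{-1}(\{\tau_\infty>\e\})$. By Lemma~\ref{openness_of_complement_of_tl_geos}, $O_\e$ is open in $C([0,1],M)$. Since $\eta_j\to\eta$ narrowly, the portmanteau theorem gives
\begin{align*}
\eta(O_\e)\le \liminf_{j\to\infty}\eta_j(O_\e).
\end{align*}
The task therefore reduces to showing $\eta_j(O_\e)\to 0$. Each $\eta_j$ is concentrated on $\mathrm{TGeo}^j$, the $\tau_j$-geodesics. It is enough to show that for $j$ large, every $\gamma\in\mathrm{TGeo}^j$ with endpoints in $K$ and $\tau_\infty(\gamma_0,\gamma_1)>\e$ lies outside $O_\e$. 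Failing that, it is enough to show that the $\eta_j$-mass of such curves lying in $O_\e$ vanishes in the limit.

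The key step is an \emph{almost geodesic} estimate. For $\gamma\in\mathrm{TGeo}^j$ contained in the compact set $E$ (use uniform global hyperbolicity), we have $\tau_j(\gamma_s,\gamma_t)=(t-s)\tau_j(\gamma_0,\gamma_1)$. By Proposition~\ref{locally_uniform_convergence_tau}, $\sup_{E^2}|\tau_j-\tau_\infty|=:\omega_j\to 0$ along the subsequence, so
\begin{align*}
|\tau_\infty(\gamma_s,\gamma_t)-(t-s)\tau_\infty(\gamma_0,\gamma_1)|\le 2\omega_j \quad \forall\, s<t.
\end{align*}
This is not yet membership in $\mathrm{TGeo}^\infty$. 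To pass to the limit, I would refine the open set: for $\beta>0$ let
\begin{align*}
O_{\e,\beta}:=\Big\{\gamma:\tau_\infty(\gamma_0,\gamma_1)>\e,\ \exists\, s<t,\ |\tau_\infty(\gamma_s,\gamma_t)-(t-s)\tau_\infty(\gamma_0,\gamma_1)|>\beta\Big\}.
\end{align*}
This set is open by continuity of $\tau_\infty$ and of the evaluation maps, exactly as in the proof of Lemma~\ref{openness_of_complement_of_tl_geos}. For $j$ large, $2\omega_j<\beta$, so the estimate above gives $\eta_j(O_{\e,\beta})=0$, and hence $\eta(O_{\e,\beta})=0$ by the portmanteau theorem. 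Since $O_\e\cap(\ee_0,\ee_1)^{-1}(K^2)\subset\bigcup_{n}O_{\e,1/n}$, it follows that $\eta(O_\e)=0$. Here we also use that $\eta$ is supported on curves in $K$ (or $E$), which comes from Proposition~\ref{dynamical_plans_precompactness}.

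The main obstacle is to ensure that curves in the support of $\eta$ really belong to $\mathrm{TGeo}^\infty$ as defined, namely that they are continuous, causal, and satisfy the linear $\tau_\infty$-identity, rather than merely satisfying that identity. The identity itself passes to the limit by the argument above. For causality, I would invoke Remark~\ref{limit_of_geodesics_is_geodesic}, which says that uniform limits of $g_j$-geodesics with $\tau_j$ bounded below are $g_\infty$-maximising curves, together with Lemma~\ref{causal_limit_plan}. If needed, I would show that the set of non-causal curves with $\tau_\infty(\gamma_0,\gamma_1)>\e$ is $\eta$-null by the same open-set argument, using closedness of $\le_\infty$.
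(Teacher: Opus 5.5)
Your proof is correct, and it takes a more direct route than the paper's.

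The paper argues by contradiction. It finds an open $B$ with $\overline{B}\subset\Omega:=(\mathrm{TGeo}^\infty)^c\cap(\ee_0,\ee_1)^{-1}(\{\tau_\infty>\e\})$ and $\eta(B)>0$. Lower semicontinuity then lets it pick $\gamma_j\in\mathrm{TGeo}^j\cap B\cap C([0,1],E)$. It finally appeals to the limit-curve machinery of Remark~\ref{limit_of_geodesics_is_geodesic}, which rests on Claim~2 in the proof of Proposition~\ref{dynamical_plans_precompactness}, to produce a timelike $g_\infty$-geodesic in $\overline{B}\subset\Omega$.

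You instead pass the defining identity of $\mathrm{TGeo}$ to the limit directly. Since $\eta_j$ is concentrated on $\tau_j$-geodesics inside the fixed compact $E$ and $\sup_{E^2}|\tau_j-\tau_\infty|\to 0$, the open sets $O_{\e,\beta}$ are $\eta_j$-null for large $j$. They are therefore $\eta$-null by the portmanteau theorem, and $O_\e=\bigcup_n O_{\e,1/n}$. This needs only continuity of $\tau_\infty$ and uniform convergence of $\tau_j$ on compacts. It uses no compactness or reparametrisation of curves, and not even Lemma~\ref{openness_of_complement_of_tl_geos}. What the paper's route adds is that limits of $g_j$-geodesics are genuine $g_\infty$-maximisers parametrised by arclength, which this statement does not require.

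One small point to state explicitly: the subsequence along which $\eta_j\to\eta$ should be one along which $\tau_j\to\tau_\infty$. This is harmless, because the proof of Proposition~\ref{locally_uniform_convergence_tau} identifies every subsequential limit of $(\tau_j)$ with $\tau_\infty$.

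Your closing paragraph addresses a non-issue. With the paper's definition, $\mathrm{TGeo}^\infty$ consists of the curves in $C([0,1],M)$ satisfying $\tau_\infty(\gamma_s,\gamma_t)=(t-s)\tau_\infty(\gamma_0,\gamma_1)$ for all $s<t$, with $\tau_\infty(\gamma_0,\gamma_1)>0$. This already forces $\gamma_s\ll_\infty\gamma_t$ for all $s<t$, so no appeal to Lemma~\ref{causal_limit_plan} or Remark~\ref{limit_of_geodesics_is_geodesic} is needed.

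The fallback you sketch via closedness of $\le_\infty$ would not run on its own anyway. The $\eta_j$ are concentrated on $g_j$-causal curves, which need not be $\le_\infty$-monotone, so nothing forces $\eta_j$ to vanish on that open set.
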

\begin{proof}
    As we have seen in the proof of Proposition \ref{dynamical_plans_precompactness}, we have that $\eta((\ee_0, \ee_1)^{-1}(\{\tau_\infty > 0\})) = 1$. 
    Suppose by contradiction that there exists  $\delta > 0$ such that 
    \begin{align*}
        \eta (\mathrm{TGeo}^\infty) \leq 1- 2\delta.
    \end{align*}
    As in the beginning of the proof of Proposition \ref{dynamical_plans_precompactness},  there exists $\e > 0$ such that 
    \begin{align*}
        \eta((\ee_0, \ee_1)^{-1}(\{\tau_\infty > \e\})) \geq 1- \delta.
    \end{align*}
    Then 
     \begin{align*}
        \eta((\mathrm{TGeo}^\infty)^c  \cap (\ee_0, \ee_1)^{-1}(\{\tau_\infty > \e\})) \geq \delta.
    \end{align*}
    By Lemma \ref{openness_of_complement_of_tl_geos}, we know that $\Omega:= (\mathrm{TGeo}^\infty)^c  \cap (\ee_0, \ee_1)^{-1}(\{\tau_\infty > \e\})$ is open and by Lemma \ref{countable_cover}, we can find a countable cover $\overline{B_n} \subset \Omega$ of $\Omega$ such that for each $n$, $\overline{B_n}$ has positive distance to $\partial \Omega$.
    By the $\sigma$-subadditivity of measures,  there exists $n\in \N$ such that $B_n =:B$ satisfies $\eta(B) >0$. 
    By the lower semicontinuity of the measure on open sets under narrow convergence, we get that 
    \begin{align*}
        \liminf_{j \to \infty} \eta_j(B) \geq \eta(B) >0. 
    \end{align*}
    As $\eta_j(\mathrm{TGeo}^j) = 1$, then $\mathrm{TGeo}^j \cap B \neq \emptyset$, for $j$ large enough. Recall also that there exists a compact set $E \subset M$ such that $\eta_j(C([0,1], E)) = 1$,  for all $j$. 
    Hence we can pick $\gamma_j \in C([0,1], E) \cap \mathrm{TGeo}^j \cap B $.
    Now, as in the proof of Proposition \ref{dynamical_plans_precompactness} (see also Remark \ref{limit_of_geodesics_is_geodesic}), we get that a subsequence of $(\gamma_j)_j$ uniformly converges to a timelike $g_\infty$-geodesic $\gamma$. But then $\gamma \in \overline{B} \subset \Omega$,  contradicting that  $\Omega \subset (\mathrm{TGeo}^\infty)^c$. 
\end{proof}

\subsection{Stability of the $\tcd$-condition under locally uniform convergence of Lorentzian metrics}

In this subsection, we will use the previous results to prove the stability of the timelike curvature dimension condition. This requires one more preparatory lemma. 

\begin{lemma}\label{restriction_of_plans_converge}
    Let $\pi_j, \pi \in \mathcal{P}(M^2)$ such that $\pi_j \to \pi$ narrowly as $j\to \infty$ and $\pi(\{ \tau_\infty > 0\})=1$. For $R> 0$ define $\Omega_R:= \{ \tau_\infty > \frac{1}{R}\}$. Then there exists a sequence $R_k \in (0, \infty)$ with $\lim_{k \to \infty} R_k = \infty$ with the following properties:
    \begin{itemize}
    \item    There exists  $k_0 \geq 0$ such that $\pi(\Omega_{R_k})>0$, for all $k \geq k_0$;
    \item $\pi_j(\Omega_{R_k}) \to  \pi(\Omega_{R_k})\ \mathrm{as} \ j \to \infty$, for all $k\in \N$;
    \item $\pi_j(\Omega_{R_k})>0$, for all $j\geq J_{k_0}$, and 
 
   \begin{align}\label{restriction_convergence_lemma} 
    \pi_j(\Omega_{R_k})^{-1} \pi_j \LL \Omega_{R_k} \to  \pi(\Omega_{R_k})^{-1} \pi \LL \Omega_{R_k} \ \mathrm{narrowly\ as\ } j \to \infty, \mathrm{ for\ all\ }k\geq k_0.
    \end{align}
    \end{itemize}
\end{lemma}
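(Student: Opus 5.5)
The plan is to choose the radii $R_k$ so that the boundaries of the sets $\Omega_{R_k}$ are $\pi$-null, and then invoke the portmanteau theorem. Since $\tau_\infty$ is continuous (Proposition \ref{locally_uniform_convergence_tau}), each $\Omega_R=\{\tau_\infty>1/R\}$ is open and satisfies
\[
\partial\Omega_R\subset\{\tau_\infty=1/R\}.
\]
The level sets $\{\tau_\infty=s\}$, $s>0$, are pairwise disjoint and $\pi$ is a finite measure, so at most countably many of them carry positive $\pi$-mass. Hence we can pick $R_k\to\infty$, for instance $R_k\in(k,k+1)$, with $\pi(\{\tau_\infty=1/R_k\})=0$ for every $k$. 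Each $\Omega_{R_k}$ is then a $\pi$-continuity set, and the portmanteau theorem gives $\pi_j(\Omega_{R_k})\to\pi(\Omega_{R_k})$ for every $k$. This is the second item.

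For the first item, note that $\Omega_{R_k}$ increases to $\{\tau_\infty>0\}$ as $k\to\infty$. Continuity from below together with $\pi(\{\tau_\infty>0\})=1$ gives $\pi(\Omega_{R_k})\to1$. So there is $k_0$ with $\pi(\Omega_{R_k})>1/2$ for all $k\ge k_0$, and by monotonicity the same $k_0$ works for all larger $k$.

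For the third item, fix $k\ge k_0$. By the second item there is an index $J_k$ with $\pi_j(\Omega_{R_k})>0$ for all $j\ge J_k$, so the normalized restrictions are well defined. To prove narrow convergence, take $f\in C_b(M^2)$. Then $\int f\,\di(\pi_j\LL\Omega_{R_k})=\int f\chi_{\Omega_{R_k}}\,\di\pi_j$. The function $f\chi_{\Omega_{R_k}}$ is bounded, and its discontinuity set lies in $\partial\Omega_{R_k}$, which is $\pi$-null. The generalized portmanteau (continuous mapping) theorem therefore gives
\[
\int f\chi_{\Omega_{R_k}}\,\di\pi_j\to\int f\chi_{\Omega_{R_k}}\,\di\pi.
\]
Dividing by $\pi_j(\Omega_{R_k})\to\pi(\Omega_{R_k})>0$ yields \eqref{restriction_convergence_lemma}.

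The main subtlety is the uniformity in the third item. The statement asks for positivity for all $j\ge J_{k_0}$ simultaneously for all $k\ge k_0$. I handle this by monotonicity: $\Omega_{R_k}\supset\Omega_{R_{k_0}}$ for $k\ge k_0$, so $\pi_j(\Omega_{R_k})\ge\pi_j(\Omega_{R_{k_0}})>0$ for $j\ge J_{k_0}$. The only other point that needs care is that $\partial\Omega_R$ lies in the level set, and this uses only the continuity of $\tau_\infty$ from Proposition \ref{locally_uniform_convergence_tau}.
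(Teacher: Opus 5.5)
Your proof is correct and follows the same route as the paper: you choose $R_k$ away from the countably many values of $R$ for which the level set $\{\tau_\infty = 1/R\}$ carries mass, so that the sets $\Omega_{R_k}$ are continuity sets, and then apply the portmanteau theorem. The differences are minor. You require null boundaries only for $\pi$, whereas the paper also avoids the atoms of every $\pi_j$, which is unnecessary. You test against the $\pi$-a.e.\ continuous functions $f\chi_{\Omega_{R_k}}$ instead of using the liminf inequality on open sets. You also correctly state the inclusion $\partial\Omega_R\subset\{\tau_\infty=1/R\}$ where the paper writes an equality, and you make the uniform threshold $J_{k_0}$ explicit via monotonicity.
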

\begin{proof}
    \textbf{Step 1:}
    We first notice that for any measure $\mu \in \mathcal{P}(M^2)$, the set 
    $$Q^\mu:=\{R>0, \mu(\partial \Omega_R)>0\}$$ 
    is countable. Indeed, $\partial \Omega_R= \{\tau_\infty = \frac{1}{R}\}$, so for $R \neq R'$, it holds $\partial \Omega_R \cap \partial \Omega_{R'} = \emptyset$.    Thus
    \begin{align*}
        1 \geq \mu(\{\tau_\infty>0\}) = \mu\left(\bigcup_{R>0} \partial \Omega_R\right) = \sum_{R>0} \mu(\partial \Omega_R) \geq 0,
    \end{align*}
    so  only countably many $R$ can satisfy $\mu(\partial \Omega_R)>0$.
    \smallskip
    
    \textbf{Step 2:} There exists a sequence $R_K \in (0, \infty)$ with $\lim_{k \to \infty} R_k = \infty$ and such that  $\Omega_{R_k}$ is a continuity set of $\pi_j$, for all $j\in \N$, and of $\pi$. 

    This follows from Step 1, by observing that $\mathcal{L}^1(Q^{\pi_j}) = 0$, $\mathcal{L}^1(Q^\pi) = 0$ and hence 
    \begin{align*}
        \mathcal{L}^1 \Big( Q^\pi \cup \bigcup_{j \in \N} Q^{\pi_j} \Big)=0. 
    \end{align*}
    Thus, there exists a strictly increasing sequence 
    $R_k \in (0, \infty) \setminus \Big( Q^\pi \cup \bigcup_{j \in \N} Q^{\pi_j} \Big)$.

    \textbf{Step 3:} Conclusion.

    As $\pi(\{ \tau_\infty > 0\})=1$, and $\{ \tau_\infty > 0\}= \bigcup_{k\in \N} \Omega_{R_k}$, we get that 
 \begin{align*}
     &  \lim_{k \to \infty} \pi(\Omega_{R_k}) =  \pi(\{\tau_\infty > 0\}) = 1.
 \end{align*}
As $\Omega_{R_k}$ is a continuity set for $\pi$,  we get that
 \begin{align}\label{setwise_conv_omega}
  & \pi_{j}(\Omega_{R_k}) \to \pi(\Omega_{R_k}) \quad \mathrm{as} \ j \to \infty, \quad \forall k\in \N.  
 \end{align} 
 Moreover, for every open set $E \subset M^2$:
 \begin{align}\label{lower_semi_1}
   \liminf_{j \to \infty} \pi_j \LL \Omega_{R_k} (E) = \liminf_{j \to \infty} \pi_j(E \cap \Omega_{R_k}) \geq \pi(E \cap \Omega_{R_k}) = \pi \LL \Omega_{R_k} (E).
 \end{align}
The combination of   \eqref{setwise_conv_omega} and \eqref{lower_semi_1} yields \eqref{restriction_convergence_lemma}.
\end{proof}
\smallskip

The following result is inspired by \cite[Thm.~1.5]{cavalletti2020optimal}, where the stability of synthetic timelike Ricci lower bounds was established under a Lorentzian version of pointed measured Gromov convergence (see \cite{GMS-PLMS} for the corresponding metric framework), relying on isometric embeddings into a common Lorentzian pre-length space.  In contrast, our approach replaces the assumption of such isometric embeddings by requiring locally uniform convergence of the Lorentzian metrics. This perspective will be crucial in the subsequent sections, where we show that impulsive gravitational waves satisfy synthetic timelike Ricci lower bounds.
\smallskip

\begin{theorem}\label{stability_tcd}
    Let $M$ be a smooth manifold and $g_\infty$ be a continuous globally hyperbolic Lorentzian metric on $M$  such that every length-maximising causal curve has a causal character. For all $j \in \N$, let $g_j$ be a smooth Lorentzian metric on $M$ satisfying the following assumptions:
    \begin{itemize}
   \item  $g_j \to g_\infty$ locally uniformly;
    \item  $(M, g_j)$ are uniformly globally hyperbolic (see Definition \ref{uniform_global_hyperbolicity_def});
    \item there exist $p \in (0,1)$, $N \in (0, \infty)$ and $K \in \R$ such that the measured Lorentzian length spaces induced by $(M, g_j)$ satisfy the $\tcd^{e}_p(K, N)$-condition. 
   \end{itemize}
   Then $(M, g_\infty)$ satisfies the $\wtcd^{e}_p(K, N)$-condition.
\end{theorem}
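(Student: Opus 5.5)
We follow the strategy of \cite[Thm.~3.15]{cavalletti2020optimal}, replacing the isometric embeddings by the convergence results of the previous subsection. Fix $\mu_0=\rho_0\vol_{g_\infty},\mu_1=\rho_1\vol_{g_\infty}$ with finite entropy, compactly supported and strongly timelike $p$-dualisable for $g_\infty$, with optimal plan $\pi$. The first step is to build approximating marginals for the smooth spacetimes. We set $\mu_i^j:=\rho_i\,\frac{\di\vol_{g_\infty}}{\di\vol_{g_j}}\vol_{g_j}$. This is well defined because the densities $\sqrt{|\det g_j|}/\sqrt{|\det g_\infty|}\to1$ locally uniformly, which gives $\mu^j_i=\mu_i$ as measures and $\Ent(\mu_i|\vol_{g_j})\to\Ent(\mu_i|\vol_{g_\infty})$.

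The marginals need not be $g_j$-timelike dualisable, since $\supp\mu_0\times\supp\mu_1$ need not lie in $\{\tau_j>0\}$. We therefore use Lemma \ref{restriction_of_plans_converge} and restrict $\pi$ to $\Omega_{R_k}=\{\tau_\infty>1/R_k\}$, with normalised marginals $\mu^k_0,\mu^k_1$. Strong dualisability is inherited by the restriction, since its plan is still concentrated on $\Gamma$; the entropies converge as $k\to\infty$; and $\|\tau_\infty\|_{L^2}$ converges. By the Corollary after Proposition \ref{locally_uniform_convergence_tau}, for $j$ large the support of the restricted plan lies in $\{\tau_j>1/(2R_k)\}$. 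As smooth globally hyperbolic spacetimes, these pairs are then timelike $p$-dualisable for $g_j$ (we take $g_j$-optimal plans $\pi_j$, whose support stays in $\{\tau_j>0\}$ after possibly further localising).

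The second step applies $\tcd^e_p(K,N)$ for $g_j$. This yields $\ell_{p,j}$-optimal dynamical plans $\eta_j$ together with the entropy inequality
\[
U_N(\mu^j_t|\vol_{g_j})\ \ge\ \sigma^{(1-t)}_{K/N}\big(\|\tau_j\|_{L^2(\pi_j)}\big)\,U_N(\mu^j_0|\vol_{g_j})+\sigma^{(t)}_{K/N}\big(\|\tau_j\|_{L^2(\pi_j)}\big)\,U_N(\mu^j_1|\vol_{g_j}).
\]
Proposition \ref{locally_uniform_convergence_tau} and Lemma \ref{causal_limit_plan}, combined with the stability of optimality in the standard way, give $\ell^j_p(\mu^j_0,\mu^j_1)\to\ell^\infty_p$. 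Here uniform global hyperbolicity provides a common compact set. Proposition \ref{dynamical_plans_precompactness} and Lemma \ref{limit_plan_is_dynamical} then produce a narrow limit $\eta$ concentrated on $\mathrm{TGeo}^\infty$. Its endpoint coupling is optimal by the uniform convergence of $\tau_j$, so $\mu_t:=(\ee_t)_\#\eta$ is an $\ell_p$-geodesic.

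The third step passes to the limit in the inequality. On the right-hand side, the entropies of the endpoints converge, and $\|\tau_j\|_{L^2(\pi_j)}\to\|\tau_\infty\|_{L^2(\pi)}$ by locally uniform convergence of $\tau_j$ on a compact set together with narrow convergence; this uses continuity of $\sigma$, and in the degenerate regime $\kappa\vartheta^2\ge\pi^2$ the statement is trivial or handled by lower semicontinuity. On the left-hand side we need upper semicontinuity of $U_N$, i.e.\ lower semicontinuity of the entropy, along $\mu^j_t\to\mu_t$ with varying reference measures $\vol_{g_j}\to\vol_{g_\infty}$. We handle this through the duality formula for relative entropy on a fixed compact set, using the uniform convergence of the densities $\di\vol_{g_j}/\di\vol_{g_\infty}$. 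A diagonal argument in $(j,k)$ then finishes the proof.

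The main obstacle is the first step: producing marginals that are genuinely timelike dualisable for $g_j$ while converging in entropy, with the optimal $g_j$-plans converging to the restricted $g_\infty$-optimal plan. The point needing care is that $\pi_j$ could charge regions near $\{\tau_j=0\}$. We plan to control this with the strong dualisability (uniqueness of the support set $\Gamma$), which forces any limit of $\pi_j$ to live on $\Gamma\subset\{\tau_\infty>1/R_k\}$, combined with the lower semicontinuity estimates already used in Proposition \ref{dynamical_plans_precompactness}.
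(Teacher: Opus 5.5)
Your architecture is the paper's: approximate the pair by $g_j$-timelike $p$-dualisable pairs, apply $\tcd^e_p(K,N)$ for $g_j$, extract a limit via Proposition~\ref{dynamical_plans_precompactness} and Lemma~\ref{limit_plan_is_dynamical}, and conclude by semicontinuity of the entropy. The one genuine difference is sound and worth keeping. The paper transfers $\pi_\infty$ to $g_j$ through an $L^\infty$-density approximation \cite[Lem.~3.16]{cavalletti2020optimal} and $W_2$-optimal couplings between normalised volume measures, then controls endpoint entropies by Jensen (Steps~1--2). That machinery comes from \cite{cavalletti2020optimal}, where the spaces and reference measures differ. You instead keep the measures themselves. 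Since $\theta_j:=\frac{\di\vol_{g_\infty}}{\di\vol_{g_j}}\to1$ uniformly on a compact $\mathcal K$, one has $|\Ent(\mu|\vol_{g_j})-\Ent(\mu|\vol_{g_\infty})|\le\sup_{\mathcal K}|\log\theta_j|$ for every $\mu\in\mathcal P(\mathcal K)$. This reduces both the endpoint bound and the interior lower semicontinuity of Step~3 to the fixed reference measure $\vol_{g_\infty}$, which is correct and noticeably simpler here.

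However, the step you flag as the main obstacle is not closed by the argument you give.
\begin{itemize}
\item[(i)] $(\mu^k_0,\mu^k_1)$ is not automatically $g_j$-timelike $p$-dualisable. The plan $\pi^k:=\pi(\Omega_{R_k})^{-1}\pi\LL\Omega_{R_k}$ is $g_j$-chronological but need not be $g_j$-optimal, and a $g_j$-optimal plan may charge $\{\tau_j=0\}$.
\item[(ii)] The inclusion $\Gamma\subset\{\tau_\infty>1/R_k\}$ is false; $\Gamma$ only lies in $\{\tau_\infty>0\}$. What is true is the following. A narrow limit $\sigma$ of $g_j$-optimal plans for $(\mu^k_0,\mu^k_1)$ is $\le_\infty$-causal by Lemma~\ref{causal_limit_plan}. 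By uniform convergence of $\tau_j$ and $\ell^j_p(\mu^k_0,\mu^k_1)^p\ge\int\tau_j^p\,\di\pi^k$, it satisfies $\int\tau_\infty^p\,\di\sigma\ge\int\tau_\infty^p\,\di\pi^k$. Hence $\pi\LL\Omega_{R_k}^c+\pi(\Omega_{R_k})\,\sigma$ is optimal for $(\mu_0,\mu_1)$, so $\sigma$ is concentrated on $\Gamma$. This gluing is also what actually proves that strong dualisability passes to the restriction; concentration of $\pi^k$ on $\Gamma$ alone does not.
\item[(iii)] Your ``further localising'' is genuinely necessary, because narrow convergence never gives exact concentration of $\pi_j$ on $\{\tau_j>0\}$ at finite $j$. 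To make it legitimate, cut $\pi_j$ to some $\Omega_{R'}$ chosen as a continuity set (Lemma~\ref{restriction_of_plans_converge}) and renormalise. Then invoke that the normalised restriction of an optimal plan is optimal between its own marginals \cite[Lem.~2.10]{cavalletti2020optimal}. This is exactly the paper's second restriction producing $\pi_m$.
\item[(iv)] The cut changes the endpoints again, so ``the entropies of the endpoints converge'' must be re-proved. The new densities are bounded by $(c\,d)^{-1}\rho_i$ with $c,d\to1$. The superadditivity $u(s+t)\ge u(s)+u(t)$ of $u(s)=s\log s$, together with Jensen, then gives $\limsup\Ent\le\Ent(\mu_i|\vol_{g_\infty})$, as in Step~2b.
\end{itemize}
A diagonal choice $k_j,R'_j\to\infty$ then yields pairs meeting the hypotheses of Proposition~\ref{dynamical_plans_precompactness} with limit $(\mu_0,\mu_1)$.

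Finally, for $K\neq0$ your limit gives $\|\tau_j\|_{L^2(\pi_j)}\to\|\tau_\infty\|_{L^2(\hat\pi)}$, where $\hat\pi=\lim\pi_j$. This $\hat\pi$ is some optimal plan concentrated on $\Gamma$, not necessarily the fixed $\pi$. You therefore obtain the inequality with $\hat\pi$, which suffices only if the plan in the definition is read existentially. The paper sidesteps this by writing out only the case $K=0$.
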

\begin{proof}
    For $j \in \N \cup \{\infty\}$, denote by $\mm_j$ the volume measure on $M$ induced by $g_j$.
    Fix 
    \begin{equation}\label{eq:defmuiinf}
    \mu_0^\infty, \mu_1^\infty \in \mathrm{Dom}(\Ent)(\cdot |\mm_\infty) \cap \mathcal{P}_c(M),
    \end{equation}
    strongly timelike $p$-dualisable and let $\pi_\infty \in \Pi^{p\text{-opt}}_{\ll_\infty
    }(\mu_0^\infty, \mu_1^\infty)$. Since $\mu_0^\infty$ and $\mu_1^\infty$ have compact support and the metrics $(g_j)_{j\in \N}$ are uniformly globally hyperbolic,  there exists a compact set $E \subset M$ such that:
    \begin{align*}
        B_1^h\Big(\bigcup_{j \in \N \cup \{\infty\}} (J_j^+(\supp\, \mu_0^\infty) \cap J_j^-(\supp\, \mu_1^\infty))\Big)\subseteq E.
    \end{align*}
 By Proposition \ref{locally_uniform_convergence_tau}, up to a subsequence, we have that
 $$
 \tau_j \to \tau_\infty\quad  \text{uniformly on } E.
 $$
 Up to replacing $E$ with a larger compact set so that $\mm_\infty(\partial E)=0$, we may assume that  $$\mm_\infty(E) = \lim_{j \to \infty} \mm_j(E).$$ Then the probability measures $\Tilde{\mm}_j:= \mm_j(E)^{-1} {\mm_j} \LL E$ narrowly converge to $\Tilde{\mm}_\infty:= \mm_\infty(E)^{-1} {\mm_\infty} \LL E$. For simplicity, we will write $\mm_j$ for $\Tilde{\mm}_j$ and $\mm_\infty$ for $\Tilde{\mm}_\infty$. \smallskip

 \textbf{Step 1:} We prove that, up to a subsequence, for every $j \in \N$ there exists $(\mu^{j}_0, \mu^j_1) \in \mathcal{P}(M)^2$ $g_j$-timelike $p$-dualisable, such that
 \begin{align}\label{step1}
     \mu^j_0 \to \mu^\infty_0, \ \mu^j_1 \to \mu^\infty_1 \ \mathrm{narrowly \ in \ } M \ \mathrm{and\ } \ell_p(\mu_0^j, \mu_1^j) \to \ell_p(\mu_0^\infty, \mu_1^\infty), \ \mathrm{as}\ j \to \infty.
 \end{align}
 Since $E$ is compact, the narrow convergence $\mm_j \to \mm_\infty$ implies $\mm_j \to \mm_\infty$ in $W_2^{(M, \sfd_h)}$. Let
 \begin{align}
     \boldsymbol{\gamma}_j \in \Pi(\mm_\infty, \mm_j) \ \mathrm{be \ a }\ W_2^{(M, \sfd_h)}\mathrm{-optimal\ coupling.}
 \end{align}
 Using \cite[Lemma 3.16]{cavalletti2020optimal},  we can approximate $\pi_\infty$ by 
 \begin{align}\label{pi_n_infty_all_convergences}
     &\pi_{\infty, n} = \rho_{\infty, n} \mm_\infty \otimes \mm_{\infty},\ \rho_{\infty, n} \in L^{\infty}(\mm_\infty \otimes \mm_{\infty}), \ \pi_{\infty, n}(\{\tau_\infty >0\})=1, \ \pi_{\infty, n} \to \pi_\infty \ \mathrm{narrowly}, \nonumber \\
     &\lim_{n \to \infty} \Ent((P_1)_\#\pi_{\infty, n}| \mm_\infty) = \Ent(\mu_0^\infty|\mm_\infty), \ \lim_{n \to \infty} \Ent((P_2)_\#\pi_{\infty, n}| \mm_\infty) = \Ent(\mu_1^\infty|\mm_\infty).
 \end{align}
 Define  
 \begin{align}\label{def:tilde_pi}
     \Tilde{\pi}_{j,n}(\di x_1\di x_2\di x_3\di x_4):= \rho_{\infty, n}(x_1, x_3) \boldsymbol{\gamma}_j(\di x_1 \di x_2) \otimes  \boldsymbol{\gamma}_j(\di x_3 \di x_4), \ \pi_{j,n} := (P_{24})_\# \Tilde{\pi}_{j,n}.
 \end{align}
 Note that $\pi_{j,n} \ll \mm_j \otimes \mm_j$ and that $\pi_{j,n} \to \pi_{\infty, n}$ narrowly as $j \to \infty$. For $R > 0$, define the open set $$\Omega_R := \Big\{\tau_\infty > \frac{1}{R}\Big\} \subset M^2.$$ For any sequence  $R_k \to \infty$, we get that $\{\tau_\infty > 0\} = \bigcup_{k \in \N} \Omega_{R_k}$. It follows from \eqref{pi_n_infty_all_convergences} that for any such sequence $R_k$, it holds
 \begin{align}\label{eq:piinfnOmegaR}
     & 1 \geq \lim_{k \to \infty} \pi_{\infty, n}(\Omega_{R_k}) =  \pi_{\infty, n}(\{\tau_\infty > 0\}) = 1. 
 \end{align}
 Fix  $n \in \N$.    By Lemma \ref{restriction_of_plans_converge}, there exists a sequence  $R^n_k \to \infty$ such that  
 \begin{align}\label{narrow_k}
 &\pi_{\infty, n}(\Omega_{R^n_k})>0,  \quad \text{for all }k \geq k_0=k_0(n)>0,\\
    & c_{\infty, k, n}:= \pi_{\infty, n}(\Omega_{R^n_k}) \to 1 \ \mathrm{as} \ k \to \infty \ \mathrm{and} \nonumber\\
      & \pi_{\infty, k, n}:= \pi_{\infty, n}(\Omega_{R^n_k})^{-1} \pi_{\infty, n} \LL \Omega_{R^n_k} \to \pi_{\infty, n} \ \mathrm{narrowly\ as} \ k \to \infty.
 \end{align}
 Moreover, $\pi_{j, n}(\Omega_{R^n_k})>0$  for $j \geq J_{k,n}$, and
 \begin{align}\label{narrow:j:k}
  & c_{j, k, n}:= \pi_{j, n}(\Omega_{R^n_k}) \to c_{\infty, k, n} \ \mathrm{as} \ j \to \infty,   \nonumber\\
    &\pi_{j, k, n}:= \pi_{j, n}(\Omega_{R^n_k})^{-1}\pi_{j,n} \LL \Omega_{R^n_k} \to \pi_{\infty, k, n} \ \mathrm{narrowly\ as} \ j \to \infty. 
 \end{align}
 As $E$ is a Polish space, so is $\mathcal{P}(E)$, hence there exists a sequence of open neighbourhoods $(O_l)_{l \in \N} \subset \mathcal{P}(E)$ of $\pi_{\infty, n}$ with $O_{l+1} \subset O_l$ for all $l$ and such that for any open $U$ containing  $\pi_{\infty, n}$ there exists an $l_0$ such that for each $l \geq l_0$, it holds $O_l \subset U$.
 Fix an $l \in \N$. 
 By \eqref{narrow_k}, there exists  $k_l \in \N$ with $k_0\leq k_{l-1} \leq k_l$, such that for $k \geq k_l$ it holds $\pi_{\infty, k, n} \in O_l$ and $c_{\infty, k, n} \geq 1-\frac{1}{2l}$. By the locally uniform convergence of the $\tau_j$ to $\tau_\infty$, we get that there exists  $J^1_l \in \N$ such that 
 \begin{align*}
     \Omega_{R^n_k} \subset \Big\{\tau_j > \frac{1}{2{R^n_{k_l}}}\Big\} \subset \Big\{\tau_\infty > \frac{1}{4{R^n_{k_l}}}\Big\}, \quad \text{for all } j \geq J^1_l.
 \end{align*}
 Since $\pi_{\infty, k_l, n} \in O_l$, we can use \eqref{narrow:j:k}, to find  $J^2_l \in \N$ such that $\pi_{j,k_l,n} \in O_l$, for all $j \geq J^2_l$. Choose  $j_l \geq \max(J^1_l,J^2_l, j_{l-1}+1)$ such that 
 \begin{align}\label{c_l_convergence}
     c_{j_l, k_l, n} \geq 1-\frac{1}{l}.
 \end{align}
Then the plans
\begin{align}\label{pi_l_n_prime:narrow_convergence}
    \pi'_{l,n} := \pi_{j_l, k_l, n} \to \pi_{\infty, n} \ \mathrm{narrowly\ as} \ l \to \infty.
\end{align}

Moreover, by  \cite[Lem.\ 1.17]{cavalletti2020optimal}, Proposition \ref{locally_uniform_convergence_tau} and the uniform boundedness of the $\tau_j$ guaranteed by Lemma \ref{Lipschitz_bd_causal_curves}, we get that
\begin{align}\label{pi_prime_convergence_of_cost}
    \lim_{l \to \infty} \int \tau_{j_l}^p \pi'_{l,n} = \int \tau_{\infty}^p \pi_{\infty, n}. 
\end{align}
Let 
\begin{align}\label{def:mu_l_n_prime}
    (\mu')^{l,n}_0 := (P_1)_\# \pi'_{l,n}, \ (\mu')^{l,n}_1 := (P_2)_\# \pi'_{l,n}.
\end{align}
 Then $\Pi_{\ll_{j_l}}((\mu')^{l,n}_0, (\mu')^{l,n}_1) \neq \emptyset$ and  $\ell^{j_l}_p((\mu')^{l,n}_0, (\mu')^{l,n}_1) \in (0, \infty)$. By \cite[Prop.\ 2.3]{cavalletti2020optimal}, there exist $\ell_p^{j_l}$-optimal plans 
 \begin{align}\label{def:pi_prime_prime}
     \pi''_{l,n} \in \Pi^{p\text{-opt}}_{\leq_{j_l}}((\mu')^{l,n}_0, (\mu')^{l,n}_1).
 \end{align}
 Combining \cite[Lemma 1.15]{cavalletti2020optimal} with Prokhorov's Theorem, we infer that there exists  $\hat{\pi}_{\infty, n} \in \mathcal{P}(E^2)$ such that $\pi''_{l,n} \to \hat{\pi}_{\infty, n}$ narrowly as $l \to \infty$, up to a subsequence. Lemma \ref{causal_limit_plan} yields \begin{align}\label{limit_of_transport_plans_is_transport_plan}
     \hat{\pi}_{\infty, n}(\leq_{\infty}) = 1.
 \end{align}
Moreover, \eqref{pi_l_n_prime:narrow_convergence} and \eqref{def:mu_l_n_prime}  imply that 
 \begin{align*}
     \hat{\pi}_{\infty, n} \in \Pi_{\leq_\infty}((P_1)_\#(\pi_{\infty, n}),(P_2)_\#(\pi_{\infty, n})).
 \end{align*}
 With an analogous argument using \eqref{pi_n_infty_all_convergences},  \cite[Lemma 1.15]{cavalletti2020optimal}, Prokhorov's Theorem and the closedness of $\leq_\infty$, we get that there exists a $\hat{\pi}_\infty \in \Pi_{\leq_\infty}(\mu_0^\infty, \mu_1^\infty)$ such that, up to a subsequence, $\hat{\pi}_{n, \infty} \to \hat{\pi}_\infty$ narrowly as $n \to \infty$. 
 Recalling \eqref{pi_prime_convergence_of_cost}, a diagonal argument gives that there exist strictly increasing sequences $n_q, l_q$ such that 
\begin{align}\label{pi_prime_prime_narrow_convergence}
     \pi''_{l_q, n_q} \to \hat{\pi}_\infty \ \mathrm{narrowly\ as\ } q \to \infty,
 \end{align}
 and
 \begin{align}\label{choice_of_q}
     \int \tau^p_{j_{l_q}} \pi'_{l_q, n_q} \geq \int \tau^p_\infty \pi_{\infty, n_q} - \frac{1}{q}, \quad \text{ for all }q\in \N.
 \end{align}
 The combination of \eqref{pi_n_infty_all_convergences},   \eqref{pi_prime_convergence_of_cost} and \eqref{choice_of_q},   together with the optimality \eqref{def:pi_prime_prime} of $\pi''_{l,n}$ gives that 
 \begin{align}\label{pi_prime_prime_covergence_cost}
     \int \tau^p_\infty \hat{\pi}_\infty = \lim_{q \to \infty} \int \tau_{j_{l_q}}^p \pi''_{l_q, n_q} \geq \lim_{q \to \infty} \int \tau^p_{j_{l_q}} \pi'_{l_q, n_q} = \int \tau^p_\infty \pi_\infty. 
 \end{align}
 Recalling that $\pi_\infty \in \Pi^{p\text{-opt}}_{\leq_\infty}(\mu_0^\infty, \mu_1^\infty)$, we infer that $\hat{\pi}_\infty \in \Pi^{p\text{-opt}}_{\leq_\infty}(\mu_0^\infty, \mu_1^\infty)$. Since by assumption $(\mu_0^\infty, \mu_1^\infty)$ is strongly timelike $p$-dualisable, we get that $\hat{\pi}_\infty(\{\tau_\infty >0\})=1$. 
Recall that we defined $\Omega_R := \{\tau_\infty > \frac{1}{R}\}$. There exists an $R_0> 0$ large enough, such that $\hat{\pi}_\infty(\Omega_R) >0$,  for all $R\geq R_0$.
 Then by Lemma \ref{restriction_of_plans_converge}, we get that there exists a sequence $R_m \to \infty$ such that  
 \begin{align}
    &c_m := \hat{\pi}_\infty(\Omega_{R_m}) \to 1 \ \mathrm{as} \ m \to \infty \ \mathrm{and} \nonumber \\
     &\hat{\pi}_\infty(\Omega_{R_m})^{-1}\hat{\pi}_\infty \LL \Omega_{R_m} \to \hat{\pi}_\infty \ \mathrm{narrowly\ as} \ m \to \infty.
 \end{align}
Moreover,  $\pi''_{l_q, n_q} (\Omega_{R_m}) > 0$ for $q \geq Q_m^1$ sufficiently large, and 
 \begin{align}\label{restriction_pi_hat_narrow}
     & c_{q,m} := \pi''_{l_q, n_q} (\Omega_{R_m}) \to c_m \ \mathrm{as\ }  q \to \infty, \nonumber \\
     &\pi''_{l_q, n_q} (\Omega_{R_m})^{-1}  \pi''_{l_q, n_q} \LL \Omega_{R_m} \to \hat{\pi}_\infty \LL \Omega_{R_m} \ \mathrm{narrowly\ as} \ q \to \infty.
 \end{align}
 Moreover, from Proposition \ref{locally_uniform_convergence_tau}, we get that for each $m$ there exists a $J_m$ such that $\Omega_{R_m} \subset \bigcap_{j \geq J_m}\{\tau_j > 0\}$. Hence, using a diagonal argument we can choose an increasing sequence $q_m$ such that $q_m \geq Q^1_m$, $j_{l_{q_m}} \geq J_m$, $c_{q_m,m} \to 1$, and 
 \begin{align}\label{def_pi_m_final_opt_plan}
     \pi_m := \pi''_{l_{q_m}, n_{q_m}} (\Omega_{R_m})^{-1}  \pi''_{l_{q_m}, n_{q_m}} \LL \Omega_{R_m} \to \hat{\pi}_\infty \ \mathrm{narrowly \ as} \ m \to \infty.
 \end{align}
 Set 
 \begin{align*}
    \mu_0^m:= (P_1)_\# \pi_m, \  \mu_1^m:= (P_2)_\# \pi_m,
 \end{align*}
 and note that 
 \begin{align*}
     \mu_0^m \to \mu_0^\infty,\  \mu_1^m \to \mu_1^\infty \ \mathrm{narrowly \ as} \ m \to \infty.
 \end{align*}
 Denote $j_m :=j_{l_{q_m}}$,
 By the choice of $q_m$ and  \cite[Lemma 2.10]{cavalletti2020optimal}, we get that $\pi_m \in \Pi_{\ll_{j_m}}^{p\text{-opt}}(\mu_0^m, \mu_1^m)$; moreover, by construction, $\pi_m(\{ \tau_\infty>0\}) = 1$. Then $(\mu_0^m, \mu_1^m)$ is $g_{j_m}$-timelike $p$-dualisable by $\pi_m$ and $\ell^p_{j_m}(\mu_0^m, \mu_1^m) \to  \ell^p_\infty(\mu_0^\infty, \mu_1^\infty)$.
 This proves the first step, up to renaming indices. 
\smallskip

 \textbf{Step 2:} We prove that the sequences $(\mu^j_0), (\mu^j_1)$ constructed in Step 1 satisfy:
 \begin{align}\label{endpoint_entropy_upper_semicont}
     \limsup_{j \to \infty} \Ent(\mu_0^j| \mm_j) \leq \Ent(\mu^\infty_0| \mm_\infty), \ \limsup_{j \to \infty} \Ent(\mu_1^j| \mm_j) \leq \Ent(\mu^\infty_1| \mm_\infty).
 \end{align}
 First, recall the definition \eqref{def:tilde_pi} of $\Tilde{\pi}_{j,n}$ and set
 \begin{align*}
     \mu^{j,n}_0= (P_2)_\# \Tilde{\pi}_{j,n}, \ \mu^{j,n}_1= (P_4)_\# \Tilde{\pi}_{j,n}.
 \end{align*}
 \textbf{Step 2a:} We first prove that 
 \begin{align}\label{step_2a}
     \Ent(\mu_0^{j,n}| \mm_j) \leq \Ent((P_1)_\#\pi_{\infty, n}|\mm_\infty), \ \Ent(\mu_1^{j,n}| \mm_j) \leq \Ent((P_2)_\#\pi_{\infty, n}|\mm_\infty).
 \end{align}
 We will give the argument for the former, the latter being completely analogous. The expression \eqref{def:tilde_pi}, together with \eqref{pi_n_infty_all_convergences} and Fubini's theorem allow us to write
\begin{align}
    &(P_1)_\# \pi_{\infty, n} = \rho_0^{\infty, n}\mm_\infty;\ \rho_0^{\infty, n}(x_1) = \int_M \rho_{\infty, n}(x_1, x_3) \mm_\infty(\di x_3), \  (P_1)_\# \pi_{\infty, n}\mathrm{-a.e.}\ x_1 \in M; \nonumber \\
    &\mu^{j,n}_0 = \rho_0^{j, n}\mm_j;\ \rho_0^{j, n}(x_2)= \int_M \Bigg( \int_{M^2} \rho_{\infty, n}(x_1, x_3) \boldsymbol{\gamma}_j(\di x_3 \di x_4) \Bigg) (\boldsymbol{\gamma}_j)_{x_2}(\di x_1),\ \mu_0^{j,n}\mathrm{-a.e.}\ x_2 \in M.
\end{align}
Here $\{(\boldsymbol{\gamma}_j)_{x_2}\}$ denotes the disintegration of $\boldsymbol{\gamma}_j$ with respect to $P_2$. Since $u(t)= t \log t$ is convex on $[0, \infty)$, Jensen's inequality gives
\begin{align*}
    \Ent(\mu_0^{j,n}| \mm_j) &= \int_M u(\rho_0^{j,n}(x_2)) \mm_j(\di x_2) \\
    &\leq \int_M \int_M u \Bigg( \int_{M^2} \rho_{\infty, n}(x_1, x_3) \boldsymbol{\gamma}_j(\di x_3 \di x_4) \Bigg) (\boldsymbol{\gamma}_j)_{x_2}(\di x_1) \mm_j(\di x_2) \\
    &= \int_{M^2} u(\rho_0^{\infty, n}(x_1)) \boldsymbol{\gamma}_j(\di x_1 \di x_2) = \int_M u(\rho_0^{\infty, n}(x_1)) \mm_\infty(\di x_1) \\
    &= \Ent((P_1)_\# \pi_{\infty, n}| \mm_\infty).
\end{align*}
\textbf{Step 2b}. We prove that the sequences $(\mu^j_0), (\mu^j_1)$ constructed in Step 1 satisfy:
\begin{align}\label{step_2b}
    \limsup_{j \to \infty} \Ent(\mu_i^j| \mm_j) \leq \limsup_{m \to \infty} \Ent(\mu^{l_{q_m}, n_{q_m}}_i | \mm_{j_{l_{q_m}}}) = \limsup_{m \to \infty} \Ent(\mu^{j_m, n_{q_m}}_i | \mm_{j_m}) ,\ i=0,1,
\end{align}
where $(l_q, n_q)$ are as in \eqref{pi_prime_prime_narrow_convergence} (but note that  neither $\pi'_{l,n}$ nor $\pi''_{l,n}$ were used to define $\mu^{l_{q_m}, n_{q_m}}_i$). We will give the argument for $i=0$, as the argument for $i=1$ is completely analogous. 
From \eqref{def:tilde_pi}, it follows that $\mu_0^{j,n} = (P_1)_{\#} \pi_{j,n}$. Recalling \eqref{narrow_k} and \eqref{narrow:j:k}, we get that for $j$ and $k$ large enough:
\begin{align*}
    \pi_{j,k,n}(B) \leq \frac{1}{c_{j,k,n}} \pi_{j, n}(B), \quad \text{for every  Borel set } B \subset M^2.
\end{align*}
By our choice of $k_l, j_l$ (see \eqref{c_l_convergence} and \eqref{pi_l_n_prime:narrow_convergence}), and from the definition \eqref{def:mu_l_n_prime} of $(\mu')^{l,n}_0 := (P_1)_\# \pi'_{l,n}$, we get:
\begin{align*}
    (\mu')^{l,n}_0 \ll \mm_{j_l},  \quad  (\rho')^{l,n}_0:= \frac{\di (\mu')^{l,n}_0} {\di \mm_{j_l}},
\end{align*}
where
\begin{align}\label{rho_prime_estimate}
    {(\rho')^{l,n}_0} \leq \frac{1}{c_{{j_l},{k_l},n}} \rho^{j_l, n}_0.
\end{align}
Recall that, by \eqref{def:pi_prime_prime}:
\begin{align*}
    (\mu')^{l,n}_0 = (P_1)_\# \pi''_{l, n}.
\end{align*}
The combination \eqref{restriction_pi_hat_narrow} and \eqref{def_pi_m_final_opt_plan} yields 
\begin{align*}
    \pi_{m}(B) \leq \frac{1}{c_{{q_m}, m}} \pi''_{l_{q_m}, n_{q_m}}(B),   \quad \text{for every  Borel set } B \subset M^2.
\end{align*}
Recalling the notation  $j_m := j_{l_{q_m}}$ and that  $\mu^m_0 = (P_1)_\# \pi_m \ll \mm_{j_m}$,  we get that there exists $\rho^m_0\in L^1(\mm_{j_m})$ such that $\mu^m_0 = \rho^m_0 \mm_{j_m}$ and 
\begin{align}\label{rho_m_estimate}
  {\rho^m_0} \leq \frac{1}{c_{q_m, m}} \cdot {(\rho')^{l_{q_m},n_{q_m}}_0}.
\end{align}
Combining  \eqref{rho_prime_estimate} and \eqref{rho_m_estimate}, and recalling \eqref{c_l_convergence}, \eqref{def_pi_m_final_opt_plan}, we infer that 
\begin{align}\label{final_estimate_rho}
   0 \leq  \rho^m_0 \leq c_m \, \rho_0^{j_m, n_{q_m}} \leq c_m \, \norm{\rho_{\infty, n_{q_m}}}_{L^\infty(\mm_\infty\otimes \mm_\infty)} 
\end{align}
where $\rho_{\infty,n}$ was defined in \eqref{pi_n_infty_all_convergences},   $c_m \to 1$, and the last inequality follows from the local uniform convergence of the volume forms $\vol_j \to \vol_\infty$. 
As $u(t):= t \log t$ is convex on $[0, \infty)$ and $u(0)=0$, it holds
\begin{align*}
    u(t+h)-u(t) \geq u(h), \ \forall t, h \geq 0.
\end{align*}
Then, Jensen's inequality together with \eqref{final_estimate_rho}, give that 
\begin{align}\label{rho_entropy_comparison}
    \int u(c_m \rho_0^{j_m, n_{q_m}}) \mm_{j_m} - \int u(\rho^m_0)\, \mm_{j_m} &\geq \int u(c_m \rho_0^{j_m, n_{q_m}}-\rho^m_0)\,\mm_{j_m} \nonumber\\
    & \geq u \Bigg(\int (c_m \rho_0^{j_m, n_{q_m}}-\rho^m_0)\,\mm_{j_m} \Bigg) \nonumber \\
    &= u(c_m -1) \to 0 \ \mathrm{as\ } m \to \infty.
\end{align}
Note that $\int u(\rho^m_0) \mm_{j_m} = \Ent(\mu^m_0| \mm_{j_m})$ and that 
\begin{align}\label{entropy_under_constant_factor}
   &\Bigg|\int u(c_m \rho_0^{j_m, n_{q_m}}) \mm_{j_m} - \Ent(\mu^{j_m, n_{q_m}}_0 | \mm_{j_m})\Bigg| \nonumber\\
   &= \Bigg|\int ((c_m-1)( \rho_0^{j_m, n_{q_m}} \log \rho_0^{j_m, n_{q_m}}) + c_m (\log c_m) \rho_0^{j_m, n_{q_m}})  \mm_{j_m}\Bigg| \nonumber\\
   &\leq (c_m-1) |\Ent(\mu^{j_m, n_{q_m}}_0 | \mm_{j_m}) |+ c_m \log c_m. 
\end{align}
Using \eqref{step_2a} and \eqref{rho_prime_estimate}, we get that \eqref{entropy_under_constant_factor}  converges to $0$ as $m \to \infty$. Combining  this fact with \eqref{rho_entropy_comparison} imply \eqref{step_2b}. 
The combination of \eqref{step_2a} and \eqref{step_2b}, together with \eqref{pi_n_infty_all_convergences} imply \eqref{endpoint_entropy_upper_semicont}.
\smallskip

\textbf{Step 3:} Passing to the limit in the $\tcd$ condition. \\
For simplicity, we present the argument only for the \( \tcd^{e}_p(0,N) \) condition. 
The case \( K \neq 0 \) can be treated analogously, albeit with more involved coefficients. 
As $(\mu_0^j, \mu_1^j)\in \Dom(\Ent(\cdot | \mm_j))^2\subset \mathcal{P}(M)^2$ are $g_j$-timelike $p$-dualisable, the assumption that $(M, g_j)$ has timelike Ricci curvature bounded below by $0$ yields the existence of a $\ell^j_p$-geodesic $(\mu^j_t)_{t \in [0,1]}$ such that 
\begin{align}\label{tcd0n_for_sequence}
    U_N(\mu^j_t|\mm_j) \geq (1-t)U_N(\mu_0^j|\mm_j) + tU_N(\mu_1^j|\mm_j), \quad \forall t \in [0,1],\ \forall j \in \N.
\end{align}
Since by construction 
\begin{align*}
    \bigcup_{j \in \N, i \in \{0, 1\}} \supp\,\mu_i^j \subset E,\quad \text{with } E\Subset M,
\end{align*}
and $(M, g_j)$ are uniformly globally hyperbolic, we get that there exists a compact subset $\mathcal{K} \subset M$, such that 
\begin{align*}
    \bigcup_{j \in \N, t \in [0,1]} \supp \, \mu_t^j \subset \mathcal{K}. 
\end{align*}
Let $\eta_j$ be the $\ell_p^j$-optimal dynamical plan representing the $\ell_p^j$-geodesic $(\mu^j_t)_{t \in [0,1]}$. By Proposition \ref{dynamical_plans_precompactness} and Lemma \ref{limit_plan_is_dynamical}, we get that up to a subsequence, the plans $\eta_j$ narrowly converge to a plan $\eta_\infty$ that is concentrated on $\mathrm{TGeo}^\infty$. 
By the continuity of $\ee_t: C([0,1], M)\to M$, it follows that 
\begin{align}
    &\mu_t^j  = (\ee_t)_\# \eta_j \to (\ee_t)_\# \eta_\infty =: \mu_t^\infty \ \mathrm{narrowly\ as\ } j \to \infty, \ \forall t \in [0,1] \nonumber \\
    &\check{\pi}_j:= (\ee_0, \ee_1)_\# \eta_j \to (\ee_0, \ee_1)_\# \eta_\infty \ \mathrm{narrowly\ as\ } j \to \infty. \label{eq:checkpijtopi}
\end{align}
By \eqref{step1} and the uniqueness of the narrow limit, we infer that $(\ee_i)_\# \eta_\infty = \mu^\infty_i$ for $i = 0, 1$, the measures fixed at the very start of the proof, see \eqref{eq:defmuiinf}. 
 Lemma \ref{limit_plan_is_dynamical} ensures also that 
\begin{align*}
    \check{\pi}_\infty := (\ee_0, \ee_1)_\# \eta_\infty \in \Pi_{\ll_\infty}(\mu^\infty_0, \mu^\infty_1). 
\end{align*}
As in the proof of Proposition \ref{dynamical_plans_precompactness},  the narrow convergence \eqref{eq:checkpijtopi} together with Proposition \ref{locally_uniform_convergence_tau} yields
\begin{align}
    \lim_{j \to \infty} \ell_p(\mu^j_0, \mu^j_1) = \lim_{j \to \infty} \int_{M^2} \tau_j(x, y)^p \check{\pi}_j = \int_{M^2} \tau_\infty(x, y)^p \check{\pi}_\infty. 
\end{align}
Together with \eqref{step1}, this gives $\ell^\infty_p$-optimality of $\check{\pi}_\infty$ and therewith shows that $\eta_\infty$ is an optimal dynamical plan. 
The upper semicontinuity of $U_N$ under joint narrow/weak convergence (see e.g. \cite[Eq.\ (1.15)]{cavalletti2020optimal}) implies that 
\begin{align}\label{convergence_of_u_n_inside_geo}
    U_N(\mu_t^\infty| \mm_\infty) \geq \limsup_{j \in \N} U_N(\mu_t^j| \mm_j), \quad \forall t\in [0,1]. 
\end{align}
The combination of \eqref{tcd0n_for_sequence}, \eqref{endpoint_entropy_upper_semicont}, and  \eqref{convergence_of_u_n_inside_geo} gives
\begin{align*}
    U_N(\mu_t^\infty|\mm_\infty) \geq (1-t) U_N(\mu^\infty_0| \mm_\infty) +tU_N(\mu^\infty_1| \mm_\infty), \quad \forall t\in [0,1], 
\end{align*}
as desired. 
\end{proof}

\subsection{Stability of synthetic upper Ricci bounds}
We start this subsection by analysing stability properties of cyclical monotonicity. Then we recall synthetic upper Ricci curvature bounds from \cite{mondino2022optimal} and prove its stability for locally uniformly converging Lorentzian metrics $g_j\to g_\infty$ under suitable uniform controls on $g_j$. 
\begin{lemma}
    Assume $X^2_{\leq}$ is closed and $\tau$ is continuous. If $\Gamma$ is $\tau^p$-cyclically monotone, then $\overline{\Gamma}$ is $\tau^p$-cyclically monotone.
\end{lemma}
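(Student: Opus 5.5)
The plan is to verify the defining inequality \eqref{cyclical_monotonicity} directly on $\overline{\Gamma}$ by approximating each point of $\overline{\Gamma}$ with points of $\Gamma$ and passing to the limit, using continuity of $\tau$. First I check that $\overline{\Gamma} \subset X^2_{\leq}$. This follows from the closedness of $X^2_{\leq}$ and $\Gamma \subset X^2_{\leq}$, so the closure stays inside the causal relation, as the definition requires.

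Next, fix a finite family $(x_1,y_1),\dots,(x_N,y_N) \in \overline{\Gamma}$. For each $i$, I choose a sequence $(x_i^k, y_i^k)_{k\in\N} \subset \Gamma$ with $(x_i^k,y_i^k) \to (x_i,y_i)$ as $k \to \infty$; this is possible since $X^2$ is a metric space. For every fixed $k$, the family $(x_1^k,y_1^k),\dots,(x_N^k,y_N^k)$ lies in $\Gamma$, so cyclical monotonicity of $\Gamma$ gives
\begin{align*}
\sum_{i=1}^N \tau^p(x_i^k,y_i^k) \geq \sum_{i=1}^N \tau^p(x_{i+1}^k,y_i^k),
\end{align*}
with $x_{N+1}^k = x_1^k$.

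Then I let $k \to \infty$. Since $\tau$ is continuous and $t\mapsto t^p$ is continuous on $[0,\infty)$, each term converges: $\tau^p(x_i^k,y_i^k)\to\tau^p(x_i,y_i)$ and $\tau^p(x_{i+1}^k,y_i^k)\to \tau^p(x_{i+1},y_i)$. The sums are finite, so the inequality passes to the limit and yields \eqref{cyclical_monotonicity} for the chosen family in $\overline{\Gamma}$.

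The only delicate point is the meaning of $\tau^p$ on pairs that are not causally related. If \eqref{cyclical_monotonicity} is read with $\ell^p$, which equals $-\infty$ off $X^2_{\leq}$, then the right-hand side may contain $-\infty$. In that case the inequality is trivial for the limit family. For the approximating families, a term equal to $-\infty$ at level $k$ only makes that inequality trivially true, so nothing needs to pass to the limit there. It suffices to handle pairs $(x_{i+1},y_i)$ with $x_{i+1}\leq y_i$, where $\ell^p=\tau^p$ is continuous.

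A subtlety remains: $x_{i+1}^k \leq y_i^k$ could fail while $x_{i+1}\leq y_i$ holds. To avoid this, I would interpret the cost as $\tau^p$ itself, which is finite, nonnegative and continuous, and vanishes when the points are unrelated, as in the definition as stated. With that reading the limit argument is immediate, and I expect no genuine obstacle beyond this bookkeeping.
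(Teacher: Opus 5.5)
Your proof is correct and is essentially the paper's argument: closedness of $X^2_{\leq}$ gives $\overline{\Gamma}\subset X^2_{\leq}$, and the cyclical inequality passes from the approximating families in $\Gamma$ to the limit by continuity of $\tau$ (and of $t\mapsto t^p$). The digression about $\ell^p$ is not needed, since the paper's definition \eqref{cyclical_monotonicity} is stated with $\tau^p$, which is the reading you settle on.
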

\begin{proof}
    Note that $\Gamma \subset X^2_{\leq}$, hence $\overline{\Gamma} \subset X^2_{\leq}$. Fix $(x_1, y_1), \ldots, (x_N, y_N) \in \overline{\Gamma}$. Then for $i =1, \ldots, N$, we can find sequences $(x^k_i, y^k_i) \in \Gamma$ such that $\lim_{k \to \infty} (x^k_i, y^k_i) = (x_i, y_i)$, for all $i=1,\ldots, N$. Now, for each $k$, \eqref{cyclical_monotonicity} holds for $(x^k_1, y^k_1), \ldots, (x^k_N, y^k_N)$, so we can use the continuity of $\tau$ for passing to the limit. 
\end{proof}
\begin{lemma}\label{cyclical_monotonicity_limit}
Let $g_j$, $j\in \N$, and $g_\infty$ satisfy the assumptions of Proposition \ref{locally_uniform_convergence_tau},  and fix  $p \in (0,1)$. Moreover, assume that
\begin{itemize}
\item $\pi_j$ are $\tau_j$-optimal plans, satisfying that $\pi_j(M^2_{\ll_j})=1$ and $\supp \, (P_1)_\# \pi_j \times \supp \, (P_2)_\# \pi_j \subset M^2_{\leq_j}$;
\item there exists  $\pi_\infty\in \mathcal{P}(M^2)$ such that $\pi_j \to \pi_\infty$ narrowly.
\end{itemize}
 Then $\pi_\infty$ is $\tau^p_\infty$-cyclically monotone. 
\end{lemma}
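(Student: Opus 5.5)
The plan is to show that $\supp\,\pi_\infty$ is $\tau^p_\infty$-cyclically monotone and contained in $M^2_{\leq_\infty}$. Then $\pi_\infty$ is concentrated on a $\tau_\infty^p$-cyclically monotone set, which is what the statement means.

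\textbf{Containment in $M^2_{\leq_\infty}$.} First we localise. Since $\pi_j\to\pi_\infty$ narrowly, the family is tight. Given a compact set, uniform global hyperbolicity together with Lemma \ref{causal_limit_plan} (applied to restrictions of the plans to compact sets $E^2$ with negligible boundary) gives $\pi_\infty(M^2_{\leq_\infty})=1$. Closedness of $\leq_\infty$ then yields $\supp\,\pi_\infty\subset M^2_{\leq_\infty}$. Since $\tau_\infty$ is continuous by Proposition \ref{locally_uniform_convergence_tau}, the preceding lemma shows that closures of cyclically monotone sets remain cyclically monotone. It therefore suffices to check the inequality \eqref{cyclical_monotonicity} on points of $\supp\,\pi_\infty$.

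\textbf{Approximating a finite family.} Fix $(x_1,y_1),\dots,(x_N,y_N)\in\supp\,\pi_\infty$. By narrow convergence, every neighbourhood of each $(x_i,y_i)$ eventually has positive $\pi_j$-measure. So for each $i$ we can choose $(x_i^j,y_i^j)\in\supp\,\pi_j$ with $(x_i^j,y_i^j)\to(x_i,y_i)$; one uses balls of radius $1/k$ and a diagonal extraction along a subsequence of $j$. Next we need the approximating points to satisfy the $\tau_j^p$-cyclic monotonicity inequality. For a $\tau_j$-optimal plan on the smooth globally hyperbolic spacetime $(M,g_j)$, the hypothesis $\supp\,(P_1)_\#\pi_j\times\supp\,(P_2)_\#\pi_j\subset M^2_{\leq_j}$ is used here. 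With it, $\tau_j^p$ is continuous and real-valued on $\supp\,(P_1)_\#\pi_j\times\supp\,(P_2)_\#\pi_j$, and in particular $\ell^p=\tau_j^p>-\infty$ there. The standard optimal-transport result then applies: optimality implies that $\supp\,\pi_j$ is $\ell^p$-cyclically monotone (cf.\ \cite[Prop.\ 2.8]{cavalletti2020optimal}, or the classical argument of Gangbo--McCann with a bounded continuous cost on a compact set). Hence
\begin{equation*}
\sum_{i=1}^N\tau_j^p(x_i^j,y_i^j)\geq\sum_{i=1}^N\tau_j^p(x_{i+1}^j,y_i^j).
\end{equation*}
All the relevant terms are finite, because every pair $(x_{i+1}^j,y_i^j)$ lies in the product of the marginal supports.

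\textbf{Passing to the limit.} All points lie in a fixed compact set. By Proposition \ref{locally_uniform_convergence_tau}, along a subsequence $\tau_j\to\tau_\infty$ uniformly there, and $\tau_\infty$ is continuous. Hence $\tau_j(a_j,b_j)\to\tau_\infty(a,b)$ whenever $(a_j,b_j)\to(a,b)$, and the same holds after raising to the power $p$, since $t\mapsto t^p$ is continuous on $[0,\infty)$. Letting $j\to\infty$ in the displayed inequality gives \eqref{cyclical_monotonicity} for $\tau_\infty^p$ at the chosen points.

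\textbf{Main obstacle.} The delicate point is the cyclic monotonicity of $\supp\,\pi_j$ for each $j$. The cost $\ell^p$ takes the value $-\infty$ off $\leq_j$, so the usual argument needs the crossed pairs $(x_{i+1},y_i)$ to be causally related. This is exactly what the product-support hypothesis provides, and I would verify carefully that the classical perturbation argument goes through with the continuous finite cost $\tau_j^p$ on that product set.
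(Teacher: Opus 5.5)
Your proposal is correct and follows essentially the paper's proof. You get $\tau_j^p$-cyclic monotonicity of $\supp\,\pi_j$ from \cite[Prop.\ 2.8]{cavalletti2020optimal}, with the product-support hypothesis (together with $P_i(\supp\,\pi_j)\subset\supp\,(P_i)_\#\pi_j$) making the crossed pairs $(x_{i+1},y_i)$ causally related. You approximate points of $\supp\,\pi_\infty$ by points of $\supp\,\pi_j$ using lower semicontinuity on open sets under narrow convergence, and you pass to the limit via the locally uniform convergence $\tau_j\to\tau_\infty$ and continuity of $\tau_\infty$; the paper phrases this last step as a contradiction with an $\e$-margin rather than a direct limit, and your extra check that $\supp\,\pi_\infty\subset M^2_{\leq_\infty}$ (via Lemma \ref{causal_limit_plan} and causal closedness) is a detail the paper leaves implicit.
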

\begin{proof}
    Denote $\mu^j_0 := (P_1)_\# \pi_j$ and $\mu^j_1 := (P_2)_\# \pi_j$ for $j \in \N \cup \{\infty\}$. 
    By  \cite[Prop.\ 2.8 (1)]{cavalletti2020optimal}, we have that $\supp\, \pi_j$ is $\ell^j_p$-cyclically monotone. It is a general fact that for $i = 1, 2$, and any $\pi \in \mathcal{P}(M^2)$, 
    \begin{align}\label{projection_support_inclusion}
        P_i(\supp\, \pi) \subset \supp \, ((P_i)_\#\pi).
    \end{align}
    Indeed,
    \begin{align*}
        &x \in P_1(\supp\, \pi) \Rightarrow \exists y \in M:\ (x,y) \in \supp\, \pi \Rightarrow \forall r>0, \, \pi(B_r((x, y))) > 0 \Rightarrow \forall r> 0, \  \pi(B_r(x) \times M) > 0 \\
        & \Rightarrow \forall r> 0, \ (P_1)_\#\pi(B_r(x)) \geq 0 \Rightarrow x \in \supp \, ((P_1)_\#\pi).
    \end{align*}
    The case of $i=2$ is completely analogous. 
    Together with the fact that  $\supp \, (P_1)_\# \pi_j \times \supp \, (P_2)_\# \pi_j \subset M^2_{\leq_j}$, we get that $P_1(\supp\, \pi_j) \times P_2(\supp\, \pi_j) \subset M^2_{\leq_j}$. 
    By  \cite[Rem.\ 2.7]{cavalletti2020optimal}, this gives that $\supp\, \pi_j$ is $\tau^p_j$-cyclically monotone. 
    We will show that $\supp \, \pi_\infty$ is $\tau^p_\infty$-cyclically monotone.
    We argue by contradiction. Suppose there exist $(x_1, y_1), \ldots, (x_N, y_N) \in \supp\, \pi_\infty$, and  $\e > 0$ such that 
    \begin{align*}
         \sum_{i=1}^N \tau_\infty^p(x_i, y_i) \leq \e + \sum_{i=1}^N \tau^p_\infty(x_{i+1}, y_i).
    \end{align*}
  As $(x_1, y_1), \ldots, (x_N, y_N) \in \supp\, \pi_\infty$, we get that for each $r > 0$ and each $i \in \{1, \ldots, N\}$, it holds 
  \begin{align*}
      \pi_\infty(B_r((x_i, y_i))) >0. 
  \end{align*}
  By Proposition \ref{locally_uniform_convergence_tau}, we know that $\tau_\infty$ is continuous;  hence, there exists  $r>0$ such that
  \begin{align}
      \max(|\tau^p_\infty(x_i, y_i)-\tau_\infty^p(z_i, w_i)|, |\tau^p_\infty(x_{i+1}, y_i)-\tau_\infty^p(z_{i+1}, w_i)|) \leq \frac{\e}{16N},
  \end{align}
   for all $i \in \{1, \ldots, N\}$ and all $(z_i, w_i) \in B_r((x_i, y_i))$.
  From the lower semi-continuity on open sets under narrow convergence, we get that there exists  $J_1 \in \N$ such that for $j \geq J_1$ and all $i \in \{1, \ldots, N\}$, it holds
  \begin{align*}
      \pi_j(B_r((x_i, y_i))) >0.
  \end{align*}
  Moreover, by Proposition \ref{locally_uniform_convergence_tau}, there exists  $J_2 \in \N$ such that for $j \geq J_2$, it holds that for all $i$ and $z_i \in B_r(x_i)$, $w_i \in B_r(y_i)$
  \begin{align}
      |\tau^p_\infty(z_i, w_i)-\tau_j^p(z_i, w_i)|+ |\tau^p_\infty(z_{i+1}, w_i)-\tau_j^p(z_{i+1}, w_i)| \leq \frac{\e}{16N}.
  \end{align}
  For $j \geq J_1+J_2$, we get that for all $i\in \{1, \ldots, N\}$, it holds $\supp\, \pi_j \cap B_r((x_i, y_i)) \neq \emptyset$. Hence, we can find $(z_i, w_i) \in \supp\, \pi_j \cap B_r((x_i, y_i))$ such that 
  \begin{align*}
         \sum_{i=1}^N \tau_j^p(z_i, w_i) \leq \frac{\e}{2} + \sum_{i=1}^N \tau^p_j(z_{i+1}, w_i).
    \end{align*}
    This contradicts that $\supp \, \pi_j$ is $\tau^p_j$-cyclically monotone. 
\end{proof}
\begin{lemma}\label{time_order_support_preserved}
Assume the metrics $(g_j)_{j \in \N}$ are smooth and uniformly globally hyperbolic and converge locally uniformly to $g_\infty$, a continuous Lorentzian metric.
  Let $K \subset M$ be compact.
    For $j \in \N$ let $\mu^j_0, \mu^j_1 \in \mathcal{P}(K)$ and let $\mu^\infty_0, \mu^\infty_1 \in \mathcal{P}(K)$ such that $\mu^j_0 \to \mu^\infty_0$ and $\mu^j_1 \to \mu^\infty_1$ narrowly as $j \to \infty$. Assume that 
    \begin{equation}\label{eq:muj0muj1Mleq}
        \supp \, \mu^j_0 \times \supp \, \mu^j_1 \subset M^2_{\leq_j}, \quad \text{for all } j \in \N.
    \end{equation}
    Then \begin{align*}
         \supp \, \mu^\infty_0 \times \supp \, \mu^\infty_1 \subset M^2_{\leq_\infty}.
    \end{align*}
\end{lemma}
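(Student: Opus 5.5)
The plan is to fix a pair $(x,y) \in \supp\,\mu^\infty_0 \times \supp\,\mu^\infty_1$ and produce points $x_j \to x$, $y_j \to y$ with $x_j \leq_j y_j$. A limit-curve argument then shows $x \leq_\infty y$.

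\textbf{Step 1: approximating points in the supports.} Since $x \in \supp\,\mu^\infty_0$, every open ball $B_r(x)$ has $\mu^\infty_0(B_r(x))>0$. By lower semicontinuity of the measure of open sets under narrow convergence, $\liminf_j \mu^j_0(B_r(x)) >0$. Hence for $j$ large the ball $B_r(x)$ meets $\supp\,\mu^j_0$, and the same holds for $y$ and $\mu^j_1$. Taking $r = 1/k$ and a diagonal choice of indices, we obtain a subsequence $j_k$ and points $x_k \in \supp\,\mu^{j_k}_0$, $y_k \in \supp\,\mu^{j_k}_1$ with $x_k \to x$ and $y_k \to y$. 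All these points lie in $K$, since $\supp\,\mu^j_i \subset K$. By \eqref{eq:muj0muj1Mleq}, $x_k \leq_{j_k} y_k$.

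\textbf{Step 2: causal curves and the limit.} If $x_k = y_k$ for infinitely many $k$, then $x=y$ and $x \leq_\infty y$ trivially. Otherwise pick $g_{j_k}$-causal curves $\gamma_k$ from $x_k$ to $y_k$. By uniform global hyperbolicity (Definition \ref{uniform_global_hyperbolicity_def}), all $\gamma_k$ lie in the compact set $E:=\overline{\bigcup_j J^+_{g_j}(K)\cap J^-_{g_j}(K)}$. By Proposition \ref{Lipschitz_bd_causal_curves}, applied with this compact set, their $h$-lengths are uniformly bounded for $k$ large. We reparametrise by $h$-arclength, or by a rescaling of it onto $[0,1]$, so that the curves are uniformly Lipschitz. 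We then proceed exactly as in the proofs of Proposition \ref{Lipschitz_bd_causal_curves} and Lemma \ref{causal_limit_plan}. For large $k$ the curves $\gamma_k$ are $\Tilde{g}_m$-causal for the majorising metrics of Proposition \ref{appx_majorising_metrics}, so \cite[Thm.\ 3.1]{minguzzi2008limit} gives a uniformly converging subsequence. The limit $\gamma$ is $g'$-causal for every continuous $g' \succ g_\infty$ (via \cite[Lem.\ 1.4]{samann2016global}), hence $g_\infty$-causal by \cite[Prop.\ 1.5]{CG:12}. Since $\gamma(0)=x$ and $\gamma(1)=y$, we conclude $x\leq_\infty y$.

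\textbf{Main obstacle.} The subtle point is the degenerate behaviour of the curves: the $h$-lengths of $\gamma_k$ may tend to zero, or the limit might be a constant curve. If the lengths tend to $0$, then $x=y$ and the conclusion holds. Otherwise the rescaled parametrisations on $[0,1]$ are uniformly Lipschitz, and the limit-curve theorem applies. Actually this is essentially the argument of Lemma \ref{causal_limit_plan} applied pointwise, so I would cite that proof and only spell out the support-approximation of Step 1.
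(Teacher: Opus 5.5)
Your proposal is correct and follows essentially the same route as the paper: you use lower semicontinuity of narrow convergence on balls to find support points $x_j\leq_j y_j$ near $(x,y)$, uniform global hyperbolicity together with Proposition~\ref{Lipschitz_bd_causal_curves} to get uniformly Lipschitz $g_j$-causal curves, and the limit curve theorem to obtain a $g_\infty$-causal limit. The differences are cosmetic. The paper argues by contradiction with a fixed radius and uses closedness of $\leq_\infty$ to get a neighbourhood $B_{4r}((x,y))$ disjoint from $M^2_{\leq_\infty}$. Your shrinking radii and diagonal subsequence instead make the limit curve end exactly at $x$ and $y$, so you do not need closedness of $\leq_\infty$, and you also treat the degenerate case of a constant limit curve explicitly.
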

\begin{proof}
    We argue by contradiction. Assume that 
    \begin{align*}
        Z:= ( \supp \, \mu^\infty_0 \times \supp \, \mu^\infty_1) \cap (M^2_{\leq_\infty})^c \neq \emptyset,
    \end{align*}
    and pick $(x, y) \in Z$. The causal closedness ensures that there exists an $r > 0$ such that $B_{4r}(x, y) \subset (M^2_{\leq_\infty})^c)$.
    The properties of the support give that 
    \begin{align*}
        \mu^\infty_0(B_r(x)) >0, \ \mu^\infty_1(B_r(y)) >0.
    \end{align*}
    The lower semicontinuity on open sets under narrow convergence yields that there exists a $J \in \N$ such that 
    \begin{align*}
        \mu^j_0(B_r(x)) >0, \ \mu^j_1(B_r(y)) >0, \quad \text{for all } j \geq J. 
    \end{align*}
    Hence,
    \begin{align*}
        B_r(x) \cap \supp\, \mu^j_0 \neq \emptyset, \ B_r(y) \cap \supp\, \mu^j_0 \neq \emptyset.
    \end{align*}
    We may pick $x_j \in B_r(x) \cap \supp\, \mu^j_0$ and $y_j \in B_r(y) \cap \supp\, \mu^j_1$. By assumption \eqref{eq:muj0muj1Mleq} we have that $x_j \leq_j y_j$. Hence, Proposition \ref{Lipschitz_bd_causal_curves} ensures that there exists an $L>0$ and $g_j$-causal, $L$-Lipschitz-continuous curves $\gamma_j \in C([0,1], M)$ such that $\gamma_j(0) = x_j$ and $\gamma_j(1) = y_j$. 
    Then, using uniform global hyperbolicity and the limit curve theorem \cite[Thm.\ 3.1]{minguzzi2008limit}  (similarly to the proof of Proposition \ref{Lipschitz_bd_causal_curves}), we get that there exists a $g_\infty$-causal limit curve $\gamma \in C([0,1], M)$ such that $\gamma_j \to \gamma$ uniformly as $\gamma \to \infty$. But then $\gamma(0) \in B_{2r}(x)$ and $\gamma(1) \in  B_{2r}(y)$, and $\gamma(0) \leq_\infty \gamma(1)$. This contradicts $(\gamma(0), \gamma(1)) \in B_{4r}((x,y)) \subset (M_{\leq_\infty})^c$. 
\end{proof}

Next, we recall the synthetic notion of timelike Ricci curvature upper bound from \cite{mondino2022optimal}. These shall be seen as the Lorentzian counterpart of \cite{sturm:UB}.
\smallskip

\begin{definition}[\cite{mondino2022optimal}, Def.\ B.5]
    Fix  $p \in (0, 1)$, $K \in \R$. We say that a measured Lorentzian pre-length space $(X, \sfd, \mm, \ll, \leq, \tau)$ has timelike Ricci curvature bounded above by $K$ in the synthetic sense if there exists $r_0 > 0$ and a function $\omega: [0, r_0) \to [0, \infty)$ with $\lim_{r \to 0} \omega(r)=0$ such that for every $r \in (0, r_0)$ the following holds. 
    For any $x, y \in X$ with $\sfd(x, y) = r> 0$ and such that $B^\sfd_{r^4}(x) \times B^\sfd_{r^2}(y) \Subset  X^2_\ll$, and for every $\mu_0 \in \Dom(\Ent(\cdot | \mm))$ with $\supp \, \mu_0 \subset B^\sfd_{r^4}(x)$ there exists an $\ell_p$-geodesic $(\mu_t)_{t \in [-1, 1]}$ satisfying
    \begin{itemize}
        \item $\supp \, \mu_1 \subset B^\sfd_{r^2}(y)$,
        \item $\supp \, \mu_{-1} \times \supp\, \mu_1 \subset X^2_{\leq}$,
        \item $\bigcup_{t \in [-1, 1]} \supp \, \mu_t \subset B^\sfd_{10r_0}(x)$,
        \item $\Ent(\mu_{-1}| \mm) -2\Ent(\mu_{0}| \mm) + \Ent(\mu_{1}| \mm) \leq (K + \omega(r))r^2$.
    \end{itemize}
\end{definition}

The following result is inspired by \cite[Thm.~B.6]{mondino2022optimal}, where the stability of synthetic timelike Ricci upper bounds was established under a Lorentzian version of pointed measured Gromov convergence (see \cite{GMS-PLMS} for the corresponding metric framework), relying on isometric embeddings into a common Lorentzian pre-length space.  In contrast, our approach replaces the assumption of such isometric embeddings by requiring locally uniform convergence of the Lorentzian metrics. This perspective will be crucial in the subsequent sections, where we show that impulsive gravitational waves satisfy synthetic timelike Ricci upper bounds.
\smallskip
 
\begin{theorem}\label{stability_upper_bounds}
    Let $M$ be a smooth manifold endowed with  a continuous Lorentzian metric $g_\infty$  that turns $(M,g_\infty)$ into a globally hyperbolic geodesic space such that
 every length-maximising causal curve has a causal character.    For all $j \in \N$, let $g_j$ be smooth Lorentzian metrics on $M$ such that $(M, g_j)$ are uniformly globally hyperbolic (see Definition \ref{uniform_global_hyperbolicity_def}) and $g_j \to g_\infty$ locally uniformly.  
    Let $p \in (0,1)$, $K \in \R$, $r_0 > 0$ and $\omega: [0, r_0) \to [0, \infty)$ with $\lim_{r \to 0} \omega(r)=0$ and assume that for all $j \in \N$, the measured Lorentzian length spaces induced by $(M, g_j)$ have timelike Ricci curvature bounded above by $K$ in the synthetic sense, with respect to $p$, $r_0$ and remainder function $\omega$.
    
    Then $(M, g_\infty)$ has timelike Ricci curvature bounded above by $K$ in the synthetic sense, with respect to $p$, $r_0$ and remainder function $\omega$.
\end{theorem}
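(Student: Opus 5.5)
The plan is to fix $r\in(0,r_0)$, points $x,y\in M$ with $\sfd(x,y)=r$ and $B^\sfd_{r^4}(x)\times B^\sfd_{r^2}(y)\Subset M^2_{\ll_\infty}$, and $\mu_0\in\Dom(\Ent(\cdot|\mm_\infty))$ with $\supp\,\mu_0\subset B_{r^4}(x)$. We then produce the required $\ell_p$-geodesic for $g_\infty$ as a limit of the geodesics provided by the hypothesis for $g_j$. The first task is to check that the same data are admissible for $g_j$ when $j$ is large. The distance $\sfd=\sfd_h$ does not depend on $j$, so $\sfd(x,y)=r$ still holds. By compact containment, $\overline{B_{r^4}(x)}\times\overline{B_{r^2}(y)}$ is a compact subset of $\{\tau_\infty>0\}$. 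There $\tau_\infty\ge 3\e$ for some $\e>0$, and the Corollary after Proposition \ref{locally_uniform_convergence_tau} then gives $\tau_j>2\e$, so the product is $\Subset M^2_{\ll_j}$ for $j\ge J$.

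Since $\mm_j$ depends on $j$, I set $\mu_0^j:=\rho_0\,\frac{\di\mm_\infty}{\di\mm_j}\,\mm_j=\mu_0$, i.e. I keep the same measure. This has finite entropy relative to $\mm_j$, and $\Ent(\mu_0|\mm_j)=\Ent(\mu_0|\mm_\infty)+\int\log\frac{\di\mm_\infty}{\di\mm_j}\,\di\mu_0\to\Ent(\mu_0|\mm_\infty)$, because the volume densities converge locally uniformly and are bounded away from zero on compacts. The hypothesis then yields $\ell^j_p$-geodesics $(\mu^j_t)_{t\in[-1,1]}$ with the four properties, all with the same $r_0$ and $\omega$. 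The supports satisfy $\bigcup_t\supp\,\mu^j_t\subset B_{10r_0}(x)$, which is relatively compact because $h$ is complete and $\sfd_h$ is proper.

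Next I pass to the limit. Reparametrize $[-1,1]$ to $[0,1]$ and let $\eta_j$ be the optimal dynamical plans. I then use Proposition \ref{dynamical_plans_precompactness} together with Lemma \ref{limit_plan_is_dynamical}. Their hypotheses need some care: convergence of $\ell_p^j(\mu^j_{-1},\mu^j_1)$ and a strongly dualisable limit are not given here. I expect this to be the main obstacle, and I will reprove the needed tightness directly. The coupling $\pi_j=(\ee_{-1},\ee_1)_\#\eta_j$ charges only pairs with $\tau_j$ bounded below, because $\mu^j_0$ sits in $B_{r^4}(x)$ and $\mu^j_1$ in $B_{r^2}(y)$. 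Concretely, $\tau_j(\gamma_{-1},\gamma_1)\ge\tau_j(\gamma_0,\gamma_1)\ge 2\e$ for $\eta_j$-a.e. $\gamma$, by the reverse triangle inequality along the geodesic. So $\eta_j$ is concentrated on the set $T_{2\e}$ from \eqref{def:eps_steep_geodesics}, which is precompact by Claim 2 in that proof. Prokhorov's theorem gives $\eta_j\to\eta$ along a subsequence. Arguing as in Lemma \ref{limit_plan_is_dynamical} via Remark \ref{limit_of_geodesics_is_geodesic}, $\eta$ is concentrated on $\mathrm{TGeo}^\infty$.

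It remains to show that $\eta$ is $\ell^\infty_p$-optimal and that the four properties pass to the limit. For optimality, I note that $\supp\,\mu^j_{-1}\times\supp\,\mu^j_1\subset M^2_{\le_j}$, so Lemma \ref{time_order_support_preserved} gives the same inclusion for $\le_\infty$. Then Lemma \ref{cyclical_monotonicity_limit} shows that $(\ee_{-1},\ee_1)_\#\eta$ is $\tau^p_\infty$-cyclically monotone. By the support condition (cf. \cite[Rem.~2.7, Prop.~2.8]{cavalletti2020optimal}), this gives optimality, and therefore $\ell_p$-optimality of the intermediate couplings by the geodesic property. For the four properties:
\begin{itemize}
\item the support inclusions survive because narrow limits are supported in closures, and the balls can be slightly shrunk beforehand if needed;
\item $\mu_0$ is unchanged, so its entropy term converges;
\item the endpoint entropies are lower semicontinuous under joint narrow convergence of $(\mu^j_t,\mm_j)$, by the entropy part of \cite[Eq.~(1.15)]{cavalletti2020optimal}.
\end{itemize}
Hence $\Ent(\mu_{-1})-2\Ent(\mu_0)+\Ent(\mu_1)\le\liminf_j\big[\Ent(\mu^j_{-1}|\mm_j)-2\Ent(\mu_0|\mm_j)+\Ent(\mu^j_1|\mm_j)\big]\le(K+\omega(r))r^2$, which concludes.
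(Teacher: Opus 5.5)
Your proof is correct in substance. From the admissibility check onward it follows the paper's Step~2 essentially verbatim:
\begin{itemize}
\item uniform convergence of $\tau_j$ to make $(x,y,r)$ admissible for $g_j$;
\item concentration of $\eta_j$ on $T_{2\e}$, plus Claim~2 of Proposition~\ref{dynamical_plans_precompactness}, for tightness;
\item Lemma~\ref{limit_plan_is_dynamical} for concentration on $\mathrm{TGeo}^\infty$;
\item Lemmas~\ref{time_order_support_preserved} and \ref{cyclical_monotonicity_limit} with \cite[Rem.~2.7, Prop.~2.8]{cavalletti2020optimal} for optimality.
\end{itemize}

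The genuine difference is Step~1. The paper builds midpoints $\mu_0^j$ by transporting $\mu_0$ along $W_2$-optimal couplings between the normalized restrictions of $\mm_\infty$ and $\mm_j$ to a ball $B_R(x)$. Jensen then gives only $\limsup_j\Ent(\mu_0^j|\mm_j)\le\Ent(\mu_0|\mm_\infty)$, and unbounded densities require a truncation and a diagonal argument. This construction is inherited from the Gromov-convergence setting, where reference measures live on different spaces.

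You instead use that all $\mm_j$ are volume measures on one manifold, with continuous densities converging locally uniformly and bounded away from zero. So you keep $\mu_0^j=\mu_0$ and use $\Ent(\mu_0|\mm_j)=\Ent(\mu_0|\mm_\infty)+\int\log\frac{\di\mm_\infty}{\di\mm_j}\,\di\mu_0$. This gives:
\begin{itemize}
\item genuine convergence of the midpoint entropy;
\item $(\ee_0)_\#\eta=\mu_0$ directly;
\item $\supp\mu_0^j\subset B_{r^4}(x)$ for free, whereas the paper's $\mu_0^j$ are a priori only supported in $\overline{B_R(x)}$ with $R>2r^4$.
\end{itemize}
The same identity applied to $\mu^j_{\pm1}$, whose supports lie in the compact set $\overline{B_{10r_0}(x)}$, also yields the endpoint lower semicontinuity cleanly. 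By contrast, the paper's normalization on $B_R(x)$ does not even contain $\supp\mu_{\pm1}$. Your route is shorter and sharper. The paper's construction would only be needed if the reference measures lacked uniformly converging, mutually bounded densities.

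One remark should be retracted: you cannot ``slightly shrink the balls beforehand''. The radii $r^4,r^2$ are tied to $r=\sfd(x,y)$, the radius $10r_0$ is fixed, and $\mu_0$ is prescribed. So the hypothesis for $g_j$ can only be invoked with exactly these balls. Narrow convergence then gives only $\supp\mu_1\subset\overline{B_{r^2}(y)}$ and $\bigcup_t\supp\mu_t\subset\overline{B_{10r_0}(x)}$, not the open-ball inclusions. The paper's proof has the same looseness: its limit plan lives on $C([-1,1],\overline{B_{10r_0}(x)})$ and these inclusions are never rechecked. So this is not specific to your argument, but your stated fix does not close it.
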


\begin{proof}
    Fix $x, y \in M$ with $\sfd_h(x, y) = r >0$ and {$B_{r^4}(x) \times B_{r^2}(y)  \Subset M^2_{\ll_\infty}$.} Then there exist  $\e > 0$ such that 
    \begin{align}
        \overline{B_{(r+\e)^4}(x)} \times \overline{B_{(r+\e)^2}(y)} \Subset M^2_{\ll_\infty} = \{\tau_\infty > 0\}.
    \end{align}
   Since by Proposition \ref{locally_uniform_convergence_tau} $\tau_\infty$ is continuous, by compactness there exists  $\delta> 0$ such that 
    \begin{align*}
        \overline{B_{(r+\e)^4}(x)} \times \overline{B_{(r+\e)^2}(y)} \subset \{\tau_\infty > 2\delta\}.
    \end{align*}
Using again  Proposition \ref{locally_uniform_convergence_tau}, we infer there exists  $J\in \N$ such that  
    \begin{align}\label{strict_tau_j_positive_support}
         \overline{B_{(r+\e)^4}(x)} \times  \overline{B_{(r+\e)^2}(y)} \subset \bigcap_{j \geq J} \{\tau_j > \delta\}.
    \end{align}

    \textbf{Step 1}. Fix $\mu_0^\infty \in \Dom(\Ent(\cdot | \mm_\infty))$  with $\supp \, \mu_0^\infty \subset B_{r^4}(x)$. For $j \geq J$, we will  construct measures $\mu_0^j$ such that $\supp \, \mu_0^j \subset B_{r^4}(x)$, $\mu_0^j \in \Dom(\Ent(\cdot | \mm_j))$ and
    \begin{align}\label{middle_measure_entropy}
        \Ent(\mu_0^\infty | \mm_\infty) \geq \limsup_{j \to \infty} \Ent(\mu^j_0| \mm_j). 
    \end{align}
    Since $\mm_j = \vol_{g_j} \to \vol_{g_\infty} = \mm_\infty$ narrowly, the fact that there exists  $R \in (2r^4, 3r^4)$ such that $B_R(x)$ is a continuity set of $\mm_\infty$ yields
    \begin{align}\label{eq:zj-1UB}
        z_j^{-1}:= \mm_j(B_R(x)) \to \mm_\infty(B_R(x)) =: z_\infty^{-1}.
    \end{align}
    For $j \in \N \cup \{\infty\}$, denote 
    \begin{equation}\label{eq:tildemjUB}
    \Tilde{\mm}_j := z_j \mm_j \LL B_R(x).
    \end{equation}
    As in the proof of Lemma \ref{restriction_of_plans_converge}, we get that $\Tilde{\mm}_j \to \Tilde{\mm}_\infty$ narrowly. As their supports are uniformly bounded in $M$, we get that $\Tilde{\mm}_j \to \Tilde{\mm}_\infty$ in the 2-Wasserstein distance $W^{(X, \sfd_h)}_2$. Let 
    \begin{align}
        \boldsymbol{\gamma}_j \in \Pi(\Tilde{\mm}_\infty, \Tilde{\mm}_j)\ \mathrm{be\ a\ } W^{(X, \sfd_h)}_2\mathrm{-optimal\ coupling}.
    \end{align}
Let us first consider the case $\mu^\infty_0 = \rho \Tilde{\mm}_\infty$ with $\rho \in L^\infty(\mm_\infty)$. Define $\boldsymbol{\gamma}'_j \in \mathcal{P}(M^2)$ as $\di\boldsymbol{\gamma}'_j(x, y) = \rho(x) \di\boldsymbol{\gamma}_j(x, y)$ and define $\mu^j_0:= (P_2)_\# \boldsymbol{\gamma}_j' \in \mathcal{P}(M)$. By construction, we have that $\boldsymbol{\gamma}_j' \ll \boldsymbol{\gamma}_j$, hence $\mu^j_0 \ll (P_2)_\# \boldsymbol{\gamma}_j = \Tilde{\mm}_j$. Let $\mu^j_0 = \rho^j_0 \Tilde{\mm}_j$. It follows from the definition that 
\begin{align*}
    \rho^j_0(y) = \int \rho(x) \di (\boldsymbol{\gamma}_j)_y(x), 
\end{align*}
where $\{(\boldsymbol{\gamma}_j)_y\}$ is the disintegration of $\gamma_j$ with respect to the projection on the second marginal. By Jensen's inequality applied to the convex function $u(s) = s \log s$, we have 
\begin{align}
    \Ent(\mu^j_0 | \Tilde{\mm}_j) &= \int u(\rho^j_0) \di \Tilde{\mm}_j = \int u \Bigg(  \int \rho(x) \di (\boldsymbol{\gamma}_j)_y(x), \Bigg) \di \Tilde{\mm}_j \nonumber \\
    & \leq \int \int  u(\rho(x)) \di (\boldsymbol{\gamma}_j)_y(x)\di \Tilde{\mm}_j = \int \int u(\rho(x)) \di\boldsymbol{\gamma}_j(x, y) \nonumber \\
    &= \int u(\rho) \di (P_1)_\#\boldsymbol{\gamma}_j = \int u(\rho) \di \Tilde{\mm}_\infty = \Ent(\mu^\infty_0 | \Tilde{\mm}_\infty).
\end{align}
Since by construction  $\boldsymbol{\gamma}_j' \in \Pi(\mu_0^\infty, \mu_0^j)$, it follows that 
\begin{align*}
     W^{(X, \sfd_h)}_2(\mu_0^\infty, \mu_0^j)^2 &\leq \int \sfd_h^2(x, y) \di \boldsymbol{\gamma}'_j(x, y) = \int \rho(x) \sfd_h^2(x, y) \di \boldsymbol{\gamma}_j(x, y) \leq \norm{\rho}_{L^\infty(\Tilde{\mm}_\infty)} W^{(X, \sfd_h)}_2(\Tilde{\mm}_\infty, \Tilde{\mm}_j)^2,
\end{align*}
hence $ W^{(X, \sfd_h)}_2(\mu_0^\infty, \mu^j_\infty) \to 0$, which implies that $\mu^j_0 \to \mu^\infty_0$ narrowly. \\
If $\rho$ is not essentially bounded, we proceed as follows: For $k \in \N$, define $\rho^k:= C_k \min\{\rho, k\}$, where 
\begin{align*}
    C_k^{-1} = \int_M \min\{\rho, k\}(x) \di \Tilde{\mm}_\infty(x),
\end{align*}
and set $$\mu_{k, \infty}:= \rho^k \Tilde{\mm}_\infty \in \mathcal{P}(M).$$ Then $C_k \to 1$ as $k \to \infty$ and $\rho_k \leq C_k \rho$, so we can proceed as in \eqref{rho_entropy_comparison} and \eqref{entropy_under_constant_factor} to get that
\begin{align*}
    \limsup_{k \to \infty} \Ent(\mu_{k, \infty}| \Tilde{\mm}_\infty) \leq \Ent( \mu_\infty| \Tilde{\mm}_\infty),
\end{align*}
and $\mu_{k, \infty} \to \mu^\infty_0$ narrowly. 
Applying the previous procedure to $\mu_{k, \infty}$ together with a diagonal argument yields \eqref{middle_measure_entropy}.

\smallskip

    \textbf{Step 2}. By assumption,  for every $j \geq J$ there exists  an $\ell_p^j$-geodesic $(\mu^j_t)_{t \in [-1, 1]}$ with $\supp\, \mu^j_1 \subset B_{r^2}(y)$, $\bigcup_{t \in [-1, 1]} \supp \, \mu^j_t \subset B_{10r_0}(x)$ such that 
\begin{align}\label{upper_bound_for_g_j}
    \Ent(\mu^j_{-1}| {\mm}_j) -2\Ent(\mu^j_{0}|  {\mm}_j) + \Ent(\mu^j_{1}| {\mm}_j) \leq (K + \omega(r))r^2.
\end{align}
Let $\eta_j \in \mathcal{P}([-1, 1], M)$ be the optimal dynamical plan associated to the geodesic $(\mu^j_t)_t$. Then $\eta_j(\mathrm{Geo}^j(X)) =~1$. Using  \eqref{strict_tau_j_positive_support}, we infer that 
\begin{align*}
    \supp \, (\ee_0)_\# \eta_j \times  \supp \, (\ee_1)_\# \eta_j \subset \{ \tau_j > \delta\}.
\end{align*}
As in \eqref{projection_support_inclusion}, it holds that 
\begin{align*}
    \ee_0(\supp\, \eta_j) \subset \supp\, ((\ee_0)_\# \eta_j).
\end{align*}
Hence, for each geodesic $\lambda \in \supp\, \eta_j$, we get that $\tau_j(\lambda(0), \lambda(1)) > \delta$, yielding that $$\tau_j(\lambda(-1), \lambda(1)) \geq 2\delta.$$ 
Consider the set $$T_{\delta}:= \bigcup_j T_\delta^j \subset C([-1,1], \overline{B_{10r_0}(x)}),$$ where $T_\delta^j$ is defined as in \eqref{def:eps_steep_geodesics} with $K = \overline{B_{10r_0}(x)}$. The above construction ensures that $\eta_j(T_\delta) = 1$. 
Analogously to the second step in the proof of Proposition \ref{dynamical_plans_precompactness}, we get that $T_\delta$ is precompact in $C([-1,1], M)$ and hence, by Prokhorov's Theorem, we infer that there exist  $\eta_\infty \in \mathcal{P}(C([-1,1], \overline{B_{10r_0}(x)}))$ such that 
\begin{equation}\label{eq:etajtoetainfNar}
\eta_j \to \eta_\infty \quad \text{narrowly}. 
\end{equation}
The continuity of the evaluation map and the uniqueness of narrow limits ensure that
\begin{align*}
    (\ee_0)_\# \eta_\infty = \mu_0^\infty.
\end{align*}
Moreover, setting $\pi_\infty:= (\ee_{-1}, \ee_1)_\# \eta_\infty$, it holds that $\pi_\infty (\{\tau_\infty \geq \delta\}) = 1$. Hence Lemma \ref{limit_plan_is_dynamical} implies that $\eta_\infty(\mathrm{TGeo}^\infty(M))=1$. 
Moreover, Lemma \ref{time_order_support_preserved} gives that 
\begin{align*}
         \supp \, \mu^\infty_0 \times \supp \, \mu^\infty_1 \subset M^2_{\leq_\infty}.
    \end{align*}
Denote $\pi_j:= (\ee_{-1}, \ee_1)_\# \eta_j$. From \eqref{eq:etajtoetainfNar}, we infer that  $\pi_j \to \pi_\infty$ narrowly. Lemma \ref{cyclical_monotonicity_limit}, ensures that $\pi_\infty$ is $\tau^p_\infty$-cyclically monotone. By  \cite[Remark 2.7]{cavalletti2020optimal}, it follows that $\pi_\infty$ is  $\ell^p_\infty$-cyclically monotone. By \cite[Prop.\ 2.8]{cavalletti2020optimal}, using that $\pi_\infty(M^2_{\ll_\infty}) = 1$, it follows that $\pi_\infty$ is optimal, and thus $\eta_\infty$ indeed represents an $\ell^p_\infty$-geodesic. The lower semicontinuity of $\Ent$ under joint narrow convergence (see, e.g.,  \cite[Eq.\ (1.15)]{cavalletti2020optimal}), gives that 
\begin{align*}
    \Ent(\mu_t^\infty| \Tilde{\mm}_\infty) \leq \liminf_{j \in \N} \Ent(\mu_t^j| \Tilde{\mm}_j).
\end{align*}
Hence, recalling \eqref{eq:zj-1UB} and \eqref{eq:tildemjUB},
\begin{align}\label{behaviour_of_entropy_whole_geodesic}
    \Ent(\mu_t^\infty| \mm_\infty) \leq \liminf_{j \in \N} \Ent(\mu_t^j| \mm_j).
\end{align}
Finally, combining \eqref{middle_measure_entropy}, \eqref{upper_bound_for_g_j}, and \eqref{behaviour_of_entropy_whole_geodesic}  we conclude that
\begin{align*}
     \Ent(\mu^\infty_{-1}| \mm_\infty) -2\Ent(\mu^\infty_{0}|  \mm_\infty) + \Ent(\mu^\infty_{1}| \mm_\infty) \leq (K + \omega(r))r^2.
\end{align*}
\end{proof}

\subsection{A sufficient condition for uniform global hyperbolicity}\label{SS:SuffUGH}

Uniform global hyperbolicity plays a crucial role in Theorems \ref{stability_tcd} and \ref{stability_upper_bounds}. However, this assumption is not automatic: there exist sequences of smooth globally hyperbolic Lorentzian metrics $g_j$ on $\mathbb{R}^{d+1}$, $d \ge 2$, converging locally uniformly to a smooth globally hyperbolic limit $g_\infty$, which fail to be uniformly globally hyperbolic (see Appendix \ref{subsec:counterexample_uniform_hyperbolicity}). In this subsection, we provide a sufficient condition ensuring uniform global hyperbolicity.
\smallskip

\begin{lemma}\label{stability_of_past_and_future_for_surfaces}
     Let $M$ be a smooth manifold and $g_j \in C^\infty, g_\infty \in C^0$  Lorentzian metrics on $M$ such that $g_j \to g_\infty$ locally uniformly.
     Assume that:
     \begin{itemize}
     \item there exists a continuous vector field $X$, serving as a time orientation for all $g_j$, $j \in \N \cup \{\infty\}$;
     \item all $g_\infty$-maximising curves have a causal character and are Lipschitz continuous when parametrized with respect to arclength;
     \item there exists a closed set $S\subset M$ such that $S$ is a Cauchy hypersurface for all spacetimes $(M, g_j)$, $j \in \N \cup \{\infty\}$.
     \end{itemize}
       Let $p \in J^-_{g_\infty}(S)\setminus S$ (or, respectively, $J^+_{g_\infty}(S)\setminus S$). Then there exists  $J \in \N, \theta >0$ such that $B_\theta^h(p) \subset J^-_{g_j}(S)\setminus S$ (resp.\ $J^+_{g_j}(S)\setminus S$), for all $j \geq J$.
\end{lemma}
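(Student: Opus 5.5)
We argue by contradiction and take the case $p\in J^-_{g_\infty}(S)\setminus S$; the future case is symmetric after reversing the time orientation. Since $S$ is a closed Cauchy hypersurface for every $(M,g_j)$, we have $M = (J^-_{g_j}(S)\setminus S)\,\dot\cup\, S\,\dot\cup\,(J^+_{g_j}(S)\setminus S)$ for each $j\in\N\cup\{\infty\}$. Moreover $p\notin S$ and $S$ is closed, so there is $\theta_0>0$ with $B^h_{\theta_0}(p)\cap S=\emptyset$. If the claim failed, we would find $j_k\to\infty$ and points $p_k\to p$ with $p_k\in J^+_{g_{j_k}}(S)\setminus S$, i.e. $p_k\in I^+_{g_{j_k}}(S)$ in the smooth spacetimes. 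Our aim is to pass to the limit and conclude $p\in J^+_{g_\infty}(S)$. This contradicts $p\in J^-_{g_\infty}(S)\setminus S$, because a point lying in both $J^-(S)$ and $J^+(S)$ would give a $g_\infty$-causal curve from $S$ to $S$ through $p\notin S$, which is impossible for a Cauchy (acausal, achronal) hypersurface.

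\textbf{Step 1 (a compact region for the curves).} First we show that we may take $g_\infty$-causal, then $g_j$-causal, points of $S$ in the past of $p$ confined to a compact set. Since $p\in J^-_{g_\infty}(S)$, there is some $s\in S$ with $p\le_\infty s$. We then use a common neighbourhood: the causal future of $S$ above a compact piece of it is controlled by $J^-_{g_j}(\overline{B_{\theta_0}(p)})\cap S$.

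For each $k$ we pick a past-directed $g_{j_k}$-causal curve $\gamma_k$ from $p_k$ that reaches $S$ at a point $s_k$. Such a curve exists because every past-inextendible causal curve from $p_k\in I^+_{g_{j_k}}(S)$ meets the Cauchy surface $S$. We must show that the $s_k$ stay in a compact set. This is the main obstacle, since uniform global hyperbolicity is not assumed in the lemma. Our plan is to use the majorising metrics $\tilde g_k$ from Proposition~\ref{appx_majorising_metrics}, together with the globally hyperbolic metric $g'\succ g_\infty$ from the proof of Lemma~\ref{tau_uniform_equicontinuity} (constructed as in \cite[Thm.\ 4.5]{samann2016global}). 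Restricting first to a compact neighbourhood of $p$, these make all the $\gamma_k$ $g'$-causal there. We then replace $S$ by a smooth spacelike Cauchy surface of $g'$ lying in the past of $S$, or use that $J^-_{g'}(\overline{B_{\theta_0}(p)})\cap J^+_{g'}(\Sigma)$ is compact for a $g'$-Cauchy surface $\Sigma$. If we instead choose the $\gamma_k$ to be the $h$-arclength parametrised limits of the $g_{j_k}$-timelike curves, compactness of $J^-_{g'}(\overline{B_{\theta_0}(p)})\cap J^+_{g'}(\Sigma)$ together with Proposition~\ref{Lipschitz_bd_causal_curves} applied on that compact set gives a uniform $h$-length bound for the segments of $\gamma_k$ before they hit $S$. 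If making $S$ Cauchy for $g'$ is not directly available, we first try to localise: we work only up to the first hitting time of the closed set $S$ and use non-total imprisonment from Proposition~\ref{Lipschitz_bd_causal_curves} on a fixed compact set.

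\textbf{Step 2 (limit curve).} Once the $\gamma_k$ are uniformly Lipschitz and stay in a compact set, we apply the limit curve theorem \cite[Thm.\ 3.1]{minguzzi2008limit} exactly as in the proof of Proposition~\ref{Lipschitz_bd_causal_curves}. Its argument (causality for every $g'\succ g_\infty$, then \cite[Prop.\ 1.5]{CG:12}) produces a $g_\infty$-causal limit curve $\gamma$ from $p$ to $s=\lim s_k$. Here $s\in S$ because $S$ is closed. Since the $\gamma_k$ are past directed from $p_k$, the limit gives $s\le_\infty p$, i.e. $p\in J^+_{g_\infty}(S)$. Combined with $p\in J^-_{g_\infty}(S)\setminus S$, this yields the contradiction described at the start, and hence the existence of $J$ and $\theta$.
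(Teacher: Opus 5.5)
\textbf{There is a genuine gap.} It sits exactly at what you call ``the main obstacle'': nothing in your argument ensures that the past-directed $g_{j_k}$-causal curves $\gamma_k$ from $p_k$ reach $S$ within a fixed compact set, and the tools you propose cannot supply this.

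\begin{itemize}
\item The metric $g'\succ g_\infty$ of \cite[Thm.~4.5]{samann2016global} (and likewise the $\tilde g_k$ of Proposition~\ref{appx_majorising_metrics}) dominates $g_j$ only on a given compact set, and only for $j$ large depending on that set. So you may regard $\gamma_k$ as $g'$-causal only after you already know it stays in a compact set.
\item $S$ need not be a Cauchy hypersurface for $g'$. Replacing $S$ by a $g'$-Cauchy surface $\Sigma$ ``in the past of $S$'' helps only if $\gamma_k$ meets $S$ before $\Sigma$, i.e.\ if $\Sigma$ lies in $J^-_{g_{j_k}}(S)$ near $p$. That is the statement of the lemma itself.
\item The localisation fallback does not close the gap either. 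Suppose the $h$-arclength hitting times of $S$ are unbounded, or you stop $\gamma_k$ when it leaves a compact set $K$. Then the limit curve theorem only yields a past-directed $g_\infty$-causal curve from $p$ (past-inextendible, or ending on $\partial K$). Such a curve lies in $J^-_{g_\infty}(p)\subset I^-_{g_\infty}(S)$ and contradicts nothing.
\end{itemize}

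The no-escape property you need is essentially uniform global hyperbolicity. The paper derives that in Proposition~\ref{common_cauchy_surface_uniform_global_hyperbolicity} \emph{from} this lemma, so your route is circular as it stands. Note also that you never use the hypothesis that $g_\infty$-maximisers are Lipschitz with a causal character; that is where the real argument lives.

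\textbf{The missing idea} is to work in the future direction, with curves that are compact by construction.
\begin{itemize}
\item Since $J^-_{g_\infty}(S)\setminus S=I^-_{g_\infty}(S)$, pick $q\in S$ with $\tau_\infty(p,q)>0$. Take a Lipschitz $g_\infty$-maximiser $\lambda$ from $p$ to $q$, parametrised by arclength, so that $-g_\infty(\dot\lambda,\dot\lambda)$ is bounded away from $0$.
\item Perturb its initial segment in a chart, as in Step~2 of the proof of Lemma~\ref{tau_uniform_equicontinuity}. This produces uniformly Lipschitz curves $\lambda_{p'}$ from every $p'\in B^h_\theta(p)$ to $q$, all inside a fixed compact set, with $-g_\infty(\dot\lambda_{p'},\dot\lambda_{p'})\geq\kappa>0$.
\item Local uniform convergence then makes these curves $g_j$-timelike for all $j\geq J$, so $B^h_\theta(p)\subset J^-_{g_j}(S)$.
\item Shrinking $\theta$ so that $B^h_\theta(p)\cap S=\emptyset$ concludes the proof.
\end{itemize}

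If you prefer to keep your contradiction framing, concatenate $\gamma_k$ (reversed) with $\lambda_{p_k}$. This gives a $g_{j_k}$-causal curve from $s_k\in S$ through $p_k\notin S$ to $q\in S$, contradicting that $S$ is a Cauchy hypersurface of $(M,g_{j_k})$. No compactness of $\gamma_k$ is needed for this, but then the contradiction is superfluous.
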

\begin{proof}
    Since $S$ is a Cauchy hypersurface in $(M,g_\infty)$,  \cite[Lemma 5.4]{samann2016global} implies that $J^\pm_{g_\infty}(S)\setminus S = I^\pm_{g_\infty}(S)$.
    Consider the case $p \in J^-_{g_\infty}(S)\setminus S$. Take a vector $v \in T_pM$ such that $v$ is future directed and timelike. Since the statement is local, we may choose local  coordinates based  and work in a ball in $\R^{d+1}$. By the continuity of $g_\infty$, we get that for $\e>0$ small enough, the curve $\Tilde{\gamma}:[0, \e] \to M, t \mapsto p+ tv$ is timelike. We may then extend $\Tilde{\gamma}$ to a causal curve $\gamma:[0,1] \to M$ that satisfies $\gamma|_{[0, \frac{\e}{2}]} = \Tilde{\gamma}|_{[0, \frac{\e}{2}]}$ and such that $q:= \gamma(1) \in S$.  
    Notice that $\tau_{\infty}(p, q) >0$. Since the existence of a Cauchy hypersurface implies that $g_\infty$ is globally hyperbolic and geodesic \cite[Thm.\ 5.7, Prop.\ 6.4]{samann2016global}, the second assumption implies that there exists a Lipschitz continuous $g_\infty$-geodesic $\lambda:[0,1] \to M$ with $\lambda(0)=p$ and $\lambda(1) = q$. In particular, $|\dot{\lambda}|_h$ is bounded from above and $g_\infty(\dot{\lambda}, \dot{\lambda})$ is bounded away from zero. With a similar argument as in the second step of the proof of Lemma \ref{tau_uniform_equicontinuity}, one can find  $\theta, L, \kappa, R >0$ such that around for each $p' \in B_\theta^h(p)$ there exists an $L$-Lipschitz $g_\infty$-causal curve $\lambda_{p'}:[0,1]\to M$ such that $\lambda_{p'}(0) = p'$, $\lambda_{p'}(1) = q$, $\sfd_h(\lambda, \lambda_{p'}) \leq R$ and $|g_\infty(\dot{\lambda}_{p'}, \dot{\lambda}_{p'})| \geq \kappa$. 
    The locally uniform convergence $g_j\to g_\infty$ implies that there exists  $J \in \N$ such that 
    $$\norm{g_\infty-g_j}_{C^0(B_{2R}(\lambda([0,1])))} < \frac{\kappa}{4(1+L)^3},\quad \text{for all } j \geq J.$$ Then all $\lambda_{p'}$ are $g_j$-causal, which proves the lemma for $p \in J^-_{g_\infty}(S)\setminus S$. The argument for $p\in J^+_{g_\infty}(S)\setminus S$  is analogous.
\end{proof}

\begin{proposition}\label{common_cauchy_surface_uniform_global_hyperbolicity}
    Let $M$ be a smooth manifold and $g_j, g_\infty$ be as in Lemma \ref{stability_of_past_and_future_for_surfaces}. Then $(g_j)_{j\in \N}$ are uniformly globally hyperbolic. 
\end{proposition}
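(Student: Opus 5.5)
We argue by contradiction, using the common Cauchy hypersurface $S$ to confine causal curves uniformly. Suppose there are compact sets $K_1,K_2\subset M$ such that $\bigcup_j (J^-_{g_j}(K_1)\cap J^+_{g_j}(K_2))$ is not relatively compact. Then there exist $j_k\to\infty$ and points $z_k\in J^-_{g_{j_k}}(K_1)\cap J^+_{g_{j_k}}(K_2)$ with $\sfd_h(z_k,K_1\cup K_2)\to\infty$. Each $z_k$ lies on a $g_{j_k}$-causal curve $\gamma_k$ from a point of $K_2$ to a point of $K_1$. Passing to a subsequence, the endpoints converge to $x\in K_2$ and $y\in K_1$. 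The aim is to produce an inextendible $g_\infty$-causal curve that contradicts the Cauchy property of $S$ for $g_\infty$ (or global hyperbolicity of $g_\infty$).

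\textbf{Step 1: uniform control near the endpoints.} Enlarge $K_1,K_2$ to a single compact set $K$. For each $p\in K$, the point lies in $J^-_{g_\infty}(S)\setminus S$, in $J^+_{g_\infty}(S)\setminus S$, or in $S$.
\begin{itemize}
\item In the first two cases, Lemma~\ref{stability_of_past_and_future_for_surfaces} gives a ball $B^h_\theta(p)$ lying on the same side of $S$ for all $g_j$ with $j\ge J(p)$.
\item For $p\in S$, we replace $K$ by a slightly larger compact set containing points strictly to the past and to the future of $S$. Concretely, we push $K_2$ into $I^-_{g_\infty}(S)$-type sets via short timelike segments, using continuity of $g_\infty$ together with openness of the $g_j$-timelike cone for large $j$.
\end{itemize}
By compactness of $K$, this yields a uniform $J$ and compact sets $A_-\subset J^-_{g_j}(S)\setminus S$ and $A_+\subset J^+_{g_j}(S)\setminus S$ for all $j\ge J$, such that every causal curve from $K_2$ to $K_1$ can be viewed as a sub-curve of a curve from $A_-$ to $A_+$. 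The small indices $j<J$ are finitely many and each is globally hyperbolic, so they contribute a compact set.

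\textbf{Step 2: limit curve argument.} Parametrize $\gamma_k$ by $h$-arclength. Since $g_{j_k}\to g_\infty$ locally uniformly, Proposition~\ref{appx_majorising_metrics} applies on every compact set. It yields majorising smooth metrics $\tilde g$ such that, on compacta, the $\gamma_k$ are eventually $\tilde g$-causal. Because the $\gamma_k$ leave every compact set, their $h$-lengths tend to infinity. The limit curve theorem \cite[Thm.~3.1]{minguzzi2008limit}, applied as in Proposition~\ref{Lipschitz_bd_causal_curves} together with \cite[Prop.~1.5]{CG:12}, then produces a $g_\infty$-causal curve $\gamma$ starting at $x$ that is future inextendible. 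Either $\gamma$ is a limit of the initial segments of $\gamma_k$, or, with the roles reversed, we obtain a past-inextendible curve ending at $y$.

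\textbf{Step 3: contradiction.} Each $\gamma_k$ meets $S$ exactly once, since $S$ is a $g_{j_k}$-Cauchy hypersurface and $\gamma_k$ runs from a point of $J^-(S)\setminus S$ to a point of $J^+(S)\setminus S$. Moreover $z_k$ lies either before or after this crossing. Suppose, after passing to a subsequence, that $z_k$ lies before the crossing; this case is symmetric to the other.
\begin{itemize}
\item The segments of $\gamma_k$ from $x_k$ to $z_k$ lie in $J^-_{g_{j_k}}(S)$ and have unbounded length.
\item Their limit $\gamma$ is an inextendible future-directed $g_\infty$-causal curve from $x$ that stays in the closed set $\overline{\bigcup_k J^-_{g_{j_k}}(S)}$.
\end{itemize}
The key claim, and the main obstacle, is that $\gamma$ stays in $J^-_{g_\infty}(S)$. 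The natural way to prove it is by contradiction: if $\gamma(t_0)\in J^+_{g_\infty}(S)\setminus S$, then Lemma~\ref{stability_of_past_and_future_for_surfaces} gives a neighbourhood of $\gamma(t_0)$ contained in $J^+_{g_j}(S)\setminus S$ for all large $j$. Since $\gamma_k\to\gamma$ locally uniformly, this contradicts $\gamma_k$ being in the past of $S$ at that parameter. Hence $\gamma$ is a future-inextendible $g_\infty$-causal curve that never enters $I^+_{g_\infty}(S)$. It must therefore meet $S$, and then remain in $S$ while staying causal, which is impossible for a Cauchy hypersurface that an inextendible causal curve crosses exactly once in the sense of \cite{samann2016global}. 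Alternatively, if $\gamma$ never meets $S$, this contradicts the Cauchy property directly.

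The delicate points are the handling of points on $S$ in Step 1 and the fact that the limit curve is only $g_\infty$-causal rather than timelike. For the latter I would use the non-total-imprisonment of $(M,g_\infty)$ from \cite[Lem.~2.7]{samann2016global} to rule out the case where $\gamma$ stays in a compact set.
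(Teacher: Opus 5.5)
Your core argument (Steps 2--3) is correct, and it reaches the contradiction by a different route than the paper.

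\textbf{The paper's route.} The paper works with two points $p,q$, with $p\in J^-_{g_\infty}(S)$ after a time reversal. It extracts \emph{both} limit curves, forward from $p$ and backward from $q$, and proves both are inextendible. It then uses the $g_\infty$-Cauchy property to find $\gamma_p(t_p+1)\in J^+_{g_\infty}(S)\setminus S$, and, when $q\in J^+_{g_\infty}(S)\setminus S$, also $\gamma_q(t_q+1)\in J^-_{g_\infty}(S)\setminus S$. These points are transferred to $g_j$ via Lemma~\ref{stability_of_past_and_future_for_surfaces}, and one concludes that $\gamma_j$ meets $S$ twice, contradicting the $g_j$-Cauchy property. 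This needs a case split on the position of $q$, and general compact sets are treated ``analogously''.

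\textbf{Your route.} You reverse the order. The $g_j$-Cauchy property is used first, to put the escaping point on a one-sided end segment. The Lemma is then used contrapositively to keep a single limit curve inside $J^-_{g_\infty}(S)$, so the contradiction is with the $g_\infty$-Cauchy property. This needs only one limit curve, no case analysis on the endpoints, and handles arbitrary compact $K_1,K_2$ directly. Both routes rest on the same ingredients: $g_\infty$-causality of limits via majorising metrics, non-imprisonment for inextendibility, and the Lemma as the bridge between $g_\infty$ and $g_j$.

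\textbf{Repair 1: delete Step 1.} As justified, Step 1 does not work. If $K_2$ contains points of $I^+_{g_\infty}(S)$ far from $S$, no short segment reaches $I^-(S)$. Producing connecting curves from a compact $A_-$ that are $g_j$-causal uniformly in $j$ would need a perturbation argument like the one in the proof of the Lemma. Step 1 is also unnecessary. Since $S$ is $g_j$-Cauchy, $M=J^-_{g_j}(S)\cup I^+_{g_j}(S)$. Along a future-directed causal curve, the parameters landing in $J^-_{g_j}(S)$ form an initial interval. Hence either $\gamma_k$ up to $z_k$ lies in $J^-_{g_{j_k}}(S)$, or $\gamma_k$ from $z_k$ on lies in $J^+_{g_{j_k}}(S)$. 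This also covers curves that never meet $S$, and it is all Step 3 uses.

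\textbf{Repair 2: inextendibility.} Do not apply non-total imprisonment of $g_\infty$ to the limit $\gamma$ itself. That would first require showing $\gamma$ has infinite $h$-length, which is not automatic because $h$-length is only lower semicontinuous under uniform convergence. Instead, apply Proposition~\ref{Lipschitz_bd_causal_curves} on $\overline{B^h_1(K)}$ to the unit-speed approximants, as the paper does in its Step 2.
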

\begin{proof}
\textbf{Step 1.}
    Take two points $p, q \in M$. Up to reversing the time orientation, we may  assume that $p \in J^-_{g_\infty}(S)$. Pick a sequence $(x_k)_{k\in \N}$ such that 
    $$x_k \in \bigcup_{j\in \N} (J^+_{g_j}(p) \cap J^-_{g_j}(q)), \quad \text{ for all } k\in \N.$$
    We will prove that there exists a converging subsequence. 
    If there exists  $j_0 \in \N$ such that infinitely many  $x_k$'s lie in $J^+_{g_{j_0}}(p) \cap J^-_{g_{j_0}}(q)$, we get a convergent subsequence, as we already know that the existence of a Cauchy hypersurface for $(M, g_j)$ implies the compactness of causal diamonds and hence the existence of an accumulation point.
    Hence, relabeling and taking a subsequence, we may assume that  
    $$x_j \in J^+_{g_j}(p) \cap J^-_{g_j}(q), \quad \text{for all }j\in \N.$$  Assume by contradiction that the sequence does not have an accumulation point. Hence, we can pick another subsequence and relabel such that $x_j \in B_j^h(\{p,q\})^c$. 
    Pick $g_j$-causal curves $\gamma_j:[0, c_j]\to M$ from $p$ to $q$ that pass through $x_j$ and assume them to be parametrised by unit $h$-speed. Notice that $$\gamma_j([0,k])\cup \gamma_j([c_j-k, c_j]) \subset B_k^h(\{p,q\}), \quad \text{for each $k \in \N$}.$$
    Define $\gamma^-_j:[0, c_j] \to M, t \mapsto \gamma_j(c_j-t)$. 
    By the limit curve theorem (see e.g.\ \cite{minguzzi2008limit}) and a diagonal argument to get two limit curves $\gamma_p$, $\gamma_q$ such that, passing to a subsequence, $\gamma_j \to \gamma$ and $\gamma^-_j \to \gamma_q$ locally uniformly. Moreover, the same argument as in the proof of Proposition \ref{Lipschitz_bd_causal_curves} shows that $\gamma_p, \gamma_q$ are $g_\infty$-causal. By the assumption on the sequence $x_j$, we get that $c_j \to \infty$ as $j \to \infty$. 
    \smallskip
    
   \textbf{Step 2.} We next show  that both curves $\gamma_p, \gamma_q$ need to be inextendible. If $\gamma_p$ were extendible, then the limit $$\lim_{t \to \infty} \gamma_p(t) := p_\infty \in M$$ exists and hence there exists a compact set $K$ such that $\gamma([0, \infty)) \subset K$. Consider the compact set $$B:=\overline{B_1^h(K)}\subset M.$$ By  \cite[Thm.\ 4.5]{samann2016global} there exists a globally hyperbolic metric $g' \succ g_\infty$ on $B$. Then, for $J_1 \in \N$ large enough, we have that $g_j \prec g'$ on $B$, for all $j\geq J_1$. 
    By Proposition \ref{Lipschitz_bd_causal_curves},  there exists  $C >0$ such that all $g'$-causal curves in $B$ have  $h$-length at most $C$. Let $j^* \geq 2C + J_1$ be such that  $$\sfd_h(\gamma_{j^*}(t), \gamma_p(t)) \leq \frac{1}{2},\quad \text{ for all $t \leq 2C$}.$$ Hence $\gamma_{j^*}([0, 2C]) \subset B$ and it is $g'$-causal, thus its  $h$-length is at most $C$. That gives a contradiction, as $\gamma_{j^*}$ is parametrised by $h$-arclength and hence has length $2C$. The proof of the inextendibility of  $\gamma_q$ is analogous.
\smallskip
    
   \textbf{Step 3.}
    Recall that $p \in J_{g_\infty}^-(S) $. Let  $t_p \geq0$ be such that $\gamma_p(t_p) \in S$, and set $y := \gamma_p(t_p+1) \in J_{g_\infty}^+(S) \setminus S$. By Lemma \ref{stability_of_past_and_future_for_surfaces}, there exist $\theta >0$ and $J_1 \in \N$ such that  $B_\theta^h(y) \subset J_{g_j}^+(S) \setminus S$, for all $j \geq J_1$.
    
    \textit{Case 1:} $q \in J_{g_\infty}^-(S)$. \\
    Let $J_2 > 0$ be such that  $c_j \geq t_p+2$ and $\gamma_j(t_p+1) \in B_{\theta/2}^h(y)$, for all  $j \geq J_2$. Then 
    $$\gamma_j(0) \in J_{g_j}^-(S),\quad  \gamma_j(t_p+1) \in J_{g_j}^+(S)\text{ and } \gamma_j(c_j) \in J_{g_j}^-(S),$$
    for all $j \geq J_1+J_2$. By continuity, this implies that $\gamma_j$ meets $S$ at least twice, contradicting that $S$ is a Cauchy hypersurface.

      \textit{Case 2:} $q \in J_{g_\infty}^+(S) \setminus S$. \\
      Let $t_q > 0$ be such that $\gamma_q(t_q) \in S$. Define $x := \gamma_q(t_q+1)$. By  Lemma \ref{stability_of_past_and_future_for_surfaces}, we may  increase $J_1$ and shrink $\theta$ such that $B_\theta^h(x) \subset J_{g_\infty}^-(S) \setminus S$,  for all $j \geq J_1$.
      Let  $J_2 > 0$ be such that  
      $$\gamma_j(t_p+1) \in B_{\theta/2}^h(y)\quad  \text{ and } \quad\gamma_j(c_j-t_q-1) \in B_{\theta/2}^h(x), \quad \text{for all }j \geq J_2.$$ 
      Recalling that $c_j \to \infty$ as $j \to \infty$, we get that there exists  $J_3>0$ such that  $c_j > t_p + t_q +3$, for all $j \geq J_3$. Now for  $j \geq J_1 + J_2 + J_3$, we infer that 
      $$\gamma_j(0) \in J_{g_j}^-(S),\quad  \gamma_j(t_p+1) \in J_{g_j}^+(S), \quad \text{and} \quad  \gamma_j(c_j-t_q-1) \in J_{g_j}^-(S).$$ By continuity, this implies that $\gamma_j$ meets $S$ at least twice, contradicting that $S$ is a Cauchy hypersurface.

      \textbf{Step 4}. So far, we gave the proof in the case where both the compact sets $K_1$ and $K_2$ are singletons, i.e., $K_1=\{p\}$, $K_2=\{q\}$. The general case is analogous, let us briefly discuss it.
Pick a sequence $(x_k)_{k\in \N}$ such that 
    $$x_k \in \bigcup_{j\in \N} (J^+_{g_j}(K_1) \cap J^-_{g_j}(K_2)), \quad \text{ for all } k\in \N.$$
    We want to show that  there exists a converging subsequence. Since $K_1$ and $K_2$ are compact, up to subsequences, we may assume that there exists $p_j, p\in K_1$ and $q_j, q\in K_2$ such that: 
    $$
    x_j \in  (J^+_{g_j}(p_j) \cap J^-_{g_j}(q_j)), \quad p_j\to p, \quad q_j\to q.
    $$
    Assume by contradiction that the sequence $(x_j)$ does not have an accumulation point.  Hence, we can pick another subsequence and relabel such that $x_j \in B_j^h(\{p,q\})^c$. 
    Pick $g_j$-causal curves $\gamma_j:[0, c_j]\to M$ from $p_j$ to $q_j$ that pass through $x_j$ and assume them to be parametrised by unit $h$-speed. Define $\gamma^-_j:[0, c_j] \to M, t \mapsto \gamma_j(c_j-t)$. As before, one can apply the limit curve theorem and obtain two limit curves $\gamma_p$, $\gamma_q$ such that, up to a subsequence, $\gamma_j\to \gamma$ and $\gamma^-_j \to \gamma_q$ locally uniformly. Moreover, both $\gamma_p, \gamma_q$ are $g_\infty$-causal. By the assumption on the sequence $x_j$, we get that $c_j \to \infty$ as $j \to \infty$. From here one can repeat verbatim the arguments in steps 2 and 3 to reach a contradiction.
\end{proof}

\section{Application to nonexpanding impulsive gravitational waves}\label{Sec:3}

The aim of this section is to show that Penrose’s impulsive gravitational waves fit into the framework of measured Lorentzian geodesic spaces with synthetic lower and upper bounds on timelike Ricci curvature, as stated in Theorems \ref{thm:3.10Intro} and \ref{thm:IPP-UB-Intro}. We briefly outline the strategy.

In \cite{podolsky2014global}, Podolsk{\`y}, Steinbauer, \v{S}varc, and the third author constructed nonexpanding impulsive gravitational waves in constant curvature spacetimes. We focus on the Lipschitz continuous Rosen form of the metric (see \cite[Section~2.2]{podolsky2014global}), for which we compute the distributional Ricci tensor and identify the conditions under which it admits lower and/or upper bounds.

In the bounded case, we approximate the Lipschitz metric locally uniformly by smooth metrics with corresponding curvature bounds. We then perform a further modification to ensure uniform global hyperbolicity, thereby enabling the application of the results from the previous section. The construction of these smooth approximations proceeds in several steps in order to maintain control of the Ricci curvature; we refer to the beginning of Section \ref{SS:SmoothApproxIPP} for an overview.

\subsection{Analysis of the non-smooth metric}
Let us start by recalling the explicit form of the continuous Lorentzian metric describing an IPP wave (see for instance \cite[Subsec.\ 2.2]{podolsky2014global}).  Consider the coordinates $U, V, Z, \Bar{Z}$, where $U, V \in \R$ and $Z, \cz \in \C$ are given by $Z=x +iy$ and $\Bar{Z} = x-iy$. Let $\hh = \hh(Z, \cz)$ be a real-valued function of regularity $C^{3,1}_{loc}(\C^2)$. Let $\Lambda \in \R$ be the cosmological constant. Define the function 
\begin{align*}
    U^+:= \max(0, U).
\end{align*}
Consider the bilinear form 
\begin{align}\label{nonsmooth_metric}
    ds_0^2 = \frac{2\big|dZ + U^+(\partial^2_{Z, \cz}\hh dZ + \partial^2_{\cz, \cz}\hh d\cz)\big|^2-2dUdV}{\big[1+\frac{1}{6}\Lambda(Z\cz -UV - U^+G)\big]^2},
\end{align}
where $G = G(Z, \cz):= \hh-Z \partial_Z\hh -\cz \partial_{\cz}\hh$. 
It can also be expressed in matrix form as
\begin{align*}
    g_{(\Lambda)} = \frac{1}{\big[1+\frac{1}{6}\Lambda(Z\cz -UV - U^+G)\big]^2}\begin{pmatrix}
    0 & -1 & 0 & 0 \\
    -1 & 0 & 0 & 0 \\
    0 & 0 & {g_{33}} & {g_{34}} \\
    0 & 0 &  {g_{34}}  & {g_{44}} 
    \end{pmatrix},
\end{align*}
with
\begin{align}\label{explicit_metric_coefficients}
    {g}_{33} &=  2\partial^2_{Z, Z}\hh U^+ +2\partial^2_{Z,\Bar{Z}}\hh\partial^2_{Z, Z}\hh (U^+)^2, \nonumber \\
     {g}_{44} &=  2\partial^2_{\cz, \cz}\hh  U^+ +2\partial^2_{Z,\Bar{Z}}\hh\partial^2_{\cz, \cz}\hh (U^+)^2, \nonumber \\
     {g}_{43}= {g}_{34} &= 1+2\partial^2_{Z,\Bar{Z}}\hh U^+ +((\partial^2_{Z,\Bar{Z}}\hh)^2+|\partial^2_{\cz, \cz}\hh|^2)(U^+)^2.
\end{align}

For any continuous symmetric $2$-form $\alpha$ on an open set $O \subset \R^4$ denote by $\mathrm{Dom}_{\mathrm{Lor}}(\alpha)$ the open subset of $O$, where $\alpha$ has Lorentzian signature. 
Whenever the value of $\Lambda$ is clear from context, we will simply write $g$ instead of $g_{(\Lambda)}$.
We denote by $M_\Lambda$ the open subset of $\R^4$ in which $g$ is well-defined and has Lorentzian signature, i.e $\domlor(g)= \domlor(g_{(\Lambda)})$. Moreover, we will sometimes write $(M^0_\Lambda, g_0)$ for the case $\hh \equiv 0$.

Before we continue with the analysis of the metric, we need some distributional calculus. Denote by $\mathcal{D}'=\mathcal{D}'(\R^4)$ the space of distributions in $\R^4$.
Define the function $\Theta: \R^n \to \R^n$ as $\Theta(x) = \mathbbm{1}_{x_1 \geq 0}$. Note that $\Theta \in L^\infty \hookrightarrow \mathcal{D}'$.
\medskip

\begin{proposition}\label{distributional_derivative_heavisidy}
    It holds that $\partial_1 \Theta = \mathcal{S}^{n-1}\LL \{x_1 = 0\} \in \mathcal{D}'$, where $\mathcal{S}^{n-1}$ denotes the $(n-1)$-dimensional Hausdorff measure, hence a distribution of order zero.  
\end{proposition}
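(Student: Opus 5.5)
The plan is to verify the identity directly from the definition of the distributional derivative, testing against an arbitrary $\varphi \in C^\infty_c(\R^n)$, and then to use Fubini together with the one-dimensional fundamental theorem of calculus. By definition, $\langle \partial_1 \Theta, \varphi\rangle = -\langle \Theta, \partial_1\varphi\rangle = -\int_{\R^n} \mathbbm{1}_{x_1\geq 0}\,\partial_1\varphi(x)\,\di x$. Since $\varphi$ is smooth with compact support, $\partial_1\varphi$ is integrable. Fubini therefore lets us write the integral as an iterated integral over $x' := (x_2,\ldots,x_n)\in\R^{n-1}$ and $x_1 \in [0,\infty)$.

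For each fixed $x'$, the inner integral is $\int_0^\infty \partial_1\varphi(x_1,x')\,\di x_1 = -\varphi(0,x')$. This uses compact support, so $\varphi(x_1,x')\to 0$ as $x_1\to\infty$. Hence
\[
\langle \partial_1\Theta,\varphi\rangle = \int_{\R^{n-1}} \varphi(0,x')\,\di x' .
\]
The hyperplane $\{x_1=0\}$ is an isometric copy of $\R^{n-1}$. On it, $\mathcal{S}^{n-1}$ coincides with $(n-1)$-dimensional Lebesgue measure $\mathcal{L}^{n-1}$ in the variables $x'$; this is the standard normalisation of Hausdorff measure. The last integral is therefore $\int \varphi \,\di(\mathcal{S}^{n-1}\LL\{x_1=0\})$, which is the claimed identity.

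Finally, $\mathcal{S}^{n-1}\LL\{x_1=0\}$ is a locally finite Radon measure. For any compact $K$ we get $|\langle\partial_1\Theta,\varphi\rangle|\le \mathcal{S}^{n-1}(K\cap\{x_1=0\})\,\|\varphi\|_{\infty}$ whenever $\supp\varphi\subset K$. This is exactly the estimate defining a distribution of order zero. None of these steps is a real obstacle. The only point requiring care is the normalisation convention for $\mathcal{S}^{n-1}$: if a different normalisation of Hausdorff measure is used, a constant appears, and it would be absorbed by identifying $\mathcal{S}^{n-1}$ on the hyperplane with $\mathcal{L}^{n-1}$.
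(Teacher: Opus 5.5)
Your argument is correct and matches the paper's proof: pair $\partial_1\Theta$ with a test function, split the integral via Fubini into $x'\in\R^{n-1}$ and $x_1\in[0,\infty)$, and apply the fundamental theorem of calculus in $x_1$ to obtain $\int_{\R^{n-1}}\varphi(0,x')\,\di x'$. The differences are cosmetic. The paper tests against the slightly larger class $C^0_c\cap W^{1,1}_{loc}$. It also leaves implicit the order-zero bound and the identification of $\mathcal{S}^{n-1}$ on the hyperplane with $\mathcal{L}^{n-1}$, both of which you spell out.
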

\begin{proof}
    Let $\phi \in C_c^{0} \cap W^{1, 1}_{loc} (\R^n)$. Then 
    \begin{align*}
        \langle \partial_1 \Theta, \phi \rangle_{\mathcal{D}', \mathcal{D}} = - \int_{\R^n} \partial_1 \phi \Theta dx = -\int_{\R^{n-1}} \int_{0}^\infty \partial_1 \phi (x_1, x') dx_1 dx' = \int_{\R^{n-1}} \phi(0, x') dx'.
    \end{align*}
\end{proof}

\subsubsection{The distributional Ricci curvature tensor}\label{subsec:distri_ricci_curv}
We start by computing the distributional Ricci curvature of \eqref{nonsmooth_metric}.
As $\partial_U \Theta$ is a distribution of order $0$, we get that  
\begin{align}
    U^+ \partial_U \Theta = 0. 
\end{align}
Noticing that $g, g^{-1} \in C^{0,1}_{loc}$, we can apply Proposition \ref{distributional_derivative_heavisidy}, to compute the distributional Ricci curvature via  
\begin{align*}
    &\Gamma_{ij,k} = \frac{1}{2}(\partial_ig_{jk} + \partial_jg_{ik} - \partial_kg_{ij}),  \\
    &R_{ijkl} = \partial_i \Gamma_{jk, l} -\partial_j \Gamma_{ik, l} + g^{st}(\Gamma_{ik, s}\Gamma_{jl, t}-\Gamma_{il, s}\Gamma_{jk, t}),  \\
    &\Ric_{ij} = g^{kl}R_{kijl}.
\end{align*}
Observing that 
\begin{align*}
    & U^+ \partial_U\Theta = 0 \ \mathrm{in}\ \mathcal{D}', \\
    &\Theta^2 = \Theta \in L^\infty, \\
    &(\Theta)^k (U^+)^l = (U^+)^l,\ \in C^0\ \mathrm{for\ integers}\ k, l \geq 1, \\
    & (U^+)^k (U^+)^l = (U^+)^{k+l},\ \in C^0\ \mathrm{for\ integers}\ k, l \geq 1,
\end{align*}
allows us to simplify the connection symbols and curvature coefficients.
We get that the non-zero curvature coefficients are given by

\begin{align*}
    R_{1221} &= -\frac{\frac{1}{3} \Lambda}{\left(1+ \frac{\Lambda}{6}(|Z|^2-UV -U^+(\hh-Z\partial_Z \hh-\Bar{Z}\partial_{\cz}\hh))\right)^4}, \\
    R_{2331} &= -\frac{\frac{2}{3} {\partial^2_{Z,Z}\hh} \Lambda U^+ ({\partial^2_{Z, \Bar{Z}}\hh} U^+ +1)}{\left(1+ \frac{\Lambda}{6}(|Z|^2-UV -U^+(\hh-Z\partial_Z \hh-\Bar{Z}\partial_{\cz}\hh))\right)^4}, \\
    R_{2341} &= -\frac{\frac{1}{3} \Lambda \left({\partial^2_{Z,Z}\hh} (U^+)^2 {\partial^2_{\Bar{Z},\Bar{Z}}\hh}+({\partial^2_{Z, \Bar{Z}}\hh} U^+ +1)^2\right)}{\left(1+ \frac{\Lambda}{6}(|Z|^2-UV -U^+(\hh-Z\partial_Z \hh-\Bar{Z}\partial_{\cz}\hh))\right)^4}, \\
    R_{2431} &= -\frac{\frac{1}{3} \Lambda \left({\partial^2_{Z,Z}\hh} (U^+)^2 {\partial^2_{\Bar{Z},\Bar{Z}}\hh}+({\partial^2_{Z, \Bar{Z}}\hh} U^+ +1)^2\right)}{\left(1+ \frac{\Lambda}{6}(|Z|^2-UV -U^+(\hh-Z\partial_Z \hh-\Bar{Z}\partial_{\cz}\hh))\right)^4},
\end{align*}
\begin{align*}
    R_{2441} &= -\frac{\frac{2}{3} \Lambda U^+ {\partial^2_{\Bar{Z},\Bar{Z}}\hh} ({\partial^2_{Z, \Bar{Z}}\hh} U^++1)}{\left(1+ \frac{\Lambda}{6}(|Z|^2-UV -U^+(\hh-Z\partial_Z \hh-\Bar{Z}\partial_{\cz}\hh))\right)^4}, \\
    R_{3443} &= -\frac{\frac{1}{3} \Lambda \left(({\partial^2_{Z, \Bar{Z}}\hh} U^+ +1)^2-{\partial^2_{Z,Z}\hh} (U^+)^2 {\partial^2_{\Bar{Z},\Bar{Z}}\hh}\right)^2}{\left(1+ \frac{\Lambda}{6}(|Z|^2-UV -U^+(\hh-Z\partial_Z \hh-\Bar{Z}\partial_{\cz}\hh))\right)^4}, \\
    R_{3114} &= -\frac{\partial^2_{Z, \cz} \hh \mathcal{S}^3\LL\{U =0\}}{(1+\frac{\Lambda}{6}|Z|^2)^2} -\frac{\Lambda(\hh - Z\partial_Z \hh -\cz \partial_{\cz}\hh) \mathcal{S}^3\LL\{U =0\}}{6(1+\frac{\Lambda}{6}|Z|^2)^3},\\
    R_{3113} &=-\frac{\partial^2_{Z, Z} \hh \mathcal{S}^3\LL\{U =0\}}{(1+\frac{\Lambda}{6}|Z|^2)^2}, \\
    R_{4114} &=-\frac{\partial^2_{\cz, \cz} \hh \mathcal{S}^3\LL\{U =0\}}{(1+\frac{\Lambda}{6}|Z|^2)^2}. 
\end{align*}
Finally, the Ricci curvature coefficients are given by 
\begin{align*}
    \Ric_{11} &= -2\partial^2_{Z, \cz} \hh \mathcal{S}^3\LL\{U =0\} -\frac{\Lambda(\hh - Z\partial_Z \hh -\cz \partial_{\cz}\hh) \mathcal{S}^3\LL\{U =0\}}{3(1+\frac{\Lambda}{6}|Z|^2)},\\
    \Ric_{12} &= -\frac{\Lambda}{\left(1+ \frac{\Lambda}{6}(|Z|^2-UV -U^+(\hh-Z\partial_Z \hh-\Bar{Z}\partial_{\cz}\hh))\right)^2},\\
    \Ric_{13} &= \Ric_{14} = \Ric_{22}= \Ric_{23}=  \Ric_{24}= 0,\\
    \Ric_{33} &= \frac{2 {\partial^2_{Z,Z}\hh} \Lambda U ({\partial^2_{Z, \Bar{Z}}\hh} U+1)}{\left(1+ \frac{\Lambda}{6}(|Z|^2-UV -U^+(\hh-Z\partial_Z \hh-\Bar{Z}\partial_{\cz}\hh))\right)^2},\\
    \Ric_{34} &= \frac{ \Lambda \left({\partial^2_{Z,Z}\hh} U^2 {\partial^2_{\Bar{Z},\Bar{Z}}\hh}+({\partial^2_{Z, \Bar{Z}}\hh} U+1)^2\right)}{\left(1+ \frac{\Lambda}{6}(|Z|^2-UV -U^+(\hh-Z\partial_Z \hh-\Bar{Z}\partial_{\cz}\hh))\right)^2},\\
    \Ric_{44} &= \frac{2 \Lambda U {\partial^2_{\Bar{Z},\Bar{Z}}\hh} ({\partial^2_{Z, \Bar{Z}}\hh} U+1)}{\left(1+ \frac{\Lambda}{6}(|Z|^2-UV -U^+(\hh-Z\partial_Z \hh-\Bar{Z}\partial_{\cz}\hh))\right)^2}.
\end{align*}
\subsubsection{Time orientation and timelike geodesics}
We will prove a general fact about Lorentzian metrics on an open set $\Omega \subset \R^4$,  that are of the form
\begin{equation}\label{eq:gpGeneral}
    g = \frac{1}{P^2}\begin{pmatrix}
    0 & -1 & 0 & 0 \\
    -1 & 0 & 0 & 0 \\
    0 & 0 & \Tilde{g}_{33} & \Tilde{g}_{34} \\
    0 & 0 &  \Tilde{g}_{43} & \Tilde{g}_{44}
    \end{pmatrix},
\end{equation}
where $\begin{pmatrix}
   \Tilde{g}_{33} & \Tilde{g}_{34} \\
    \Tilde{g}_{43} & \Tilde{g}_{44}
    \end{pmatrix}$
    is symmetric and positive definite on $\Omega$ and $P$ is a nowhere zero function on $\Omega$. The set of coordinates will be denoted by  $U, V, x, y$. Notice that $g$ defines a Lorentzian metric and that $\partial_U, \partial_V$ are null vectors. 

We choose $T=\partial_U + \partial_V$ as our \textbf{time orientation}.
\smallskip

\begin{proposition}\label{uv_monotone_along_causal_curves}
Let $g$ be a Lorentzian metric of the form \eqref{eq:gpGeneral}, defined on an open set $\Omega \subset \R^4$.
    For any causal, future directed vector $w \in T_p\Omega, p\in \Omega$, it holds
    \begin{align*}
        &\langle w, \partial_U  \rangle_g \leq 0, \\
        &\langle w, \partial_V  \rangle_g \leq 0.
    \end{align*}
\end{proposition}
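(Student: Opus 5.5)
The plan is to compute the two inner products explicitly and then use the time orientation $T=\partial_U+\partial_V$ together with the causality of $w$. Write $w = a\,\partial_U + b\,\partial_V + w'$, where $w'$ lies in the span of $\partial_x,\partial_y$. From the block form \eqref{eq:gpGeneral} we read off
\[
\langle w,\partial_U\rangle_g = -\frac{b}{P^2},\qquad \langle w,\partial_V\rangle_g=-\frac{a}{P^2}.
\]
The statement is therefore equivalent to $a\ge 0$ and $b\ge 0$ for every future-directed causal $w$.

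Next we use the causality of $w$. We have
\[
g(w,w)=\frac{1}{P^2}\bigl(-2ab+\tilde g(w',w')\bigr),
\]
where $\tilde g$ denotes the positive definite $2\times 2$ block. Since this block is positive definite, $g(w,w)\le 0$ forces $2ab\ge \tilde g(w',w')\ge 0$. So $a$ and $b$ cannot have strictly opposite signs.

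Future-directedness means $g(w,T)<0$ (or $\le 0$ with $w\neq 0$, following the paper's convention). A direct computation gives
\[
g(w,T)=-\frac{a+b}{P^2},
\]
and $T$ itself is timelike because $g(T,T)=-2/P^2<0$. Hence future-directedness gives $a+b>0$.

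Combining the two conditions handles all cases. If $ab>0$, then $a+b>0$ forces both $a$ and $b$ to be positive. If $ab=0$, say $a=0$, then $\tilde g(w',w')\le 0$ gives $w'=0$, so $w=b\,\partial_V$ with $b>0$; the case $b=0$ is symmetric. In every case $a,b\ge 0$, which proves both inequalities. The only delicate point is this degenerate null case $ab=0$, and it is resolved exactly by the positive definiteness of the transversal block, which forces $w'=0$. No other obstacle is expected, since the whole argument is linear algebra at a single point.
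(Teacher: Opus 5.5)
Your argument is correct and is essentially the paper's proof: both read off $\langle w,\partial_U\rangle_g=-b/P^2$ and $\langle w,\partial_V\rangle_g=-a/P^2$, use causality and positive definiteness of the transversal block to get $ab\ge 0$, use future-directedness with respect to $T=\partial_U+\partial_V$ to get $a+b>0$, and conclude $a,b\ge 0$. The paper phrases the last step as a contradiction (if $b<0$ then $a>-b>0$, so $ab<0$), which makes your separate treatment of the case $ab=0$ unnecessary, though harmless.
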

\begin{proof}
    Notice that $w = (w_1, w_2, w_3, w_4)\in T_p\Omega$ is future directed if and only if
    \begin{align}\label{consequence_of_future_dir}
        \langle T, w \rangle_g = \frac{-w_1-w_2}{P^2} < 0.
    \end{align}
    The condition that $w$ is causal implies that $w \neq 0$ and 
    \begin{align*}
        \frac{-2w_1w_2 + \Tilde{g}_{33}w_3^2 + 2\Tilde{g}_{34}w_3w_4 + \Tilde{g}_{44}w_4^2}{P^2} \leq 0.
    \end{align*}
    Since $\Tilde{g}_{33}w_3^2 + 2\Tilde{g}_{34}w_3w_4 + \Tilde{g}_{44}w_4^2 \geq 0$, we get that 
    \begin{align}\label{consequence_causal}
        -2w_1w_2 \leq 0.
    \end{align}
    Now note that 
    \begin{align*}
        \langle w, \partial_U  \rangle_g = -\frac{w_2}{P^2}.
    \end{align*}
    Suppose $w_2 <0$. Then  \eqref{consequence_of_future_dir} yields that $w_1 >-w_2 >0$, forcing $-2w_1w_2 > 0$, thus contradicting \eqref{consequence_causal}. Hence $w_2 \geq 0$, showing $\langle w, \partial_U  \rangle_g \leq 0$. The argument for $\langle w, \partial_V \rangle_g$ is completely similar.
\end{proof}

\begin{proposition}[Impulsive gravitational waves are timelike non-branching]\label{prop-non-bra}
Every impulsive gravitational wave in the continuous form \eqref{nonsmooth_metric} (with arbitrary cosmological constant $\Lambda\in\R$)  is timelike non-branching. 
\end{proposition}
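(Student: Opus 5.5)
We reduce timelike non-branching to a uniqueness statement for $g$-geodesics. Recall that timelike non-branching means: if two timelike geodesics $\gamma,\sigma\in\mathrm{TGeo}$ agree on some interval $[0,t_0]$ with $t_0>0$, then they agree on $[0,1]$. The metric \eqref{nonsmooth_metric} is smooth, in fact $C^{1,1}_{loc}$ given $\hh\in C^{3,1}_{loc}$, on each of the open half-spaces $\{U<0\}$ and $\{U>0\}$. It is only Lipschitz across the null hypersurface $\{U=0\}$.

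The first step is to show that a timelike maximiser $\gamma$ crosses $\{U=0\}$ at most once and transversally. By Proposition \ref{uv_monotone_along_causal_curves}, $-g(\dot\gamma,\partial_V)/P^{-2}=\dot\gamma^U\ge 0$ along any future directed causal curve, so $U$ is nondecreasing. For a timelike vector we have $-2w_1w_2<0$, hence $w_1,w_2>0$. Thus along a timelike curve, when the parametrisation is by proper time and the speed is a.e. timelike, $U$ is strictly increasing. This uses that maximisers have a causal character, which we establish first, e.g. via the smooth half-space pieces and the Filippov/regularisation results of \cite{podolsky2014global, SSLP:16}. Consequently $\gamma^{-1}(\{U=0\})$ is a single parameter $t_*$.

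On each open interval where $U\circ\gamma\neq0$, the curve $\gamma$ is a maximiser for a locally $C^{1,1}$ metric. Hence it solves the geodesic equation, and it is uniquely determined by its position and velocity at any interior time, by Picard–Lindelöf (the Christoffel symbols are locally Lipschitz there).

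Now let $\gamma,\sigma$ be timelike geodesics agreeing on $[0,t_0]$. If $t_*\notin(0,1)$ or $t_*>t_0$, ODE uniqueness propagates the agreement up to $t_*$ (or to $1$). The key step is crossing the impulse. At $t_*$ both curves have the same position, and they have the same one-sided velocity from the left. We then need the outgoing data to be determined. Here the Rosen form helps: the metric is $C^{0,1}$, so the Christoffel symbols are $L^\infty_{loc}$ with only jump discontinuities in $U$. The geodesic equation is therefore a Carathéodory ODE whose right-hand side is bounded, so $\dot\gamma$ is Lipschitz and continuous through $t_*$. Using $U$ as the parameter (possible since $\dot\gamma^U>0$), the remaining equations for $(V,Z,\bar Z)$ and their derivatives have right-hand sides that are Lipschitz in the state and merely measurable (piecewise continuous) in $U$. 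Carathéodory uniqueness then applies, so the full geodesic is unique given the data at any point. The $\dot V$-equation can alternatively be replaced by the first integral $g(\dot\gamma,\dot\gamma)=-\tau^2$, which is Lipschitz in the state as well.

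The main obstacle is justifying that a synthetic maximiser, which is only known to be a Lipschitz causal curve maximising $L_g$, actually satisfies this weak geodesic equation in the sense of a $C^{1}$ solution through $\{U=0\}$. For $C^{1,1}$ metrics this is standard, but here the metric is only Lipschitz. I would invoke the known result that for the Rosen form geodesics are $C^1$ and are limits of geodesics of regularisations (\cite{podolsky2014global}, \cite{SSLP:16}), combined with the fact that a maximiser restricted to each half-space is a smooth geodesic. The remaining point is matching the first variation across $U=0$: a maximiser must satisfy the refraction condition that $\dot\gamma$ is continuous there. We can verify this directly by a local variation argument, perturbing the crossing point within $\{U=0\}$ and using that the metric is continuous, so the one-sided first variations must cancel. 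Once $C^1$ matching is established, uniqueness follows as above and gives $\gamma=\sigma$ on $[0,1]$, which proves the proposition.
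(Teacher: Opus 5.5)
Your proposal is correct and has the same skeleton as the paper's proof: timelike maximisers are $C^1$ geodesics, and geodesics continue uniquely through $\{U=0\}$. You establish the key regularity step by a different, more hands-on route.

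The paper cites \cite[Thm.~1.1]{LLS:21}: in any spacetime with Lipschitz metric, maximisers are, after reparametrisation, $C^{1,1}$ Filippov solutions of the geodesic equation. It then cites \cite[Thm.~3.3]{podolsky2014global} for the unique crossing of such solutions.

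You instead exploit the special structure of the metric:
\begin{itemize}
\item it extends $C^{1,1}$ across $\{U=0\}$ from each side ($g^-$ the background, $g^+$ the Rosen formula with $U^+$ replaced by $U$);
\item $U$ is strictly increasing along timelike curves, so there is exactly one transversal crossing;
\item a Snell-type first variation at the crossing gives $C^1$ matching.
\end{itemize}
Once the matching is known, uniqueness is elementary and Carath\'eodory theory is unnecessary: position and velocity at $t_*$ determine the continuation as the $g^+$-geodesic. Your route is self-contained but tailored to a single null discontinuity hypersurface; the paper's is a two-line reduction to general results.

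Two remarks on the citations in your sketch. The results of \cite{podolsky2014global, SSLP:16} concern solutions of the geodesic equation, not maximisers, so it is your variation argument that carries the proof. And the causal character of $\mathrm{TGeo}$ curves needs no Filippov theory: $L_\tau=L_g$ gives $\sqrt{-g(\dot\gamma,\dot\gamma)}=\tau(\gamma_0,\gamma_1)$ a.e., or you can cite \cite{GL:18}.

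One point in the refraction step must be made explicit, because $\{U=0\}$ is a \emph{null} hypersurface. Vary $\gamma$ by $s\phi(t)w$ with $w\in T\{U=0\}$. Then $w^U=0$, so the crossing time stays $t_*$, and the varied curves remain timelike for small $s$. Maximality gives only $g(\dot\gamma^+-\dot\gamma^-,w)=0$ for all such $w$, that is, $\dot\gamma^--\dot\gamma^+=c\,\partial_V$, because the normal $\partial_V$ is itself tangent to the hypersurface. So the one-sided first variations do not by themselves force $\dot\gamma^-=\dot\gamma^+$.

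You close this with the normalisation. From $g(\dot\gamma^\pm,\dot\gamma^\pm)=-\tau(\gamma_0,\gamma_1)^2$ and $g(\partial_V,\partial_V)=0$ you get $c\,g(\dot\gamma^-,\partial_V)=0$. By transversality, $g(\dot\gamma^-,\partial_V)=-P^{-2}(\dot\gamma^-)^U\neq0$, so $c=0$.
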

\begin{proof}
By \cite[Thm.\ 1.1]{LLS:21} we know that every maximizer has a parametrization as a (Filippov) geodesic and in that parametrization is $\mathcal{C}^{1,1}$. Moreover, we only need to consider timelike maximizers that cross the impulse (otherwise we are in the smooth background spacetime of constant curvature, which is non-branching). By \cite[Thm.\ 3.3]{podolsky2014global} all (causal) geodesics uniquely cross the impulse, hence are non-branching.
\end{proof}

\subsubsection{Global hyperbolicity}

In the results of this subsection it will be useful to use the ``cut-and-paste" formulation of the IPP-wave, see \cite[Ch.\ 20]{GP:09}. This form of the IPP-wave metric is related to the continuous form \eqref{nonsmooth_metric} by a so-called “discontinuous coordinate transformation” \cite{PSSS:19}. Nevertheless, this equivalence can be established in a mathematically rigorous way; see \cite{SSSS:24}.

Recall that in the ``cut-and-paste" form,   the IPP-wave metric on each side of the impulse is given by
\begin{align}\label{eq:IPP-C&P}
        d\Tilde{s}^2 = \frac{2 d \zz d\ccz -2d\uu d \vv}{\big[1+ \frac{1}{6}\Lambda(\zz \ccz - \uu \vv)\big]^2}.
    \end{align}
    We slice the space open along the hypersurface $\{\uu =0\}=: \mathcal{N}$ and identify the point $(\uu = 0^-, \vv, \zz, \ccz)$ with $(\uu = 0^+, \vv-\hh(\zz, \ccz), \zz, \ccz)$.

It is a classical fact that the constant curvature space for $\Lambda \geq 0$ is globally hyperbolic (see for instance \cite{harris1982triangle}). In the case $\Lambda <0$ one only gets compact diamonds of points $p, q$ if $\tau(p, q) < \sqrt{\frac{\pi}{-\Lambda}}$. When turning to the IPP-wave, the property of global hyperbolicity depends on the profile function $\hh$, cf.\ \cite{flores2006geometry} for a corresponding result in the smooth setting. In particular, a key assumption is that $\hh$ is subquadratic, i.e., 
\begin{equation}\label{eq:hhSubQuad}
\lim_{|Z|\to \infty} \frac{|\hh(Z, \cz)|}{|Z|^2} = 0.
\end{equation}

\begin{proposition}\label{non-smooth_metric_globally_hyperbolic}
    Let $\Lambda \geq 0$ and $\hh$ be subquadratic, i.e., \eqref{eq:hhSubQuad} holds. Then $(M_\Lambda, g)$ is globally hyperbolic. 
\end{proposition}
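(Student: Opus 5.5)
Using the ``cut-and-paste'' description \eqref{eq:IPP-C&P}, I would reduce global hyperbolicity of $(M_\Lambda,g)$ to two properties: compactness of causal diamonds $J^+(p)\cap J^-(q)$ and non-total imprisonment. In the region $\{U<0\}$, and also in $\{U>0\}$ after the discontinuous transformation of \cite{PSSS:19,SSSS:24}, the metric is the constant-curvature background with $\Lambda\ge0$, i.e.\ Minkowski or de~Sitter, which is globally hyperbolic \cite{harris1982triangle}. A causal curve meets the impulse $\mathcal N=\{U=0\}$ at most once, because $U$ is nondecreasing along future directed causal curves by Proposition \ref{uv_monotone_along_causal_curves}. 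Hence every causal curve from $p$ to $q$ splits into a background causal curve from $p$ to a point $r\in\mathcal N$, followed by a background causal curve from the shifted point $r'=(0,\vv-\hh(\zz,\ccz),\zz,\ccz)$ to $q$. The continuous (Rosen) form is then used only to transport topological statements, since the two forms are related by a homeomorphism away from, and continuous across, the impulse in the appropriate sense.

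\textbf{Non-total imprisonment.} On compact sets I would argue directly in the Rosen form. The function $U+V$ is a time function for the orientation $T=\partial_U+\partial_V$, and Proposition \ref{uv_monotone_along_causal_curves} shows that both $U$ and $V$ are nondecreasing along causal curves. The causality condition then bounds $|\dot x|,|\dot y|$ by $C(\dot U+\dot V)$ on compacts, because the transverse block of the metric is uniformly positive definite there. This bounds the $h$-length of any causal curve in a compact set by a constant times the oscillation of $U+V$.

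\textbf{Compact diamonds.} Take $p$ with $U<0$ and $q$ with $U>0$; the other cases are pure background. The diamond is then
\[
\bigcup_{r} \Big(J^+_{bg}(p)\cap J^-_{bg}(r)\Big)\cup\Big(J^+_{bg}(r')\cap J^-_{bg}(q)\Big),
\]
where $r$ ranges over $A:=J^+_{bg}(p)\cap\mathcal N$ subject to the condition $r'\in J^-_{bg}(q)$. The key step is to show that this set of admissible $r$ is compact. Closedness follows from the continuity of $\hh$ together with the closedness of background causal relations. Boundedness is where subquadraticity \eqref{eq:hhSubQuad} enters. For Minkowski, $r=(0,\vv,Z)\in J^+(p)$ forces $\vv\ge V_p+\frac{|Z-Z_p|^2}{-U_p}$, while $r'\in J^-(q)$ forces $\vv-\hh(Z)\le V_q-\frac{|Z-Z_q|^2}{U_q}$. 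Combining the two gives
\[
\frac{|Z-Z_p|^2}{|U_p|}+\frac{|Z-Z_q|^2}{U_q}\le V_q-V_p+\hh(Z),
\]
and since $|\hh(Z)|=o(|Z|^2)$ this bounds $|Z|$, hence $\vv$. For $\Lambda>0$ I would run the same argument in the conformally flat de~Sitter coordinates, where the light cones restricted to $\mathcal N$ are again paraboloid-type quadrics in $(\vv,Z)$ with quadratic growth in $|Z|$, modulo the conformal factor.

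Given compactness of the admissible $r$, a limit curve argument \cite{minguzzi2008limit} (or a direct union-of-compacta argument) yields compactness of the diamond. The main obstacle I anticipate is the de~Sitter case: its null cones are not globally parabolic in these coordinates, and points with $1+\frac\Lambda6(Z\cz-UV)=0$ lie at conformal infinity. I would handle this by working in the five-dimensional hyperboloid embedding, where the impulse is a null hyperplane section, and verifying there that the quadratic lower bound still dominates $\hh$.
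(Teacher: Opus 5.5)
Your $\Lambda=0$ argument is essentially the paper's. You split causal curves at the impulse in the cut-and-paste picture and use subquadraticity to bound the admissible crossing points on $\mathcal N$; your inequality is exactly the paper's claim that the $V$-interval is empty for large $|Z|$. Your non-total-imprisonment argument, via monotonicity of $U,V$ and the uniformly positive transverse block, is correct and supplies a point the paper does not spell out. Two small corrections there:
\begin{itemize}
\item $U(p)=0$ or $U(q)=0$ is not ``pure background''. The paper replaces $p$ by some $p'\in I^-(p)$ with $U(p')<0$, resp.\ $q$ by some $q'\in I^+(q)$.
\item A causal curve can run along a null generator of $\mathcal N$ rather than meet it once. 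This is harmless, since you can split at the last point on $\mathcal N$.
\end{itemize}

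The genuine gap is the case $\Lambda>0$, and your diagnosis of it is off. In the cut-and-paste form the background is a conformal rescaling of $2\,d\zz\,d\ccz-2\,d\uu\,d\vv$. A conformal factor does not change null cones, so on each side the causal cones are exactly Minkowskian and their traces on $\mathcal N$ are the same paraboloids. The subquadratic estimate and the coordinate bound on the diamond therefore carry over verbatim, and the five-dimensional embedding is unnecessary.

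What your plan never addresses is the actual difficulty. $M_\Lambda$ omits $\{P=0\}$, so it is not closed in $\R^4$, and a coordinate-bounded diamond need not be compact in $M_\Lambda$. You must show that diamonds stay a positive distance from $\{P=0\}$; bounding $|Z|$ in the hyperboloid model does not give this.

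The missing observation, which is how the paper concludes, is the structure of $\{P=0\}=\{\uu\vv=\frac{6}{\Lambda}+|\zz|^2\}$. It consists of two acausal sheets $S^-\subset\{\uu<0\}$ and $S^+\subset\{\uu>0\}$. Both are disjoint from the impulse, where $P=1+\frac{\Lambda}{6}|\zz|^2>0$. They cut $M_\Lambda$ into three components, and within a component the diamond coincides with the warped flat one.

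To make the last step quantitative, put $t=(\uu+\vv)/\sqrt2$ and $w=(\uu-\vv)/\sqrt2$. Then $S^\pm=\{t=\pm\sqrt{w^2+2|\zz|^2+12/\Lambda}\}$, and $F_\pm:=t\mp\sqrt{w^2+2|\zz|^2+12/\Lambda}$ are time functions on the respective halves. Take $p\in\{\uu<0\}$ and $q\in\{\uu>0\}$ in the middle component.
\begin{itemize}
\item The part of the diamond in $\{\uu\le0\}$ satisfies $F_-\ge F_-(p)>0$.
\item The part in $\{\uu\ge0\}$ satisfies $F_+\le F_+(q)<0$.
\end{itemize}
Together with the flat bound, this keeps the diamond in a compact subset of $M_\Lambda$.
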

\begin{proof}
 Recall the ``cut-and-paste" form of the IPP-wave metric \eqref{eq:IPP-C&P}, with the identification of the point $(\uu = 0^-, \vv, \zz, \ccz)$ with $(\uu = 0^+, \vv-\hh(\zz, \ccz), \zz, \ccz)$.   For any point $p=(\uu_p, \vv_p, \zz_p, \ccz_p)$ with $\uu_p < 0$, we have that 
    \begin{align*}
        J^+(p) \cap \mathcal{N} = \{(0^-, V, Z, \cz): -\uu_p (V-\vv_p) \geq |Z-\zz_p|^2\}.
    \end{align*}
    Then for a point $q=(\uu_q, \vv_q, \zz_q, \ccz_q)$ with $\uu_q > 0$, we similarly get that
    \begin{align*}
        J^-(q) \cap \mathcal{N} = \{(0^+, V, Z, \cz): \uu_q (V-\vv_q) \leq -|Z-\zz_q|^2\}.
    \end{align*}
    Considering the warp, in the Cut-and-Paste-model one gets 
    \begin{align*}
        J^+(p) \cap  J^-(q) \cap \mathcal{N} =  \{(0^+, V, Z, \cz): \uu_q (V-\vv_q) \leq -|Z-\zz_q|^2, -\uu_p (V-\hh(Z, \cz) -\vv_p) \geq |Z-\zz_p|^2\}.
    \end{align*}
    More explicitly, this means 
    \begin{align}\label{interval_for_v}
        &V \leq \frac{-|Z-\zz_q|^2}{\uu_q}+\vv_q \\
        & V \geq \frac{|Z-\zz_p|^2}{-\uu_p} + \vv_p + \hh(Z, \cz), 
    \end{align}
    hence 
    \begin{align}\label{eq:IntervalV}
        V \in \Bigg[\frac{|Z-\zz_p|^2}{-\uu_p} + \vv_p + \hh(Z, \cz),\frac{-|Z-\zz_q|^2}{\uu_q}+\vv_q\Bigg].
    \end{align}
    \textbf{Claim}. The interval in \eqref{eq:IntervalV} is empty for $|Z|$ large enough.\\
    \emph{Proof of the claim.} We will proceed by contradiction and assume that there exist $Z$ with $|Z|$ arbitrarily large, such that the interval is non-empty. This implies that for such $Z$, it holds
    \begin{align*}
        |\hh(Z, \cz)| \geq -\hh(Z, \cz) \geq \frac{|Z-\zz_p|^2}{-\uu_p} + \vv_p-\vv_q +\frac{|Z-\zz_q|^2}{\uu_q}.
    \end{align*}
    For fixed $\uu_p, \uu_q, \vv_p, \vv_q, \zz_p, \zz_q$ as above, and $|Z|$ large enough, there is a constant $\alpha>0$ such that
    \begin{align*}
         \alpha|Z|^2 \leq \frac{|Z-\zz_p|^2}{-\uu_p} + \vv_p-\vv_q +\frac{|Z-\zz_q|^2}{\uu_q}  \leq |\hh(Z, \cz)|. 
    \end{align*}
    But this gives $\limsup_{|Z| \to \infty} \frac{|\hh(Z, \cz)|}{|Z|^2} \geq \alpha > 0$, which contradicts the assumption that $\hh$ is subquadratic. This proves the claim. 
    \smallskip

\textbf{Case $\Lambda=0$}. Thanks to the claim, there exists  $R >0$ such that 
   \begin{align*}
        J^+(p) \cap  J^-(q) \cap \mathcal{N} \subset \{ |Z| \leq R\}.
   \end{align*}
   Hence, as $\hh$ is locally bounded, we get that also that the values in the interval in \eqref{interval_for_v} are bounded. As $U \subset [\uu_p, \uu_q]$, it follows that all coordinates are bounded, and thus the causal diamond is contained in a compact set in $\R^4$.

   In the case $\uu_p=0$ (or $\uu_q=0$), we can  find points $p' \in I^-(p)$ such that $\uu_{p'} <0$ (or $q' \in I^+(q)$ such that $\uu_{q'} >0$) and argue with them instead. This finishes the proof in the case $\Lambda =0$. 
\smallskip

\textbf{Case $\Lambda>0$.}
In the case $\Lambda >0$, note that the zero set of the denominator $[1-\frac{\Lambda}{6}[\uu\vv-\zz \ccz]]^2$ consists of two acausal hypersurfaces $S^- \subset\{\uu < 0\}$ and $S^+ \subset \{\uu >0\}$ on which $\uu$ is bounded away from $0$ and that split the manifold into three open connected components $O_-, O_0, O_+$ given by
   \begin{align*}
       O_- = J^-(S^-),\; O_0 = J^+(S^-) \cap J^-(S^+),\; O_+ = J^+(S^+).
   \end{align*}
   For points $p, q$ that lie in the same connected component, the causal diamond has to stay in that component and hence equals the one from the flat case, which is compact. If $p, q$ lie in different connected components, then no causal curve can connect them, as it would need to cross $S^-$ or $S^+$.
\end{proof}

With similar arguments as in the previous proof, one can show the following result for the $\Lambda<0$ case.
\smallskip

\begin{proposition}\label{small_global_hyperbolicity_negative_cosmo}
    Let $\Lambda <0$ and $\hh$ be subquadratic, i.e.\ \eqref{eq:hhSubQuad} holds. Then for points $p, q\in M$, the causal diamond is compact if
     \begin{itemize}
         \item[$(i)$] $U(p), U(q)\leq 0$ and $\tau(p, q) < \sqrt{\frac{\pi}{-\Lambda}}$,
         \item[$(ii)$] $U(p), U(q)\geq 0$ and $\tau(p, q) < \sqrt{\frac{\pi}{-\Lambda}}$,
         \item[$(iii)$] $U(p)<0<U(q)$ and there exist $\alpha_1, \alpha_2 \geq 0$ such that $\alpha_1+\alpha_2 < \sqrt{\frac{\pi}{-\Lambda}}$ and 
         \begin{align*}
             \{p\}\times (J^+(p) \cap  J^-(q) \cap \mathcal{N})_- \subset \tau_\Lambda^{-1}((0, \alpha_1)), \\
            (J^+(p) \cap  J^-(q) \cap \mathcal{N})_+ \times \{q\} \subset \tau_\Lambda^{-1}((0, \alpha_2)).
         \end{align*}
     \end{itemize} 
\end{proposition}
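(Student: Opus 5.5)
We would follow the strategy of Proposition~\ref{non-smooth_metric_globally_hyperbolic}, working in the ``cut-and-paste'' form \eqref{eq:IPP-C&P}. There the spacetime on each side of $\mathcal N$ is a piece of anti-de Sitter space, glued along $\mathcal N$ with the shift $\vv\mapsto\vv-\hh(\zz,\ccz)$. The plan is to reduce each case to two facts. The first is the known compactness of causal diamonds in the AdS background between points with $\tau_\Lambda<\sqrt{\pi/(-\Lambda)}$. The second is the subquadratic control of the slice $J^+(p)\cap J^-(q)\cap\mathcal N$ provided by \eqref{eq:hhSubQuad}.

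\textbf{Cases (i) and (ii).} Suppose $U(p),U(q)\le 0$. We first argue that every causal curve from $p$ to $q$ stays in $\{U\le 0\}$. By Proposition~\ref{uv_monotone_along_causal_curves}, $U$ is monotone nondecreasing along future-directed causal curves, so such a curve cannot leave $\{U\le0\}$ and come back. Hence the diamond lies in the region where the metric coincides with the AdS background, and $\tau$ there equals $\tau_\Lambda$. Since $\tau(p,q)<\sqrt{\pi/(-\Lambda)}$, the classical result gives compactness. The boundary case $U=0$ uses that $\mathcal N$ is contained in the closure of the smooth region and the identification is continuous. Case (ii) is symmetric.

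\textbf{Case (iii).} Write the diamond as the union of three pieces:
\[
\bigl(J^+(p)\cap J^-(\mathcal N_-)\bigr),\qquad J^+(p)\cap J^-(q)\cap\mathcal N,\qquad \bigl(J^+(\mathcal N_+)\cap J^-(q)\bigr),
\]
where the first and last are restricted to points that are causally related to the slice. We proceed in three steps.
\begin{enumerate}
\item Show the slice is compact. As in the Claim of Proposition~\ref{non-smooth_metric_globally_hyperbolic}, the AdS null cone intersections with $\mathcal N$ force
\[
V\ \ge\ (\text{quadratic in }|Z|)+\hh \qquad\text{and}\qquad V\ \le\ -(\text{quadratic in }|Z|),
\]
up to the conformal factor. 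Subquadraticity of $\hh$ then bounds $|Z|$, and with it $V$.
\item Handle the past piece. The points $p$ and $x\in(J^+(p)\cap J^-(q)\cap\mathcal N)_-$ satisfy $\tau_\Lambda(p,x)<\alpha_1<\sqrt{\pi/(-\Lambda)}$. So the union of the AdS diamonds $J^+(p)\cap J^-(x)$ over the compact slice is a union of diamonds in the ``good'' regime. We get compactness of this union by a limit-curve argument: take a sequence of causal curves, apply \cite[Thm.~3.1]{minguzzi2008limit}, and use that the strict bound $\alpha_1<\sqrt{\pi/(-\Lambda)}$ rules out curves escaping to conformal infinity.
\item Handle the future piece with $\alpha_2$ in the same way. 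Finally, observe that any causal curve from $p$ to $q$ crosses $\mathcal N$ exactly once, since $U$ is monotone, and does so through the slice. Hence the diamond is the union of the three pieces and is compact.
\end{enumerate}

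The main obstacle is step 2. Uniform compactness of the family of AdS diamonds $J^+(p)\cap J^-(x)$, with $x$ ranging over a compact set, requires the strict inequality and some care near conformal infinity, where the denominator $1+\frac{\Lambda}{6}(\zz\ccz-\uu\vv)$ vanishes. The first thing to try is to exploit the continuity of $\tau_\Lambda$ on the relevant region together with the closedness of $\le$ in the background.
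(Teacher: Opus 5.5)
\textbf{The gap is in case (iii), Step~1.} Your cases (i)--(ii) are the reduction the paper has in mind; its proof is only a pointer to Proposition~\ref{non-smooth_metric_globally_hyperbolic}. Like the paper, they rest on the background AdS fact. The real problem in (iii) is Step~1, not Step~2 as you suspect.

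\textbf{Why Step~1 fails.} Subquadraticity of $\hh$ only bounds the slice $S=J^+(p)\cap J^-(q)\cap\mathcal N$ in coordinates, and for $\Lambda<0$ boundedness is not the issue. On $\mathcal N$ the conformal factor is $P=1+\frac{\Lambda}{6}|Z|^2$. It vanishes on the circle $|Z|^2=6/|\Lambda|$, at finite coordinate distance, and $J^+(p)\cap\mathcal N$ reaches that circle. Indeed, if $P(p)>0$ the slice lies in the disc $|Z|^2<6/|\Lambda|$ whatever $\hh$ is, so subquadraticity adds nothing. Compactness in $M_\Lambda$ needs $S$ to stay away from $\{P=0\}$, and the bounds $\tau_\Lambda<\alpha_i$ do not give this.

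\textbf{A counterexample to Step~1.} In cut-and-paste coordinates take $p=(U,V,Z)=(-1,0,0)$ and $q=(1,0,0)$. Write $V_-$ for the coordinate on the minus side. The Minkowski slice is then $\{|Z|^2\le V_-\le \hh(Z)-|Z|^2\}$. On the disc choose $\hh=2|Z|^2+\e\,(1+\frac{\Lambda}{6}|Z|^2)^2$, and cut it off outside the disc so that $\hh$ is subquadratic.
\begin{itemize}
\item The slice becomes a layer of $V$-thickness $\e P^2$ over the whole disc. For such a thin layer the straight segments from $p$ and to $q$ stay in $\{P>0\}$, so the layer is exactly $S$.
\item For $x\in S$ the Minkowski time separations of $(p,x)$ and $(x,q)$ are at most $\sqrt{2\e}\,P(x)$.
\item On the corresponding thin Minkowski diamonds one has $P\ge P(x)/2$.
\item Hence $\tau_\Lambda(p,x)\le 2\sqrt{2\e}$ and $\tau_\Lambda(x,q)\le 2\sqrt{2\e}$.
\end{itemize}
So every hypothesis of (iii), read as you read it, holds with $\alpha_1+\alpha_2$ as small as you like. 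Yet $S\subset J^+(p)\cap J^-(q)$ accumulates at points with $P=0$, which are not in $M_\Lambda$, so the diamond is not compact. This is also why $\alpha_1+\alpha_2<\sqrt{\pi/(-\Lambda)}$, which your argument never uses, cannot rescue the step.

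\textbf{The literal reading.} Read literally, the hypothesis requires $\tau_\Lambda>0$ on the slice, and then it is vacuous. Sliding a slice point down along $\partial_V$ inside $\mathcal N$ produces a slice point on $\partial J^+(p)$, where $\tau_\Lambda(p,\cdot)=0$. So the hypothesis forces $S=\emptyset$. Either way, (iii) needs an extra assumption, for instance $\overline S\subset M_\Lambda$.

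\textbf{Step~2 given a compact slice.} Once $S$ is compact in $M_\Lambda$, Step~2 is the easy part and needs no limit-curve argument.
\begin{itemize}
\item For each $x\in S$, the compact background diamond $J^+_\Lambda(p)\cap J^-_\Lambda(x)$ equals the Minkowski diamond $J^+_m(p)\cap J^-_m(x)\subset\{P>0\}$, by Lemma~\ref{lem:DiamDS}.
\item Minkowski diamonds depend upper semicontinuously on their endpoints.
\item A finite cover of $S$ therefore puts the whole union inside a compact subset of $\{P>0\}$.
\end{itemize}
Finally, your heuristic that the strict length bound ``rules out curves escaping to conformal infinity'' is not the mechanism, since escaping curves can be null. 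What does the work is compactness of the individual diamonds together with compactness of $S$.
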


\subsection{Smooth approximations of the continuous IPP-wave metric}\label{SS:SmoothApproxIPP}

The goal of this subsection is to construct smooth metrics $g_\e$ that approximate locally uniformly the continuous IPP-wave $g$ as in \eqref{nonsmooth_metric},  so that the Ricci curvature converges locally uniformly in the following sense.
Recall that by section \ref{subsec:distri_ricci_curv}, $\Ric[g]$ is a measure. 
Write $\Ric[g] = \Ric[g]^{a} + \Ric[g]^s$, where $\Ric[g]^{a}$ denotes the absolutely continuous part with respect to the volume measure and $\Ric[g]^{s}$ denotes the singular part. We construct $g_\e$ such that the Ricci curvature tensor can be decomposed into $\Ric[g_\e] = \Ric[g_\e]^{(a)} + \Ric[g_\e]^{(s)}$ such that 
\begin{itemize}
    \item[(i)]  $\Ric[g_\e]^{(a)} \to \Ric[g]^{a}$ locally uniformly (when seen as $L^\infty$-functions).
    \item[(ii)]  $\Ric[g_\e]^{(s)} \to \Ric[g]^{s}$ distributionally. 
    \item[(iii)] If $\Ric[g]^{s}$ is non-negative/non-positive, then for all $\e >0$, $\Ric[g_\e]^{(s)}$ is non-negative/non-positive.
\end{itemize}
 We quickly explain the strategy to achieve that. In a first step,  we will naively replace every $U^+$ with a $\rho_\e * U^+$ for a symmetric sequence of non-negative standard mollifiers $\rho_\e = \frac{1}{\e}(\rho(\frac{U}{\e}))$ and  compute the Ricci curvature tensor, using the classical equations:
\begin{align}
    &\Gamma_{ij,k} = \frac{1}{2}(\partial_ig_{jk} + \partial_jg_{ik} - \partial_kg_{ij}), \label{connection_first_kind} \\
    &R_{ijkl} = \partial_i \Gamma_{jk, l} -\partial_j \Gamma_{ik, l} + g^{st}(\Gamma_{ik, s}\Gamma_{jl, t}-\Gamma_{il, s}\Gamma_{jk, t}), \label{low_index_riem_curvature} \\
    &\Ric_{ij} = g^{kl}R_{kijl}. \label{ricci_coefficients_formula}
\end{align}

In order to discuss the behavior of the Levi--Civita connection symbols and the Riemannian and Ricci curvature coefficients, we need to examine products of the derivatives of $\rho_\e * (U^+)$ and compare them to the ones of $U^+$ on some fixed compact set $\mathcal{K}$. Recall that  $\partial_U U^+ = \Theta$ and $\partial_U \Theta = \delta_0$ in $\mathcal{D}'(\R)$. More precisely, if we consider $U^+, \Theta$ as a function of $U, V, Z, \cz$, then $\partial_U \Theta = \mathcal{S}^3\LL \{U=0\} = \delta_0 \times \mathcal{L}^3 \in \mathcal{D}'(\R^4)$. It follows that 
\begin{align*}
    & U^+ \partial_U\Theta = 0 \ \mathrm{in}\ \mathcal{D}' \\
    &\Theta^2 = \Theta \in L^\infty, \\
    &(\Theta)^k (U^+)^l = (U^+)^l\ \in C^0\ \mathrm{for\ integers}\ k, l \geq 1, \\
    & (U^+)^k (U^+)^l = (U^+)^{k+l}\ \in C^0\ \mathrm{for\ integers}\ k, l \geq 1 .
\end{align*}
These equations are essential in the computation of the distributional Ricci curvature of $g$, as they ensure that various terms cancel out. 
Note that $\partial_U \rho_\e * U^+ = \rho_\e * \Theta$ and $\partial_U \rho_\e * \Theta = \rho_\e$. We now compare the regularized and non-smooth objects. While some quantities exhibit good convergence properties, others require additional care. We begin with those that behave well.
For all integer $k, l \geq 1$ and compact subsets  $\kk$, there exists a constant $C=C(\kk, k,l)>0$ independent of $\e$ such that 
\begin{align}\label{good_convergence}
    &|\rho_\e \cdot (\rho_\e * U^+)^{k+1}| \leq C\e^k \nonumber\\
    &|\rho_\e \cdot (\rho_\e * (U^+)^{k+1})| \leq C\e^k  \nonumber \\
   &|(\rho_\e*\Theta)^k (\rho_\e * U^+)^l - (U^+)^l| + |\rho_\e *(U^+)^l-(U^+)^l| \leq C\e, \nonumber\\
    & |(\rho_\e *U^+)^k (\rho_\e * U^+)^l - (U^+)^{k+l}|+ |\rho_\e *(U^+)^k (\rho_\e * U^+)^l -  (U^+)^{k+l}| + |\rho_\e * (U^+)^{k+l} - (U^+)^{k+l}| \leq C\e,\nonumber\\
    &|\rho_\e * (U^+)^k - (\rho_\e * U^+)^k| \leq C\e,
\end{align}
uniformly as $\e \to 0$ . Instead, $(\rho_\e*U^+)\cdot \rho_\e$ and $(\rho_\e *\Theta)^2 -(\rho_\e *\Theta)$  do not uniformly converge to $0$, since
\begin{align}\label{problematic_convergence}
    & \norm{(\rho_\e*U^+)\cdot \rho_\e} + \norm{U\rho_\e}  = O(1), \nonumber\\
    &\norm{(\rho_\e *\Theta)^2 -(\rho_\e *\Theta)}_{L^\infty} = \frac{1}{4}.
\end{align}
Since terms of the form \eqref{problematic_convergence} require special care, we single them out whenever they arise.

Denote $$\Tilde{g} := P^2 g,$$ and let $\hat{g}_\varepsilon$ be the metric obtained by replacing, coefficient-wise, each occurrence of $U^+$ with its regularization $\rho_\varepsilon * (U^+)$. We then define $\Tilde{g}_\e$ by adding suitable correction terms to the components $\hat{g}^\varepsilon_{34}$ and $\hat{g}^\varepsilon_{43}$, and finally set
\[
g_\varepsilon := \frac{\Tilde{g}_\e}{P_\varepsilon^2}.
\]

In computing the Ricci curvature, following \eqref{connection_first_kind}, \eqref{low_index_riem_curvature}, and \eqref{ricci_coefficients_formula}, we rewrite all expressions over a common denominator and split the analysis into two parts:
\begin{align}
    \Tilde{g}_\e=\underbrace{\hat{g}_\e}_{(i)}+\underbrace{(\Tilde{g}_\e-\hat{g}_\e)}_{(ii)}.
\end{align}

(i) The first part concerns the terms in the numerator arising solely from $\hat{g}_\varepsilon$ and its derivatives. These terms resemble the Ricci tensor of the non-smooth metric, with all potentially distributional factors replaced by their mollified counterparts. They are close to the constant curvature expression, except for a small number of terms of the form \eqref{problematic_convergence}, which do not converge uniformly to zero.

(ii) The second part consists of terms involving $\Tilde{g}_\e - \hat{g}_\varepsilon$ and its derivatives. Most of these terms converge uniformly to zero as $\varepsilon \to 0$, and the remaining ones precisely cancel the non-vanishing contributions of type \eqref{problematic_convergence} identified in (i).

\subsubsection{The case of flat background, $\Lambda =0$}
In this subsection, we consider the case $\Lambda = 0$, for which $P = 1$. The non-smooth metric $g$ writes as
\begin{align*}
     \begin{pmatrix}
    0 & -1 & 0 & 0 \\
    -1 & 0 & 0 & 0 \\
    0 & 0 & 2\partial^2_{Z, Z}\hh U^+ +2\partial^2_{Z,\Bar{Z}}\hh\partial^2_{Z, Z}\hh (U^+)^2 & 1+2\partial^2_{Z,\Bar{Z}}\hh U^+ +((\partial^2_{Z,\Bar{Z}}\hh)^2+|\partial^2_{\cz, \cz}\hh|^2)(U^+)^2 \\
    0 & 0 &  1+2\partial^2_{Z,\Bar{Z}}\hh U^+ +((\partial^2_{Z,\Bar{Z}}\hh)^2+|\partial^2_{\cz, \cz}\hh|^2)(U^+)^2 & 2\partial^2_{\cz, \cz}\hh U^+ +2\partial^2_{Z,\Bar{Z}}\hh \partial^2_{\cz, \cz}\hh(U^+)^2 
    \end{pmatrix}.
\end{align*}
Recall that $\hh = \hh(Z, \cz)$ is a real valued function.  From now on require that 
\begin{equation}\label{eq:hC31}
\hh \in C^{3,1}_{loc}(\C^2).
\end{equation}
Writing $d = \det g[3,4] = g_{33}g_{44}-g_{34}^2$, we get that 
\begin{align*}
    g^{-1} = \begin{pmatrix}
    0 & -1 & 0 & 0 \\
    -1 & 0 & 0 & 0 \\
    0 & 0 & \frac{g_{44}}{d} & -\frac{g_{34}}{d} \\
    0 & 0 &  -\frac{g_{34}}{d}  & \frac{g_{33}}{d} 
    \end{pmatrix}.
\end{align*}
In order to construct a smooth approximation, fix $\rho \in C_c^\infty((-1, 1); [0, \infty))$ such that $\rho(x) = \rho(-x)$, $\int_\R \rho = 1$. For each $\e \in (0, 1)$,  define $\rho_\e: \R \to [0, \infty)$ by $\rho_\e (x) = \frac{1}{\e}\rho(\frac{x}{\e})$. In the following, we will always evaluate $\rho_\e$ at the variable $U$. Recall that $\Theta = \Theta(U)$ is the Heaviside function of $U$. Let us also define some further functions, used later on. Set
\begin{align*}
    \sigma_\e &:= (\rho_\e * \Theta)^2- \rho_\e * \Theta, \\
    \xi_\e(x) &:= \int_{-\infty}^x \sigma_\e(U)dU, \\
    \zeta_\e(x) &:= \int_{-\infty}^x \xi_\e(U) dU.
\end{align*}
We collect some properties of these functions. First, $\sigma_\e \in C_c^\infty((-\e, \e))$, $\sigma_\e \leq 0$, and $\norm{\sigma_\e}_{C^0} = \frac{1}{4}$. 
It follows that $\supp \,\xi_\e \subset (-\e, \infty)$, $\xi_\e \leq 0$ is non-increasing and constant outside $(-\e, \e)$, with $\norm{\xi_\e}_{L^\infty} = \norm{\sigma_\e}_{L^1} \leq \frac{\e}{2}$. 
Moreover, it follows that $\supp\, \zeta_\e \subset (-\e, \infty)$, $\zeta_\e \leq 0$ is non-increasing and $|\zeta_\e(U)|\leq \frac{\e}{2}U^++\frac{\e^2}{2}$.

Define also
\begin{align*}
    \eta_\e(x) &:= \int_{-\infty}^x (\rho_\e * U^+) \rho_\e(U)dU, \\
    \kappa_\e(x) &= \int_{-\infty}^x \eta_\e(U)dU.
\end{align*}
Note that  $|\rho_\e * U^+| \leq 2\e$, for $U \in (-\e, \e)$. It follows that 
$$\supp\, (\rho_\e * U^+)\rho_\e \subset (-\e, \e),\quad  (\rho_\e * U^+)\rho_\e \geq 0,\quad \text{ and } \norm{(\rho_\e * U^+)\rho_\e}_{C^0} \leq 2 \norm{\rho}_{C^0}.$$ 
Therefore, $\supp \, \eta_\e \subset (-\e, \infty)$, $\eta_\e \geq 0$, $\eta_\e$ is non-decreasing, and $\norm{\eta_\e}_{L^\infty} \leq C\e$ for some $C = C(\rho) > 0$. We then get that $\supp \, \kappa_\e \subset (-\e, \infty)$, $\kappa_\e \geq 0$, $\kappa_\e$ is non-decreasing, and $|\kappa_\e(x)| \leq C\e(\e+|U|)$.

Define $$H_\e:= \{\eta_\e, \zeta_\e, \kappa_\e, \xi_\e\},$$ and note that all functions in $H_\e$, as well as $\sigma_\e$ and $\rho_\e (\rho_\e * U^+)$ converge to $0$ in $\mathcal{D}'$ as $\e \to 0$. 

We are now ready to present a smooth approximation of the metric $g$:
\begin{align}\label{flat_apx}
    g_\e = \begin{pmatrix}
    g[1,2] & 0_2 \\
    0_2 & g_\e[3,4]
    \end{pmatrix} \in \C^{4 \times 4}, 
\end{align}
where $0_2 = \begin{pmatrix}
    0 & 0 \\
    0 & 0
    \end{pmatrix} \in \C^{2 \times 2}$, $g[1,2] = \begin{pmatrix}
    0 & -1 \\
    -1 & 0
    \end{pmatrix}\in \C^{2 \times 2}$ and $g_\e[3,4] = \begin{pmatrix}
    {g^{\e}}_{33} & {g^{\e}}_{34} \\
    {g^{\e}}_{43} & {g^{\e}}_{44}
    \end{pmatrix}\in \C^{2 \times 2}$. Here, 
\begin{align*}
    {g^{\e}}_{33} &=  2\partial^2_{Z, Z}\hh \rho_\e * U^+ +2\partial^2_{Z,\Bar{Z}}\hh\partial^2_{Z, Z}\hh \rho_\e *(U^+)^2, \\
     {g^{\e}}_{44} &=  2\partial^2_{\cz, \cz}\hh \rho_\e * U^+ +2\partial^2_{Z,\Bar{Z}}\hh\partial^2_{\cz, \cz}\hh \rho_\e *(U^+)^2, \\
     {g^{\e}}_{43}= {g^{\e}}_{34} &= 1+2\partial^2_{Z,\Bar{Z}}\hh\rho_\e *U^+ +((\partial^2_{Z,\Bar{Z}}\hh)^2+|\partial^2_{\cz, \cz}\hh|^2)\rho_\e *(U^+)^2 \\
     &+4((\partial^2_{Z, \cz}\hh)^2 + |\partial^2_{\cz, \cz}\hh|^2)\kappa_\e +2((\partial^2_{Z,\Bar{Z}}\hh)^2+|\partial^2_{\cz, \cz}\hh|^2)\zeta_\e.
\end{align*}
\begin{proposition}\label{prop:ApproxFlat}
    Assume that $\hh \in C^{3,1}_{loc}(\C^2)$ satisfies $\partial^2_{Z, \cz}\hh \leq 0$. Consider the  the metrics $g_\e$  defined in \eqref{flat_apx}. Fix a compact set $\mathcal{K} \subset M_0:=\domlor{g}$.
    Then
    \begin{itemize}
    \item $g_\e$ converge uniformly to $g$ on $\mathcal{K}$;
    \item there exist constants $c_\e>0$, with $c_\e \to 0$ as $\e \to 0$, such that  $\Ric_{g_\e} \geq -c_\e$.
    \end{itemize}
\end{proposition}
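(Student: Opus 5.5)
The plan is to treat the two claims separately. The first is soft and the second requires an explicit computation of the Ricci tensor of $g_\e$.

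\textbf{Uniform convergence.} I would compare $g_\e[3,4]$ with the block $g[3,4]$ coefficient by coefficient on $\mathcal{K}$.
\begin{itemize}
\item The terms obtained by replacing $U^+$ and $(U^+)^2$ with $\rho_\e*U^+$ and $\rho_\e*(U^+)^2$ differ from the originals by $O(\e)$, by \eqref{good_convergence}.
\item The coefficients $\partial^2\hh$ are continuous, hence bounded on the compact projection of $\mathcal{K}$.
\item The correction terms satisfy $|\kappa_\e|\le C\e(\e+|U|)$ and $|\zeta_\e|\le \frac{\e}{2}U^++\frac{\e^2}{2}$, which are $O(\e)$ uniformly on $\mathcal{K}$.
\end{itemize}
Hence $\|g_\e-g\|_{C^0(\mathcal{K})}\le C\e$. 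Since $\mathcal{K}\Subset\domlor(g)$, the block $g_\e[3,4]$ stays positive definite with determinant bounded away from zero for small $\e$. Consequently $g_\e^{-1}\to g^{-1}$ uniformly, and $g_\e$ is Lorentzian on $\mathcal{K}$.

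\textbf{Structure of the Ricci tensor.} Because $g[1,2]$ is constant, $\partial_V$ is parallel, so every Christoffel symbol with a $V$-index vanishes. Since $\partial_V$ is the null direction dual to $-dU$, the only curvature contributions come from $U$-derivatives and $Z,\cz$-derivatives of the $2\times2$ block.

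I would compute $\Ric_{g_\e}$ via \eqref{connection_first_kind}–\eqref{ricci_coefficients_formula}, putting everything over the common denominator $d_\e=\det g_\e[3,4]$. I would then split each component as in (i)/(ii) of the overview. The components $\Ric_{ab}$ with $a,b\in\{3,4\}$, $\Ric_{1a}$, $\Ric_{22}$, $\Ric_{2a}$ involve at most first $U$-derivatives, that is $\rho_\e*\Theta$, multiplied by powers of $\rho_\e*U^+$.
\begin{itemize}
\item When $\Lambda=0$ the corresponding distributional components vanish identically (they cancel algebraically using $\Theta^kU^+{}^l=U^+{}^l$).
\item In the mollified version the same cancellations occur up to the errors listed in \eqref{good_convergence}.
\item By \eqref{eq:hC31}, the $Z,\cz$-derivatives of up to fourth order of $\hh$ are bounded on $\mathcal{K}$.
\end{itemize}
Therefore all of these components are $O(\e)$ uniformly on $\mathcal{K}$. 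This is the part $\Ric[g_\e]^{(a)}\to\Ric[g]^a=0$.

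\textbf{The component $\Ric_{11}$.} This is the main obstacle. It contains $\partial_U^2 g^\e_{ab}$ and quadratic terms $g_\e^{ab}g_\e^{cd}\partial_Ug^\e_{ac}\partial_Ug^\e_{bd}$.

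First I would expand the second derivatives. $\partial_U^2(\rho_\e*U^+)=\rho_\e$ and $\partial_U^2(\rho_\e*(U^+)^2)=2\rho_\e*\Theta$. This produces the principal term $-2\partial^2_{Z,\cz}\hh\,\rho_\e$ (up to an $O(\e)$ relative error coming from $d_\e\to1$ near $U=0$), together with terms of the problematic types \eqref{problematic_convergence}, namely $(\rho_\e*U^+)\rho_\e$ and $(\rho_\e*\Theta)^2-\rho_\e*\Theta=\sigma_\e$. These do not tend to zero uniformly, and the correction terms are designed to handle them. Since $\partial_U^2\kappa_\e=(\rho_\e*U^+)\rho_\e$ and $\partial_U^2\zeta_\e=\sigma_\e$, the coefficients $4((\partial^2_{Z,\cz}\hh)^2+|\partial^2_{\cz\cz}\hh|^2)$ and $2((\partial^2_{Z,\cz}\hh)^2+|\partial^2_{\cz\cz}\hh|^2)$ must match exactly the coefficients of these products in $\Ric_{11}$ computed from $\hat g_\e$.

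The step I expect to be hardest is the bookkeeping check that this matching holds exactly. I would carry it out by evaluating at $U$ near $0$, where $d_\e=-1+O(\e)$ and the off-diagonal entry is $1+O(\e)$, and collecting the coefficient of each problematic product. The corrections' own first derivatives $\eta_\e,\xi_\e=O(\e)$ and their products only give $O(\e)$ errors. The outcome should be
\[
\Ric_{11}[g_\e]=-2\partial^2_{Z,\cz}\hh\,\rho_\e\,(1+O(\e))+O(\e),
\]
and the first term is $\ge0$ because $\partial^2_{Z,\cz}\hh\le0$ and $\rho_\e\ge0$.

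\textbf{Conclusion.} For a $g_\e$-timelike $v$ I would write $\Ric_{g_\e}(v,v)=\Ric_{11}v_1^2+R_\e(v,v)$, where $|R_\e(v,v)|\le C\e|v|_h^2$. Normalising with respect to $h$, or to $|g_\e(v,v)|$ via Proposition \ref{uniform_bound_prosuct_with_time_or}, then gives $\Ric_{g_\e}\ge -c_\e$ with $c_\e=C\e\to0$.
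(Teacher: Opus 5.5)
Your proposal is correct and is essentially the paper's proof. It follows the same steps: uniform convergence from the convolution estimates and the bounds on $\kappa_\e,\zeta_\e$; localisation of the non-uniformly convergent terms $\rho_\e$, $(\rho_\e*U^{+})\rho_\e$ and $\sigma_\e$ to $\Ric^\e_{11}$ (through $g^{34}_\e R^\e_{3114}$, $g^{33}_\e R^\e_{3113}$, $g^{44}_\e R^\e_{4114}$); and their cancellation via $\partial_U^2\kappa_\e=(\rho_\e*U^{+})\rho_\e$ and $\partial_U^2\zeta_\e=\sigma_\e$, leaving $\Ric^\e_{11}=-2\partial^2_{Z,\cz}\hh\,\rho_\e+o(1)$.

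Two cautions on the parts you left open. First, in the deferred bookkeeping, an $O(\e)$ relative error in front of $\rho_\e$ is an $O(1)$ term of type $(\rho_\e*U^{+})\rho_\e$, so using only $d_\e\approx-1$ and $\hat g^\e_{34}\approx 1$ will not produce the exact match. The paper keeps it via $1/d_\e^2=1-8\partial^2_{Z,\cz}\hh\,\rho_\e*U^{+}+o(\e)$, which turns the coefficient $-12(\partial^2_{Z,\cz}\hh)^2$ into the $+4(\partial^2_{Z,\cz}\hh)^2$ matched by $\kappa_\e$. Your multiplicative form $-2\partial^2_{Z,\cz}\hh\,\rho_\e(1+O(\e))$ absorbs it equally well, since it is proportional to $\partial^2_{Z,\cz}\hh\,\rho_\e$. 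Second, Proposition~\ref{uniform_bound_prosuct_with_time_or} bounds $|g_\e(X,v)|$, not $|g_\e(v,v)|$, so only your $h$-normalised form of the final bound is justified.
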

\begin{proof}
{The uniform convergence of $g_\e$ to $g$ on $\mathcal{K}$ follows from the properties of the convolution and the properties of $\zeta_\e, \kappa_\e$.}
Writing 
$$d_\e := \det g_\e[3,4] = g^\e_{33}g^\e_{44}-(g^\e_{34})^2,$$ we get that 
\begin{align}\label{inverse_mollified_metric}
    g_\e^{-1} = \begin{pmatrix}
    0 & -1 & 0 & 0 \\
    -1 & 0 & 0 & 0 \\
    0 & 0 & \frac{g^\e_{44}}{d_\e} & -\frac{g^\e_{34}}{d_\e} \\
    0 & 0 &  -\frac{g^\e_{34}}{d_\e}  & \frac{g^\e_{33}}{d_\e} 
    \end{pmatrix}.
\end{align}
We also define 
\begin{align*}
    {\hat{g}^{\e}}_{33} &=  g^\e_{33} \\
     {\hat{g}^{\e}}_{44} &=  g^\e_{44} \\
     {\hat{g}^{\e}}_{43}= {\hat{g}^{\e}}_{34} &= 1+2\partial^2_{Z,\Bar{Z}}\hh\rho_\e *U^+ +((\partial^2_{Z,\Bar{Z}}\hh)^2+|\partial^2_{\cz, \cz}\hh|^2)\rho_\e *(U^+)^2.
\end{align*}

Note that the matrix of $g_\e$ is symmetric and each entry converges locally uniformly to the corresponding entry of $g$ as $\e \to 0$. Hence $\bigcup_{\e > 0}  \domlor(g_\e) = \domlor{g}= M_0$. For the same arguments as in Proposition \ref{uv_monotone_along_causal_curves}, we get that $g_\e$ is strongly causal. 
The metrics $g, g_\e$ are bounded on $\kk$, as $D^2\hh, D^3\hh$, and $D^4\hh$ are.

Note that 
\begin{align*}
    &\kappa_\e \cdot \rho_\e \to 0,\ \zeta_\e \cdot \rho_\e \to 0,
\end{align*}
uniformly on $\kk$ as $\e \to 0$. Moreover,  for all non-negative integers $k, l, m, p$, it holds
\begin{align*}
    (|\eta_\e|+|\zeta_\e|+|\xi_\e|+|\kappa_\e|)((\rho_\e * \Theta)^l + (\rho_\e * U^+)^k + \rho_\e * (U^+)^m)^p \to 0
\end{align*}
and
\begin{align*}
    (|\eta_\e|+|\zeta_\e|+|\xi_\e|+|\kappa_\e|)^{k+1} \to 0
\end{align*}
uniformly on $\kk$ as $\e \to 0$. 
By the properties of $\kappa_\e$ and $\zeta_\e$, as well as the assumption that $\hh \in C^{3,1}_{loc}$, we get that
\begin{align}\label{bound_on_g_and_gamma_eps_to_hat}
    |\hat{g}^\e_{34} -g^\e_{34}| \leq C\e^2\ \mathrm{for}\ U \leq \e, \ \mathrm{and} \  |\hat{g}^\e_{34} -g^\e_{34}| \leq C(\e^2 + U\e) \leq C\e  \ \mathrm{for}\ U > \e.
\end{align}
Define $\hat{d}_\e := (\hat{g}^\e_{33}\hat{g}^\e_{44} -(\hat{g}^\e_{34})^2)$. Then
\begin{align}\label{bound_on_det_difference}
    |d_\e - \hat{d}_\e| \leq C \e^2 \ \mathrm{for}\ U \leq \e, \ \mathrm{and} \ |d_\e - \hat{d}_\e| \leq C\e \ \mathrm{for}\ U > \e.
\end{align}
We denote by $\Gamma^\e, R^\e$, and $\Ric^\e$ the Levi--Civita connection symbols, curvature coefficients, Ricci curvature coefficients with respect to the metric $g_\e$ and by $\Hat{\Gamma}^\e, \hat{R}^\e, \hat{\Ric}^\e$ the Levi--Civita connection symbols, curvature coefficients, Ricci curvature coefficients with respect to the metric $\hat{g}_\e$.

By definition, we get that for all indices $i, j, k, l$, it holds
\begin{align}\label{decomposition_in_metric_and_balance_terms}
    &g^\e_{ij} = \hat{g}^\e_{ij} + P^\zeta_{ij}(Z, \cz)\zeta_\e(U) + P^\kappa_{ij}(Z, \cz)\kappa_\e(U), \nonumber \\
    &\Gamma_{ij, k}^\e = \hat{\Gamma}^\e_{ij, k} + Q^\xi_{ij,k}(Z, \cz)\xi_\e(U) + Q^\eta_{ij,k}(Z, \cz) \eta_\e(U) + Q^\zeta_{ij,k}(Z, \cz)\zeta_\e(U) + Q^\kappa_{ij,k}(Z, \cz)\kappa_\e(U). 
\end{align}
Here, the functions $P^\zeta_{ij}(Z, \cz), P^\kappa_{ij}(Z, \cz), Q^\zeta_{ij,k}(Z, \cz), Q^\kappa_{ij,k}(Z, \cz), Q^\xi_{ij,k}(Z, \cz),  Q^\eta_{ij,k}(Z, \cz)$ are polynomials in $D^2\hh$ and $D^3\hh$.

As in the non-smooth case, we compute the Ricci curvature coefficients of the metric $g_\varepsilon$ using \eqref{connection_first_kind}, \eqref{low_index_riem_curvature}, and \eqref{ricci_coefficients_formula}. Since the latter two expressions involve the inverse metric, we first isolate the $\rho_\varepsilon$-terms, in view of \eqref{inverse_mollified_metric}. These terms will converge distributionally to the singular part of $\Ric(g)$.
We then rewrite the remaining terms --- which converge to the absolutely continuous part of $\Ric(g)$ --- over the common denominator $d_\varepsilon^2$, allowing for a systematic comparison and simplification. In this way, the numerators of the Ricci coefficients can be expressed as sums of products involving $g_\varepsilon$, the Christoffel symbols $\Gamma^\varepsilon$, and their derivatives.

We will use the decomposition from \eqref{decomposition_in_metric_and_balance_terms} and split our analysis into two parts:
\begin{itemize}
    \item[(i)] The contributions in the numerator that only contain terms from $\hat{g}_\e, \hat{\Gamma}^\e$, and $D\hat{\Gamma}^\e$.
    \item[(ii)] The remaining contributions in the numerator.
\end{itemize}
We will start with (i).  Restricting to the terms in the numerator arising from $\hat{g}_\varepsilon$ and its derivatives, one obtains expressions analogous to those in the non-smooth case, the only difference being that all functions of $U$ are replaced by their mollifications with $\rho_\varepsilon$. All terms of the form \eqref{good_convergence} converge uniformly and therefore require no further attention. It remains to identify and analyze the terms of type \eqref{problematic_convergence}.

\textbf{Term 1:} $\rho_\e$ and $\rho_\e \cdot (\rho_\e * U^+)$. \\
We get the term $\rho_\e$ if we differentiate $\hat{g}^\e_{33}, \hat{g}^\e_{34}$, or $\hat{g}^\e_{44}$ twice in the direction of $U$. These derivatives occur in the coefficients $R^\e_{ijkl}$ if two of the indices equal one and the two other indices are in $\{3, 4\}$. 
It follows that only $R^\e_{1313}, R^\e_{1314}, R^\e_{1414}$ (and the coefficients that arise from symmetries) may carry a $\rho_\e$ term. 
Recalling the formula \eqref{ricci_coefficients_formula}, we get that the above Riemannian curvature coefficients are only needed to compute $\Ric^\e_{11}, \Ric_{13}^\e, \Ric_{33}^\e, \Ric_{14}^\e, \Ric_{44}^\e$, and $\Ric_{34}^\e$. Recalling \eqref{inverse_mollified_metric}, we get that the only potentially non-zero contributions are in $\Ric^\e_{11}$ through $g^{34}_\e R^\e_{3114}$, $g^{33}_\e R^\e_{3113}$, and $g^{44}_\e R^\e_{4114}$. Using \eqref{low_index_riem_curvature}, \eqref{ricci_coefficients_formula}, and \eqref{inverse_mollified_metric}, we can extract all terms of the form $f_\e \cdot \rho_\e$, where $\norm{f_\e}_{L^\infty((-\e, \e))} \notin o(\e)$ and denote the remaining terms by $T^\e_{11}$. 
\begin{align*}
    \Ric^\e_{11} &= -2 \partial^2_{Z, \cz}\hh \rho_\e +\frac{1}{(d_\e)^2}(4(\partial^2_{Z, \cz}\hh)^2 + 4|\partial^2_{Z,Z}\hh|^2)\cdot (\rho_\e * U^+)\rho_\e  + T^\e_{11}.
    \end{align*}

\textbf{Term 2:} $\rho_\e* \Theta - (\rho_\e * \Theta)^2$. \\
We get $(\rho_\e * \Theta)$-terms in terms of the form $\partial_{1} g^\e_{ij}$ or $\partial^2_{1,k}g^\e_{ij}$ for $k \in \{1,3,4\}$ and $i,j \in \{3,4\}$. The first ones of these terms can be seen in $\Gamma_{[ij,1]}^\e$, and the second kind of terms can only appear in $R_{[1ijk]}$, where $k \in \{1,3,4\}$ and  $i,j \in \{3,4\}$.
Recalling \eqref{low_index_riem_curvature}, it follows that the only way to get a $(\rho_\e * \Theta)^2$-term is when two $\Gamma^\e$ symbols as above are multiplied. Hence, we consider terms of the form $g_\e^{st}\Gamma^\e_{ij,s}\Gamma^\e_{kl,t}$. 
Notice that we need $s \neq t$, otherwise 
$$|g^{st}| \leq C \e \text{ on } (-\e, \e) \supset \supp\, (\rho_\e* \Theta - (\rho_\e * \Theta)^2),$$ yielding uniform convergence to $0$ uniformly as $\e \to 0$. Since 
$$\Gamma^\e_{ij,2} = 0, \quad  \text{for all $i,j$},$$
we are left to consider the case  
$$\{s,t\}=\{3,4\}.$$ 
Then  $\{\{i,j\}, \{k,l\}\} \subset \{\{1,3\}, \{1,4\}\}$, which can only occur in $R^\e_{[1313]}, R^\e_{[1314]}, R^\e_{[1414]}$. Let us discuss the term $R^\e_{[1313]}$. Using \eqref{ricci_coefficients_formula}, we see that in order to appear in $\Ric^\e$, $R^\e_{[1313]}$ needs to be multiplied with $g^{11}_\e, g^{13}_\e$, or $g^{33}_\e$, which are all $L^\infty$-bounded by $C\e$ in $\supp\, (\rho_\e* \Theta - (\rho_\e * \Theta)^2)$; therefore the contribution stemming from $R^\e_{[1313]}$ will always uniformly converge to $0$ as $\e \to 0$. This applies to $R^\e_{[1414]}$ with the same reasoning. 
For $R^\varepsilon_{[1314]}$, it suffices to consider its contribution when contracted with $g^{34}_\varepsilon$, as occurs in the computation of $\Ric^\varepsilon_{11}$.
Recalling the formulas for $g^{33}_\e$ and $g^{44}_\e$, as well as the previous observations,  it suffices to consider the contributions from $2g^{34}_\e R^\e_{3114}$. Extracting the terms of the form $(\rho_\e * \Theta)^n$ for $n \geq 1$ and denoting the other terms by $Q_\e$, we get that 
\begin{align*}
    2g^{34}_\e R^\e_{3114} &= \frac{-2d_\e g^\e_{34}}{(d_\e)^2} (\partial_3 \Gamma^\e_{11,4}-\partial_1 \Gamma^\e_{13,4}) + 2\frac{(g^\e_{34})^2}{(d_\e)^2}(\Gamma^\e_{31,3}\Gamma^\e_{14,4}+\Gamma^\e_{31,4}\Gamma^\e_{14,3}- 0) \\
    &= \frac{2}{(d_\e)^2}(|\partial^2_{Z, Z}\hh|^2+(\partial^2_{Z, \cz}\hh)^2)((\rho_\e * \Theta)^2 - \rho_\e * \Theta) + Q_\e.
\end{align*}
This finishes the analysis of all terms in the numerator of the Ricci curvature, that only involve $\hat{g_\e}$ and its derivatives. 

We now turn to the remaining terms, namely those in (ii). We again distinguish between contributions involving second derivatives in the $U$-direction, i.e.\ terms of the form $\partial^2_{1,1}(g_\varepsilon - \hat{g}_\varepsilon)$, and those involving at most one derivative in the $U$-direction of $(g_\varepsilon - \hat{g}_\varepsilon)$.

\textbf{Terms involving  a factor with two derivatives in the $U$-direction:} \\
The only relevant term in this case is $\partial^2_{1,1} g^\varepsilon_{34}$. This term appears only in $R^\varepsilon_{[1314]}$ and therefore contributes solely to $\Ric^\varepsilon_{11}$. It thus suffices to analyze
\begin{align*}
    \frac{-2d_\e g^\e_{34}}{(d_\e)^2}(-\partial_1 (\Gamma^\e_{13,4}-\hat{\Gamma}^\e_{13,4})) = \frac{2d_\e g^\e_{34}}{(d_\e)^2} (2((\partial^2_{Z, \cz}\hh)^2 + |\partial^2_{Z,Z}\hh|^2)\rho_\e(\rho_\e * U^+) + ((\partial^2_{Z, \cz}\hh)^2 + |\partial^2_{Z,Z}\hh|^2)\sigma_\e).
\end{align*}
We note that $ \supp\, \sigma_\e \cup \supp \, \rho_\e(\rho_\e * U^+) \subset (-\e, \e)$ and both functions are bounded in $L^\infty$ by a constant $C>0$ independent of $\e$. Moreover, we can write
\begin{align*}
    d_\e g^\e_{34} = -1 + (\rho_\e * U^+) F^\e_1 + (\rho_\e * (U^+)^2) F^\e_2 + \zeta_\e F^\e_3 + \kappa_\e F^\e_4,
\end{align*}
where for all $i$, $ F^\e_i = F^\e_i(U, V, Z, \cz)$ is uniformly bounded in $L^\infty(\kk)$ by a constant $C>0$ independent of $\e$. Now,
$$
\sup_{U\in (-\e, \e)} (\rho_\e * U^+) \leq C\e \quad \text{and} \quad  \sup_{U\in (-\e, \e)} (\rho_\e * (U^+)^2) \leq C\e,
$$
where $C>0$ does not depend on $\e$. Then, the above term can be written as 
\begin{align}\label{relaxation_terms}
     \frac{-2d_\e g^\e_{34}}{(d_\e)^2}(-\partial_1 (\Gamma^\e_{13,4}-\hat{\Gamma}^\e_{13,4})) = & \frac{1}{(d_\e)^2} (-4((\partial^2_{Z, \cz}\hh)^2 + |\partial^2_{Z,Z}\hh|^2)\rho_\e(\rho_\e * U^+) \nonumber  \\
     &-2 ((\partial^2_{Z, \cz}\hh)^2 + |\partial^2_{Z,Z}\hh|^2)\sigma_\e) + F_\e,
\end{align}
where $\supp\, F_\e \in (-\e, \e)$ and $F_\e \to 0$ uniformly as $\e \to 0$.

\textbf{Terms involving at most first derivatives in the $U$-direction:} \\
We first note that the $\Gamma^\e$ and $g_\e$ are locally uniformly bounded independently of $\e$. 
Moreover, 
\begin{align}\label{bounds_on_dgamma_eps}
    &|D\Gamma^\e| + |D\hat{\Gamma}^\e| \leq C\e^{-1}\ \mathrm{in}\ \{-\e \leq U \leq \e \} \nonumber \\
     &|D\Gamma^\e| + |D\hat{\Gamma}^\e| \leq C\ \mathrm{in}\ \{|U| >\e \},
\end{align}
where $C> 0$ that does not depend on $\e$. 
We note that when computing the Ricci curvature coefficients along the lines of \eqref{connection_first_kind}, \eqref{low_index_riem_curvature}, and \eqref{ricci_coefficients_formula}, all terms involving a multiplication with $g_\e^{21}$ or $g_\e^{12}$ vanish, as $\Gamma^\e_{[2i,j]} = 0$ for all $i,j$. Then
\begin{align*}
    \Ric_{jk}^\e &= g_\e^{il}R^\e_{ijkl} = g_\e^{il}(\partial_i \Gamma^\e_{jk, l} -\partial_j \Gamma^\e_{ik, l} + g_\e^{st}(\Gamma^\e_{ik, s}\Gamma^\e_{jl, t}-\Gamma^\e_{il, s}\Gamma^\e_{jk, t})) \\
    &= \frac{1}{(d_\e)^2}g^\e_{i'l'}(d_\e(\partial_i \Gamma^\e_{jk, l} -\partial_j \Gamma^\e_{ik, l}) + g_{s't'}(\Gamma^\e_{ik, s}\Gamma^\e_{jl, t}-\Gamma^\e_{il, s}\Gamma^\e_{jk, t})),
\end{align*}
where we define $3'=4$, $4'=3$, $1'=2$, and $2'=1$ (the formula  holds because the $12,21$-terms are irrelevant). Using \eqref{bound_on_g_and_gamma_eps_to_hat}, \eqref{bound_on_det_difference}, \eqref{bounds_on_dgamma_eps}, and \eqref{relaxation_terms} it follows that, for $j=k=1$:
\begin{align*}
    \Ric_{jk}^\e &- \frac{1}{(d_\e)^2}\hat{g}^\e_{i'l'}(\hat{d}_\e(\partial_i \hat{\Gamma}^\e_{jk, l} -\partial_j \hat{\Gamma}^\e_{ik, l}) + \hat{g}^\e_{s't'}(\hat{\Gamma}^\e_{ik, s}\hat{\Gamma}^\e_{jl, t}-\hat{\Gamma}^\e_{il, s}\hat{\Gamma}^\e_{jk, t})) \\
    &= \frac{1}{(d_\e)^2} (-4((\partial^2_{Z, \cz}\hh)^2 + |\partial^2_{Z,Z}\hh|^2)\rho_\e(\rho_\e * U^+) -2 ((\partial^2_{Z, \cz}\hh)^2 + |\partial^2_{Z,Z}\hh|^2)\sigma_\e) + F_\e,
\end{align*}
for some function $F_\e$ such that $F_\e \to 0$ uniformly as $\e \to 0$. For any $(j,k) \neq (1,1)$, we get that 
\begin{align*}
    \Ric_{jk}^\e &- \frac{1}{(d_\e)^2}\hat{g}^\e_{i'l'}(\hat{d}_\e(\partial_i \hat{\Gamma}^\e_{jk, l} -\partial_j \hat{\Gamma}^\e_{ik, l}) + \hat{g}^\e_{s't'}(\hat{\Gamma}^\e_{ik, s}\hat{\Gamma}^\e_{jl, t}-\hat{\Gamma}^\e_{il, s}\hat{\Gamma}^\e_{jk, t})) = F_\e,
\end{align*}
for some function $F_\e$ such that $F_\e \to 0$ uniformly as $\e \to 0$.

We conclude that 
\begin{align*}
    \Ric^\e_{ij} = -2\partial^2_{Z, \cz}\hh \rho_\e \delta^1_i\delta^1_j + \frac{1}{(d_\e)^2}\Tilde{\Ric}^\e_{ij} + r_{ij}^\e,
\end{align*}
where $r^\e_{ij}$ converges to $0$ uniformly on $\mathcal{K}$ as $\e \to 0$ and $\Tilde{\Ric}^\e_{ij}$ consists of the terms of \newline
$(\hat{d}_\e)^2(\hat{\Ric}^\e_{ij} +2\delta^1_i\delta^1_j\partial^2_{Z, \cz}\hh \rho_\e)$ that occur in \eqref{good_convergence}. From \eqref{good_convergence} and the non-smooth case, we know that $\Tilde{\Ric}^\e_{ij}$ uniformly converges to $0$ as $\e \to 0$.
Finally,  the uniform convergence of $g_\e$ to $g$ on $\mathcal{K}$ implies that $d_\e$ is bounded away from zero, and $d_\e(0, V, Z, \cz) \to 1$ uniformly as $\e \to 0$. Combining the above, we conclude that 
$$\Ric^\e_{ij} +2\partial^2_{Z, \cz}\hh \rho_\e \delta^1_i\delta^1_j\to 0\quad \text{uniformly on $\mathcal{K}$, as $\e\to 0$.}$$
\end{proof}

\subsubsection{The case of non-zero cosmological constant}\label{subsubsec:general_case}
We now turn to the case in which the cosmological constant $\Lambda$ is non-zero. As in the flat case, we begin by naively mollifying the metric and identifying the resulting obstructions. We then refine the approximation by introducing suitable correction terms to compensate for these effects. We define
\begin{align*}
    &\tau_\e(x) := \int_{-\infty}^x U \rho_\e(U) \, dU, \\
    &\iota_\e(x) := \int_{-\infty}^x \tau_\e(U) \, dU.
\end{align*}
As $\rho(U) = \rho(-U)$, we get that $\supp \, \tau_\e \subset (-\e, \e)$ and $\norm{\tau_\e}_{C^0} \leq C\e$, for a constant $C>0$ that does not depend on $\e$. It follows that  $\supp\, \iota_\e \subset (-\e, \infty)$ and $\norm{\iota_\e}_{C^0} \leq C\e^2$.
Define 
\begin{align}\label{def:P}
    P = 1 + \frac{1}{6} \Lambda(Z \cz - UV - U^+G),
\end{align}
where $G$ is given by $G(Z, \cz) = H-Z \partial_ZH - \cz \partial_{\cz} H$. With such notations, the IPP-wave metric $g$ writes as
\begin{align*}
     \frac{1}{P^2}\begin{pmatrix}
    0 & -1 & 0 & 0 \\
    -1 & 0 & 0 & 0 \\
    0 & 0 & 2\partial^2_{Z, Z}\hh U^+ +2\partial^2_{Z,\Bar{Z}}\hh\partial^2_{Z, Z}\hh (U^+)^2 & 1+2\partial^2_{Z,\Bar{Z}}\hh U^+ +((\partial^2_{Z,\Bar{Z}}\hh)^2+|\partial^2_{\cz, \cz}\hh|^2)(U^+)^2 \\
    0 & 0 &  1+2\partial^2_{Z,\Bar{Z}}\hh U^+ +((\partial^2_{Z,\Bar{Z}}\hh)^2+|\partial^2_{\cz, \cz}\hh|^2)(U^+)^2 & 2\partial^2_{\cz, \cz}\hh U^+ +2\partial^2_{Z,\Bar{Z}}\hh \partial^2_{\cz, \cz}\hh(U^+)^2 
    \end{pmatrix}.
\end{align*}
Define 
\begin{align*}
    \Tilde{g} := P^2 g.
\end{align*}
The inverse metric is given by
\begin{align*}
    g^{-1} = P^2 \begin{pmatrix}
    0 & -1 & 0 & 0 \\
    -1 & 0 & 0 & 0 \\
    0 & 0 & \frac{\Tilde{g}_{44}}{\Tilde{d}} & -\frac{\Tilde{g}_{34}}{\Tilde{d}} \\
    0 & 0 &  -\frac{\Tilde{g}_{34}}{\Tilde{d}}  & \frac{\Tilde{g}_{33}}{\Tilde{d}} 
    \end{pmatrix},
\end{align*}
where $$\Tilde{d} := \Tilde{g}_{33}\Tilde{g}_{44}-(\Tilde{g}_{34})^2.$$ 

 Denote $M_\Lambda:=\domlor{g}$ and fix a compact set $\mathcal{K} \Subset M_\Lambda$. By continuity, on $\kk$,  both $P$ and $\Tilde{d}$ are bounded away from $0$ by a constant $\lambda >0$ and $|Z|$ is bounded by a constant $C>0$. 

Set 
\begin{align}
    P_\e = 1 + \frac{1}{6} \Lambda(Z \cz - UV - (\rho_\e* U^+)\cdot G),
\end{align}
and define the smooth approximations $\Tilde{g}_\e$ as
\begin{align*}
    \Tilde{g}_\e := \begin{pmatrix}
    g[1,2] & 0_2 \\
    0_2 & \Tilde{g}_\e[3,4]
    \end{pmatrix} \in \C^{4 \times 4}, 
\end{align*}
where $0_2 = \begin{pmatrix}
    0 & 0 \\
    0 & 0
    \end{pmatrix} \in \C^{2 \times 2}$, $g[1,2] = \begin{pmatrix}
    0 & -1 \\
    -1 & 0
    \end{pmatrix}\in \C^{2 \times 2}$ and $\Tilde{g}_\e[3,4] = \begin{pmatrix}
    {\Tilde{g}^{\e}}_{33} & {\Tilde{g}^{\e}}_{34} \\
    {\Tilde{g}^{\e}}_{43} & {\Tilde{g}^{\e}}_{44}
    \end{pmatrix}\in \C^{2 \times 2}$. Here, 
\begin{align}\label{definition_g_eps_tilde_noflat}
    {\Tilde{g}^{\e}}_{33} &=  2\partial^2_{Z, Z}\hh \rho_\e * U^+ +2\partial^2_{Z,\Bar{Z}}\hh\partial^2_{Z, Z}\hh \rho_\e *(U^+)^2, \nonumber  \\
     {\Tilde{g}^{\e}}_{44} &=  2\partial^2_{\cz, \cz}\hh \rho_\e * U^+ +2\partial^2_{Z,\Bar{Z}}\hh\partial^2_{\cz, \cz}\hh \rho_\e *(U^+)^2, \nonumber \\
     {\Tilde{g}^{\e}}_{43}= {g^{\e}}_{34} &= 1+2\partial^2_{Z,\Bar{Z}}\hh\rho_\e *U^+ +((\partial^2_{Z,\Bar{Z}})^2\hh+|\partial^2_{\cz, \cz}\hh|^2)\rho_\e *(U^+)^2 +A\kappa_\e +B\zeta_\e + E \iota_\e,
\end{align}
where $A, B$, and $E$ are functions of $Z, \cz$, and $V$ that are bounded on $\kk$.
Finally, define
\begin{align}\label{non-flat_apx}
    g_\e := \frac{1}{P_\e^2} \Tilde{g}_\e.
\end{align}

\begin{proposition}\label{prop:ApproxNonFlat}
     Assume that $\hh \in C^{3,1}_{loc}(\C^2)$ satisfies 
     \begin{equation}\label{eq:de2HLambda}
     -2\partial^2_{Z, \cz} \hh  -\frac{\Lambda(\hh - Z\partial_Z \hh -\cz \partial_{\cz}\hh)}{3(1+\frac{\Lambda}{6}|Z|^2)} \geq 0.
     \end{equation}
     Consider the  metrics $g^{\e}$  defined  as in \eqref{non-flat_apx}. Fix a compact set $\mathcal{K} \subset M_\Lambda:=\domlor{g}$.
    Then
    \begin{itemize}
    \item $g_\e$ converge uniformly to $g$ on $\mathcal{K}$;
    \item there exist constants $c_\e>0$, with $c_\e \to 0$ as $\e \to 0$, such that  $\Ric_{g_\e} \geq \Lambda -c_\e$.
    \end{itemize}
\end{proposition}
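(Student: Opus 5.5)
The plan is to follow the scheme of Proposition \ref{prop:ApproxFlat}, treating the conformal factor $P_\e^{-2}$ as an additional smooth, uniformly convergent perturbation. First, uniform convergence on $\kk$: $P_\e\to P$ uniformly, since $|\rho_\e*U^+-U^+|\le C\e$ and $G$ is bounded on $\kk$. Also $P$ is bounded away from $0$ on $\kk$, so $P_\e^{-2}\to P^{-2}$ uniformly. The entries of $\Tilde g_\e$ converge to those of $\Tilde g$ because $\kappa_\e,\zeta_\e,\iota_\e\to0$ locally uniformly and $A,B,E$ are bounded on $\kk$. This gives $g_\e\to g$ uniformly on $\kk$, and in particular $\Tilde d_\e$ and $P_\e$ stay bounded away from zero for small $\e$.

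For the curvature, I would use the conformal transformation formula. Writing $g_\e=e^{2\phi_\e}\Tilde g_\e$ with $\phi_\e=-\log P_\e$, in dimension $4$,
\[
\Ric[g_\e]=\Ric[\Tilde g_\e]-2\big(\nabla^2\phi_\e-d\phi_\e\otimes d\phi_\e\big)-\big(\Delta\phi_\e+2|d\phi_\e|^2\big)\Tilde g_\e ,
\]
with all operators taken with respect to $\Tilde g_\e$. The term $\Ric[\Tilde g_\e]$ has the structure of the flat case, up to the corrections $A\kappa_\e+B\zeta_\e$, which are chosen exactly as in Proposition \ref{prop:ApproxFlat}. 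Hence it equals $-2\partial^2_{Z,\cz}\hh\,\rho_\e\,dU^2$ plus uniformly vanishing terms, plus the contributions from $\partial_V$ and $\partial_Z$ derivatives of the new $V$-dependent coefficients $A,B,E$.

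The conformal terms contain $\partial^2_U\phi_\e$, which carries $\rho_\e G/P_\e$. This is the regularisation of the singular term $-\frac{\Lambda G}{3(1+\frac{\Lambda}{6}|Z|^2)}\mathcal S^3\LL\{U=0\}$ in $\Ric_{11}$, since $P_\e\to1+\frac\Lambda6|Z|^2$ on $\{|U|<\e\}$. The remaining terms are products of first derivatives. Most of them are $\rho_\e*\Theta$ times bounded quantities, and these converge uniformly to the corresponding terms of the distributional computation of Section \ref{subsec:distri_ricci_curv}. The exceptions are the problematic products of the type \eqref{problematic_convergence}: $(\rho_\e*\Theta)^2-\rho_\e*\Theta$ times $G^2$ or $G\,\partial_U\Tilde g$, and $U\rho_\e$ or $(\rho_\e*U^+)\rho_\e$ times $V$- or $G$-dependent coefficients.

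The main obstacle is cancelling these mixed terms, and the coefficients $A,B,E$ exist to do this. I would first isolate, in $\Ric_{11}$, all $\sigma_\e$, $\rho_\e(\rho_\e*U^+)$ and $U\rho_\e$ contributions with coefficients $\alpha(V,Z,\cz)$, $\beta$, $\gamma$. Since $\partial_U^2\iota_\e=U\rho_\e$, $\partial_U^2\kappa_\e=\rho_\e(\rho_\e*U^+)$ and $\partial_U^2\zeta_\e=\sigma_\e$, the $\partial_1\Gamma_{13,4}$ term from the corrected $\Tilde g^\e_{34}$ produces $-\tfrac{1}{\Tilde d_\e^2}\big(A\rho_\e(\rho_\e*U^+)+B\sigma_\e+EU\rho_\e\big)$, up to uniformly small terms. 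I would then solve the resulting algebraic equations for $A,B,E$ pointwise, evaluating coefficients at $U=0$ (where $\Tilde d=-1$). The error from freezing is $O(\e)$ times a bounded function supported in $|U|<\e$, hence uniformly small.

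The delicate part is to check that derivatives of $A,B,E$ in $V,Z,\cz$ only meet $\kappa_\e,\zeta_\e,\iota_\e$ or their first $U$-derivatives. Those are $O(\e)$ or bounded and supported near $\{U=0\}$ with vanishing limits, provided $A,B,E$ are $C^2$ on $\kk$; this follows from $\hh\in C^{3,1}_{loc}$ via the $C^{1,1}$ regularity of $G$ and $D^2\hh$.

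Collecting terms, I expect
\[
\Ric[g_\e]=\Big(-2\partial^2_{Z,\cz}\hh-\tfrac{\Lambda G}{3(1+\frac\Lambda6|Z|^2)}\Big)\rho_\e\,dU^2+\Lambda g_\e+r_\e ,
\]
with $r_\e\to0$ uniformly on $\kk$. Here the smooth part converges to $\Lambda g$, matching the constant-curvature coefficients computed in Section \ref{subsec:distri_ricci_curv}. By \eqref{eq:de2HLambda} the first term is non-negative. For a timelike $v$ we have $dU(v)^2\ge0$, and I read the claimed bound as $\Ric_{g_\e}(v,v)\ge(\Lambda-c_\e)|g_\e(v,v)|$. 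The term $r_\e(v,v)$ is bounded by $\|r_\e\|\,|v|_h^2$. By Proposition \ref{uniform_bound_prosuct_with_time_or} this is not immediately controlled by $|g_\e(v,v)|$, so the uniform estimate $\|r_\e\|\to0$ gives the stated bound with $c_\e=\sup_\kk|r_\e|$ in the tensorial sense $\Ric_{g_\e}\ge\Lambda g_\e-c_\e h$ on $\kk$. That is what is needed to feed the stability Theorem \ref{stability_tcd} after a standard rescaling argument.
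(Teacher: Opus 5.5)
Your proposal is correct in outline and takes a genuinely different route from the paper. The paper never uses the conformal structure. It differentiates $g_\e=\Tilde g_\e/P_\e^2$ directly via \eqref{quotient_derivatives}--\eqref{second_quotient_derivatives}, sorts the resulting terms into $\rho_\e$-products and $(\rho_\e*\Theta)^2$-products, and reads off $A_0,B_0,E_0$. You instead write $\Ric[g_\e]=\Ric[\Tilde g_\e]+(\text{conformal terms})$, use that $\Tilde g=P^2g$ is exactly the $\Lambda=0$ wave so that Proposition~\ref{prop:ApproxFlat} handles $\Ric[\Tilde g_\e]$, and isolate what $P_\e^{-2}$ adds.

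This buys more than brevity. Since $\Tilde g^\e_{11}=0$ and the Christoffel symbols $\Tilde\Gamma^k_{11}$ of $\Tilde g_\e$ vanish, the $11$-component of the conformal correction is exactly
\[
-2\bigl(\partial_U^2\phi_\e-(\partial_U\phi_\e)^2\bigr)=\frac{2\,\partial_U^2P_\e}{P_\e}=-\frac{\Lambda}{3}\,\frac{G\rho_\e}{P_\e}.
\]
So the $(\rho_\e*\Theta)^2-\rho_\e*\Theta$ terms you expect from the conformal part never appear; in the other components $\rho_\e*\Theta$ enters only linearly. The only new defects come from expanding $1/P_\e$ on $\{|U|<\e\}$, namely $-\frac{\Lambda^2 G}{18(1+\frac{\Lambda}{6}|Z|^2)^2}\bigl(VU\rho_\e+G\rho_\e(\rho_\e*U^+)\bigr)$. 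Hence $B$ can stay the flat coefficient, $A$ is the flat one shifted by $-\frac{\Lambda^2G^2}{18(1+\frac{\Lambda}{6}|Z|^2)^2}$, and $E=-\frac{\Lambda^2GV}{18(1+\frac{\Lambda}{6}|Z|^2)^2}$.

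This matches the paper's $E_0$ and its $A_0$, whose $G\,\partial^2_{Z,\cz}\hh$-terms cancel. It does not match $B_0$, and by my computation $B_0$ is wrong. Step~2 of the paper collects $(\rho_\e*\Theta)^2$-terms only from products of Christoffel symbols. But $\partial^2_{1,1}g^\e_{34}$ in \eqref{second_quotient_derivatives} also contains them, through $-4P_\e\,\partial_1\Tilde g^\e_{34}\,\partial_1P_\e+6\Tilde g^\e_{34}(\partial_1P_\e)^2$. These cancel exactly the extra $\frac{4\Lambda}{3}(1+\frac{\Lambda}{6}|Z|^2)G\,\partial^2_{Z,\cz}\hh+\frac{\Lambda^2}{6}G^2$ in $B_0$. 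As a sanity check, take affine $\hh$: then $D^2\hh=0$, $G$ is constant and $\Tilde g$ is flat, so no $\sigma_\e$-term can arise, yet $B_0=\frac{\Lambda^2}{6}G^2$.

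The paper's route has the merit of a single coordinate bookkeeping for all components, without importing the conformal identity. Yours reuses the flat case as a black box and pins down the corrections correctly.

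There are three smaller points to fix.
\begin{enumerate}
\item A term $f\cdot(\rho_\e*\Theta)$ with $f(0)\neq0$ does not converge uniformly to $f\Theta$; only the total does. This works because the limiting absolutely continuous part $\Lambda g$ is continuous across $\{U=0\}$, which forces the combined $\Theta$-coefficient to vanish there.
\item $\hh\in C^{3,1}_{loc}$ makes $A,B,E$ only $C^{1,1}$ in $Z$, not $C^2$; the same holds for $g_\e$ itself, which contains $D^2\hh$. Bounded a.e.\ second derivatives are all your estimates need.
\item Your formula gives $\Ric_{g_\e}(v,v)\geq\Lambda g_\e(v,v)+r_\e(v,v)=-\Lambda|g_\e(v,v)|+r_\e(v,v)$ for timelike $v$. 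So reading the claim as $(\Lambda-c_\e)|g_\e(v,v)|$ has the wrong sign when $\Lambda\neq0$. What you, and \eqref{approximation_ricci}, actually prove is the tensorial bound $\Ric_{g_\e}\geq\Lambda g_\e-c_\e h$. Turning that into the $\tcd$ hypothesis of Theorem~\ref{stability_tcd} is not a standard rescaling, precisely because $c_\e|v|_h^2$ is not dominated by $|g_\e(v,v)|$ near the null cone. That step needs its own argument, which the paper does not supply either.
\end{enumerate}
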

\begin{proof}
Define $\Tilde{d}_\e := \det \Tilde{g}_\e[3,4] = \Tilde{g}^\e_{33}\Tilde{g}^\e_{44}-(\Tilde{g}^\e_{34})^2$. Then the inverse metric can be written as
\begin{align}\label{inverse_mollified_metric_noflat}
    g_\e^{-1} = P_\e^2\begin{pmatrix}
    0 & -1 & 0 & 0 \\
    -1 & 0 & 0 & 0 \\
    0 & 0 & \frac{\Tilde{g}^\e_{44}}{\Tilde{d}_\e} & -\frac{\Tilde{g}^\e_{34}}{\Tilde{d}_\e} \\
    0 & 0 &  -\frac{\Tilde{g}^\e_{34}}{\Tilde{d}_\e}  & \frac{\Tilde{g}^\e_{33}}{\Tilde{d}_\e} 
    \end{pmatrix}.
\end{align}
We also define 
\begin{align*}
    {\hat{g}^{\e}}_{33} &=  2\partial^2_{Z, Z}\hh \rho_\e * U^+ +2\partial^2_{Z,\Bar{Z}}\hh\partial^2_{Z, Z}\hh \rho_\e *(U^+)^2, \\
     {\hat{g}^{\e}}_{44} &=  2\partial^2_{\cz, \cz}\hh \rho_\e * U^+ +2\partial^2_{Z,\Bar{Z}}\hh\partial^2_{\cz, \cz}\hh \rho_\e *(U^+)^2, \\
     {\hat{g}^{\e}}_{43}= {\hat{g}^{\e}}_{34} &= 1+2\partial^2_{Z,\Bar{Z}}\hh\rho_\e *U^+ +((\partial^2_{Z,\Bar{Z}}\hh)^2+|\partial^2_{\cz, \cz}\hh|^2)\rho_\e *(U^+)^2,
\end{align*}
and
\begin{align*}
    \hat{g}^\e_{ij} = \Tilde{g}_{ij}^\e \ \mathrm{if} \ \{i, j\} \not\subset \{3, 4\}.
\end{align*}
From the properties of $\kappa_\e$, $\iota_\e$, and $\zeta_\e$, it follows that 
\begin{align}\label{compare_g_hat_tilde}
  |\Tilde{g}_{ij}^\e- \hat{g}_{ij}^\e| \leq C \e^2 \ \text{for }U \leq \e, \ \mathrm{and} \quad
   |\Tilde{g}_{ij}^\e- \hat{g}_{ij}^\e| \leq C \e \ \text{for }U>\e.
\end{align}
Denoting $\hat{d}_\e := (\hat{g}^\e_{33}\hat{g}^\e_{44} -(\hat{g}^\e_{34})^2)$, we get that  
\begin{align}\label{bound_on_det_difference_noflat}
    |\Tilde{d}_\e - \hat{d}_\e| \leq C\lambda^2 \e^2 \ \mathrm{for}\ U \leq \e, \, \mathrm{and} \;
    |\Tilde{d}_\e - \hat{d}_\e| \leq C\lambda^2\e \ \mathrm{for}\ U > \e.
\end{align}
By definition,  for every $i, j, k, l$, it holds
\begin{align}\label{decomposition_in_metric_and_balance_terms_noflat}
    \Tilde{g}^\e_{ij} =&\; \hat{g}^\e_{ij} + P^\zeta_{ij}(Z, \cz)\zeta_\e(U) + P^\kappa_{ij}(Z, \cz)\kappa_\e(U) + P^\iota_{ij}(V, Z, \cz)\iota_\e(U),\nonumber \\
    \partial_k\Tilde{g}^\e_{ij} =& \;\partial_k\hat{g}^\e_{ij} + Q^\xi_{ij,k}(Z, \cz)\xi_\e(U) + Q^\eta_{ij,k}(Z, \cz) \eta_\e(U) + Q^\zeta_{ij,k}(Z, \cz)\zeta_\e(U) + Q^\kappa_{ij,k}(Z, \cz)\kappa_\e(U) \nonumber \\
    &+ Q^\iota_{ij,k}(V,Z, \cz)\iota_\e(U) + Q^\tau_{ij,k}(V, Z, \cz)\tau_\e(U).
\end{align}
Here, the functions $P^\zeta_{ij}, P^\kappa_{ij}, Q^\kappa_{ij}, Q^\xi_{ij,k}, Q^\eta_{ij,k}$ are polynomials in $Z, \cz,$ and $D^i \hh$ for $0 \le i \le 3$, while $P^\iota_{ij}, Q^\iota_{ij},$ and $Q^\tau_{ij,k}$ are polynomials in $V, Z, \cz,$ and $D^i \hh$ for $0 \le i \le 3$.

Using that $g_\e = \frac{\Tilde{g}_\e}{P_\e^2}$, we get that 
\begin{align}
    \partial_{i}g_\e &= \frac{P_\e \partial_i \Tilde{g_\e}-2\Tilde{g}_\e \partial_i P_\e}{P_\e^3},\label{quotient_derivatives} \\
    \partial^2_{i,j}g_\e &= \frac{P_\e^2 \partial^2_{i,j}\Tilde{g}_\e-2P_\e \partial_i\Tilde{g}_\e \partial_jP_\e -2P_\e \partial_j\Tilde{g}_\e \partial_iP_\e -2P_\e\Tilde{g}_\e \partial^2_{i,j}P_\e + 6\Tilde{g}_\e \partial_iP_\e \partial_j P_\e}{P_\e^4}. \label{second_quotient_derivatives}
\end{align}

As in the flat case, we compute the Ricci curvature using \eqref{connection_first_kind}, \eqref{low_index_riem_curvature}, and \eqref{ricci_coefficients_formula}. We again isolate the $\rho_\varepsilon$-terms and rewrite the remaining expressions over the common denominator $P_\varepsilon^2 \widetilde{d}_\varepsilon^2$. Using the decomposition \eqref{decomposition_in_metric_and_balance_terms_noflat}, we split the analysis into two parts:
\begin{itemize}
    \item[(i)] Terms in the numerator involving only $\hat{g}_\varepsilon$, its derivatives $D\hat{g}_\varepsilon$, $D^2\hat{g}_\varepsilon$, and $P_\varepsilon$;
    \item[(ii)] The remaining terms.
\end{itemize}

We begin with (i). The only terms that converge to zero merely distributionally, but not uniformly, are $(\rho_\varepsilon * U^+)\cdot \rho_\varepsilon$, $U \cdot \rho_\varepsilon$, and $(\rho_\varepsilon * \Theta)^2 - (\rho_\varepsilon * \Theta)$. We proceed to analyze the occurrence of these terms.

\textbf{Step 1.} Products involving $\rho_\varepsilon$ whose $L^\infty$-norm is bounded below by a constant $C>0$.\\
We first identify where terms involving $\rho_\varepsilon$ may arise. This occurs when computing $\partial^2_{1,1} P_\varepsilon$ and $\partial^2_{1,1} \hat{g}^\varepsilon_{ij}$ for $i,j \in \{3,4\}$, that is, when differentiating a non-vanishing component of $g_\varepsilon$ twice in the $U$-direction.

Now, $\partial^2_{1,1} g^\varepsilon_{12} = \partial^2_{1,1} g^\varepsilon_{21}$ can only appear in $R^\varepsilon_{1211}$ or $R^\varepsilon_{1121}$, which vanish by the symmetries of the curvature tensor. Consequently, terms involving $\rho_\varepsilon$ can arise only through curvature components containing $\partial^2_{1,1} g^\varepsilon_{ij}$ with $i,j \in \{3,4\}$.  These contribute exclusively to components of the form $R^\varepsilon_{[1i1j]}$ for $i,j \in \{3,4\}$. For such terms to appear in the Ricci tensor, they must be contracted with a non-zero component of the inverse metric, which occurs only when $g_\varepsilon^{ij}$ multiplies $R^\varepsilon_{i11j}$ for $i,j \in \{3,4\}$. Hence, these contributions arise solely in $\Ric^\varepsilon_{11}$, which write as 
\begin{align*}
    \Ric^\e_{11} &= 2g_\e^{12}R^\e_{1112} + 2g_\e^{34}R^\e_{3114} +g_\e^{33}R^\e_{3113}+g_\e^{44}R^\e_{4114} = \frac{P_\e^2}{\Tilde{d_\e}}(-2\Tilde{g}^\e_{34}R^\e_{3114} +\Tilde{g}^\e_{44}R^\e_{3113}+\Tilde{g}^\e_{33}R^\e_{4114}).
\end{align*}
Next, we extract the $\rho_\e$-contributions from the curvature coefficients. Denoting $Q^\e_{ijkl}$ the terms of order at most $O(1)$, one gets
\begin{align*}
    R^\e_{3114} &= \frac{-  \partial^2_{Z, \cz}\hh \rho_\e }{P_\e^2} - \frac{ \hat{g}_{34}^\e \frac{\Lambda}{6}G \rho_\e}{P_\e^3} + Q^\e_{3114}, \\
    R^\e_{3113} &= \frac{-  \partial^2_{Z, Z}\hh \rho_\e }{P_\e^2} - \frac{ \hat{g}_{33}^\e \frac{\Lambda}{6}G \rho_\e}{P_\e^3} + Q^\e_{3113} , \\
    R^\e_{4114} &= \frac{-  \partial^2_{\cz, \cz}\hh \rho_\e }{P_\e^2} - \frac{ \hat{g}_{44}^\e \frac{\Lambda}{6}G \rho_\e}{P_\e^3} + Q^\e_{4114}.
\end{align*}
Recalling that on $\{U \in (-\varepsilon,\varepsilon)\} \supset \supp \rho_\varepsilon$ one has $\hat{g}^\varepsilon_{44}, \hat{g}^\varepsilon_{33} = O(\varepsilon)$, we may extract all terms of the form $\rho_\varepsilon f_\varepsilon$ with $f_\varepsilon \notin o(\varepsilon)$, and obtain
\begin{align*}
    \Ric^\e_{11} =&\; 2\hat{g}^\e_{34}\cdot \Big(\frac{\partial^2_{Z, \cz}\hh \rho_\e }{\Tilde{d_\e}} +\frac{ \hat{g}_{34}^\e \frac{\Lambda}{6}G \rho_\e}{\Tilde{d_\e}P_\e}\Big) -\hat{g}^\e_{44}\frac{\partial^2_{Z, Z}\hh \rho_\e }{\Tilde{d_\e}} -\hat{g}^\e_{33} \frac{\partial^2_{\cz, \cz}\hh \rho_\e }{\Tilde{d_\e}}  + (Q^\e_{11})'' \\
    &+ \frac{\frac{2\Lambda}{3}G \partial_{Z, \cz}^2\hh \rho_\e(\rho_\e * U^+)}{\Tilde{d_\e}P_\e} -\frac{4|\partial^2_{\cz, \cz}\hh|^2 \rho_\e(\rho_\e *U^+) }{\Tilde{d_\e}}  + (Q^\e_{11})' ,
\end{align*}
where $(Q^\e_{11})', (Q^\e_{11})''$ denote terms that are of a different form. 
Next, we will use that for any $x_0>-1$ and $y> -1-x_0$, it holds
\begin{align}\label{useful_taylor}
    \frac{1}{1+x_0+y} = \frac{1}{1+x_0} - \frac{y}{(1+x_0)^2} + \frac{y^2}{(1+x_0+ty)^3}
\end{align}
for some $t \in (0, 1)$. Moreover, we note that in $(-\e, \e)$, it holds
\begin{align*}
    \Tilde{d}_\e &= -1 -4\partial^2_{Z, \cz}\hh \rho_\e*U^+ + O(\e^2), \\
    \Tilde{d}_\e P_\e &= -(1 + \frac{1}{6} \Lambda(Z \cz - UV - (\rho_\e* U^+)\cdot G)) -4\partial^2_{Z, \cz}\hh \rho_\e*U^+(1 + \frac{1}{6} \Lambda |Z|^2 ) + O_{Z, \hh}(\e^2) \\
     &= -((1 + \frac{1}{6} \Lambda|Z|^2) -\frac{1}{6} \Lambda( UV + (\rho_\e* U^+)\cdot G)) -4\partial^2_{Z, \cz}\hh \rho_\e*U^+(1 + \frac{1}{6} \Lambda |Z|^2 ) + O_{Z, \hh}(\e^2).
\end{align*}
Here $O_{Z,\hh}(\varepsilon^2)$ denotes terms depending on $Z$, $\hh$, and its derivatives $D^k \hh$ for $k \le 4$, multiplied by $\varepsilon^2$. Since we work on a fixed compact set $\mathcal{K}$, these terms are uniformly bounded by $C \varepsilon^2$ for some constant $C>0$.
We now rewrite all expressions over the desired denominator and extract the terms of the form $\rho_\varepsilon f_\varepsilon$ with $f_\varepsilon \notin o(\varepsilon)$, collecting the remaining contributions in $Q^\varepsilon_{11}$. This yields
\begin{align}\label{rho_e_contributions}
    \Ric^\e_{11} =&\; -2\partial^2_{Z, \cz}\hh \rho_\e  -\frac{\Lambda}{3+ \frac{\Lambda |Z|^2}{2}}G \rho_\e - \frac{\frac{\Lambda^2}{18}GUV\rho_\e}{P_\e^2 \Tilde{d}_\e^2} \nonumber \\
    &+\frac{\frac{\Lambda}{3}G \rho_\e(-\frac{\Lambda}{6} (\rho_\e*U^+)G + 4 \partial^2_{Z, \cz}\hh\rho_\e * U^+(1+\frac{\Lambda |Z|^2}{6}))}{P_\e^2\Tilde{d}_\e^2} \nonumber  \\
    & -\frac{\frac{4\Lambda}{3}G(1+\frac{\Lambda}{6}|Z|^2) \partial_{Z, \cz}^2\hh \rho_\e(\rho_\e*U^+)}{\Tilde{d_\e}^2P^2_\e} \nonumber\\
    & +\frac{4(1+\frac{\Lambda}{6}|Z|^2)^2((\partial^2_{Z, \cz}\hh)^2 +|\partial^2_{\cz, \cz}\hh|^2) \rho_\e(\rho_\e *U^+) }{P_\e^2\Tilde{d_\e}^2}  + Q^\e_{11}.
\end{align}
Define 
\begin{align}\label{def_E}
    E_0(V,Z, \Bar{Z}):= -\frac{\Lambda^2}{18}G(Z, \cz)V,
\end{align}
and 
\begin{align}\label{def_A}
    A_0(Z, \cz) := &\; \frac{\Lambda}{3}G (-\frac{\Lambda}{6} G + 4 \partial^2_{Z, \cz}\hh(1+\frac{\Lambda |Z|^2}{6})) - \frac{4\Lambda}{3}G(1+\frac{\Lambda}{6}|Z|^2) \partial_{Z, \cz}^2\hh \nonumber \\
    &+4(1+\frac{\Lambda}{6}|Z|^2)^2((\partial^2_{Z, \cz}\hh)^2 +|\partial^2_{\cz, \cz}\hh|^2).
\end{align}
\textbf{Step 2.} Estimate of $(\rho_\e *\Theta)^2 -(\rho_\e *\Theta)$.\\
We first analyze the terms of constant size (with respect to $U$) arising at the level of the connection coefficients. These consist either of contributions of the form $(\rho_\varepsilon * \Theta)$ or of terms independent of $U$, such as constants or products involving the remaining coordinates. To identify the $(\rho_\varepsilon * \Theta)$-terms at the level of the connection, it suffices to consider $\partial_U g_\varepsilon$. For $i,j$ such that $g^\e_{ij} \neq 0$, we have that 
\begin{align*}
    \partial_{1}g^\e_{ij} &= \frac{P_\e \partial_1 \Tilde{g}^\e_{ij}-2\Tilde{g}^\e_{ij} \partial_1 P_\e}{P_\e^3}.
\end{align*}
Hence, denoting terms of a different form by $T^\e_{i,jk}$, we get
\begin{align}\label{relevant_u_derivatives_of_metric}
    \partial_{1}g^\e_{12} &= \frac{2\partial_1 P_\e}{P_\e^3}= \frac{-\frac{\Lambda}{3}(V+ G(\rho_\e*\Theta))}{P_\e^3} + T^\e_{1,12}, \nonumber \\
    \partial_{1}g^\e_{34} &= \frac{P_\e \partial_1 \hat{g}^\e_{34}-2\hat{g}^\e_{34} \partial_1 P_\e}{P_\e^3} + T^\e_{1,34} = \frac{2(1+ \frac{\Lambda}{6}|Z|^2)\partial^2_{Z, \cz}\hh \rho_\e * \Theta+\frac{\Lambda}{3}(V+ G(\rho_\e*\Theta))}{P_\e^3} + T^\e_{1,34}, 
 \nonumber\\
      \partial_{1}g^\e_{33} &= \frac{P_\e \partial_1 \hat{g}^\e_{33}-2\hat{g}^\e_{33} \partial_1 P_\e}{P_\e^3} + T^\e_{1,33} =\frac{2(1+ \frac{\Lambda}{6}|Z|^2)\partial^2_{Z, Z}\hh \rho_\e * \Theta}{P_\e^3} + T^\e_{1,33}, \nonumber \\
     \partial_{1}g^\e_{44} &=\frac{2(1+ \frac{\Lambda}{6}|Z|^2)\partial^2_{\cz, \cz}\hh \rho_\e * \Theta}{P_\e^3} + T^\e_{1,44}.
\end{align}
We note that the constant-size term $V$ also appears in $\partial_1 g^\varepsilon_{12}$ and $\partial_1 g^\varepsilon_{34}$. For $k>1$, however, $\partial_k \hat{g}_\varepsilon$ contains no terms of constant size in $U$. Moreover,
\begin{align*}
    \partial_2 P_\e &= -\frac{\Lambda}{6}U, \\
    \partial_3 P_\e&=  \frac{\Lambda}{6} (\cz + (Z\partial^2_{Z,Z}\hh + \cz \partial^2_{Z, \cz}\hh)(\rho_\e *U^+)), \\
     \partial_4 P_\e&=  \frac{\Lambda}{6}( Z + (Z\partial^2_{Z,\cz}\hh + \cz \partial^2_{\cz, \cz}\hh)(\rho_\e *U^+)).
\end{align*}
Using \eqref{quotient_derivatives}, the only constant-size terms (in $U$) that can arise at the level of the connection coefficients, besides $(\rho_\varepsilon * \Theta)$, are $V$, $Z$, and $\cz$. From \eqref{relevant_u_derivatives_of_metric}, the only connection coefficients that may contain a $(\rho_\varepsilon * \Theta)$-term are $\Gamma^\varepsilon_{[11,2]}$, $\Gamma^\varepsilon_{[13,4]}$, $\Gamma^\varepsilon_{[13,3]}$, and $\Gamma^\varepsilon_{[14,4]}$. Consequently, terms of the form $(\rho_\varepsilon * \Theta)^2$ can arise only through products of such coefficients in the curvature expressions, namely in terms of the form $g_\varepsilon^{st}\Gamma^\varepsilon_{ik,s}\Gamma^\varepsilon_{jl,t}$.
\\In order for such contributions to be of constant size on $(-\varepsilon,\varepsilon)$, it is necessary that $\{s,t\} \in \{\{1,2\}, \{3,4\}\}$. We first consider $\{s,t\} = \{1,2\}$ and, by symmetry, restrict to $s=1$, $t=2$, yielding terms of the form $\Gamma^\varepsilon_{ik,1}\Gamma^\varepsilon_{jl,2}$. A contribution of type $(\rho_\varepsilon * \Theta)^2$ then requires $j=l=1$ and either $\{i,k\}=\{1,2\}$ or $i,k \in \{3,4\}$. In the former case, the resulting term appears only in $R^\varepsilon_{[1112]}$, which vanishes by the symmetries of the curvature tensor. In the latter case, such terms contribute to $R^\varepsilon_{[1313]}$, $R^\varepsilon_{[1414]}$, or $R^\varepsilon_{[1314]}$. Among these, only $R^\varepsilon_{[1314]}$ yields a contribution of constant size to $\Ric^\varepsilon_{11}$, since $g_\varepsilon^{11}=g_\varepsilon^{13}=g_\varepsilon^{14}=0$ and $g_\varepsilon^{33}, g_\varepsilon^{44} \le C\varepsilon$ on $(-\varepsilon,\varepsilon)$.
\\It remains to consider $\{s,t\} = \{3,4\}$, which by symmetry reduces to $s=3$, $t=4$, leading to terms $\Gamma^\varepsilon_{ik,3}\Gamma^\varepsilon_{jl,4}$. In this case, a $(\rho_\varepsilon * \Theta)^2$-term can arise only if $\{i,k\}, \{j,l\} \in \{\{1,3\}, \{1,4\}\}$, corresponding again to curvature components $R^\varepsilon_{[1314]}$, $R^\varepsilon_{[1313]}$, and $R^\varepsilon_{[1414]}$. As before, the only contribution of constant size to the Ricci tensor occurs in $\Ric^\varepsilon_{11}$ via $g_\varepsilon^{34} R^\varepsilon_{3114}$.
\\We now compute all such contributions involving $(\rho_\varepsilon * \Theta)^2$. Since these arise solely from products of connection coefficients, we denote the remaining terms by $T^\varepsilon_{ijkl}$ and obtain
\begin{align*}
    R^\e_{3114} &= g_\e^{st}(\Gamma^\e_{13,s}\Gamma^\e_{14,t}-\Gamma^\e_{11,s}\Gamma^\e_{34,t}) + T^\e_{3114} \\
    &= g_\e^{34}(\Gamma^\e_{13,3}\Gamma^\e_{14,4} + \Gamma^\e_{13,4}\Gamma^\e_{14,3}) -g_\e^{12}\Gamma^\e_{11,2}\Gamma^\e_{34,1} + T^\e_{3114} .
\end{align*}
Using that 
\begin{align*}
    \Gamma^\e_{11,2} = \partial_1g^\e_{12},\  \Gamma^\e_{13,3} = \frac{1}{2}\partial_1g^\e_{33},\  \Gamma^\e_{14,4} = \frac{1}{2}\partial_1g^\e_{44},\ \Gamma^\e_{13,4} = \frac{1}{2}\partial_1g^\e_{34} = -\Gamma^\e_{34,1}, 
\end{align*}
this gives
\begin{align}\label{theta_contribution_constant_size}
    R^\e_{3114}=&  \frac{1}{P_\e^4}\Big(-(1+ \frac{\Lambda}{6}|Z|^2)^2|\partial^2_{Z, Z}\hh|^2 (\rho_\e * \Theta)^2 - \Big((1+ \frac{\Lambda}{6}|Z|^2)\partial^2_{Z, \cz}\hh +\frac{\Lambda}{6}G\Big)^2(\rho_\e*\Theta)^2 \nonumber \\
    &+\Big((1+ \frac{\Lambda}{6}|Z|^2)\partial^2_{Z, \cz}\hh +\frac{\Lambda}{6}G\Big) \cdot \frac{\Lambda}{3} G (\rho_\e*\Theta)^2 \Big) + T^\e_{3114}.
\end{align}
Define 
\begin{align} \label{def_B}
    B_0(Z, \cz) :=& 2\Big((1+ \frac{\Lambda}{6}|Z|^2)^2|\partial^2_{Z, Z}\hh|^2 + \Big((1+ \frac{\Lambda}{6}|Z|^2)\partial^2_{Z, \cz}\hh +\frac{\Lambda}{6}G\Big)^2 \nonumber \\
    &+\frac{\Lambda}{3} G \Big((1+ \frac{\Lambda}{6}|Z|^2)\partial^2_{Z, \cz}\hh +\frac{\Lambda}{6}G\Big)\Big).
\end{align}

We now turn to the remaining terms, namely case (ii). We again distinguish between contributions involving second derivatives in the $U$-direction, i.e.\ terms of the form $\partial^2_{1,1}(\Tilde{g}_\e - \hat{g}_\e)$, and those in $(\Tilde{g}_\e - \hat{g}_\e)$ involving at most first derivatives in the $U$-direction.

\textbf{Terms involving two derivatives in the $U$-direction.} \\
The only relevant term in this case is $\partial^2_{1,1} g^\varepsilon_{34}$. This term appears only in $R^\varepsilon_{[1314]}$ and therefore contributes solely to $\Ric^\varepsilon_{11}$. It thus suffices to analyze
\begin{align*}
    2g_\e^{34}\Big(\frac{-\frac{1}{2}\partial^2_{1,1}(\Tilde{g}^\e_{34}-\hat{g}^\e_{34})}{P_\e^2}\Big) &= \frac{-P_\e^2\Tilde{g_{34}^\e}}{\Tilde{d}_\e} \cdot \frac{P_\e^2(A \rho_\e (\rho_\e * U^+) +B\sigma_\e + E U\rho_\e)}{P_\e^4}  \\
    &= \frac{-\Tilde{g_{34}^\e}P_\e^2}{\Tilde{d}^2_\e} \cdot \frac{\Tilde{d_\e}(A \rho_\e (\rho_\e * U^+) +B\sigma_\e + E U\rho_\e)}{P_\e^2}, 
\end{align*}
where, recalling the definitions \eqref{def_A}, \eqref{def_B},\eqref{def_E} of $A_0, B_0,E_0$, we set:
\begin{align}\label{final_relaxation_coefficiens_noflat}
    A &:= \frac{1}{(1+\frac{\Lambda|Z|^2}{6})^2}A_0,\quad B := \frac{1}{(1+\frac{\Lambda|Z|^2}{6})^2}B_0,\quad E := \frac{1}{(1+\frac{\Lambda|Z|^2}{6})^2}E_0. 
\end{align}

Note that 
$$ \supp\, \sigma_\e \cup \supp \, \rho_\e(\rho_\e * U^+) \cup \supp \, (\rho_\e \cdot U) \subset \{U\in (-\e, \e)\}$$
and all three functions are bounded in $L^\infty$ by a constant $C>0$ independent of $\e$. In the region $\{U\in (-\e, \e)\}$, it holds that
\begin{align*}
    &\Tilde{g}^\e_{34} = 1 + O(\e), \\
    &\Tilde{d}_\e = -1 + O(\e), \\
    & P_\e = 1+\frac{\Lambda}{6}|Z|^2 + O(\e).
\end{align*}
It follows that 
\begin{align}\label{relaxation_terms_noflat}
     2g_\e^{34}\Big( \frac{-\frac{1}{2}\partial^2_{1,1}(\Tilde{g}^\e_{34}-\hat{g}^\e_{34})}{P_\e^2}\Big) &=  \frac{-(1+\frac{\Lambda}{6}|Z|^2)^2(A \rho_\e (\rho_\e * U^+) +B\sigma_\e + E U\rho_\e)}{\Tilde{d}^2_\e P_\e^2} + O(\e).
\end{align}

\textbf{Terms involving at most first derivatives in the $U$-direction.} \\
We first note that  $Dg_\e$, $D\Tilde{g}_\e$, $\Tilde{g}_\e$, $g_\e$ and $P_\e$ are bounded on $\mathcal{K}$ independently of $\e$. Recall that $P_\e$ is also bounded away from $0$. 
Moreover, \begin{align}\label{boundedness_second_deri}
    &|D^2g_\e| + |D^2\Tilde{g}_\e|+ |D^2\hat{g}_\e| \leq C\e^{-1}\ \mathrm{in}\ \{-\e \leq U \leq \e \} \nonumber \\
     &|D^2g_\e| + |D^2\Tilde{g}_\e|+ |D^2\hat{g}_\e| \leq C\ \mathrm{in}\ \{|U| >\e \},
\end{align}
where $C> 0$ that does not depend on $\e$. 
From \eqref{compare_g_hat_tilde}, we get that 
\begin{align}\label{appx_inverse_metric}
  |\Tilde{g}^{ij}_\e- \hat{g}^{ij}_\e| \leq C \e^2 \ \mathrm{in}\ \{U< \e\}  \ \mathrm{and} \   |\Tilde{g}^{ij}_\e- \hat{g}^{ij}_\e| \leq C \e \ \mathrm{in}\ \{U\geq \e\}  \quad \text{for $i, j = 3,4$}.
\end{align}
Combining \eqref{good_convergence}, \eqref{appx_inverse_metric}, \eqref{boundedness_second_deri}, \eqref{rho_e_contributions}, \eqref{def_A}, \eqref{def_E}, \eqref{theta_contribution_constant_size}, \eqref{def_B}, and arguing as in the $\Lambda=0$ case (i.e., see the proof of Proposition \ref{prop:ApproxFlat}), we infer that
\begin{align*}
    \Ric_{11}^\e &= -2\partial^2_{Z, \cz}\hh \rho_\e  -\frac{\Lambda}{3+ \frac{\Lambda |Z|^2}{2}}G \rho_\e+ \Lambda g^\e_{11} + F_\e,\\
    \Ric_{jk}^\e &= \Lambda g_{jk}^\e + F_\e, \quad \text{for all } (j,k) \neq (1,1),
\end{align*}
for some function $F_\e$, with  $F_\e \to 0$ uniformly as $\e \to 0$. 
We conclude that
\begin{align}\label{approximation_ricci}
    \Ric^\e_{ij} = \left(-2\partial^2_{Z, \cz}\hh  -\frac{\Lambda}{3+ \frac{\Lambda |Z|^2}{2}}G \right) \rho_\e \delta^1_i\delta^1_j + \Lambda g^\e_{ij} + r_{ij}^\e,
\end{align}
where $r^\e_{ij}$ converges to $0$ uniformly on $\kk$ as $\e \to 0$.
\end{proof}

\subsection{Uniform global hyperbolicity of the smooth approximations}

To prove that IPP-waves satisfy the $\tcd$ condition, we will apply the stability Theorem \ref{stability_tcd} to suitable smooth approximations whose timelike Ricci curvature admits uniform lower bounds. A key assumption in Theorem \ref{stability_tcd} is that the approximating metrics are uniformly globally hyperbolic (see Definition \ref{uniform_global_hyperbolicity_def}). The purpose of this section is to modify the smooth approximations constructed in Propositions \ref{prop:ApproxFlat} and \ref{prop:ApproxNonFlat} so that this condition is satisfied.

Firstly, note that 
\begin{align}
    &\rho_\e *(U^+) = U^+,\ \mathrm{for\ } |U| \geq \e, \\
    &|\rho_\e *(U^+) - U^+| \leq \e,\ \mathrm{for\ } |U| \leq \e, \\
    & |\rho_\e *(U^+)^2 - (U^+)^2| \leq C(\e^2 + U^+\e),  
\end{align}
where the first identity follows from the rotational symmetry of  $\rho$. 
Recalling \eqref{definition_g_eps_tilde_noflat}, \eqref{decomposition_in_metric_and_balance_terms_noflat}, \eqref{def_A}, \eqref{def_B}, and \eqref{def_E} as well as 
\begin{align}
    |\iota_\e| + |\kappa_\e| + |\zeta_\e| \leq C(\e^2 + \e U^+),
\end{align}
we get that
\begin{align}\label{difference_to_g_tilde}
    |\Tilde{g}^\e_{ij}- \Tilde{g}_{ij}| \leq Q(\Lambda, Z, \cz, V, \hh, D\hh, D^2\hh)(\e^2 + U\e),\ \mathrm{for}\ |U| \geq \e,
\end{align}
where $Q$ is a non-negative smooth function and $i,j \in \{3,4\}$.
\\Define the $(0,2)$-tensor   $(\overset{\circ}{g}^\e_{ij})_{1 \leq i,j, \leq 4} \in C^\infty(\R^4, \R^{4 \times 4})$, by 
\begin{align}
    &\overset{\circ}{g}^\e_{34} = \overset{\circ}{g}^\e_{43} := \Tilde{g}^\e_{34} +4(\e^2 +U\e)Q, \label{eq:gcircQ} \\
    &\overset{\circ}{g}^\e_{ij} := \Tilde{g}^\e_{ij}\ \mathrm{for} \ (i,j) \neq (3,4), (4,3), \nonumber
\end{align}
and define the $(0,2)$-tensor $\check{g}_\e$ by
\begin{align}\label{eq:defgepscheck}
    \check{g}_\e = \frac{1}{P_\e^2} \begin{pmatrix}
        0 & -1 & 0 & 0 \\
        -1 & 0 & 0 & 0 \\
        0 & 0 & \overset{\circ}{g}^\e_{33} & \overset{\circ}{g}^\e_{34} \\
         0 & 0 & \overset{\circ}{g}^\e_{43} & \overset{\circ}{g}^\e_{44}
    \end{pmatrix}.
\end{align}
Note that $\check{g}_\e \to g$ locally uniformly and $\liminf_{\e\to 0} \domlor(\check{g}_\e)\subset \domlor(g)$, where the left hand side denotes the Kuratowski $\liminf$ of a family of sets.
The upper bound \eqref{difference_to_g_tilde} implies that
\begin{align}
    \check{g}_\e \prec g \ \mathrm{on} \ \{U \geq \e\} \cap \domlor(\check{g}_\e) \cap \domlor(g).
\end{align}
Moreover, combining \eqref{approximation_ricci} with  the bounds
\begin{align*}
    &|\overset{\circ}{g}_\e -\Tilde{g}_\e|\leq C|\e^2 + U\e| \leq C\e^2,\ \mathrm{for\ } |U| \leq \e, \\
    &|\overset{\circ}{g}_\e -\Tilde{g}_\e|\leq C\e,\ \mathrm{for\ } |U| \geq \e, \\ 
    &|\partial_U(\overset{\circ}{g}_\e -\Tilde{g}_\e)| \leq C\e, \ \mathrm{and} \\
    &\partial^2_{U,U}(\overset{\circ}{g}_\e -\Tilde{g}_\e)=0,
\end{align*}
and arguing as in section \ref{subsubsec:general_case}, 
we get  that
\begin{align}\label{normal_vs_check_g}
    |\Ric[{g}_\e]- \Ric[\check{g}_\e]| \leq C(K)\e, 
\end{align}
for every compact set $K \Subset \domlor(\hat{g}_\e) \cap \domlor(\check{g}_\e)$.

From the above discussion, we conclude that $\check{g}_\varepsilon \to g$ locally uniformly, and that the Ricci curvature of $\check{g}_\varepsilon$ converges to that of $g$ both distributionally (globally), and locally uniformly away from $\{U=0\}$. We will henceforth restrict $\check{g}_\varepsilon$ to $\domlor(\check{g}_\varepsilon) \cap \domlor(g) \subset \mathbb{R}^4$.

We next recall some basic properties of Lorentzian spaces of constant curvature. Let $m$ denote the Minkowski metric on $\mathbb{R}^4$, i.e.\ $\mathbb{R}^4$ endowed with coordinates $(t,w,x,y)$ and metric $m=\mathrm{diag}(-1,1,1,1)$. For $\Lambda \in \mathbb{R}$, one obtains a Lorentzian metric of constant sectional curvature $\Lambda/3$ (and hence constant Ricci curvature $\Lambda$) via the conformal transformation
\[
m_\Lambda := \frac{m}{\mathcal{P}_\Lambda^2},
\]
where
\[
\mathcal{P}_\Lambda(t,w,x,y) := 1 - \frac{\Lambda}{6}(x^2+y^2+w^2 - t^2).
\]
If $\Lambda \neq 0$, the function $\mathcal{P}_\Lambda$ may vanish; we denote its zero set by
\[
N_\Lambda := \mathcal{P}_\Lambda^{-1}(0),
\]
and set $O_\Lambda := \mathbb{R}^4 \setminus N_\Lambda$. A direct computation shows that $(O_\Lambda, m_\Lambda)$ has constant sectional curvature $\Lambda/3$, and hence constant Ricci curvature $\Lambda$.
Notice that if $\Lambda >0$, $N_\Lambda$ consists of two distinct connected hypersurfaces that separate $O_\Lambda$ into three connected components and for $\Lambda <0$, $N_\Lambda$ consists of one connected hypersurface that splits $O_\Lambda$ into two connected components.
\\The following lemma shows that if two points $p,q \in \mathbb{R}^4$ have a compact causal diamond with respect to $m_\Lambda$, i.e.\ if $J^+_{m_\Lambda}(p) \cap J^-_{m_\Lambda}(q)$ is compactly contained in $\{\mathcal{P}_\Lambda > 0\}$, then the corresponding Minkowski diamond remains disjoint from $\{\mathcal{P}_\Lambda \le 0\}$ and
\(
J^+_{m_\Lambda}(p) \cap J^-_{m_\Lambda}(q) = J^+_{m}(p) \cap J^-_{m}(q).
\)
\medskip

\begin{lemma}\label{lem:DiamDS}
   Consider the Lorentzian manifold $(O_\Lambda =\R^4 \setminus N_\Lambda, m_\Lambda=\frac{m}{\mathcal{P}_\Lambda^2})$ defined above. Let  $p, q \in O_\Lambda$ lie in the same connected component of $O_\Lambda$ and such that $ \emptyset \neq J_{m_\Lambda}^+(p) \cap J_{m_\Lambda}^-(q) \Subset  O_\Lambda$. Then  
   \begin{equation}\label{eq:JmLJm}
   J_{m_\Lambda}^+(p) \cap J_{m_\Lambda}^-(q) = J_{m}^+(p) \cap J_{m}^-(q).
   \end{equation}
\end{lemma}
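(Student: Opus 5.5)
The plan rests on conformal invariance of causal structure. Since $m_\Lambda = m/\mathcal{P}_\Lambda^2$ on $O_\Lambda$, a vector $v \in T_xO_\Lambda$ is $m_\Lambda$-causal and future directed if and only if it is $m$-causal and future directed, with the time orientation $\partial_t$ used for both metrics. Hence a locally Lipschitz curve lying entirely in $O_\Lambda$ is $m_\Lambda$-causal if and only if it is $m$-causal. The identity \eqref{eq:JmLJm} will therefore follow once we show that $m$-causal curves from $p$ to $q$ never leave the connected component $C$ of $O_\Lambda$ that contains $p$ and $q$.

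\textbf{Inclusion $\subseteq$.} Any point of $J^+_{m_\Lambda}(p)\cap J^-_{m_\Lambda}(q)$ lies on an $m_\Lambda$-causal curve from $p$ to $q$. That curve lies in $O_\Lambda$, so it is $m$-causal. This gives $J^+_{m_\Lambda}(p)\cap J^-_{m_\Lambda}(q) \subset J^+_m(p)\cap J^-_m(q)$ at once.

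\textbf{Inclusion $\supseteq$.} Let $D := J^+_{m_\Lambda}(p)\cap J^-_{m_\Lambda}(q)$, which by assumption is nonempty and compact in $O_\Lambda$. We argue by contradiction. Suppose an $m$-causal curve $\gamma:[0,1]\to\R^4$ from $p$ to $q$ meets $N_\Lambda$, and let $s_0$ be its first hitting time, so that $\gamma([0,s_0)) \subset C$.

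For every $s<s_0$, the subcurve $\gamma|_{[0,s]}$ is $m_\Lambda$-causal, so $\gamma(s)\in J^+_{m_\Lambda}(p)$. The aim is to also place $\gamma(s)$ in $J^-_{m_\Lambda}(q)$ for $s$ near $s_0$. Once that is done, $\gamma(s)\in D$ for $s<s_0$ close to $s_0$, and compactness of $D$ in $O_\Lambda$ gives $\gamma(s_0)\in D\subset O_\Lambda$, contradicting $\gamma(s_0)\in N_\Lambda$.

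\textbf{Main obstacle.} Placing $\gamma(s)$ in $J^-_{m_\Lambda}(q)$ is the hard step, because the remaining piece $\gamma|_{[s,1]}$ may itself cross $N_\Lambda$. I would use the explicit geometry of $N_\Lambda = \{x^2+y^2+w^2-t^2 = 6/\Lambda\}$, which is a hyperboloid.

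For $\Lambda>0$ it is spacelike: two sheets in $|t|>0$, each achronal and crossed at most once, transversally, by any causal curve. The function $F := t^2-x^2-y^2-w^2$ is monotone along causal curves in the regions $\{t>0,\ F>0\}$ and $\{t<0,\ F>0\}$. Using this monotonicity, a causal curve that leaves $C$ through a sheet cannot return to $C$, so it cannot reach $q\in C$. Hence $\gamma$ never meets $N_\Lambda$, and the shortcut argument is not even needed.

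For $\Lambda<0$, $N_\Lambda$ is a one-sheeted timelike hyperboloid, and causal curves can cross it in both directions, so the argument above fails. There I would use the hypothesis $D\Subset O_\Lambda$ more directly. If the Minkowski diamond $J^+_m(p)\cap J^-_m(q)$, which is convex and connected, intersects $N_\Lambda$, then its connected intersection with $\overline{C}$ must reach $\partial C\subset N_\Lambda$. Along a null generator of $J^+_m(p)$ or $J^-_m(q)$ inside $C$, points of $D$ then accumulate on $N_\Lambda$. This contradicts compactness of $D$ in $O_\Lambda$.

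The technical point to check carefully in this last case is that points of the Minkowski diamond which are near $\partial C$ and inside $C$ can be joined to both $p$ and $q$ by causal curves that stay in $C$, for example straight segments, using the convexity of Minkowski diamonds together with the shape of $C$.
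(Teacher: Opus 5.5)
\textbf{There is a genuine gap.} Your reduction, the inclusion $\subseteq$, and the monotonicity argument for a spacelike two-sheeted $N_\Lambda$ are all fine; in that case the compactness hypothesis is not even needed. The gap is the case where $N_\Lambda$ is a timelike one-sheeted hyperboloid, and that is where the whole content of the lemma lies.

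(Incidentally, the set you write, $\{x^2+y^2+w^2-t^2=6/\Lambda\}$, is one-sheeted and timelike for $\Lambda>0$, not two-sheeted. The paper's formula for $\mathcal{P}_\Lambda$ and its verbal description differ by a sign. Whatever the convention, one sign of $\Lambda$ always gives the timelike hyperboloid.)

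In the timelike case you need points of $J^+_m(p)\cap J^-_m(q)\cap C$ arbitrarily close to $N_\Lambda$ that lie in $D$. That is, they must be joinable to \emph{both} $p$ and $q$ by causal curves that stay in $C$. You leave this as a technical point and suggest straight segments, but that does not work:
\begin{itemize}
\item Neither complementary component of a one-sheeted hyperboloid is convex, so a segment from a point just inside $\partial C$ to $q$ can leave $C$. For example, take $C$ the component containing the $t$-axis, $p,q$ far apart on the $t$-axis, and a point of the Minkowski diamond just inside $N_\Lambda$ with $t<0$. A direct computation shows the segment to $q$ immediately moves outward and exits $C$.
\item Null generators of $\partial J^+_m(p)$ do not help either, since their points need not lie in $J^-_{m_\Lambda}(q)$.
\end{itemize}
The hypothesis $D\Subset O_\Lambda$ has to be used exactly at this step, and your sketch gives no mechanism for using it.

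\textbf{The missing idea (the paper's proof).} Deform a good curve into the bad one, rather than trying to connect individual points.
\begin{itemize}
\item Since $D\neq\emptyset$, there is an $m_\Lambda$-causal curve $\gamma_0$ from $p$ to $q$ inside $C$. Let $\gamma_1$ be an $m$-causal curve from $p$ to $q$ that leaves $C$, and set $\gamma_\nu:=(1-\nu)\gamma_0+\nu\gamma_1$.
\item By convexity of the Minkowski future cone, every $\gamma_\nu$ is $m$-causal from $p$ to $q$.
\item Let $\nu^*$ be the smallest $\nu$ for which $\gamma_\nu$ meets $\R^4\setminus C$. It is positive because $\gamma_0$ has a Euclidean neighbourhood in $C$, and it is attained because the set of such $\nu$ is closed. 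Pick $t^*$ with $\gamma_{\nu^*}(t^*)\notin C$.
\item For $\nu<\nu^*$ the entire curve $\gamma_\nu$ lies in $C$, so it is $m_\Lambda$-causal from $p$ to $q$, and hence $\gamma_\nu(t^*)\in D$.
\item Letting $\nu\uparrow\nu^*$ gives points of $D$ converging to $\gamma_{\nu^*}(t^*)\in\partial C\subset N_\Lambda$. This contradicts $\dist(D,\partial C)>0$.
\end{itemize}
This produces whole causal curves from $p$ to $q$ inside $C$ passing arbitrarily close to $N_\Lambda$, which is exactly what your argument needs. It uses no information about the shape of $C$ and handles both signs of $\Lambda$ at once.
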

\begin{proof}
    Fix  $\Lambda \neq 0$ and let $p, q$ be as in the statement of the lemma. Denote by $\Omega$ the open connected component of $O_\Lambda$ that contains $p$ and $q$. Since $m_\Lambda$ and $m$ are conformally equivalent on $\Omega$, and the causal relations are conformally invariant,   in order to obtain \eqref{eq:JmLJm}, it suffices to show that  
 \begin{equation}\label{eq:JmOmega}
 J_{m}^+(p) \cap J_{m}^-(q) \subset \Omega.
 \end{equation}
     By assumption, there exists a $m_\Lambda$-causal, future directed curve $\gamma_0:[0, 1] \to \Omega$ such that $\gamma_0(0) =p$, $\gamma_0(1)=q$. Denote 
    $$K:= J_{m_\Lambda}^+(p) \cap J_{m_\Lambda}^-(q)$$ 
    and observe that there exists  $\delta>0$ such that 
    \begin{equation}\label{eq:dKdeOmega}
    \sfd_{euc}(K, \partial \Omega) = \delta>0.
    \end{equation}
    Suppose by contradiction that there exists a $m$-causal, future directed curve $\gamma_1:[0,1] \to \R^4$ such that $\gamma_1(0) = p$ and $\gamma_1(1) = q$ and $\gamma_1([0,1]) \not \subset \Omega$. For $\nu \in [0,1]$, define 
    $$\gamma_\nu:[0,1] \to \R^4, \quad \gamma_\nu(t) := (1-\nu) \gamma_0(t) + \nu \gamma_1(t).$$
    Since every $m_\Lambda$-causal curve is $m$-causal, the convexity of the future cone in Minkowski spacetime implies that $\gamma_\nu$ is $m$-causal for all $\nu \in [0,1]$.  By continuity, $\gamma_0([0,1]),\gamma_1([0,1]) \subset \R^4$ are compact and hence bounded in $\R^4$. Moreover, there exists $\delta>0$ such that $B^{euc}_\delta(\gamma_0([0,1])) \subset \Omega$. Therefore,
    \begin{align}
        \nu^* := \inf\{\nu \in [0,1] \colon \gamma_\nu([0,1]) \cap \Omega^c \neq \emptyset\} \in (0,1].
    \end{align}
    As $\Omega$ is open,  there exists  $t^* \in (0,1)$ such that $\gamma_{\nu^*}(t^*) \notin \Omega$. By definition,  for $\nu < \nu^*$ we have that $\gamma_\nu([0,1]) \subset \Omega$ and hence $\gamma_\nu(t^*) \in J_{m_\Lambda}^+(p) \cap J_{m_\Lambda}^-(q)$. Then, for an arbitrary increasing sequence $\nu _n \in (0, \nu)$ such that $\lim_{n \to \infty} \nu_n = \nu$, the sequence $(\gamma_{\nu_n}(t^*))_{n}$ converges to $\gamma_{\nu}(t^*) \in \partial \Omega$.
     But then there exists  $n \in \N$ such that $\sfd_{euc}(\gamma_{\nu_n}(t^*), \partial\Omega)< \delta$. This contradicts \eqref{eq:dKdeOmega}, since  $\gamma_{\nu_n}(t^*) \in K =J_{m_\Lambda}^+(p) \cap J_{m_\Lambda}^-(q).$ 
\end{proof}
\smallskip

\begin{lemma}\label{rescaled_metric_emeralds_wave}
   Let $g$ be the IPP-wave metric as in \eqref{nonsmooth_metric}, $P$ be as in \eqref{def:P}, and  $M_\Lambda=\domlor{g}\subset \R^4$. Denote by $\Tilde{g}$ the unique continuous extension of $P^2g$ to $\R^4$. Let  $K_1, K_2 \subset M_\Lambda$ be compact sets  contained in the same connected component of $M_\Lambda$.
   Assume that either
        \begin{itemize}
\item   $\Lambda \geq 0$, or  
   \item  $\Lambda <0$ and $J^+_g(K_1) \cap J^-_g(K_2) \subset  M_\Lambda$ is compact. 
    \end{itemize}
    Then 
    \begin{equation}\label{eq:DiamCmpML}
    J^+_{\Tilde{g}}(K_1) \cap J^-_{\Tilde{g}}(K_2)  \Subset M_\Lambda.
    \end{equation}
\end{lemma}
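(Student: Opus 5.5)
The plan is to reduce the diamond of $\Tilde g$ to that of the IPP-wave metric without conformal factor, and then use the compactness results already proved for $g$. The observation is that $\Tilde g = P^2 g$ on $M_\Lambda$. Since causal relations are conformally invariant, on each connected component of $\{P\neq 0\}$ the causal curves of $\Tilde g$ and of $g$ coincide. Moreover $\Tilde g$ is exactly the metric \eqref{nonsmooth_metric} with $\Lambda$ replaced by $0$ (in \eqref{nonsmooth_metric} only the denominator depends on $\Lambda$). Hence $\Tilde g$ is the flat-background IPP-wave with the same profile $\hh$. Two things must be shown: the $\Tilde g$-diamond of $K_1,K_2$ is compact in $\R^4$ (or at least in $\domlor(\Tilde g)$), and it does not reach $\partial M_\Lambda=\{P=0\}$ (together with $\partial\domlor(g)$).

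\textbf{Step 1 (compactness for $\Tilde g$).} The proof of Proposition \ref{non-smooth_metric_globally_hyperbolic}, case $\Lambda=0$, uses only the cut-and-paste picture and subquadraticity of $\hh$. I would rerun it with the point-diamonds replaced by $J^+(K_1)\cap J^-(K_2)$. In that argument $U$ is monotone along causal curves by Proposition \ref{uv_monotone_along_causal_curves}, so $U$ ranges in a bounded interval. The estimate \eqref{eq:IntervalV} becomes uniform in $p\in K_1$, $q\in K_2$ after first shifting $K_1$ slightly to the past and $K_2$ slightly to the future to get $\uu_p<0<\uu_q$ strictly; this is compactness again. It follows that $|Z|$ and $V$ are bounded, so $D:=J^+_{\Tilde g}(K_1)\cap J^-_{\Tilde g}(K_2)$ is relatively compact in $\R^4$. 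Closedness of $D$ follows from the limit curve theorem, as in Lemma \ref{causal_limit_plan}.

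\textbf{Step 2 (staying inside $M_\Lambda$).} It remains to show $D\cap\{P=0\}=\emptyset$, and then $D=J^+_g(K_1)\cap J^-_g(K_2)$. I would follow the homotopy argument of Lemma \ref{lem:DiamDS}, this time in the IPP setting. Let $\Omega$ be the component of $M_\Lambda$ containing $K_1,K_2$, and suppose some $\Tilde g$-causal curve $\gamma_1$ from $K_1$ to $K_2$ leaves $\Omega$. For $\Lambda<0$ the hypothesis gives that $K:=J^+_g(K_1)\cap J^-_g(K_2)$ is compact in $\Omega$, so it has positive distance $\delta$ from $\partial\Omega$. Since convex combinations of causal curves are no longer available for $\Tilde g$ (the cone is not constant), I would instead exploit the structure in the cut-and-paste coordinates \eqref{eq:IPP-C&P}, where $\Tilde g$ is Minkowski on each side of $\mathcal N$. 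On $\{U<0\}$ and $\{U>0\}$ separately I would apply Lemma \ref{lem:DiamDS} to the background $m_\Lambda$. The two pieces are then glued across $\mathcal N$: a $\Tilde g$-causal curve crosses $\mathcal N$ exactly once, at a point $r$, so $\gamma_1$ splits into a Minkowski-causal segment from $K_1$ to $r^-$ and one from $r^+$ to $K_2$. One checks that $r\in K$ (deforming within each side so the segments remain $m_\Lambda$-causal), and then Lemma \ref{lem:DiamDS} applied to each side keeps both segments inside $\Omega$.

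For $\Lambda\ge0$ the argument is simpler. For $\Lambda=0$ there is nothing to prove. For $\Lambda>0$, the set $\{P=0\}$ consists of the hypersurfaces $S^\pm$ described in the proof of Proposition \ref{non-smooth_metric_globally_hyperbolic}; these are $\Tilde g$-achronal spacelike hypersurfaces bounding the components $O_-,O_0,O_+$. A $\Tilde g$-causal curve starting and ending in the same component therefore cannot cross $S^\pm$, since it would have to recross it in the opposite time direction. This yields \eqref{eq:DiamCmpML}.

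\textbf{Main obstacle.} The main obstacle is Step 2 for $\Lambda<0$, specifically establishing that the crossing point $r$ lies in the region controlled by $J^+_g(K_1)\cap J^-_g(K_2)$. I would try first to prove $r\in\Omega$ by contradiction, combining the continuity-in-$\nu$ argument of Lemma \ref{lem:DiamDS} on the pre-impulse side with the warp identification $V\mapsto V-\hh$, which is a homeomorphism of $\mathcal N$ and therefore preserves the distance-to-boundary argument.
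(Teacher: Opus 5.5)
Your overall strategy is the paper's. Its proof identifies $g|_{\{U\le 0\}}$ and $g|_{\{U\ge 0\}}$ with $m_\Lambda$ by explicit coordinate changes and then simply invokes Lemma~\ref{lem:DiamDS}, without saying how curves that cross $\mathcal N$ are handled. Your treatment of $\Lambda\ge 0$ is fine. For $\Lambda>0$ the sheets of $\{P=0\}$ do not meet $\mathcal N$, since $P|_{\{U=0\}}=1+\frac{\Lambda}{6}|Z|^2>0$, and the one-way crossing argument is the one in Proposition~\ref{non-smooth_metric_globally_hyperbolic}. Note that Step~1 uses subquadratic growth of $\hh$, which is not a hypothesis of the lemma, though it is assumed wherever the lemma is applied. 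For $\Lambda<0$, Step~1 is not needed once the two diamonds are shown to coincide.

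The genuine gap is Step~2 for $\Lambda<0$, and that step is the entire content of this case.
\begin{itemize}
\item \emph{The reduction is circular.} To apply Lemma~\ref{lem:DiamDS} to the pre-impulse piece of $\gamma_1$ you need $\emptyset\neq J^+_{m_\Lambda}(p)\cap J^-_{m_\Lambda}(r^-)\Subset O_\Lambda$, i.e.\ $r\in\Omega$ and $p\le_g r$ (and symmetrically on the other side). The only available source for this is $r\in K=J^+_g(K_1)\cap J^-_g(K_2)$, which is the claim itself evaluated at the crossing point.
\item \emph{Deforming within each side cannot help.} Such deformations keep $r$ fixed, and $r$ is exactly where $\gamma_1$ may leave $\Omega$: for $\Lambda<0$, $\partial\Omega$ meets $\mathcal N$ along $\{U=0,\ |Z|^2=6/|\Lambda|\}$. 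The warp being a homeomorphism of $\mathcal N$ does not change this.
\item \emph{What is missing} is a substitute for the convex-combination homotopy of Lemma~\ref{lem:DiamDS}. You need a $C^0$-continuous family of $\Tilde g$-causal curves from $p\in K_1$ to $q\in K_2$, starting at a curve inside $\Omega$ and ending at $\gamma_1$, along which the crossing point moves.
\item \emph{How one might build it, and what that requires.} In cut-and-paste coordinates, first straighten each half by convex combinations with fixed endpoints $p,r_i^{\mp},q$. Then slide the crossing point along the segment from $r_0$ to $r_1$ in $\mathcal N$. The sliding step stays causal only if the set of admissible crossing points read off from \eqref{eq:IntervalV}, namely $\{Z:\ \frac{|Z-\zz_p|^2}{-\uu_p}+\frac{|Z-\zz_q|^2}{\uu_q}+\hh(Z)\le \vv_q-\vv_p\}$, is connected (e.g.\ convex).
\item \emph{That is an extra condition on $\hh$.} Convexity amounts to $D^2\hh\ge -2\big(\frac{1}{-\uu_p}+\frac{1}{\uu_q}\big)\mathrm{Id}$, which is not among the hypotheses. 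It would follow from non-degeneracy of the Rosen chart on $\{0\le U\le \uu_q\}$: there the Jacobian $\mathrm{Id}+\frac{U}{2}D^2\hh$ of $Z\mapsto Z+U\partial_{\cz}\hh$ stays positive definite, giving $D^2\hh>-\frac{2}{\uu_q}\mathrm{Id}$. But this must be established separately, and it fails once the impulse focuses.
\item \emph{A starting curve is also needed.} The homotopy must begin at a curve inside $\Omega$, i.e.\ $p\le_{\Tilde g}q$ must imply $p\le_g q$ for $p,q\in\Omega$. Lemma~\ref{lem:DiamDS} assumes this non-emptiness explicitly; the present statement does not provide it.
\end{itemize}
As it stands, your proposal does not prove the case $\Lambda<0$.
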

\begin{proof}
    Under the coordinate change 
    $$U+V =t,\; U-V=w,\; Z=x+iy,$$ 
    we get that 
    $$g|_{\{ U \leq 0\}} = m_\Lambda.$$
    Analogously, under the coordinate change 
    $$\mathcal{U} + \mathcal{V} = t, \; \mathcal{U} - \mathcal{V} = w,\; \eta = x+iy, \text{ and } \mathcal{U} = U, \;\mathcal{V} = V +\hh+U\partial_Z \hh \partial_{\cz}\hh, \; \eta = Z + U\partial_{\cz}\hh,$$
    we get that 
    $$g|_{\{ U \geq 0\}} = m_\Lambda.$$ 
    The claim \eqref{eq:DiamCmpML} is thus a direct consequence of Lemma \ref{lem:DiamDS}.
\end{proof}

\begin{proposition}[Uniform global hyperbolicity of approximations. Case $\Lambda\geq 0$]\label{uniform_global_hyper_positive}
 Let $g$ be the IPP-wave metric as in \eqref{nonsmooth_metric} with $\Lambda\geq 0$, and  $M_\Lambda=\domlor{g}\subset \R^4$.
    Assume that $|D^2\hh|$ is uniformly bounded and $\hh$ has subquadratic growth, i.e.\ \eqref{eq:hhSubQuad} holds.
    \\Then the approximating family  $(\check{g}_\e)$ defined in \eqref{eq:defgepscheck} is uniformly globally hyperbolic in $M_\Lambda$, namely:
    for all compact sets $K_1, K_2 \subset M_\Lambda$, there exist $\bar{\e}>0$ and a compact set $\mathcal{K} \subset M_\Lambda$, such that
    \begin{equation}\label{eq:UGHLageq0}
        J^+_{\check{g}_\e}(K_1) \cap J^-_{\check{g}_\e}(K_2) \subset \mathcal{K},\quad  \text{for all } \e\in (0,\bar{\e}).
    \end{equation}
\end{proposition}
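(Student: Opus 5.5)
The plan is to reduce uniform global hyperbolicity of $(\check{g}_\e)$ to compactness of causal diamonds of a single comparison structure, chosen so that every $\check{g}_\e$-causal curve is also causal for it. The key observation is that \eqref{eq:gcircQ} adds $4(\e^2+U\e)Q$ to the $34$-entry. By \eqref{difference_to_g_tilde}, this forces $\check{g}_\e\prec g$ on $\{U\geq \e\}$. On $\{U\le -\e\}$ the metric $\check{g}_\e$ coincides with the background metric. Hence only the slab $\{|U|<\e\}$ needs separate treatment.

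\textbf{Causal comparison away from the slab.} By Proposition \ref{uv_monotone_along_causal_curves}, applied to $\check{g}_\e$ (which has the form \eqref{eq:gpGeneral}), both $U$ and $V$ are nondecreasing along future directed $\check{g}_\e$-causal curves. Therefore a curve from $K_1$ to $K_2$ has $U\in[\min_{K_1}U,\max_{K_2}U]$ and $V\in[\min_{K_1}V,\max_{K_2}V]$. It remains to bound $Z$ uniformly in $\e$. I will decompose any such curve into three pieces: the part in $\{U\le-\e\}$, which is background-causal; the part in $\{|U|<\e\}$; and the part in $\{U\ge \e\}$, which is $g$-causal because $\check{g}_\e\prec g$ there.

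\textbf{Crossing the thin slab.} In the slab I will use that $|D^2\hh|$ is bounded. Then the transverse block $\overset{\circ}{g}^\e[3,4]$ is uniformly positive definite, $\ge c\,|dZ|^2$ with $c>0$ independent of $\e$, provided $\e$ is small relative to the $U,V$ range. Causality gives $c|\dot Z|^2\le 2\dot U\dot V$. Integrating by Cauchy--Schwarz over the portion where $U\in(-\e,\e)$ yields $|\Delta Z|^2\lesssim \Delta U\,\Delta V\le 2\e\,\Delta V$. So the slab crossing moves $Z$ by $O(\sqrt{\e})$ and $V$ by a bounded amount.

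\textbf{Bounding $Z$ via Proposition \ref{non-smooth_metric_globally_hyperbolic}.} The piece before the slab ends at a point $p'$ in $J^+_{m_\Lambda}(K_1)$. The piece after the slab starts at a point $q'$ within $O(\sqrt\e)$ of the endpoint of the first piece and ends in $K_2$ through $g$-causal curves in $\{U\ge\e\}$. I will then bound $|Z|$ on the $\{U\ge\e\}$ piece by the same quadratic estimate as in the claim of Proposition \ref{non-smooth_metric_globally_hyperbolic}: interval \eqref{eq:IntervalV}, written in cut-and-paste coordinates via the transformation of Lemma \ref{rescaled_metric_emeralds_wave}. The slab error only perturbs the constants there by $O(\sqrt\e)$ in $Z$ and $O(1)$ in $V$, so subquadratic growth \eqref{eq:hhSubQuad} still forces $|Z|\le R$ with $R$ independent of $\e$. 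The cases where $K_1$ or $K_2$ meets $\{U\geq 0\}$ or $\{U\le 0\}$ only, so that the curve does not cross, are handled directly by global hyperbolicity of the background.

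\textbf{The case $\Lambda>0$, and the main obstacle.} For $\Lambda>0$ I will additionally use Lemma \ref{rescaled_metric_emeralds_wave} and Lemma \ref{lem:DiamDS} to keep the diamond away from the zero set of $P$. The conformal factor $P_\e^{-2}$ does not affect causality, so it suffices to work with $\Tilde g$ and the unscaled $\overset{\circ}{g}^\e$, and then to show that the resulting compact set avoids $\{P_\e=0\}$ for small $\e$ by uniform convergence $P_\e\to P$. I expect this to be the main obstacle: keeping the diamond of $\overset{\circ}{g}^\e$ inside $M_\Lambda$ uniformly, i.e.\ away from $N_\Lambda$, when the curve crosses the impulse. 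I would handle it by enlarging $K_2$ slightly to $K_2'\Subset M_\Lambda$, applying \eqref{eq:DiamCmpML} to $K_1,K_2'$, and absorbing the $O(\sqrt\e)$ slab displacement into this margin.
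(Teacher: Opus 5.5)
There is a genuine gap, in two places. Your use of the $U,V$-monotonicity and of $\check{g}_\e\prec g$ on $\{U\ge\e\}$ is fine (the paper uses both). The trouble starts with your claim that $\check{g}_\e$ coincides with the background metric on $\{U\le-\e\}$, which is false.

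\textbf{The region $\{U\le-\e\}$.} The correction in \eqref{eq:gcircQ} is added for \emph{all} $U$. Since $\Tilde{g}^\e_{34}=1$ for $U\le-\e$, one gets $\overset{\circ}{g}^\e_{34}=1-4Q\e(|U|-\e)$ there; the paper's proof records this identity. So wherever $Q>0$, $\check{g}_\e$ has strictly \emph{wider} cones than the background on $\{U<-\e\}$, and for $|U|\gtrsim 1/(\e Q)$ it is not even Lorentzian. Hence the pre-slab piece is not background-causal, and $p'\in J^+_{m_\Lambda}(K_1)$ fails.

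The repair is the paper's first step. By Proposition~\ref{uv_monotone_along_causal_curves} everything lives in $\Omega_R=\{-R\le U,V\le R\}$. On $\Omega_R$ the function $Q$, together with $A,B,E$, is bounded \emph{uniformly in $Z$}. This is because $|G|\le\alpha_0+\alpha_2|Z|^2$ is compensated by the factor $(1+\frac{\Lambda}{6}|Z|^2)^{-2}$ in \eqref{final_relaxation_coefficiens_noflat} when $\Lambda\ge0$. Consequently all $\overset{\circ}{g}_\e$ with $\e$ small are dominated on $\{U\le\e_0\}\cap\Omega_R$ by one flat metric with wider cones, namely $m_{1/2}$.

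Your slab bound $\overset{\circ}{g}^\e[3,4]\ge c\,|dZ|^2$ silently needs the same uniform-in-$Z$ control. Bounded $D^2\hh$ controls $\hat{g}^\e[3,4]$, but not $A\kappa_\e+B\zeta_\e+E\iota_\e$ or the $Q$-term. This is one of the places where $\Lambda\ge0$ is really used.

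\textbf{Compactness inside $M_\Lambda$.} Even granting uniform bounds on $U,V,Z$, you only get a compact subset of $\R^4$, whereas the statement requires $\mathcal K\subset M_\Lambda$. For $\Lambda=0$ you do not address this. Yet $M_0$ excludes the points of $\{U>0\}$ where $\Tilde{g}[3,4]$ degenerates (focusing of the wave), and bounds on $U,V,Z$ do not keep you away from them. For $\Lambda>0$, ``absorbing the $O(\sqrt\e)$ slab displacement into the margin of $K_2'$'' presupposes that a $\check{g}_\e$-causal curve from $K_1$ to $K_2$ stays near $J^+_{\Tilde{g}}(K_1)\cap J^-_{\Tilde{g}}(K_2')$. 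But that curve is not $\Tilde{g}$-causal before or inside the slab, and the size of its $Z$-jump does not settle this.

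The paper closes this step softly. First it places all diamonds in a fixed compact $K''\subset\R^4$, built from $m_{1/2}$ on $\{U\le\e_0\}$ and from $\overset{\circ}{g}_\e\prec\Tilde{g}$ on $\{U\ge\e_0\}$, with no use of subquadratic growth. Then it picks a globally hyperbolic $g'\succ\Tilde{g}$ \cite[Thm.~4.5]{samann2016global}, so that $\overset{\circ}{g}_\e\prec g'$ on $K''$ for small $\e$. Let $S=J^+_g(B_\delta(K_1))\cap J^-_g(B_\delta(K_2))$, which is compact in $M_\Lambda$ by Proposition~\ref{non-smooth_metric_globally_hyperbolic} and Lemma~\ref{rescaled_metric_emeralds_wave}. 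If diamond points escaped $\mathcal K:=\overline{B^h_{\theta/2}}(S)$, the limit curve theorem would give a $\Tilde{g}$-causal curve from $K_1$ to $K_2$ through a point outside $\mathcal K$, a contradiction. Subquadratic growth enters only here.

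An explicit substitute would also work: dilate the $V$-component of each $\check{g}_\e$-causal curve by a factor $1+O(\e)$. Since $\Tilde{g}$ does not depend on $V$ and $\overset{\circ}{g}^\e[3,4]\ge(1-C\e)\,\Tilde{g}[3,4]$ on $\Omega_R$, this yields a $\Tilde{g}$-causal curve ending within $O(\e)$ of $K_2$ in the $V$-direction. Either way, some such comparison argument is missing from your proposal. Once it is in place, your quadratic $Z$-estimate is no longer needed.
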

\begin{proof}
    Let $a_2 \in \R$ be such that $|D^2\hh| \leq a_2$ everywhere. Then  there exist  $a_0, a_1\in \R$ such that
    $$|D\hh| \leq a_1 + a_2|Z| \text{ and }|\hh| \leq a_0 + a_1|Z| + a_2|Z|^2.$$
    It follows that there exist $\alpha_0, \alpha_2\in \R$ such that
    \begin{align*}
        |G(Z, \cz)| + |\hh| \leq \alpha_0 + \alpha_2 |Z|^2.
    \end{align*}
     Recalling the expressions of $A_0, B_0$, and $E_0$ (see \eqref{def_A}, \eqref{def_B}, and \eqref{def_E}), \eqref{final_relaxation_coefficiens_noflat}, and using the assumption $\Lambda \geq 0$, we get that there exists a constant $C = C(a_2, \alpha_0, \alpha_2, \Lambda)>0$ such that 
     \begin{align*}
         |A| + |B| + |E| + (1+|D^2\hh|)^2 \leq C(1+V).
     \end{align*}
     Recall the definition  \eqref{eq:gcircQ} of $\overset{\circ}{g}^\e$ in terms of $Q$, as  in \eqref{difference_to_g_tilde}.
     The estimates above imply that there exists a constant $C' = C'(a_2, \alpha_0, \alpha_2, \Lambda) > 0$, such that   $Q \geq  C'(1+V)$. That gives that, for all $\e > 0$:
     \begin{align*}
         &|\overset{\circ}{g}^\e_{34}-1| \leq Q (U^++\e)(1+|U|), \nonumber \\
          &|\overset{\circ}{g}^\e_{33}| + |\overset{\circ}{g}^\e_{44}|  \leq Q(U^++ \e)(1+U^+).
     \end{align*}
     Now, fix  $R>0$ and consider the set 
     $$\Omega_R:= \{-R \leq U, V \leq R\}.$$ 
     By continuity, up to replacing $Q$ with its maximum, we may choose $Q$ to be constant on $\Omega_R$.
     By Lemma \ref{uv_monotone_along_causal_curves}, for every two compact sets $K_1, K_2\subset M_\Lambda$ there exists $R>0$ such that  
     $$J^+_{\Tilde{g}}(K_1) \cap J^-_{\Tilde{g}}(K_2)\subset \Omega_R.$$
     It follows that  there exists $\e_0>0$ such that 
     \begin{align*}
         \overset{\circ}{g}_\e[3,4] \geq \frac{1}{2}\mathrm{Id}_2, \ \mathrm{on\ } \{|U| \leq \e_0\}\cap \Omega_R, \quad \text{for all }\e \in (0, \e_0),
     \end{align*}
     in particular, 
     $$\{|U| \leq \e_0\} \cap \Omega_R \subset \domlor(\overset{\circ}{g}^\e),\quad  \text{ for } \e \in (0, \e_0).
     $$
      Furthermore,  for $U \leq -\e_0$ and $\e \in (0, \e_0)$, it holds that
      \begin{align*}
           &\overset{\circ}{g}^\e_{34}-1 = 4Q\e(\e + U), \nonumber \\
          &\overset{\circ}{g}^\e_{33}=\overset{\circ}{g}^\e_{44} =0.
      \end{align*}
     Hence, there exists  $\e_1 >0$ such that
     $$\Omega_R \cap \{U \leq \e_0\} \subset \domlor(\overset{\circ}{g}_\e), \quad \text{for all } \e \in (0, \e_1).$$
     Fix two compact sets $K_1, K_2 \subset \domlor(g) \cap \Omega_R$. 
      From the explicit expression of $\overset{\circ}{g}_\e$,  there exists  $\e_2 \in (0, \min(\e_1, \e_0))$ such that 
      \begin{align*}
          \overset{\circ}{g}_\e \prec m_{1/2} := \begin{pmatrix}
              0 & -1 & 0 & 0 \\
               -1 &  0 & 0 & 0\\
              0 & 0 & 0 & \frac{1}{2} \\
              0&0& \frac{1}{2} & 0
          \end{pmatrix}, 
          \quad \text{on } \{U \leq \e_0\}, \; \text{ for all } \e \in (0, \e_2).
      \end{align*}
      By the properties of the Minkowski metric, the set
      \begin{align*}
          K:=J^+_{m_{1/2}}(K_1) \cap \{U\leq \e_0, V \leq R\} \subset \R^4
     \end{align*}
      is compact in $\R^4$ with the Euclidean topology and so is $K_0 := K \cap \{U = \e_0\} \subset \R^4$. Note that,  for all $ \e \in (0, \e_2)$, it holds 
     \begin{align*}
          J^+_{\overset{\circ}{g}_\e}(K_1) \cap J^-_{\overset{\circ}{g}_\e}(K_0) \cap \{U\leq \e_0\} \subset J^+_{m_{1/2}}(K_1)\cap J^-_{m_{1/2}}(K_0) \cap \{U\leq \e_0\}.
      \end{align*}
    Since $m_{1/2}$ is a rescaled Minkowski metric, we have that 
    \begin{align*}
        J^+_{m_{1/2}}(K_1)\cap J^-_{m_{1/2}}(K_0) \subset \R^4
    \end{align*}
    is compact.
    Recalling that $\overset{\circ}{g}_\e \prec \Tilde{g}$ on $\{U \geq \e_0\}$, we get that 
    \begin{align*}
         J^+_{\overset{\circ}{g}_\e}(K_0) \cap J^-_{\overset{\circ}{g}_\e}(K_2) \subset  J^+_{\Tilde{g}}(K_0) \cap J^-_{\Tilde{g}}(K_2),
    \end{align*}
    where the right-hand side again is compact in $\domlor(\Tilde{g}) \subset \R^4$ as $\Tilde{g}$ is the Minkowski metric under a coordinate change. Hence,  the compact set $K'' \subset \R^4$ defined by 
    \begin{align*}
        K'' := (J^+_{m_{1/2}}(K_1)\cap J^-_{m_{1/2}}(K_0)) \cup (J^+_{\Tilde{g}}(K_0) \cap J^-_{\Tilde{g}}(K_2))
    \end{align*}
    satisfies that 
    $$J^+_{\check{g}_\e}(K_1) \cap J^-_{\check{g}_\e}(K_2) \subset K'', \quad \text{for all } \e\in (0,\e_2).$$
    By  \cite[Thm.\ 4.5]{samann2016global}, there exists a Lorentzian metric $g'$ on $\R^4$ such that $g'$ is globally hyperbolic and $g' \succ \Tilde{g}$. By Lemma \ref{rescaled_metric_emeralds_wave} and the global hyperbolicity (see Proposition \ref{non-smooth_metric_globally_hyperbolic}) we have that 
    $$S:= J^+_{{g}}(B_\delta(K_1)) \cap J^-_{{g}}(B_\delta(K_2)) = J^+_{\Tilde{g}}(B_\delta(K_1)) \cap J^-_{\Tilde{g}}(B_\delta(K_2)) \subset \domlor(g)$$ is compact. Thus  there exists  $\theta > 0$ such that $$B^h_\theta(S) \subset \domlor(g).$$ Denote $$\mathcal{K}:= \overline{B^h_{\theta/2}}(S).$$ 
    Since $\mathcal{K}\subset M_\Lambda$ is compact, the desired \eqref{eq:UGHLageq0} follows by the following claim.
    
    \textbf{Claim:} There exists  $\bar{\e} > 0$ such that 
    \begin{align*}
         J^+_{\check{g}_\e}(K_1) \cap J^-_{\check{g}_\e}(K_2) \subset \mathcal{K}, \quad \text{for all }\e \in (0, \bar{\e}).
    \end{align*}

    \emph{Proof of the claim.} If that was not the case, one could find a sequence $\e^k>0$ such that $\e^k \to 0$, $x_{\e^k} \in \mathcal{K}^c$ and $\check{g}_{\e^k}$-causal curves $\gamma_{\e^k}$ such that $\gamma_{\e^k}(0) \in K_1$, $\gamma_{\e^k}(1) \in K_2$ and $\gamma_{\e^k}((0,1)) \ni x_{\e^k}$. It follows that $\gamma_{\e^k}$ is $\overset{\circ}{g}_{\e^k}$-causal and hence it is contained in $K''$. For $\e^k$ small enough we have that $g' \succ \overset{\circ}{g}_{\e^k}$ on $K''$. 
    We may now reparametrise the curves to unit Euclidean speed and get a limit curve $\gamma$ by Arzel\'a--Ascoli's theorem and the fact that $g'$ is non-totally imprisoning. 
    Note that there exists a $t_0>0$ such that $\gamma_{\e^k}^{-1}(\mathcal{K}^c) \subset (t_0, 1-t_0)$.
   By the limit curve theorem (see \cite[Thm.\ 3.1 (1)]{minguzzi2008limit}),  the limit curve $\gamma$ is $\Tilde{g}$-causal and passes through a point $x \in \mathcal{K}^c$. This gives a contradiction, thus proving the claim.  
\end{proof}

The case of negative cosmological constant requires additional care. An important step in the proof is a coordinate change, see \eqref{eq-neg-coord-shift-def}.
\smallskip
\begin{proposition}[Uniform global hyperbolicity of approximations. Case $\Lambda<0$]\label{lambda_negative_coord_shift}
 Let $g$ be the IPP-wave metric as in \eqref{nonsmooth_metric} with $\Lambda< 0$, and  $M_\Lambda=\domlor{g}\subset \R^4$.
    Assume that $|D^2\hh|$ is uniformly bounded and $\hh$ has subquadratic growth, i.e.\ \eqref{eq:hhSubQuad} holds. Let $K_1, K_2 \subset M_\Lambda$ be compact sets  such that $J^+_g(K_1) \cap J^-_g(K_2) \subset  M_\Lambda$ is compact. Then the approximating family  $(\check{g}_\e)$, defined in \eqref{eq:defgepscheck}, is uniformly globally hyperbolic in a neighbourhood of $J^{+}_{g}(K_1) \cap J^-_{g}(K_2)$.   
\end{proposition}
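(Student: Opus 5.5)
We follow the proof of Proposition \ref{uniform_global_hyper_positive} as closely as possible, and use the coordinate shift mentioned before the statement to handle the unbounded growth of the correction coefficients. The obstruction for $\Lambda<0$ is twofold. First, the coefficients $A,B,E$ in \eqref{final_relaxation_coefficiens_noflat} now carry the factor $(1+\frac{\Lambda}{6}|Z|^2)^{-2}$, which blows up near $\{1+\frac{\Lambda}{6}|Z|^2=0\}$. Second, $E_0$ grows linearly in $V$. Hence $Q$ in \eqref{difference_to_g_tilde} need not satisfy a global bound of the form $C(1+|V|)$.

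We therefore localise first. Set $D:=J^+_g(K_1)\cap J^-_g(K_2)$, which is compact in $M_\Lambda$ by hypothesis. Pick $\delta>0$ with $B^h_{2\delta}(D)\Subset M_\Lambda$, and let $D_\delta:=J^+_g(\overline{B_\delta(K_1)})\cap J^-_g(\overline{B_\delta(K_2)})$. Shrinking $\delta$ and using continuity of the diamonds in the compact-diamond regime (the Minkowski comparison of Lemma \ref{lem:DiamDS} on each side of the impulse), we may assume $D_\delta$ is still compact in $M_\Lambda$. On $B^h_{\delta}(D_\delta)$ all quantities $P$, $1+\frac{\Lambda}{6}|Z|^2$, $\tilde d$ are bounded away from $0$, and $|V|,|Z|$ are bounded. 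Thus $Q$ may be replaced there by a constant, as in the $\Lambda\ge0$ proof.

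The coordinate shift \eqref{eq-neg-coord-shift-def} is then used as follows. Translate $V\mapsto V-V_0$ (and, if needed, choose the origin of the $(Z,\bar Z)$-coordinates via a symmetry of the background) so that the region lies near $V=0$. The term $E\iota_\e$ is then uniformly of size $O(\e^2+\e U^+)$ on the region, and the estimates
\begin{equation*}
|\overset{\circ}{g}^\e_{34}-1|\le Q(U^++\e)(1+|U|),\qquad |\overset{\circ}{g}^\e_{33}|+|\overset{\circ}{g}^\e_{44}|\le Q(U^++\e)(1+U^+)
\end{equation*}
hold with a constant $Q$.

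With this in place, we repeat the three-slab argument. On $\{U\le \e_0\}$, the metric satisfies $\check g_\e\prec m_{1/2}/P_\e^2$, and the causal relations are conformally invariant, so diamonds are controlled by a rescaled Minkowski metric. On $\{U\ge\e_0\}$ we have $\overset{\circ}{g}_\e\prec\tilde g$, which is Minkowski in the shifted coordinates of Lemma \ref{rescaled_metric_emeralds_wave}. Gluing the two pieces along $K_0=K\cap\{U=\e_0\}$ gives a compact $K''$ containing all $\check g_\e$-diamonds of points in the neighbourhood. Here we must also intersect with the component of $O_\Lambda$ (via Lemma \ref{lem:DiamDS}) so that the conformal factor does not vanish.

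Finally, the contradiction argument of the Claim in Proposition \ref{uniform_global_hyper_positive} applies. Take a globally hyperbolic $g'\succ\tilde g$ near $K''$ from \cite[Thm.\ 4.5]{samann2016global}, apply the limit curve theorem \cite{minguzzi2008limit}, and use compactness of $D_\delta$. This shows that for $\e$ small the diamonds lie in $\mathcal K=\overline{B_{\delta/2}}(D_\delta)$. The step we expect to be the main obstacle is guaranteeing that the Minkowski comparison diamond $K''$ stays inside the region where $P_\e\neq0$ and $Q$ is bounded. Unlike the $\Lambda\ge0$ case, the Minkowski diamond could a priori reach $N_\Lambda$ or the singular locus of $A,B,E$. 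We would first try to exclude this using the compactness hypothesis together with Lemma \ref{lem:DiamDS} applied separately on each side of $\{U=0\}$, after the coordinate shift.
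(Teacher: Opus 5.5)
There is a genuine gap, and it is the obstacle you flag at the end, but it already breaks your localisation step. You assert that $1+\frac{\Lambda}{6}|Z|^2$ is bounded away from $0$ near $D=J^+_g(K_1)\cap J^-_g(K_2)$. However, $D\Subset M_\Lambda$ only gives $P\neq 0$, and $P=1+\frac{\Lambda}{6}|Z|^2$ holds only on $\{U=0\}$. For $U>0$ on the locus $\{1+\frac{\Lambda}{6}|Z|^2=0\}$ one has $P=-\frac{\Lambda}{6}U(V+G)$, so this locus meets $M_\Lambda$, and the hypotheses do not prevent $D$ from touching it. There $A,B,E$ blow up (e.g.\ $A_0=-\frac{\Lambda^2}{18}G^2$ on that locus). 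Since $\kappa_\e,\zeta_\e,\iota_\e$ do not vanish on $\{U>\e\}$, $\overset{\circ}{g}_\e$ is singular there, so in the original coordinates $Q$ cannot be made constant near an arbitrary admissible $D$.

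More fundamentally, bounds near $D$ are not what the comparison needs. To trap $\check g_\e$-causal curves from $K_1$ to $K_2$ you need $\overset{\circ}{g}_\e\prec m_{1/2}$ (resp.\ $\prec\Tilde g$) on every region these curves could a priori visit. That region is the $m_{1/2}$-future of $K_1$ inside the slab, which is much larger than $B^h_\delta(D_\delta)$ and can reach $N_\Lambda$ or the singular locus of $A,B,E$. That the curves stay near $D$ is the conclusion, so the localisation is circular. Lemma \ref{lem:DiamDS} cannot repair this: it compares the exact background $m_\Lambda$ with $m$. It says nothing about the wider cones of $m_{1/2}$ or $\check g_\e$, nor about where $A,B,E$ blow up.

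The missing idea is the actual role of the shift \eqref{eq-neg-coord-shift-def}, which translates the profile in $Z$, not in $V$.
\begin{itemize}
\item By Proposition \ref{uv_monotone_along_causal_curves}, all causal curves between $K_1$ and $K_2$ stay in the slab $\Omega_R=\{-R\le U,V\le R\}$. Hence the linear $V$-growth of $E_0$ is harmless and no $V$-translation is needed.
\item On $\Omega_R\cap\{|Z|\ge Z_0\}$ one has $|1+\frac{\Lambda}{6}(|Z|^2-UV)|\ge c(1+|Z|^2)$. Together with $|D^2\hh|$ bounded (so $|G|\lesssim 1+|Z|^2$), this makes $A,B,E$, and hence $Q$, uniformly bounded there.
\item Consequently, uniformly in $\e$, $\overset{\circ}{g}_\e\prec m_{1/2}$ on $\{U\le\e_0\}$ and $\overset{\circ}{g}_\e\prec\Tilde g$ on $\{U\ge\e_0\}$ within this region.
\item One then chooses $w$ so that $K_1$, $K_2$, $K=J^+_{m_{1/2}}(K_1)\cap\{U\le\e_0,V\le R\}$ and $K'=J^-_{\Tilde g}(K_2)\cap J^+_{\Tilde g}(K_0)$ all lie in $\{|Z|\ge Z_0+8R+1\}$.
\item Since $m_{1/2}$-causal sets inside $\Omega_R$ change $|Z|$ by at most $4R$, every relevant curve stays in the controlled region, which yields the compact set $K''=K\cup K'$.
\end{itemize}
Only after this does the limit-curve Claim of Proposition \ref{uniform_global_hyper_positive} apply, as you describe.
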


\begin{proof}
    For any $w \in \C$,  consider the function 
    \begin{equation}\label{eq-neg-coord-shift-def}
    \hh_w: (Z, \cz) \mapsto \hh(Z-w, \cz-\Bar{w}).
    \end{equation}
    Let  $R>0$ be such that 
    $$K_1, K_2 \subset \Omega_R:=\{-R \leq U, V \leq R\}.$$
    For $|Z| \geq (1+ \frac{|\Lambda|}{6} +\frac{6}{|\Lambda|} +R)^4$ and $|U|, |V| \leq R$, we get that
    \begin{align*}
        \left|1+\frac{\Lambda}{6}(|Z|^2-UV)\right| \leq \frac{|\Lambda|}{6}(|Z|^2 - |Z|).  
    \end{align*}
    Then there exists  $Z_0 \geq (1+ \frac{|\Lambda|}{6} +\frac{6}{|\Lambda|} +R)^4$ and  $c>0$ such that 
    \begin{align}
         \left|1+\frac{\Lambda}{6}(|Z|^2-UV)\right| \geq c(1+ |Z|^2)\ \mathrm{on}\ \{|Z| \geq Z_0\} \cap \Omega_R.
    \end{align}
    Since $|D^2\hh|$ is  bounded, we can argue as in the proof of Proposition \ref{uniform_global_hyper_positive} and infer that
    \begin{itemize}
 \item   we may choose $Q$ constant on $\Omega_R \cap \{|Z| \geq Z_0\}$;
 \item there exists $\e_0 >0$ such that, for all $\e \in (0, \e_0)$:
 \begin{enumerate}
\item[(i).] $\Tilde{g}, \overset{\circ}{g}_\e$ are well defined and Lorentzian on $\{|Z| \geq Z_0\} \cap \Omega_R \cap \{U \leq \e_0\}$;
\item[(ii).] $\overset{\circ}{g}_\e \prec \Tilde{g}$ on $\Omega_R \cap \{|Z| \geq Z_0\} \cap \{U \geq \e_0\} \cap \domlor{\overset{\circ}{g}_\e}$. 
\end{enumerate}
\end{itemize}
It follows that there exists  $\e_1 \in (0, \e_0)$ such that 
    \begin{align}\label{control_by_rescaled_minkowski_large_Z}
        \overset{\circ}{g}_\e \prec m_{1/2}, \ \mathrm{on} \ \{|Z| \geq Z_0\} \cap \Omega_R \cap \{U \leq \e_0\}, \quad \text{for all } \e \in (0, \e_1).
    \end{align}
  From the properties of the Minkowski metric it follows that for each $Z_1>0$ and each set $\mathcal{K} \subset \Omega_R \cap \{|Z| \geq Z_1\}$: 
  \begin{align}\label{minkowski_growth}
      (J^+_{m_{1/2}}(\mathcal{K}) \cap \Omega_R), (J^-_{m_{1/2}}(\mathcal{K}) \cap \Omega_R) \subset \Omega_R \cap \{|Z| \geq Z_1-4R\}.
  \end{align}
 Moreover,  
 \begin{align*}
     K:= J^+_{m_{1/2}}(K_1) \cap \{U \leq \e_0, V \leq R\}
 \end{align*}
 is compact.
  Define $K_0 := K \cap \{U= \e_0\} \subset \R^4$ and note that it is compact. 
  Define the compact set
 \begin{align}\label{compactness_after_eps_0}
       K':=J^-_{\Tilde{g}} (K_2) \cap J^+_{\Tilde{g}}(K_0) \subset \{U \geq \e_0\}.
  \end{align}
  Choosing a suitable  $w \in \C$ and using $\hh_w$ instead of $\hh$, we get that 
  $$K_1 \cap \{U \leq \e_0\}, K_2 \cap \{U \leq \e_0\},\quad  K', K \subset \{|Z| \geq Z_0 + 8R +1\}.$$
  Moreover,  for all $\e \in (0, \e_1) \subset (0, \e_0)$, it holds that 
  \begin{align}\label{compactness_after_eps_0}
      J^-_{\overset{\circ}{g}_\e} (K_2) \cap J^+_{\overset{\circ}{g}_\e}(K_0) \subset J^-_{\Tilde{g}} (K_2) \cap J^+_{\Tilde{g}}(K_0) \subset \{U \geq \e_0\}.
  \end{align}
  By \eqref{control_by_rescaled_minkowski_large_Z}, \eqref{minkowski_growth}, and our choice of $w$, we have that for $\e \in (0, \e_1)$, it holds
   \begin{align*}
      J^+_{\overset{\circ}{g}_\e}(K_1 \cap \{U \leq \e_0\}) \cap \Omega_R \subset J^+_{m_{1/2}}(K_1 \cap \{U \leq \e_0\}) \cap \{|Z| \geq Z_0 +1\} \cap \Omega_R \cap \{U \leq \e_0\}.
  \end{align*}
  This gives that 
  \begin{align*}
      J^+_{\overset{\circ}{g}_\e}(K_1) \cap J^-_{\overset{\circ}{g}_\e}(K_2) &\subset (J^-_{\Tilde{g}} (K_2) \cap J^+_{\Tilde{g}}(K_0)) \cup  J^+_{\overset{\circ}{g}_\e}(K_1 \cap \{U \leq \e_0\})  \cap \Omega_R) \\
      & \subset K \cup K'=: K'', \quad \text{for all } \e \in (0 , \e_1),
  \end{align*}
  where $K''\subset \R^4$ is compact. We now complete the proof by following the arguments of Proposition \ref{uniform_global_hyper_positive}.
  \end{proof}
\subsection{Synthetic timelike lower Ricci curvature bounds for IPP-waves}

We now combine the results of the previous subsections to establish the $\tcd$-condition for impulsive gravitational waves.
\smallskip

\begin{theorem}\label{thm:tcd_for_waves}
Let $\hh \in C^{3,1}_{\mathrm{loc}}(\mathbb{C}^2;\mathbb{R})$ be of subquadratic growth, i.e.\ \eqref{eq:hhSubQuad} holds, and assume that $D^2 \hh$ is bounded. Moreover, suppose that
\begin{equation*}
     -2\partial^2_{Z, \cz} \hh  -\frac{\Lambda(\hh - Z\partial_Z \hh -\cz \partial_{\cz}\hh)}{3(1+\frac{\Lambda}{6}|Z|^2)} \geq 0 \quad \text{on } \left\{1+\frac{\Lambda}{6}|Z|^2 \neq 0 \right\}.
     \end{equation*}
Let $g=g_{\hh, \Lambda}$ be the IPP-wave metric as in \eqref{nonsmooth_metric} with profile function $\hh$ and cosmological constant $\Lambda$, defined on   $M_\Lambda=\domlor{g}\subset \R^4\simeq\C^2$.
    \begin{itemize}
        \item Case $\Lambda\geq 0$. The impulsive gravitational wave $(M_\Lambda, g_{\hh, \Lambda})$ satisfies the $\tcd^e_p(\Lambda, 4)$-condition, for all $p\in (0,1)$;
        \item Case $\Lambda<0$. The impulsive gravitational wave $(M_\Lambda, g_{\hh, \Lambda})$ satisfies the $\tcd^e_p(\Lambda, 4)$-condition locally, in the following sense. Let  $K_1, K_2 \subset M_\Lambda$ be compact sets  such that $J^+_g(K_1) \cap J^-_g(K_2) \subset  M_\Lambda$ is compact; then $(J^+_g(K_1) \cap J^-_g(K_2), g_{\hh, \Lambda})$ satisfies the $\tcd^e_p(\Lambda, 4)$-condition, for all $p\in (0,1)$.
    \end{itemize}
\end{theorem}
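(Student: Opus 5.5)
The plan is to apply the stability Theorem~\ref{stability_tcd} to the corrected smooth approximations $\check{g}_\e$ from \eqref{eq:defgepscheck}, taking $\e=\e_j\to 0$. Then $\wtcd$ will be upgraded to $\tcd$ using timelike non-branching (Proposition~\ref{prop-non-bra}) together with \cite[Thm.~3.35]{Bra:23}.

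\textbf{Step 1: the limit space.} First I check the hypotheses on $g_\infty=g_{\hh,\Lambda}$.
\begin{itemize}
\item For $\Lambda\ge 0$, global hyperbolicity is Proposition~\ref{non-smooth_metric_globally_hyperbolic}. For $\Lambda<0$, I work on the compact diamond $J^+_g(K_1)\cap J^-_g(K_2)$, or on a slightly larger globally hyperbolic open neighbourhood, restricting all metrics there.
\item Every maximiser must have a causal character. By \cite[Thm.~1.1]{LLS:21} maximisers are $C^{1,1}$ Filippov geodesics. Hence $g(\dot\gamma,\dot\gamma)$ is continuous, and in fact constant, since it is conserved across the impulse. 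So each maximiser is either timelike throughout or null throughout.
\item The Lipschitz timelike maximiser assumption of Proposition~\ref{locally_uniform_convergence_tau} follows from the same regularity.
\end{itemize}

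\textbf{Step 2: the approximations.}
\begin{itemize}
\item Local uniform convergence $\check{g}_\e\to g$ was noted after \eqref{eq:defgepscheck}.
\item Uniform global hyperbolicity is Proposition~\ref{uniform_global_hyper_positive} for $\Lambda\ge0$ and Proposition~\ref{lambda_negative_coord_shift} for $\Lambda<0$. Both use that $D^2\hh$ is bounded and $\hh$ is subquadratic. In the $\Lambda<0$ case I replace $\hh$ by the shifted profile $\hh_w$, which is isometric via a translation in $Z$.
\item A time orientation common to all the metrics is $T=\partial_U+\partial_V$.
\end{itemize}

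\textbf{Step 3: the curvature bound, which is the key point.} By \eqref{approximation_ricci} and \eqref{normal_vs_check_g}, on each compact $\mathcal{K}$
\[
\Ric[\check g_\e]=\Big(-2\partial^2_{Z,\cz}\hh-\tfrac{\Lambda G}{3(1+\frac{\Lambda}{6}|Z|^2)}\Big)\rho_\e\, dU\otimes dU+\Lambda\check g_\e+r^\e,
\]
with $r^\e\to0$ uniformly. The hypothesis makes the first coefficient nonnegative, and $\rho_\e\ge0$. Therefore, for timelike $v$,
\[
\Ric(v,v)\ge \Lambda\,\check g_\e(v,v)-|r^\e||v|_h^2 .
\]
To convert the error into a lower bound of the form $(\Lambda-c_\e)|\check g_\e(v,v)|$, I need $|v|_h^2\lesssim|\check g_\e(v,v)|$. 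This fails near the light cone, and that is the main obstacle. Since $\tcd^e_p(K,N)$ is equivalent to the pointwise bound on unit timelike vectors, I would rely on the fact that McCann's equivalence only needs $\Ric(v,v)\ge K$ for $g$-unit timelike $v$. I would handle the degeneration by a cone argument: $\Ric(v,v)$ and $|v|^2_h$ are quadratic, and the error term is a symmetric tensor $r^\e$ that is uniformly small. The honest fix is to absorb $r^\e$ into a slightly smaller $K$ on unit vectors, using a Lorentzian Cauchy--Schwarz bound on $\mathcal K$ of the form
\[
|r(v,v)|\le \|r\|\bigl(|\check g_\e(v,v)|+\check g_\e(T,v)^2\bigr).
\]
If this does not close, I would weaken the target to the bound $\tcd^e_p(\Lambda-\delta,4)$ on each compact set, using the localized version. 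In either case, $(M,\check g_\e)$ restricted to the relevant compact region is $\tcd^e_p(\Lambda-\delta_\e,4)$ with $\delta_\e\to0$.

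\textbf{Step 4: conclusion.} Theorem~\ref{stability_tcd}, applied with a fixed $\delta$, gives $\wtcd^e_p(\Lambda-\delta,4)$ for the limit. Letting $\delta\to0$ and using continuity of the distortion coefficients $\sigma^{(t)}_{K/N}$ in $K$ yields $\wtcd^e_p(\Lambda,4)$. Finally, timelike non-branching and \cite[Thm.~3.35]{Bra:23} give $\tcd^e_p(\Lambda,4)$. For $\Lambda<0$ the same conclusion holds on the diamond, because test measures supported there have geodesics that stay inside it.
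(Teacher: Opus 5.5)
Your route is the paper's own: apply Theorem~\ref{stability_tcd} to $\check g_\e$, get uniform global hyperbolicity from Propositions~\ref{uniform_global_hyper_positive} and~\ref{lambda_negative_coord_shift}, then upgrade via timelike non-branching. However, Step~3 fails as written, for two reasons.

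\textbf{The sign.} For timelike $v$ one has $\Lambda\,\check g_\e(v,v)=-\Lambda|\check g_\e(v,v)|$. So $\Ric(v,v)\ge\Lambda\check g_\e(v,v)-\dots$ can at best become $\Ric(v,v)\ge(-\Lambda-c_\e)|\check g_\e(v,v)|$. In the convention $\tcd^e_p(K,N)\Leftrightarrow\Ric(v,v)\ge K|g(v,v)|$ recalled in Section~\ref{Sec:2}, this is a $\tcd^e_p(-\Lambda,4)$ bound, not $\tcd^e_p(\Lambda,4)$. For $\Lambda<0$ that would be stronger than the claim, by monotonicity in $K$. For $\Lambda>0$, however, your target is false. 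The profile $\hh\equiv0$ satisfies every hypothesis, and then $g$ is de Sitter with $\Ric=\Lambda g$; the paper's own computation gives $\Ric_{12}=\Lambda g_{12}$ and $\Ric_{34}=\Lambda g_{34}$. Hence $\Ric(v,v)=-\Lambda|g(v,v)|$, and $\tcd^e_p(\Lambda,4)$ fails by McCann's characterisation. The paper's proof makes the same slip when it passes from $\Ric[\check g_\e]\ge(\Lambda-r_\e)\check g_\e$ to $\wtcd^e_p(\Lambda,4)$; what this strategy can deliver is $\tcd^e_p(-\Lambda,4)$. Separately, for $\Lambda\neq0$, replacing $\hh$ by $\hh_w$ is not an isometry via $Z\mapsto Z-w$. $P$, $G$ and the curvature hypothesis all involve $|Z|^2$ or $Z\partial_Z\hh$, so they are not translation invariant.

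\textbf{The light cone.} The degeneration you spotted is a genuine obstruction, and none of your remedies removes it.
\begin{itemize}
\item Restricting to $g$-unit vectors gives the same condition, by homogeneity.
\item The bound $|r(v,v)|\le\|r\|(|\check g_\e(v,v)|+\check g_\e(T,v)^2)$ is useless, because $\check g_\e(T,v)^2/|\check g_\e(v,v)|$ is unbounded on timelike vectors.
\item Lowering $K$ to $\Lambda-\delta$ is beside the point. For a smooth metric, $\tcd^e_p(K,4)$ with \emph{any} $K$ forces $\Ric[\check g_\e](n,n)\ge0$ for every null $n$, by letting timelike vectors tend to the cone.
\end{itemize}
Here $\Ric[\check g_\e](n,n)=c\,\rho_\e\,dU(n)^2+r^\e(n,n)$, with $c\ge0$ the coefficient from your Step~3. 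Propositions~\ref{prop:ApproxFlat} and \ref{prop:ApproxNonFlat} and \eqref{normal_vs_check_g} control only the size of $r^\e$, not its sign on null vectors. Where $\rho_\e=0$, nothing prevents $r^\e(n,n)<0$, so the approximations are not shown to satisfy $\tcd^e_p(K,4)$ for any $K$.

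The missing ingredient is an exact null energy condition $(c\,\rho_\e\,dU^2+r^\e)(n,n)\ge0$ for the approximations. Together with $\|r^\e\|\to0$, it does give $\Ric(v,v)\ge(-\Lambda-C\|r^\e\|)|\check g_\e(v,v)|$. To see this, write a unit timelike $v$ as $\tfrac{e^{s}}{2}n_++\tfrac{e^{-s}}{2}n_-$ in a boosted null frame of bounded norm and use $dU(n_\pm)\ge0$. Alternatively, you would need a form of the smooth characterisation that uses the Ricci bound only on a compact set of timelike directions, plus uniform steepness of the approximating transport geodesics.

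The paper does not supply this step either. It asserts $\Ric[\check g_\e]\ge(\Lambda-r_\e)\check g_\e$, i.e.\ an error proportional to the metric, whereas the cited propositions only yield a uniformly small tensor error.
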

\begin{proof}
From \cite[Thm.\ 1.1]{LLS:21}, we know that every length-maximizing curve in $(M_\Lambda, g)$ can be parametrized as a (Filippov) geodesic  of class $\mathcal{C}^{1,1}$. Moreover, by \cite{GL:18}, every length-maximizing curve is either timelike or null, i.e.\ it has a causal character. 
    \smallskip

Recall that the smooth approximations $\check{g}_\e$  defined in \eqref{eq:defgepscheck} converge to $g=g_{\hh,\Lambda}$ locally uniformly, as $\e\to 0$.
    
   \textbf{Case $\Lambda\geq 0$.}   By Proposition \ref{uniform_global_hyper_positive}, $(\check{g}_\e)_{\e\in (0,\bar{\e})}$  are uniformly globally hyperbolic in $M_\Lambda$.  Moreover, \eqref{normal_vs_check_g} combined with Proposition \ref{prop:ApproxFlat} (in case $\Lambda=0$) or Proposition \ref{prop:ApproxNonFlat} (in case $\Lambda>0$; see in particular  \eqref{approximation_ricci}) yields that 
    $$\Ric[\check{g}_\e] \geq (\Lambda-r_\e) \check{g}_\e,$$ where $r_\e \to 0$ locally uniformly as $\e \to 0$. Applying Theorem \ref{stability_tcd} to the sequence $\check{g}_\e$, we conclude that the impulsive gravitational wave $(M_\Lambda, g_{\hh, \Lambda})$  satisfies the $\wtcd(\Lambda,4)$-condition. 
\smallskip

 \textbf{Case $\Lambda< 0$.} Given two compact sets $K_1, K_2 \subset M_\Lambda$, Proposition \ref{lambda_negative_coord_shift} gives a coordinate shift such that  the resulting smooth approximations $(\check{g}_\e)_{\e\in (0,\bar{\e})}$  are uniformly globally hyperbolic in a neighborhood of $J^+_g(K_1) \cap J^-_g(K_2)$. Then,  combining \eqref{normal_vs_check_g}, Proposition \ref{prop:ApproxNonFlat} and Theorem \ref{stability_tcd} similarly to the case $\Lambda \geq 0$, we infer that $(J^+_g(K_1) \cap J^-_g(K_2), g_{\hh, \Lambda})$ satisfies the $\wtcd(\Lambda,4)$-condition. 

To upgrade from $\wtcd(\Lambda,4)$ to $\tcd(\Lambda,4)$, we recall from Proposition \ref{prop-non-bra} that impulsive gravitational waves are timelike non-branching. It then follows from \cite[Thm.\ 1.5]{Bra:23} that, in both cases above, the $\wtcd(\Lambda,4)$ condition can be strengthened to $\tcd(\Lambda,4)$.
\end{proof}
\subsection{Boosted continuous coordinates and synthetic upper Ricci curvature bounds for impulsive gravitational waves}

For simplicity, throughout this subsection we restrict to the case of vanishing cosmological constant, i.e.\ $\Lambda = 0$. In contrast to the previous sections, we adopt a different coordinate system for impulsive gravitational waves, which is better suited for the analysis of optimal transport along arbitrary timelike directions. As in \cite[Sec.\ 2.1]{podolsky2014global}, we consider the flat Minkowski space with the coordinates $\uu, \vv, \eta, \Bar{\eta}$, and the metric
$$\di s^2 = -2\, \di \mathcal{U} \di \mathcal{V} + 2\, \di \eta\, \di \Bar{\eta}.$$
The hypersurface $\{\mathcal{U}=0\}$ is removed, and the two halves $\{\mathcal{U}>0\}$ and $\{\mathcal{U}<0\}$ are re-glued via a warp determined by the function $\mathcal{H}$; that is, we identify
\begin{align}\label{warp_upper_bound}
    (\uu = 0^-, \vv, \eta, \Bar{\eta}) = (\uu = 0^+, \vv-\hh(\eta, \Bar{\eta}), \eta, \Bar{\eta}).
\end{align}
Defining the real valued coordinates $x, y$ via  $\eta = \frac{x+iy}{\sqrt{2}}$ yields that
\begin{align}\label{cut_and_paste_xy_coords}
    \di s^2 &=  -2 \di \mathcal{U} \di \mathcal{V} + (\di x)^2 + (\di y)^2.
\end{align}
Given a timelike direction $w$, we introduce a coordinate system adapted to $w$ in which we construct a family of optimal transport plans moving along $w$ and satisfying the synthetic upper Ricci curvature bounds. For arbitrary $a, b \in \R$, consider the coordinate transform 
\begin{equation}\label{coord_chage_ab}
\begin{cases}
   &\mathcal{U} = u, \\
    &\mathcal{V} = v + \Theta(u) \hh + aX + bY + u^+(a\partial_X \hh + b\partial_Y \hh + \frac{1}{2}((\partial_X \hh)^2+ (\partial_Y \hh)^2)), \\
    & x = X + au + u^+ \partial_X \hh, \\
    & y = Y + bu + u^+ \partial_Y \hh.
\end{cases}
\end{equation}
Computing the metric tensor separately in $\{\mathcal{U}>0\}$ and  $\{\mathcal{U}<0\}$, and patching together yields 
\begin{align*}
  \di s^2 =&  -2 \di \mathcal{U} \di \mathcal{V} + (\di x)^2 + (\di y)^2 \nonumber \\
   = & (a^2 + b^2) (\di u)^2 - 2 \di u \di v + (1+2u^+\partial^2_{XX} \hh + (u^+\partial^2_{XX} \hh)^2 +(u^+\partial^2_{XY} \hh)^2) (\di X)^2 \nonumber \\
  &+ (1+2u^+\partial^2_{YY} \hh + (u^+\partial^2_{YY} \hh)^2 +(u^+\partial^2_{XY} \hh)^2) (\di Y)^2\nonumber \\ 
  &+ (4u^+\partial^2_{XY} \hh + 2(u^+)^2(\partial^2_{XX} \hh+\partial^2_{YY} \hh)\partial^2_{XY} \hh) \di X \di Y.
\end{align*}
Note that $\mathrm{span}(\partial_u, \partial_v) \perp \mathrm{span}(\partial_X, \partial_Y)$. Moreover,  in the region $\{\mathcal{U} <0\}$, it holds that
\begin{align*}
    \frac{\partial}{\partial u} =& \frac{\partial \mathcal{U}}{\partial u} \frac{\partial}{\partial \mathcal{U}} +  \frac{\partial \mathcal{V}}{\partial u} \frac{\partial}{\partial \mathcal{V}}+ \frac{\partial x}{\partial u} \frac{\partial}{\partial x}+ \frac{\partial y}{\partial u} \frac{\partial}{\partial y} = \frac{\partial}{\partial \mathcal{U}} + a\frac{\partial}{\partial x} + b\frac{\partial}{\partial y},
\end{align*}
and, in the region $\{\mathcal{U}>0\}$, we have that 
\begin{align*}
    \frac{\partial}{\partial u} &= \frac{\partial}{\partial \mathcal{U}} + (a\partial_X \hh +b\partial_Y \hh + \frac{1}{2}((\partial_X \hh)^2+ (\partial_Y \hh)^2)) \frac{\partial}{\partial \mathcal{V}}  + (a + \partial_X \hh)\frac{\partial}{\partial x} + (b+ \mathcal{\hh}_Y)\frac{\partial}{\partial y}.
\end{align*}
Moreover, 
\begin{align*}
    \frac{\partial}{\partial v}= \frac{\partial}{\partial \mathcal{V}}.
\end{align*}
Using the identification \eqref{warp_upper_bound} and the metric tensor \eqref{cut_and_paste_xy_coords}, we can go back to the flat Minkowski space and consider the point $p$ given by 
\begin{align*}
\mathcal{U}^-(p) = 0,\; \mathcal{V}^-(p)=\mathcal{V},\; x^-(p)=x, y^-(p)=y, \text{ or, equivalently, } \\
\mathcal{U}^+(p) = 0,\; \mathcal{V}^+(p)=\mathcal{V}-\hh(x, y),\; x^+(p)=x, y^+(p)=y.
\end{align*}
Let $q \in I^+(p)$ and notice that  $\mathcal{U}(q)>0$. Define 
\begin{align*}
    \Tilde{w}^+:= (q-p^+) \in \R^4
\end{align*}
and denote 
\begin{align*}
    w^+:= \frac{1}{\mathcal{U}(q)} \Tilde{w}^+ = (w_\mathcal{U}^+, w_\mathcal{V}^+, w_x^+, w_y^+). 
\end{align*}
Then 
\begin{align*}
    (-2 \di \mathcal{U} \di \mathcal{V} + (\di x)^2 + (\di y)^2)(w^+,w^+) <0.
\end{align*}
Now define 
\begin{align}\label{def_coord_change_w-}
    w^- := (w_\mathcal{U}^-, w_\mathcal{V}^-, w_x^-, w_y^-),
\end{align}
where 
\begin{equation}\label{eq-boost}
\begin{cases}
    &w_\mathcal{U}^- = w_\mathcal{U}^+ =1,\\
    &w_\mathcal{V}^- = w_\mathcal{V}^+ - w_x^+ \partial_x \hh - w_y^+ \partial_y \hh + \frac{1}{2}((\partial_x \hh)^2 + (\partial_y \hh)^2), \\
    & w_x^- = w_x^+ -\partial_x \hh, \\
    & w_y^- = w_y^+ -\partial_y \hh.
\end{cases}
\end{equation}
In order to keep notation short, we will simply write $w^-=(1, s, a, b)$, some $a, b, s \in \R$. Note that $w^-$ being timelike and future directed in $\{\uu <0\}$ amounts to the condition 
\begin{align*}
    2s> a^2+b^2.
\end{align*}
Then, in the above coordinate transform, we get that 
\begin{align*}
    \partial_u + s \partial_v &= w^- \ \mathrm{for}\ u <0, \ \mathrm{and}\\
    \partial_u + s \partial_v &= w^+ \ \mathrm{for}\ u >0. 
\end{align*}
Finally, define the quantities $T, W$ via
\begin{align}\label{coord_change_s}
    u=T+W, \ v=sT+(a^2+b^2-s)W.
\end{align}
Then,  
\begin{align*}
    \partial_T= \partial_u + s \partial_v \perp \partial_u +(a^2+b^2-s)\partial_v= \partial_W. 
\end{align*}
As $\mathrm{span}(\partial_u, \partial_v) \perp \mathrm{span}(\partial_X, \partial_Y)$, we get that $\mathrm{span}(\partial_T, \partial_W) \perp \mathrm{span}(\partial_X, \partial_Y)$. Hence, the metric in $T, W, X, Y$-coordinates is of the form 
\begin{align}\label{boosted_wave}
    (a^2+b^2-2s)\di T^2 + g[2,3,4],
\end{align}
where $g[2,3,4]$ is a Riemannian metric in the $W, X, Y$-coordinates, given by 
\begin{align}\label{metric_tensor_in_TW}
  \di s^2 =&  (2s-a^2 - b^2) (\di W)^2  + (1+2(T+W)^+\partial^2_{XX} \hh  \nonumber \\
  &+ ((T+W)^+\partial^2_{XX} \hh)^2 +((T+W)^+\partial^2_{XY} \hh)^2) (\di X)^2 \nonumber \\
  &+ (1+2(T+W)^+\partial^2_{YY} \hh + ((T+W)^+\partial^2_{YY} \hh)^2 +((T+W)^+\partial^2_{XY} \hh)^2) (\di Y)^2 \nonumber \\
  &+ (4(T+W)^+\partial^2_{XY} \hh + 2((T+W)^+)^2(\partial^2_{XX} \hh+\partial^2_{YY} \hh)\partial^2_{XY} \hh) \di X \di Y.
\end{align}
\begin{proposition}\label{tau_for_straight_line}
    Let $p = (T, W, X, Y) \in \R^4$. Then $\tau(p, p+ (r, 0, 0, 0))= \sqrt{(2s-a^2+b^2)}r$. 
\end{proposition}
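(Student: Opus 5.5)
The plan is to compute $\tau$ directly from the block form \eqref{boosted_wave}: bound it from below with one explicit curve and from above by a pointwise estimate valid for every causal curve. Throughout, write $c:=2s-a^2-b^2>0$; the constant in the statement should read $\sqrt{2s-a^2-b^2}$, consistent with \eqref{boosted_wave}. Then the metric reads $-c\,\di T^2+h$, where $h=g[2,3,4]$ is the form \eqref{metric_tensor_in_TW} in the variables $W,X,Y$. First I will record the facts I need about these coordinates.
\begin{enumerate}
\item The change of variables \eqref{coord_chage_ab} followed by \eqref{coord_change_s} is a global chart on the glued spacetime. The jump $\Theta(u)\hh$ in $\mathcal{V}$ is exactly compensated by the identification \eqref{warp_upper_bound}, so the chart is a homeomorphism, and it is locally bi-Lipschitz on each side of the impulse.
\item In this chart the metric is Lipschitz and has the form \eqref{boosted_wave} on both sides of $\{u=0\}$.
\item On the Lorentzian domain, $h$ is positive definite. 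The $(W,W)$ entry $c$ is positive, the $W$-direction is orthogonal to $X,Y$, and the $(X,Y)$ block is the pull-back of the flat metric $(\di x)^2+(\di y)^2$ under the map $(X,Y)\mapsto(x,y)$ at fixed $u$. That block is positive definite wherever this map is non-degenerate, which is exactly where the metric is Lorentzian.
\end{enumerate}

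For the lower bound, take the straight curve $\sigma(t)=p+(t,0,0,0)$, $t\in[0,r]$. Its velocity is $\partial_T$, and $g(\partial_T,\partial_T)=-c<0$ is constant, including across $u=0$ (where $\partial_T=\partial_u+s\partial_v$ equals $w^\mp$ on the two sides). So $\sigma$ is timelike and, with the time orientation fixed by $\partial_T$, future directed. Hence $L_g(\sigma)=\sqrt c\,r$ and $\tau(p,p+(r,0,0,0))\ge \sqrt c\,r$.

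For the upper bound, let $\gamma=(T,W,X,Y)\circ\gamma$ be any future directed causal Lipschitz curve from $p$ to $q:=p+(r,0,0,0)$. For a.e.\ $t$, causality gives $c\,\dot T^2\ge h(\dot\gamma_{WXY},\dot\gamma_{WXY})\ge 0$. Since $h$ is positive definite, $\dot T=0$ would force $\dot\gamma=0$, so $\dot T\neq 0$ wherever $\dot\gamma\ne0$. The hypersurfaces $\{T=\mathrm{const}\}$ are spacelike, because their normal $\di T$ is timelike. Hence $T$ is a time function and $\dot T>0$ a.e.\ for future directed causal curves. I would verify this using the time-cone argument of Proposition~\ref{uv_monotone_along_causal_curves}, or directly, using that $\partial_T$ lies in the future cone. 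Then
\[
\sqrt{-g(\dot\gamma,\dot\gamma)}=\sqrt{c\,\dot T^2-h(\dot\gamma_{WXY},\dot\gamma_{WXY})}\le \sqrt c\,\dot T \quad\text{a.e.},
\]
and integrating gives $L_g(\gamma)\le \sqrt c\,(T(q)-T(p))=\sqrt c\,r$. Taking the supremum over all such $\gamma$ gives $\tau(p,q)\le\sqrt c\,r$, which together with the lower bound proves the proposition.

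I expect the main obstacle to be the justification of the first step: that the $(T,W,X,Y)$ coordinates form a genuine global Lipschitz chart of the glued (continuous) spacetime. In particular, one has to check that Lipschitz causal curves crossing $\{u=0\}$ remain Lipschitz in the new coordinates, and that $h$ stays positive definite along them, so that the a.e.\ pointwise inequality above is legitimate on the impulse as well. To handle this, I would compute $\partial_T$, $\partial_W$, $\partial_X$, $\partial_Y$ separately in $\{\mathcal{U}<0\}$ and $\{\mathcal{U}>0\}$, as already done in the text for $\partial_u$ and $\partial_v$. I would then note that the coordinate maps are continuous across $u=0$ with one-sided derivatives, so they are locally Lipschitz, and that the set $\{u=0\}$ is crossed by a causal curve only on a null set of parameters, by the strict monotonicity of $u=T+W$ along causal curves.
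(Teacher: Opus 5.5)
Your proof is correct and is the paper's own argument: the straight $T$-line gives the lower bound, and the upper bound comes from the pointwise estimate $-g(v,v)\le(2s-a^2-b^2)v_T^2$ read off from \eqref{boosted_wave}, together with $\dot T\ge 0$ along future-directed causal curves. You are also right that the constant should be $\sqrt{2s-a^2-b^2}$, which is what the paper's proof uses. The extra care about the impulse is not needed, and the strict monotonicity of $u=T+W$ you invoke there is false: null curves tangent to $\partial_T-\partial_W=(2s-a^2-b^2)\partial_v$ can run inside $\{u=0\}$. Nothing depends on it, though. The metric is continuous in $(T,W,X,Y)$, and its $(X,Y)$-block equals $(I+(T+W)^+D^2\hh)^2$, so it is positive semidefinite everywhere, and your inequality holds at every point where $\dot\gamma$ exists.
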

\begin{proof}
    First, we prove that $\tau(p, p+ (r, 0, 0, 0))\geq \sqrt{(2s-a^2-b^2)r}$. Consider the timelike curve $\gamma:[0,1] \to \R^4, t \mapsto p+ t(r,0,0,0)$.
    Then
    \begin{align*}
        L_g(\gamma) = \int_0^1\sqrt{(2s-a^2-b^2)}r\, \di t.
    \end{align*}
    Next, we prove that $\tau(p, p+ (r, 0, 0, 0))\leq \sqrt{(2s-a^2-b^2)}r$. Let $\gamma:[0,1] \to \R^4$ be any causal curve from $p$ to $p+(r,0,0,0)$. Write $\gamma = (\gamma^1, \gamma^2, \gamma^3, \gamma^4)$. Then as $\gamma$ is causal and future directed, we get that $\dot{\gamma}^1_t \geq 0$ for almost all $t \in [0,1]$, hence 
    \begin{align*}
         \int_0^1 |\dot{\gamma}^1_t| \di t=\int_0^1 \dot{\gamma}^1_t \di t = r.
    \end{align*}
    Moreover, \eqref{boosted_wave} implies that for any timelike vector $v\in \R^4$, it holds $|g(v,v)| \leq (2s-a^2-b^2)v_1^2$. Then, 
    \begin{align*}
        \int_0^1 \sqrt{-g(\dot{\gamma}_t,\dot{\gamma}_t)} \di t\leq \int_0^1 \sqrt{(2s-a^2-b^2)} \sqrt{(\dot{\gamma}^1_t)^2} \di t =  \sqrt{(2s-a^2-b^2)}r.
    \end{align*}
    This proves the proposition.
\end{proof}

\begin{lemma}\label{r_0_for_upper_bounds}
    There exists $r_0 >0$ with the following property. Let $p \in M$ satisfy that 
    $$u(p) = T(p)+W(p) \leq r_0,$$
    and let  $q \in B^{euc}_{r^2}(p)$, for some $r<r_0$.
    Then $p\ll q+(r,0, 0, 0)$ and $q \ll p+(r,0, 0, 0)$.
\end{lemma}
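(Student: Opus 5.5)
Throughout, work in the boosted coordinates $(T,W,X,Y)$ of \eqref{boosted_wave}--\eqref{metric_tensor_in_TW}, with $a,b,s$ fixed and $2s>a^2+b^2$. The plan is to join $p$ (resp.\ $q$) to $q+(r,0,0,0)$ (resp.\ $p+(r,0,0,0)$) by an explicit straight segment and show that it is uniformly timelike. The key point is that the metric is Lipschitz and its spatial block is close to the identity near $\{u\le r_0\}$ for small $r_0$. One then shows that a vector whose $T$-component is $r$ and whose remaining components are $O(r^2)$ is timelike once $r$ is small.

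\textbf{Step 1 (local control of the metric).} From \eqref{metric_tensor_in_TW} the metric reads
\[
g=-(2s-a^2-b^2)\,\di T^2+(2s-a^2-b^2)\,\di W^2+h_{u}(\di X,\di Y),
\]
where the $2\times2$ block $h_u$ depends only on $u^+=(T+W)^+$ and on $D^2\hh$. Assuming, as in the earlier results, that $D^2\hh$ is bounded by some $a_2$, we get
\[
|h_u-\mathrm{Id}|\le C(a_2)\,(u^++(u^+)^2)
\]
uniformly in $(X,Y)$. Set $c_0:=2s-a^2-b^2>0$. Choose $r_0\le 1$ so small that on $\{u\le 2r_0\}$ one has $h_u\le 2\,\mathrm{Id}$. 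Since $|\cdot|_{euc}$ controls all coordinates, this gives the bound
\[
g(v,v)\le -c_0v_T^2+c_0v_W^2+2(v_X^2+v_Y^2)
\]
for every vector $v$ based in that region.

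\textbf{Step 2 (the straight segment).} Consider $\gamma(t)=p+t\big(q+(r,0,0,0)-p\big)$ for $t\in[0,1]$. Its velocity is $v=(r,0,0,0)+(q-p)$ with $|q-p|_{euc}<r^2$. Then $v_T\ge r-r^2\ge r/2$, while all other components are bounded by $r^2$. Along $\gamma$ we have $u\le u(p)+2r+2r^2\le 2r_0$ after shrinking $r_0$ (replace the hypothesis bound by $r_0/4$ if needed). Step 1 then gives
\[
g(v,v)\le -c_0\frac{r^2}{4}+(c_0+4)r^4<0
\]
for $r<r_0$ small. Future-directedness follows because $v$ is close to $(r,0,0,0)$, which is future directed: $\partial_T=\partial_u+s\partial_v$ is future timelike. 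Hence $\gamma$ is a future-directed timelike curve, and $p\ll q+(r,0,0,0)$. The case $q\ll p+(r,0,0,0)$ is identical, with the roles of $p$ and $q$ swapped. Note that $u(q)\le u(p)+2r^2$, so $q$ also lies in the controlled region.

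\textbf{Main obstacle.} The delicate point is making the bound $h_u\le 2\,\mathrm{Id}$ uniform. For $u>0$ the block grows with $u^+$, so the hypothesis $u(p)\le r_0$ is what keeps the curve in a region of small $u^+$. If $D^2\hh$ is only locally bounded, $r_0$ must depend on a compact set. I would first try the global bound $|D^2\hh|\le a_2$ used in Propositions \ref{uniform_global_hyper_positive} and \ref{lambda_negative_coord_shift}, which makes $r_0$ depend only on $a_2$ and $c_0$. The remaining concern is that $g$ is only Lipschitz across $\{u=0\}$. This is harmless, since the causal character of an absolutely continuous curve is checked a.e.\ on its velocity, which is constant here.
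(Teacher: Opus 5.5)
Your argument is correct and is essentially the paper's: join $p$ to $q+(r,0,0,0)$ (and $q$ to $p+(r,0,0,0)$) by a straight segment whose constant velocity is $r\partial_T$ plus an $O(r^2)$ error. On $\{u\le Cr_0\}$ the $(X,Y)$-block of \eqref{metric_tensor_in_TW} is uniformly close to the identity, so the $-(2s-a^2-b^2)r^2$ term dominates and the segment is future-directed timelike.

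There are only two minor differences. First, the paper chooses $r_0$ via $\|D^2\hh\|_{L^\infty}$ on a unit ball around $(X(p),Y(p))$, so its $r_0$ depends on $p$. Your use of the global bound on $D^2\hh$ assumed in Theorem \ref{upper_bounds_waves} gives a $p$-independent $r_0$, which matches the quantifier order of the statement. Second, instead of ``replacing the hypothesis bound by $r_0/4$'', simply choose $r_0\le 1/2$ with $h_u\le 2\,\mathrm{Id}$ on $\{u\le 4r_0\}$. That region contains both segments, since $u(p)\le r_0$, $r<r_0$ and $|q-p|_{euc}<r^2$.
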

\begin{proof}
Set 
\begin{align*}
    r_0 := \frac{\min(1, 2s-a^2-b^2)}{5000(1+\norm{D^2\hh}_{L^\infty(B_1((X(p), Y(p))))})^2}.
\end{align*}
Then, on the region  $B^{euc}_{2r}(p)+[0, 5r](1,0,0,0)$, it holds that
    \begin{align*}
        \Big|g[2,3,4] - \mathrm{diag}(2s-a^2-b^2, 1, 1)\Big| \leq \frac{1}{1000}.
    \end{align*}
    Write $p':=p+5r\partial_T, q':=q+5r\partial_T$. Note that 
    \begin{align}\label{straight_lines_stay_in_causal_region}
        p+ [0,1](q'-p), q+ [0,1](p'-q) \subset B_{2r}(p)+[0, 5r]\partial_T.
    \end{align}
    Define the curve $\gamma:[0,1]\to M, t \mapsto p+t(q'-p)$. Then $\gamma$ has a constant derivative and we can use \eqref{straight_lines_stay_in_causal_region}
    to estimate  
    \begin{align*}
        g(\gamma_t)(q'-p, q'-p) &= 25r^2g(\gamma_t)(\partial_T, \partial_T) +10g(\gamma_t)(r\partial_T, q-p) + g(\gamma_t)(q-p, q-p) \\
        &\leq (a^2+b^2-2s)r^2 + \Big(1+ \frac{1}{1000}\Big)(2r^3+r^4)<0.
    \end{align*}
    It follows that $\gamma$ is a timelike and future directed curve from $p$ to $q'$, yielding that $q' \gg p$. The proof  that $p' \gg q$ is analogous. 
\end{proof}

\begin{lemma}\label{cyclical monotonicity_in_coords}
Let $r_0>0$ be given by Lemma \ref{r_0_for_upper_bounds}, and let $x \in M$ be such that $\mathcal{U}(x)\leq r_0$. Pick 
$$x_1, \ldots, x_n \in B^{euc}_{r^2}(x),$$ for some $n\in \N$ and $r\in (0, r_0)$. For $i=1, \ldots, n$, define 
$$x_i^-:= x_i - (r, 0, 0,0) \quad \text{and}\quad  x_i^+:= x_i + (r, 0, 0, 0),$$
and denote  $x_{n+1}:= x_1, x^{\pm}_{n+1}:= x^{\pm}_1$.
Then, for all $\mathfrak{p} \in (0,1]$,  it holds that
    \begin{align}
        \sum_{i=1}^n \tau(x_i^-, x_{i+1}^+)^\mathfrak{p} \leq \sum_{i=1}^n \tau(x_i^-, x_{i}^+)^\mathfrak{p} = (2r\sqrt{2s-a^2-b^2})^\mathfrak{p}n.
    \end{align}
\end{lemma}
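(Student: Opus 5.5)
The equality on the right is immediate. Since $x_i^+ = x_i^- + (2r,0,0,0)$, Proposition \ref{tau_for_straight_line} gives $\tau(x_i^-,x_i^+) = 2r\sqrt{2s-a^2-b^2}$ for every $i$, and summing yields $(2r\sqrt{2s-a^2-b^2})^{\mathfrak p}n$. Everything therefore reduces to the pointwise bound
\[
\tau(x_i^-,x_{i+1}^+)\le 2r\sqrt{2s-a^2-b^2}\quad\text{for all } i,
\]
because raising to the power $\mathfrak p>0$ is monotone on $[0,\infty)$ and the inequality then holds termwise.

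For the pointwise bound I would reuse the argument of the upper estimate in Proposition \ref{tau_for_straight_line}. By \eqref{boosted_wave}, the metric in $(T,W,X,Y)$ coordinates is $-(2s-a^2-b^2)\,\di T^2 + g[2,3,4]$ with $g[2,3,4]$ Riemannian. Hence every causal vector $v$ satisfies $-g(v,v)\le (2s-a^2-b^2)v_1^2$, and every future directed causal curve has $\dot\gamma^1\ge0$ a.e. This requires knowing that $\partial_T$ defines the time orientation, which holds since $\partial_T = w^\pm$ is future directed on both sides of the impulse.

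Consequently, for any future directed causal curve $\gamma$ from $x_i^-$ to $x_{i+1}^+$ we get
\[
L_g(\gamma)\le \sqrt{2s-a^2-b^2}\,\bigl(T(x_{i+1}^+)-T(x_i^-)\bigr)=\sqrt{2s-a^2-b^2}\,\bigl(2r+T(x_{i+1})-T(x_i)\bigr).
\]
If $x_i^-\not\le x_{i+1}^+$ the quantity $\tau$ vanishes and there is nothing to prove. Lemma \ref{r_0_for_upper_bounds} only guarantees timelike relatedness, so it plays no role in the upper bound. Taking the supremum over curves gives $\tau(x_i^-,x_{i+1}^+)\le\sqrt{2s-a^2-b^2}\,(2r+T(x_{i+1})-T(x_i))$.

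The main obstacle is that this pointwise bound exceeds $2r\sqrt{2s-a^2-b^2}$ by the error $\sqrt{2s-a^2-b^2}\,(T(x_{i+1})-T(x_i))$. That error is of order $r^2$ and need not be nonpositive, so the pointwise claim does not follow directly. Two routes are available. For $\mathfrak p=1$ I would sum before estimating: the terms $T(x_{i+1})-T(x_i)$ telescope to zero around the cycle (with $x_{n+1}=x_1$), which gives the inequality. For $\mathfrak p<1$ I would argue by concavity of $t\mapsto t^{\mathfrak p}$, writing $t_i:=2r+T(x_{i+1})-T(x_i)$ with $\sum_i t_i = 2rn$:
\[
\sum_i \tau(x_i^-,x_{i+1}^+)^{\mathfrak p}\le (2s-a^2-b^2)^{\mathfrak p/2}\sum_i t_i^{\mathfrak p}\le (2s-a^2-b^2)^{\mathfrak p/2}\, n\Bigl(\tfrac1n\sum_i t_i\Bigr)^{\mathfrak p}=(2r\sqrt{2s-a^2-b^2})^{\mathfrak p}n,
\]
by Jensen's inequality. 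Since $|T(x_{i+1})-T(x_i)|\le 2r^2<2r$ for $r<r_0\le1$, each $t_i$ is positive, and any index with $\tau=0$ only decreases the left-hand side. This concavity step is the key point, and it closes the argument for all $\mathfrak p\in(0,1]$.
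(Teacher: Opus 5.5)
Your proof is correct and follows the paper's argument: bound each $\tau(x_i^-,x_{i+1}^+)$ by $\sqrt{2s-a^2-b^2}\,\bigl(T(x_{i+1}^+)-T(x_i^-)\bigr)$, using $\dot\gamma^1\ge 0$ and the block form \eqref{boosted_wave}, then apply Jensen's inequality for the concave map $z\mapsto z^{\mathfrak p}$ to these increments, whose sum telescopes to $2rn$. You also check explicitly that each increment $2r+T(x_{i+1})-T(x_i)$ is positive, which the Jensen step requires and which the paper leaves implicit.
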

\begin{proof}
    For each $i=1, \ldots, n$, Proposition \ref{tau_for_straight_line} yields 
    \begin{align}
        \tau(x_i^-, x_{i}^+) = 2r\sqrt{2s-a^2-b^2} = \sqrt{2s-a^2-b^2} (T(x_i^+)-T(x_i^-)).
    \end{align}
    Let $\gamma$ be a causal and future directed curve from $x_i^-$ to $x_{i+1}^+$. Write $\gamma = (\gamma^1, \gamma^2, \gamma^3, \gamma^4)$. Since $\gamma$ is causal and future directed, we get that $\dot{\gamma}^1_t\geq 0$ for almost every $t\in [0,1]$. Hence
    \begin{align*}
        \int_0^1 |\dot{\gamma}^1_t| \di t = \int_0^1 \dot{\gamma}^1_t \di t = T(x_{i+1}^+)-T(x_i^-).
    \end{align*}
    It follows that
    \begin{align*}
        L_g(\gamma) = \int_0^1 \sqrt{-g(\dot{\gamma}_t, \dot{\gamma}_t)} \di t \leq \int_0^1 \sqrt{2s-a^2-b^2} |\dot{\gamma}^1_t| \di t = \sqrt{2s-a^2-b^2} (T(x_{i+1}^+)-T(x_i^-)).  
    \end{align*}
    Therefore, 
    \begin{align}
        \sum_{i=1}^n \tau(x_i^-, x_{i+1}^+)^\mathfrak{p} \leq \sqrt{2s-a^2-b^2}^\mathfrak{p}\sum_{i=1}^n(T(x_{i+1}^+)-T(x_i^-))^\mathfrak{p} \leq \sqrt{2s-a^2-b^2}^\mathfrak{p}(2r)^\mathfrak{p}n,
    \end{align}
    where in the last step, we used that $z \mapsto z^\mathfrak{p}$ is concave to apply Jensen's inequality.
\end{proof}
Denoting $w^+ = (1,\sigma,\alpha,\beta)$, it is convenient to also consider the coordinate transformation
\begin{equation}\label{coord_chage_aphabeta}
\begin{cases}
    &\mathcal{U} = u,  \\
    &\mathcal{V} = v - \Theta(-u) \hh + \alpha X + \beta Y + \min(u, 0)(-\alpha \partial_X \hh - \beta \partial_Y \hh + \frac{1}{2}((\partial_X \hh)^2+ (\partial_Y \hh)^2)),  \\
    & x = X + \alpha u -\min(u, 0) \partial_X \hh,  \\
    & y = Y + \beta u -\min(u, 0) \partial_Y \hh.
\end{cases}
\end{equation}
In such a coordinate system, the Minkowski metric writes as 
\begin{align}
    -2\di \mathcal{U} \di \mathcal{V} + \di x^2 + \di y^2 =& (\alpha^2+\beta^2)\di u^2 -2 \di u \di v + ((1-\min(0, u)\partial^2_{XX} \hh)^2 + (\min(0, u)\partial^2_{XY} \hh)^2)\di X^2 \nonumber \\
    &+ ((1-\min(0, u)\partial^2_{YY} \hh)^2 + (\min(0, u)\partial^2_{XY} \hh)^2)\di Y^2 \nonumber \\
    &-2 \min(0, u)\partial^2_{XY} \hh(2-\min(0, u)\partial^2_{XX} \hh-\min(0, u)\partial^2_{YY} \hh) \di X \di Y.
\end{align}
Then, applying the transformation
\begin{equation}\label{coord_change_sigma}
\begin{cases}
    u&=T+W, \\
    v&=\sigma T+(\alpha^2+\beta^2-\sigma)W,
\end{cases}
\end{equation}
yields a coordinate system in which $\partial_T$ equals $w^+$ in $\{u >0\}$ and $\partial_T= w^-$ in $\{u <0\}$. Then statements of Proposition \ref{tau_for_straight_line}, Lemma \ref{r_0_for_upper_bounds} and Lemma \ref{cyclical monotonicity_in_coords} hold analogously for these coordinates. We now have all the ingredients to establish the synthetic upper bounds on timelike Ricci curvature for IPP-waves. 
\smallskip

\begin{theorem}[Synthetic upper bounds on timelike Ricci curvature for IPP-waves]\label{upper_bounds_waves}
Let $\hh \in C^{3,1}_{\mathrm{loc}}(\mathbb{C}^2;\mathbb{R})$ and assume that $D^2 \hh$ is bounded.
Let $g=g_{\hh}$ be the IPP-wave metric as in \eqref{nonsmooth_metric} with profile function $\hh$ and zero cosmological constant, i.e., $\Lambda=0$.  Assume there exists a constant $c>0$ such that
\begin{equation}\label{eq:AssURB}
\partial^2_{XX} \hh+ \partial^2_{YY} \hh \geq c\left( (\partial^2_{XY} \hh)^2+\frac{1}{2}((\partial^2_{XX} \hh)^2+(\partial^2_{YY} \hh)^2)\right).
\end{equation}
Then the impulsive gravitational wave $(\R^4, g_{\hh})$ has synthetic timelike Ricci curvature bounded  above by $0$. 
\end{theorem}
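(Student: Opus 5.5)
We verify Definition B.5 directly, using the boosted coordinates (\ref{coord_chage_ab})--(\ref{coord_change_s}) (or (\ref{coord_chage_aphabeta})--(\ref{coord_change_sigma})) and an explicit transport along the flow of $\partial_T$. Fix $r_0$ as in Lemma \ref{r_0_for_upper_bounds}, possibly shrunk depending on $\sup|D^2\hh|$. Given $x,y$ with $\sfd(x,y)=r<r_0$ and $B_{r^4}(x)\times B_{r^2}(y)\Subset M^2_{\ll}$, we pick the timelike direction from $x$ towards $y$ and choose $w^-$ (or $w^+$, depending on the side of the impulse where $x$ lies) so that $y$ lies, up to $O(r^2)$, on the $\partial_T$-line through $x$ at $T$-distance comparable to $r$. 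After a rescaling of $T$ we may assume $y\approx x+(r,0,0,0)$. Given $\mu_0$ supported in $B_{r^4}(x)$, we set $\mu_t:=(\Phi_{tr})_\#\mu_0$ for $t\in[-1,1]$, where $\Phi_h(T,W,X,Y)=(T+h,W,X,Y)$ is the translation in $T$. Then $\supp\mu_1\subset B_{r^2}(y)$ and all supports lie in $B_{10r_0}(x)$.

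\textbf{Optimality.} By Proposition \ref{tau_for_straight_line}, each segment $t\mapsto z+t(r,0,0,0)$ is a $\tau$-maximizer parametrized proportionally to arclength. By Lemma \ref{cyclical monotonicity_in_coords}, applied to points $x_i\in\supp\mu_0$, the support of $(\Phi_{-r},\Phi_r)_\#\mu_0$ is $\tau^p$-cyclically monotone. Lemma \ref{r_0_for_upper_bounds} gives $\supp\mu_{-1}\times\supp\mu_1\subset M^2_{\ll}$. Then \cite[Rem.\ 2.7, Prop.\ 2.8]{cavalletti2020optimal} show that the plan is $\ell_p$-optimal, so $(\mu_t)$ is an $\ell_p$-geodesic. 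The same argument applies to each sub-pair, which settles the first three bullets of Definition B.5.

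\textbf{Entropy.} Since $\Phi_h$ is a coordinate translation, $\Ent(\mu_t|\mm)=\Ent(\mu_0|\mathcal L^4)-\int\log\sqrt{|\det g|}\circ\Phi_{tr}\,d\mu_0$. Hence the second difference $\Ent(\mu_{-1})-2\Ent(\mu_0)+\Ent(\mu_1)$ equals $-\int \delta^2_r\phi\,d\mu_0$, where $\phi=\log\sqrt{|\det g|}$ and $\delta^2_r$ denotes the second difference in $T$ with step $r$. From (\ref{metric_tensor_in_TW}), with $u=T+W$, $\det g[3,4]=\big((1+u^+\partial^2_{XX}\hh)(1+u^+\partial^2_{YY}\hh)-(u^+\partial^2_{XY}\hh)^2\big)^2$. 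So $\phi$ is a function of $u^+$ alone, $\phi=\log\big(1+u^+\Delta\hh+(u^+)^2\det D^2\hh\big)$ up to constants, with $\phi'(0^+)=\Delta\hh$. Its second difference therefore has a singular part $\approx r\,\Delta\hh\cdot\max(0,\ldots)$ when the segment crosses $\{u=0\}$, which is nonnegative by (\ref{eq:AssURB}). It also has a smooth part $\partial_u^2\phi\,r^2=-(\Delta\hh^2-2\det D^2\hh)r^2+O(r^3)=-|D^2\hh|^2r^2+O(r^3)$ away from the impulse. Thus $-\int\delta_r^2\phi\le |D^2\hh|^2r^2-(\text{kink contribution})+O(r^3)$.

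\textbf{Main obstacle.} The difficulty is to show that this quantity is $\le\omega(r)r^2$ uniformly, i.e.\ that the concave kink of $\phi$ at $u=0$ compensates the curvature term $|D^2\hh|^2$ on the side $u>0$. For segments not meeting $\{u=0\}$, the smooth contribution $|D^2\hh|^2r^2$ must be handled by choosing the boosted frame: on $u<0$ the metric is flat, so we use the coordinates (\ref{coord_chage_aphabeta}), which are flat on $u>0$. Hence for $x$ strictly on either side we can choose the chart in which the region actually reached is flat, giving zero second difference. For $x$ near $\{u=0\}$, within $O(r)$, the segment crosses the impulse. There the kink yields $-c_1 r\,\Delta\hh\cdot(\text{fraction})$, which dominates $C|D^2\hh|^2r^2$ once $r\le c/C$ by (\ref{eq:AssURB}), except in a boundary layer where the crossing fraction is $O(r)$. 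I would try to absorb that layer into $\omega(r)r^2$ using the $O(r^3)$ bounds and the freedom of switching between the two charts depending on the sign of $u(x)$. This case bookkeeping is the main point to get right.
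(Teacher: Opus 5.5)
Your construction is the one used in the paper's proof. It uses boosted coordinates adapted to the direction from $x$ to $y$, transport by translation in $T$, and optimality from the cyclical-monotonicity lemma in these coordinates together with \cite[Rem.~2.7, Prop.~2.8]{cavalletti2020optimal}. It also reduces $\Ent(\mu_{-1})-2\Ent(\mu_0)+\Ent(\mu_1)$ to $-\int\delta_r^2\phi\,d\mu_0$ with $\phi=\log\sqrt{-\det g}$. However, you stop exactly at the one nontrivial step, and the way you propose to finish it would fail. In your accounting for $x$ near the impulse, the curvature loss is $C|D^2\hh|^2r^2$ and the kink gain is $r\,\Delta\hh\cdot(\text{fraction})$. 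In the boundary layer where the fraction is $O(r)$, you are left with a term of size $|D^2\hh|^2r^2$, which cannot be absorbed into $\omega(r)r^2$ with $\omega(r)\to0$. Such a layer genuinely exists in the wrong chart. If $u(x)=r-\delta>0$ with $\delta\ll r^2$ and you use \eqref{coord_chage_ab}, which is flat on $\{u<0\}$, then $q$ and $q+(r,0,0,0)$ both lie in the curved half and $\delta_r^2\phi\approx\Delta\hh\,\delta-|D^2\hh|^2r^2<0$. (Also, the kink of $\phi$ at $u=0$ is convex, not concave: the slope jumps from $0$ to $\Delta\hh\ge 0$. That is what produces the favourable sign.)

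The missing observation is that there is no boundary layer once the chart is flat on the side of $x$, as the paper does: \eqref{coord_chage_ab} if $u(x)\le0$, \eqref{coord_chage_aphabeta} if $u(x)>0$. Say $u(x)\le 0$. Since $u=\mathcal U$ and $\supp\mu_0\subset B_{r^4}(x)$, for $q\in\supp\mu_0$ the points $q-(r,0,0,0)$ and $q$ lie in the flat half up to an $O(r^4)$ error. Only $q+(r,0,0,0)$ enters the curved half, at depth $t=(u(q)+r)^+\le r+r^4$. Because $\sqrt{-\det g}=(2s-a^2-b^2)\det(I+u^+D^2\hh)$, you get $\delta_r^2\phi(q)\ge\psi(t)-Cr^4$, where $\psi(t):=\log\det(I+tD^2\hh)$. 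Apply $\log(1+z)\ge z-z^2$ (valid for $z\ge-\frac12$) to the eigenvalues of $tD^2\hh$. Write \eqref{eq:AssURB} as $\Delta\hh\ge\frac c2|D^2\hh|^2$, with $|D^2\hh|^2=(\partial^2_{XX}\hh)^2+(\partial^2_{YY}\hh)^2+2(\partial^2_{XY}\hh)^2$. Then
\[
\psi(t)\ \ge\ t\,\Delta\hh-t^2|D^2\hh|^2\ \ge\ t\Big(\frac c2-t\Big)|D^2\hh|^2\ \ge\ 0\qquad\text{whenever } t\le\frac c2 \text{ and } 2t\sup|D^2\hh|\le 1 .
\]
So the gain and the loss are governed by the same penetration depth $t$, linearly versus quadratically, uniformly in $t\in[0,r]$. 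The case $u(x)>0$ is symmetric, with $q-(r,0,0,0)$ the only point entering the curved half. This gives $\Ent(\mu_{-1})-2\Ent(\mu_0)+\Ent(\mu_1)\le Cr^4$, which is what \eqref{positive_laplace_estimate_error} expresses. It requires $r_0\lesssim c$, so $r_0$ must be shrunk depending on $c$, not only on $\sup|D^2\hh|$.
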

\begin{proof}
    Fix a point $p\in \R^4$. First assume that $|\mathcal{U}(p)| \leq r_0$, where $r_0$ is given by Lemma \ref{r_0_for_upper_bounds}. 
    Choose $r \in (0, r_0/2)$ and a Borel probability measure $\mu_0$ that is absolutely continuous with respect to the Lebesgue measure and such that $\supp\, \mu_0 \subset B^{euc}_{r^4}(p)$. 
\smallskip

\textbf{Case $2r \leq |\mathcal{U}(p)|$}. In this case,  $B_{2r}^{euc}(p)$ is contained in either $\{\mathcal{U}<0\}$ or $\{\mathcal{U}>0\}$. Then, by the properties of the Minkowski space, we get that the volume measure is given by the Lebesgue measure and for any $y \in B_r(p)$, there exists a geodesic $(\mu_t)_{t \in [-1, 1]}$ such that $\supp\, \mu_1 \subset B_{r^2}(y)$ and 
    \begin{align}
        \mathrm{Ent}(\mu_0| \mathcal{L}^4) = \mathrm{Ent}(\mu_1| \mathcal{L}^4) = \mathrm{Ent}(\mu_{-1}| \mathcal{L}^4).
    \end{align}

\textbf{Case $|\mathcal{U}(p)|<2r$}.
Let $y \in I^+(p)\cap B_r(p)$. 
    If $\mathcal{U}(p) \geq 0$, set $\Tilde{w}^+:= y-p \in \R^4$ and compute $w^-$ as in \eqref{def_coord_change_w-}. If $\mathcal{U}(p), \mathcal{U}(y) < 0$, denote $w^-= \frac{y-p}{\mathcal{U}(y)-\mathcal{U}(p)} \in \R^4$. If $\mathcal{U}(p) <0$ and $\mathcal{U}(y)\geq 0$ denote by $p_0$ the point in $\{\mathcal{U}=0\}^+$ such that the geodesic from $p$ to the $y$ passes through $p_0$. 
    Then denote $\Tilde{w}^+:= y-p_0 \in \R^4$ and compute $w^-$ as in \eqref{def_coord_change_w-}. Now let $s, a, b \in \R$ such that  $w^-=(1, s, a, b)$ and apply the coordinate change \eqref{coord_chage_ab} and \eqref{coord_change_s}.
    
    In the coordinates $(T, W, X, Y)$, we define the probability measures $\mu_1, \mu_{-1}$ via 
    \begin{align*}
        \di\mu_{-1}(q) = \di \mu_0(q+(r,0,0,0)), \ \di\mu_{1}(q) = \di \mu_0(q-(r,0,0,0)).
    \end{align*}
    Lemma \ref{cyclical monotonicity_in_coords} shows that $$(B_r(p)-r\partial_T)\times (B_r(p)+r\partial_T) \supset \supp\, \mu_{-1} \times \supp\, \mu_{1}$$ is $\tau^\mathfrak{p}$-cyclically monotone for any $\mathfrak{p} \in (0,1)$.

    Denote by $\rho$ the density of $\mu_{0}$ with respect to the rescaled Lebesgue measure $(2s-a^2-b^2)^2\mathcal{L}^4$. 
    From the expressions for the metric tensor in \eqref{boosted_wave} and \eqref{metric_tensor_in_TW}, it follows that its determinant is given by
    \begin{align}\label{taylor_det_g}
        -\det[g]=& (2s-a^2-b^2)^2 \Big(1+2(T+W)^+(\partial^2_{XX} \hh+\partial^2_{YY} \hh) \nonumber \\
        &+ (4\partial^2_{XX} \hh\partial^2_{YY} \hh-2(\partial^2_{XY} \hh)^2+(\partial^2_{XX} \hh)^2+(\partial^2_{YY} \hh)^2)((T+W)^+)^2 \nonumber \\
        &+ ((T+W)^+)^3F((T+ W)^+, D^2 \hh)\Big),
    \end{align}
    where $F$ is a polynomial. 
    Then the densities $\rho_{-1}, \rho_0, \rho_1$ of $\mu_{-1}, \mu_0, \mu_1$ are given by
    \begin{align*}
        \rho_i(T, W, X, Y) = (2s-a^2-b^2)^2\frac{\rho(T-ir, W, X, Y)}{\sqrt{-\det[g](T, W, X, Y)}}, 
    \end{align*}
    for $i \in \{-1, 0, 1\}$.
    It follows that 
    \begin{align*}
        \mathrm{Ent}(\mu_i|\vol_g) &= \int_{\R^4} (2s-a^2-b^2)^2\rho \log ((2s-a^2-b^2)^2\rho) \di \mathcal{L}^4 \nonumber \\
        &- (2s-a^2-b^2)^2\int_{\R^4} \rho \log(\sqrt{-\det[g](T+ir, W, X, Y)})\di \mathcal{L}^4(T, W, X, Y).
     \end{align*}
 Hence,  
 \begin{align}\label{entropy_comparison_upper_bound}
      &\mathrm{Ent}(\mu_{-1}|\vol_g) -2  \mathrm{Ent}(\mu_0|\vol_g) +  \mathrm{Ent}(\mu_1|\vol_g)=\nonumber \\
      & \qquad -(2s-a^2-b^2)\int_{\R^4} \rho \log(\sqrt{-\det[g](T-r, W, X, Y)}) -2 \rho \log(\sqrt{-\det[g](T, W, X, Y)}) \nonumber \\
      &\qquad   + \rho\log(\sqrt{\det[g](T+ir, W, X, Y)}) \di \mathcal{L}^4(T, W, X, Y).
 \end{align}

Using \eqref{taylor_det_g}, we get that 
\begin{align*}
    \log \sqrt{-\det g}(T, W, X, Y) =& \log\big((2s-a^2-b^2)\big)+ (T+W)^+(\partial^2_{XX} \hh+\partial^2_{YY} \hh) \nonumber \\
    & \quad  + \frac{1}{2}(4\partial^2_{XX} \hh\partial^2_{YY} \hh-2(\partial^2_{XY} \hh)^2+(\partial^2_{XX} \hh)^2+(\partial^2_{YY} \hh^2))((T+W)^+)^2  \nonumber\\
    &\quad -\Big((T+W)^+(\partial^2_{XX} \hh+\partial^2_{YY} \hh)  \Big)^2+ O(((T+W)^+)^3).
\end{align*}
Then 
\begin{align*}
    &\frac{\di^2}{\di r^2}\log \sqrt{-\det g} (T+r, W, X, Y)\Big|_{r=0} = ((\partial^2_{XX} \hh+\partial^2_{YY} \hh))\delta_0(T+W) \nonumber \\
    &\qquad \quad + (-(\partial^2_{XY} \hh)^2-\frac{1}{2}((\partial^2_{XX} \hh)^2(\partial^2_{YY} \hh)^2))\Theta(T+W) + O((T+W)).
\end{align*} 
Since \( r \leq 1 \), it follows that for any \( q \in B^{\mathrm{euc}}_{r^4}(p) \supset \supp \mu_0 \), one has \( |u(q)| \le 3r \). Consequently, for such \( q \), we obtain
\begin{align}
    &\Bigg(\frac{\log(\sqrt{-\det[g](T-r, W, X, Y)})-2\log(\sqrt{-\det[g](T, W, X, Y)})+\log(\sqrt{-\det[g](T+r, W, X, Y)})}{r^2}\Bigg) \nonumber \\
    &\qquad = \frac{\Delta\hh((u(q)-r)^+-2u(q)^++(r+u(q))^+)}{r^2} \label{singular_concavity_part}z \\
   &\qquad  \quad -((\partial^2_{XY} \hh)^2+\frac{1}{2}((\partial^2_{XX} \hh)^2)+(\partial^2_{YY} \hh)^2))\Theta(u(q)) \label{abs_cts_concavity_part} + O(r). 
\end{align}
Now as $3r \geq |u(q)|$, we get that  
\begin{align*}
    \frac{\Delta\hh((u(q)-r)^+-2u(q)^++(r+u(q))^+)}{r^2} = \frac{\Delta\hh(-2u(q)^++(r+u(q))^+)}{r^2} = \frac{\Delta\hh(r-|u(q)|)}{r^2}.
\end{align*}

The assumption \eqref{eq:AssURB} gives that, for $u(p) \leq 0$: 
\begin{align*}
    &\frac{\log(\det[g](T-r, W, X, Y))-2\log(\det[g](T, W, X, Y))+\log(\det[g](T+r, W, X, Y))}{r^2}\nonumber \\
    &= \frac{\Delta\hh(r-|u(p)|) - B(X, Y)(r-|u(p)|)^2}{r^2}.
\end{align*}
Then, 
\begin{align}\label{positive_laplace_estimate_error}
     &\frac{\log(\sqrt{-\det[g](T-r, W, X, Y)})-2\log(\sqrt{-\det[g](T, W, X, Y)})+\log(\sqrt{-\det[g](T+r, W, X, Y)})}{|2s-a^2-b^2|^2 r^2} \nonumber\\
     &\leq \frac{1}{r}(|B|_{L^\infty}r-c)^+,
\end{align}
Substituting this into \eqref{entropy_comparison_upper_bound}, we conclude that in this case the timelike synthetic Ricci curvature is bounded above by \( 0 \).
If $u(p)\geq 0$, we perform the change of coordinates as in \eqref{coord_change_sigma}. In this case the determinant can be expanded as 
\begin{align}\label{taylor_det_g_new}
        \frac{\det[g]}{(2\sigma-\alpha^2-\beta^2)^2}=&  \Big(1-2\min(0,(T+W))(\partial^2_{XX} \hh+\partial^2_{YY} \hh) + (4\partial^2_{XX} \hh\partial^2_{YY} \hh-2(\partial^2_{XY} \hh)^2+(\partial^2_{XX} \hh)^2 \nonumber \\
        &+(\partial^2_{YY} \hh^2))((T+W)^-)^2 + ((T+W)^-)^3F(T, W, D^2 \hh)\Big),
    \end{align}
yielding an analogous estimate as in \eqref{positive_laplace_estimate_error}.

If \( |\mathcal{U}(p)| \ge r_0 \), then either \( B^{\mathrm{euc}}_{r_0}(p) \subset \{\mathcal{U} > 0\} \) or \( B^{\mathrm{euc}}_{r_0}(p) \subset \{\mathcal{U} < 0\} \). 
In this case, we may work entirely within the cut-and-paste coordinate system, and the flatness of the Minkowski metric implies that the timelike synthetic Ricci curvature is bounded above by \( 0 \); see \cite[Thm.~4.7]{mondino2022optimal}.
\end{proof}

\begin{remark}
The proof (see in particular  \eqref{entropy_comparison_upper_bound} and \eqref{abs_cts_concavity_part}) furthermore shows that, if 
\[
\partial^2_{XX} \hh + \partial^2_{YY} \hh \ge 0,
\]
then the impulsive gravitational wave \( (\mathbb{R}^4, g_{\hh}) \)  as in Theorem \ref{upper_bounds_waves} has timelike Ricci curvature bounded above --- in the synthetic sense of \cite{mondino2022optimal} --- by the function
\begin{align*}
(\partial^2_{XY} \hh)^2 + \frac{1}{2}\bigl((\partial^2_{XX} \hh)^2 + (\partial^2_{YY} \hh)^2\bigr).
\end{align*}
\end{remark}

\section{Application to weak solutions of the Einstein equations and nonlinearly interacting impulsive gravitational waves}\label{Sec:4}
In \cite{LR:12}, Luk and Rodnianski study the characteristic initial value problem for impulsive gravitational waves and prove local existence results for the vacuum Einstein equations within certain classes of such waves, where the impulse propagates along compact two-surfaces --- in contrast to the classical impulsive waves considered here. In the subsequent work \cite{luk2013nonlinear}, the same authors analyze the nonlinear interaction of impulsive gravitational waves, again in the framework of a characteristic initial value problem with initial data prescribed on two null hypersurfaces. They establish the existence of weak solutions to the Einstein equations for a broad class of initial data posed on the null hypersurfaces \( H_0 \) and \( \underline{H}_0 \). Moreover, they show that smooth approximations \( \gamma_n \) of the initial data \( \gamma \) give rise to smooth solutions \( g_n \) of the Einstein equations that converge locally uniformly to the corresponding weak solution \( g \).
\\We apply our stability result for the \( \tcd \)-condition to the non-smooth spacetimes constructed in \cite{LR:12, luk2013nonlinear}, thereby showing that a large class of non-smooth Lorentzian metrics --- including models describing interacting impulsive gravitational waves --- satisfy the \( \wtcd(0,4) \)-condition.
\\Finally, we emphasize that the characteristic initial value problem is solved in the future of the impulse (or of the interacting impulses). By contrast, our approach in the previous sections captures the entire impulsive region and establishes the curvature bound \( \Ric \ge 0 \) across it -- in a synthetic sense, see Theorem \ref{thm:tcd_for_waves}.

\subsection{Setup}
We restrict attention to the setting of \cite{luk2013nonlinear}, as it provides a more general framework than \cite{LR:12}. 
In this work, the characteristic initial value problem is formulated with initial data prescribed on two null hypersurfaces \( H_0 \) and \( \underline{H}_0 \), intersecting along the sphere \( S_{0,0} \). 
The resulting spacetime is constructed as a solution to the Einstein equations in a neighbourhood of \( H_0 \) and \( \underline{H}_0 \) containing \( S_{0,0} \).

For a spacetime in a neighbourhood of $S_{0,0}$, \cite{luk2013nonlinear} defines a double null foliation as follows: consider  solutions $u, \underline{u}$ to the eikonal equation
\begin{align*}
    (g^{-1})^{\mu, \nu} \partial_\mu u \partial_\nu u =0, \ (g^{-1})^{\mu, \nu} \partial_\mu \underline{u} \partial_\nu \underline{u} =0,
\end{align*}
such that the initial conditions $u=0$ on $H_0$ and $\underline{u}=0$ on $\underline{H}_0$ are satisfied. Let 
\begin{align*}
    {L'}^\mu = -2(g^{-1})^{\mu, \nu} \partial_\nu u, \  \underline{L'}^\mu = -2(g^{-1})^{\mu, \nu} \partial_\nu \underline{u}.
\end{align*}
These are null  geodesic vector fields. Let
\begin{align*}
    2\Omega^{-2} = -g(L, L').
\end{align*}
Define
\begin{align*}
    e_3 = \Omega \underline{L'}, \ e_4 = \Omega L',
\end{align*}
and note that $g(e_3, e_4)=-2$. Furthermore, define
\begin{align*}
    \underline{L} = \Omega^2 \underline{L'}, \ L= \Omega^2 L'.
\end{align*}
Consider spacetime solutions to the vacuum Einstein equations in the gauge such that $\Omega=1$ on $H_0$ and $\underline{H}_0$. We denote the level sets of $u$ as $H_u$ and the level sets of $\underline{u}$ as $\underline{H}_{\underline{u}}$. 
The sets defined by the intersections of the hypersurfaces $H_u$ and $\underline{H}_{\underline{u}}$ are topologically $2$-spheres, which will be denoted by $S_{u, \underline{u}}$.

We now introduce a coordinate system \( (u,\underline{u},\theta^1,\theta^2) \) as follows. 
On the sphere \( S_{0,0} \), choose local coordinates \( (\theta^1,\theta^2) \) such that, on each coordinate patch, the metric \( \gamma \) is smooth, bounded, and positive definite. 
We then extend these coordinates to the initial hypersurfaces \( H_0 \) and \( \underline{H}_0 \) by requiring that \( \theta^A \) remain constant along the integral curves of \( L \) and \( \underline{L} \), respectively. 

More generally, the functions \( \theta^A \) are defined by the transport equations
\begin{align*}
\mathcal{L}^{\circ}_L \theta^A = 0,
\end{align*}
where \( \mathcal{L}^{\circ} \) denotes the restriction of the Lie derivative to the tangent spaces \( T S_{u,\underline{u}} \).
Relative to this coordinate system, the null pair $e_3$ and $e_4$ can be expressed as
\begin{align*}
    e_3 = \Omega^{-1}\left(\frac{\partial}{\partial u} + b^A \frac{\partial}{\partial \theta^A}\right), \ e_4 = \frac{\partial}{\partial \underline{u}},
\end{align*}
for some $b^A$ such that $b^A=0$ on $\underline{H}_0$. The metric $g$ takes the form 
\begin{align}\label{luk_rod_form_of_metric}
    g= -\Omega^2(du \otimes d\underline{u} + d\underline{u} \otimes du) + \gamma_{AB}(d\theta^A -b^A du) \otimes (d\theta^B -b^B du).
\end{align}
In \cite{luk2013nonlinear},  the Einstein equations are recasted as a system for Ricci coefficients and curvature components associated to the null frame $e_3, e_4$ defined above and an orthonormal frame $e_1, e_2$ tangent to the $2$-spheres $S_{u, \underline{u}}$. Using $A,B$ to denote $1,2$, we recall the definition of the Ricci coefficients relative to the null frame
\begin{align*}
    & \chi_{AB} = g(D_A e_4, e_B), \quad   \underline{\chi}_{AB} = g(D_A e_3, e_B),\\
    & \eta_A = -\frac{1}{2}g(D_3e_A, e_4), \quad  \underline{\eta}_A = -\frac{1}{2}g(D_4e_A, e_3), \\
    & \omega = -\frac{1}{4}g(D_4 e_3, e_4), \quad  \underline{\omega} = -\frac{1}{4}g(D_4 e_3, e_4), \\
    & \zeta_A = \frac{1}{2}g(D_A e_4, e_3) ,
\end{align*}
where $D_A = D_{e(A)}$. We denote by $\hat{\chi}, \underline{\hat{\chi}}$ the traceless parts of $\chi, \underline{\chi}$. Moreover, we adopt the notation 
\begin{align*}
    \psi \in \{\mathrm{tr} \chi, \mathrm{tr} \underline{\chi}, \eta, \underline{\eta}\}, \psi_H \in \{\hat{\chi}, \omega\}, \psi_{\underline{H}} \in \{\hat{\chi}, \underline{\omega}\}.
\end{align*}
The null curvature components are denoted by
\begin{align*}
    &\alpha_{AB} = R(e_A, e_4, e_B, e_4),\ \underline{\alpha}_{AB} = R(e_A, e_3, e_B, e_3),\\
    & \beta_A = \frac{1}{2}R(e_A, e_4, e_3, e_4), \ \underline{\beta}_A = \frac{1}{2}R(e_A, e_3, e_3, e_4), \\
    & \rho = \frac{1}{4}R(e_4, e_3, e_4, e_3), \ \sigma = \frac{1}{4}{}^*R(e_4, e_3, e_4, e_3),
\end{align*}
where ${}^*R$ denotes the Hodge dual of $R$. Finally define 
\begin{align*}
    \check{\rho} = \rho - \frac{1}{2}\hat{\chi} \cdot \underline{\hat{\chi}}, \ \check{\sigma} = \sigma + \frac{1}{2} \underline{\hat{\chi}} \wedge \hat{\chi}. 
\end{align*}
We refer to \cite{luk2013nonlinear}, identities (22)--(26),  for the expression of the vacuum Einstein equations in terms of the above quantities.
\subsection{The approximation result and consequences on the TCD condition for interacting waves}
Using the above notation, let us recall the following result from \cite{luk2013nonlinear}:
\begin{theorem}[\cite{luk2013nonlinear} Theorem 3]\label{luk_rod_main_thm}
    Suppose the initial data set for the characteristic initial value problem is given on $H_0$ for $0 \leq \underline{u} \leq \underline{u}_*$ and on $\underline{H}_0$ for $0 \leq u \leq  u_* \leq I$ such that 
    \begin{align*}
    &    c \leq |\det \gamma|_{S_{u,0}}|, |\det \gamma|_{S_{0,\underline{u}}}| \leq C, \\
   &     \sum_{i \leq 3} \Bigg( |(\frac{\partial}{\partial \theta})^{i} \gamma|_{S_{u,0}}|, |(\frac{\partial}{\partial \theta})^{i} \gamma|_{S_{S_{0,\underline{u}}}}| \Bigg) \leq C, \\
   &     \mathcal{O}_0 := \sum_{i \leq 3} \big( \norm{\nabla^{i} \psi}_{L^\infty_u L^2(S_{u, 0})} + \norm{\nabla^{i} \psi}_{L^\infty_{\underline{u}} L^2(S_{0, \underline{u}})} + \norm{\nabla^{i} \psi_H}_{L^2(H_0)} +\norm{\nabla^{i} \psi_{\underline{H}}}_{L^2(\underline{H}_0)} \big) \leq C, \\
  &      \mathcal{R}_0 := \sum_{i \leq 2} \Bigg( \norm{\nabla^{i} \beta}_{L^2(H_0)} +\norm{\nabla^{i} \underline{\beta}}_{L^2(\underline{H}_0)} + \sum_{\Psi \in \{ \check{\rho}, \check{\sigma}\}}(\norm{\nabla^{i} \Psi}_{L^\infty_u L^2(S_{u, 0})} + \norm{\nabla^{i} \Psi}_{L^\infty_{\underline{u}} L^2(S_{0, \underline{u}})}) \Bigg) \leq C.
    \end{align*}
    Then there exists $\e >0$ sufficiently small depending only on $C, c$ and $I$ such that if $\underline{u}_* \leq \e$, there exists a spacetime $(\mathcal{M}, g)$ that solves the characteristic initial value problem to the vacuum Einstein equations in the region $0 \leq u \leq u_*$, $0\leq \underline{u}, \leq \underline{u}_*$.
    Geometrically, this is the region to the future of the initial hypersurfaces $H_0$, $\underline{H}_0$, which is bounded in the future by the incoming null hypersurface from $S_{u_*, 0}$. Associated to the spacetime $(\mathcal{M}, g)$, there exists a system of null coordinates $(u, \underline{u}, \theta^1, \theta^2)$ in which the metric is continuous and takes the form \eqref{luk_rod_form_of_metric}.
    In addition, given a sequence of smooth initial data sets such that the metrics $\gamma_n$ approach $\gamma$ in $L^{\infty}_u W^{3, \infty}(S_{u,0}) \cap L^\infty_{\underline{u}}W^{3,\infty}(S_{0, \underline{u}})$, the Ricci coefficients $(\psi, \psi_H, \psi_{\underline{H}})_n$ approach $(\psi, \psi_H, \psi_{\underline{H}})$ in the norm given by $\mathcal{O}_0$ (where norms and connection symbols on the spheres $S_{u,0}, S_{0,\underline{u}}$ are defined with respect to $\gamma$) and the renormalised curvature components $(\beta, \check{\rho}, \check{\sigma}, \underline{\beta})_n$ approach $(\beta, \check{\rho}, \check{\sigma}, \underline{\beta})$ in the norm $\mathcal{R}_0$, the sequence of initial data give rise to a sequence of smooth spacetimes which approximate $(\mathcal{M}, g)$ in $C^0$. $(\mathcal{M}, g)$ is also the unique spacetime solving the characteristic initial value problem among all $C^0$ limits of smooth solutions.
\end{theorem}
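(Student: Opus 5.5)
This statement is imported verbatim from \cite[Theorem 3]{luk2013nonlinear}, so within the present paper we simply invoke it. If one had to reprove it, the plan would follow the Luk--Rodnianski strategy for rough characteristic data: a priori estimates for smooth solutions at the low regularity level encoded by $\mathcal{O}_0,\mathcal{R}_0$, followed by a compactness and uniqueness argument for approximating data.

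\textbf{Step 1: smooth solutions.} For smooth data obeying the stated bounds, we first obtain a smooth solution on a short region from the classical local theory for the characteristic problem (Rendall). We then continue it to the full region $0\le u\le u_*$, $0\le \underline u\le \underline u_*$ by a bootstrap on the norms
\begin{align*}
\mathcal{O}&:=\sum_{i\le 3}\Big(\sup_{u,\underline u}\|\nabla^i\psi\|_{L^2(S_{u,\underline u})}+\sup_u\|\nabla^i\psi_H\|_{L^2(H_u)}+\sup_{\underline u}\|\nabla^i\psi_{\underline H}\|_{L^2(\underline H_{\underline u})}\Big),\\
\mathcal{R}&:=\sum_{i\le 2}\Big(\sup_u\|\nabla^i\beta\|_{L^2(H_u)}+\sup_{\underline u}\|\nabla^i\underline\beta\|_{L^2(\underline H_{\underline u})}+\sum_{\Psi\in\{\check\rho,\check\sigma\}}\sup_{u,\underline u}\|\nabla^i\Psi\|_{L^2(S_{u,\underline u})}\Big).
\end{align*}
The key structural point is the null structure. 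The singular quantities $\psi_H$ (respectively $\psi_{\underline H}$) appear only in $L^2$ along $H_u$ (respectively $\underline H_{\underline u}$). Moreover, the curvature components $\alpha,\underline\alpha$, which are not even measurable at this regularity, are eliminated from the system.
\begin{enumerate}
\item The transport equations (22)--(26) of \cite{luk2013nonlinear} for $\psi,\psi_H,\psi_{\underline H}$ are integrated along $e_3,e_4$. Combining them with Sobolev embedding on the spheres $S_{u,\underline u}$ controls $\mathcal{O}$ in terms of $\mathcal{O}_0+\mathcal{R}$.
\item Energy estimates for the Bianchi pairs $(\beta,(\check\rho,\check\sigma))$ and $((\check\rho,\check\sigma),\underline\beta)$ are derived using the renormalised quantities. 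This avoids $\alpha,\underline\alpha$ and controls $\mathcal{R}\le C(\mathcal{R}_0+\e^{1/2}\,\text{poly}(\mathcal{O},\mathcal{R}))$.
\item The smallness $\underline u_*\le\e$ closes the bootstrap.
\end{enumerate}

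\textbf{Step 2: limits.} Given data $\gamma_n\to\gamma$ in the stated norms, the uniform bounds from Step~1 hold for the solutions $g_n$. In coordinates $(u,\underline u,\theta^A)$, the bounds on $\Omega_n,b_n,\gamma_n$ and their first derivatives give equicontinuity. We then extract a $C^0$ limit $g$ of the form \eqref{luk_rod_form_of_metric}. Weak convergence of the Ricci coefficients and of the renormalised curvature lets us pass to the limit in the renormalised system, so $g$ is a weak solution.

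\textbf{Step 3: uniqueness.} Take two sequences with the same limit data. Estimating the differences in weaker norms, by one derivative less using the same null structure, shows that their limits coincide. This yields uniqueness among $C^0$ limits and convergence of the whole sequence.

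The main obstacle is Step 1.2. Closing the energy estimates at a regularity where $\hat\chi$ and $\underline{\hat\chi}$ may be discontinuous requires checking that every nonlinear term pairs a singular $\psi_H$ only with quantities integrable along $H_u$. The renormalisation $\check\rho,\check\sigma$ is what makes this work.
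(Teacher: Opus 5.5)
Your proposal matches the paper: the statement is quoted verbatim from Luk--Rodnianski and is used only as a black box in Proposition~\ref{prop:InteractiveWavesWTCD}, so citing \cite[Theorem 3]{luk2013nonlinear} is the entire proof. Your reconstruction is not needed, and as a roadmap it glosses over two non-routine points. First, at top order the transport equations lose a derivative ($\nabla_4\eta$ contains $\beta$, $\nabla_3\hat\chi$ contains $\nabla\eta$, $\nabla_3\omega$ contains $\rho$), so elliptic estimates on the spheres $S_{u,\underline u}$ are indispensable. Second, in Step~2 weak convergence cannot pass to the limit in quadratic terms such as $|\hat\chi|^2$, because there is no control of $\nabla_4\hat\chi$ (it involves $\alpha$); the strong convergence supplied by your Step~3 difference estimates must come first, otherwise one runs into exactly the high-frequency defect discussed in the paper's introduction.
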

\medskip

\begin{proposition}\label{prop:InteractiveWavesWTCD}
Let \( \gamma \) be an initial data set satisfying the assumptions of Theorem~\ref{luk_rod_main_thm}, and let \( (\gamma_n) \) be a sequence of smooth initial data sets approximating \( \gamma \) as in that theorem. Let \( \varepsilon > 0 \) be as in Theorem~\ref{luk_rod_main_thm}. 
\\Assume moreover that the resulting spacetime \( (\mathcal{M}, g) \) is geodesic and that every maximizing causal curve has a well-defined causal character. Then, for any \( p \in (0,1) \), the spacetime \( (\mathcal{M}, g) \), restricted to the region
\[
0 \le u \le u_*, \qquad 0 \le \underline{u} \le \underline{u}_*, \qquad \underline{u}_* \le \varepsilon,
\]
which is a weak solution of the vacuum Einstein equations in the PDE sense, also satisfies the \( \wtcd^{e}_{p}(0,4) \)-condition.
\end{proposition}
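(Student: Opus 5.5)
The plan is to apply the stability Theorem~\ref{stability_tcd} with $K=0$, $N=4$, to the sequence of smooth vacuum spacetimes $g_n$ produced by Theorem~\ref{luk_rod_main_thm} from the data $\gamma_n$. Each $g_n$ is smooth and solves $\Ric(g_n)=0$ on the region $\mathcal R:=\{0\le u\le u_*,\ 0\le \underline u\le \underline u_*\}$. By the characterization of McCann and Mondino--Suhr recalled after the definition of $\tcd^e_p$, the region therefore satisfies $\tcd^e_p(0,4)$ for every $p\in(0,1)$, and in particular the pointwise bound $\Ric_{g_n}(v,v)\ge 0$ on timelike $v$. Theorem~\ref{luk_rod_main_thm} gives $g_n\to g$ in $C^0$. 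Moreover, in the common null coordinates $(u,\underline u,\theta^1,\theta^2)$ all metrics have the form \eqref{luk_rod_form_of_metric}. Hence the vector field $\partial_u+\partial_{\underline u}$, or equivalently $e_3+e_4$ expressed in coordinates, provides a common continuous time orientation, as required in the standing assumptions. The hypotheses that $(\mathcal M,g)$ is geodesic and that maximizers have a causal character are exactly the remaining assumptions on $g_\infty$ in Theorem~\ref{stability_tcd} and in Proposition~\ref{locally_uniform_convergence_tau}. For the latter we also need timelike maximizers to be Lipschitz in arclength parametrization, which I would take from the geodesic assumption together with the causal-character assumption.

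The key remaining step, and the main obstacle, is \emph{uniform global hyperbolicity} of $(g_n)$ on the region. The region is a manifold with boundary, so I would work on its interior, or on a slightly enlarged open neighbourhood where the smooth solutions still exist. I would exploit monotonicity of the optical functions. Because $g_n$ has the block structure \eqref{luk_rod_form_of_metric} with $\gamma_n$ positive definite, the argument of Proposition~\ref{uv_monotone_along_causal_curves} applies, possibly after absorbing the shift $b^A du$. It shows that $u$ and $\underline u$ are non-decreasing along every future-directed $g_n$-causal curve. Hence for compact $K_1,K_2$ the diamonds $J^+_{g_n}(K_1)\cap J^-_{g_n}(K_2)$ lie in the slab $\{u\in[\min_{K_1}u,\max_{K_2}u],\ \underline u\in[\min_{K_1}\underline u,\max_{K_2}\underline u]\}$.

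This slab is compact as soon as the $\theta$-directions are compact. They are, because the leaves $S_{u,\underline u}$ are 2-spheres and the coordinates are patched on $S_{0,0}$. Thus the union over $n$ of the diamonds is contained in a fixed compact subset of the (closed) region, and Definition~\ref{uniform_global_hyperbolicity_def} holds. The points I expect to require care are the following.
\begin{enumerate}
\item Ensuring this compact set lies inside the domain where the convergence holds. I would restrict to $u\le u_*$, $\underline u\le\underline u_*$ and treat the initial null boundary as part of the closed region, or else shrink $u_*$ slightly.
\item Verifying global hyperbolicity of $g=g_\infty$ itself. This follows by the same slab argument together with non-total imprisonment: the monotone function $u+\underline u$ strictly increases along causal curves, since a causal curve with both $u$ and $\underline u$ constant would be tangent to the spacelike spheres.
\end{enumerate}

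With these ingredients, Theorem~\ref{stability_tcd} yields that $(\mathcal M,g)$ on the region satisfies $\wtcd^e_p(0,4)$ for all $p\in(0,1)$. The volume measure is $\mathrm{vol}_g$, and the dimension bound $N\ge n=4$ is satisfied.
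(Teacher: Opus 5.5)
Your argument is correct, but it reaches the key hypothesis, uniform global hyperbolicity, by a different route from the paper.

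\textbf{The paper's route.} Its proof notes that the solution region is the future of the common initial hypersurfaces $H_0\cup\underline{H}_0$. It then invokes Proposition~\ref{common_cauchy_surface_uniform_global_hyperbolicity}, i.e.\ a common Cauchy hypersurface together with the limit-curve contradiction argument there, which in turn rests on Lemma~\ref{stability_of_past_and_future_for_surfaces}.

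\textbf{Your route.} You use that $u,\underline{u}$ are optical functions for \emph{every} $g_n$ and for $g$ in the same coordinates \eqref{luk_rod_form_of_metric}. Hence they are non-decreasing along all future-directed causal curves. So every diamond $J^+_{g_n}(K_1)\cap J^-_{g_n}(K_2)$ lies, uniformly in $n$, in the slab $[\min_{K_1}u,\max_{K_2}u]\times[\min_{K_1}\underline{u},\max_{K_2}\underline{u}]\times S^2$. If you take $M$ to be the open interior, this slab is automatically compact in $M$ whenever $K_1,K_2\subset M$ are, so your concern (1) disappears.

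\textbf{Comparison.} Your route is more elementary and explicit, and it exploits the double null gauge shared by the whole sequence. It also avoids a point the paper's one-line application passes over: $H_0\cup\underline{H}_0$ is the null boundary of the region, not a Cauchy hypersurface of an open manifold. In the interior one would have to use something like $\{u/u_*+\underline{u}/\underline{u}_*=1\}$, whose Cauchy property is most easily checked with your monotonicity. The paper's route, in turn, needs no double null structure and applies to any sequence sharing a Cauchy hypersurface.

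\textbf{Minor fixes.}
\begin{enumerate}
\item $\partial_u+\partial_{\underline{u}}$ is not equivalent to $e_3+e_4$. By \eqref{luk_rod_form_of_metric}, its $g$-square is $\gamma_{AB}b^Ab^B-2\Omega^2$, so it need not be timelike when $b\neq0$. Instead use $e_3+e_4$ of $g$, whose $g$-square is $2g(e_3,e_4)<0$, and discard finitely many $n$ using the $C^0$ convergence on the compact region.
\item Strict increase of $u+\underline{u}$ does not by itself bound $h$-lengths. You need $d(u+\underline{u})(w)\ge c\,|w|_h$ for causal $w$ over compact sets, which follows (uniformly in $n$) from the compactness argument of Proposition~\ref{uniform_bound_prosuct_with_time_or}. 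You also need closedness of the diamonds, via the limit curve theorem. Together these give the global hyperbolicity of each $g_n$ required in the standing assumptions and in the McCann/Mondino--Suhr characterization.
\item Your aside that $h$-Lipschitz regularity of arclength-parametrized maximizers follows from geodesy plus causal character is unjustified for a $C^0$ metric: timelike vectors of unit $g$-length can have arbitrarily large $h$-norm. However, this hypothesis of Proposition~\ref{locally_uniform_convergence_tau} is not among those stated in Theorem~\ref{stability_tcd}, and the paper does not verify it either.
\end{enumerate}
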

\begin{proof}
    From  Theorem \ref{luk_rod_main_thm}, we get that there exists an $\e >0$ and a sequence of smooth metrics $(\mathcal{M}, g_n)$ defined on $0 \leq u \leq u_*$, $0 \leq \underline{u} \leq \underline{u}_*$, $\underline{u}_*\leq \e$ such that they satisfy the vacuum Einstein equations and such that $g_n \to g$ locally in $C^0$. 
    Moreover,  Theorem \ref{luk_rod_main_thm} gives us that the domain of existence equals the future of the same hypersurfaces, hence Proposition \ref{common_cauchy_surface_uniform_global_hyperbolicity} gives us uniform global hyperbolicity of the metrics $(\mathcal{M}, g_n)$ and $(\mathcal{M}, g)$. Now all conditions to apply Theorem \ref{stability_tcd} are met, which proves the proposition. 
\end{proof}
\newpage
\appendix
\section{Locally uniform convergence does not imply uniform global hyperbolicity}\label{subsec:counterexample_uniform_hyperbolicity}

Recall that uniform global hyperbolicity (see Definition \ref{uniform_global_hyperbolicity_def})  was a key assumption to obtain the stability of lower bounds on the timelike Ricci curvature under $C^0_{loc}$-convergence of Lorentzian metrics (see Theorem \ref{thm:VECC0} and Theorem \ref{stability_tcd}). 

A natural question is whether such a notion is redundant.  In order to show it is not, in this appendix we  construct metrics $g, (g_j)_{j \in \N}$ such that:
\begin{enumerate}
\item  $g$ and each   $g_j$  is Lipschitz  and globally hyperbolic;
\item $g_j$ converges to $g$ locally uniformly;
\item the family  $(g_j)_{j \in \N}$ is not  uniformly globally hyperbolic.
\end{enumerate}
It will be clear from the construction that one can regularise $g_j$ and $g$ to smooth Lorentzian metrics still satisfying the properties 1,2,3 above.

We will work on $(-\frac{1}{3}, \frac{1}{3}) \times \R^3$ and denote the coordinates by $(t, x, y, z)$.
Let $\phi: \R \to [0,1]$ be a smooth and increasing function such that 
$$\phi(x) =0 \quad  \text{for $x \leq 0$}\quad  \text{and}\quad  \phi(x)=1\quad \text{ for $x \geq 1$}.$$  

Consider the Lorentzian metric $g$ defined by 
\begin{align*}
    g = \begin{pmatrix}
    -1 & 0 & 0 & 0 \\
    0 & \frac{1}{2x^4} & 0 & 0 \\
    0 & 0 & \frac{1}{2x^4} & 0 \\
    0 & 0 &  0  & \frac{1}{2x^4}
    \end{pmatrix} + \Big(1-\frac{1}{2x^4}\Big)\phi\left(4x^2\Big({\Big(t-\frac{1}{x}\Big)^2 + y^2 + z^2\Big)}\right) \begin{pmatrix}
    0 & 0 & 0 & 0 \\
    0 & 1 & 0 & 0 \\
    0 & 0 & 1 & 0 \\
    0 & 0 &  0  & 1 
    \end{pmatrix}
\end{align*}
 if $|x| \geq 2^{-1/4}$ and by 
\begin{align*}
   g = \begin{pmatrix}
     -1 & 0 & 0 & 0 \\
    0 & 1 & 0 & 0 \\
     0 & 0 & 1 & 0 \\
     0 & 0 &  0  & 1
     \end{pmatrix},
 \end{align*}    
 if $|x| \leq 2^{-1/4}$, see Figure \ref{fig}.

\begin{figure}[h!]
\caption{The different regions of definition of $g$ in the $t$--$x$-plane\label{fig}}
 \begin{tikzpicture}[scale=1]

\draw[->, thin] (-7,0) -- (7,0) node[right] {$x$};
\draw[->, thin] (0,-3) -- (0,3) node[above] {$t$};

\def\a{2^(-1/4)}
\def\b{1/5}

\def\tick{0.08}

\draw[thin] ({\a},-\tick) -- ({\a},\tick);
\draw[thin] ({-\a},-\tick) -- ({-\a},\tick);

\draw[thin] ({\b},-\tick) -- ({\b},\tick);
\draw[thin] ({-\b},-\tick) -- ({-\b},\tick);

\draw[blue!70!black, thin] ({\a}, -3) -- ({\a}, 3);
\draw[blue!70!black, thin] ({-\a}, -3) -- ({-\a}, 3);

\draw[purple, thin] ({\b}, -3) -- ({\b}, 3);
\draw[purple, thin] ({-\b}, -3) -- ({-\b}, 3);

\draw[green!70!black, thin, domain=\a:7, samples=200] plot (\x, {1/\x});
\draw[green!70!black, thin, domain=-7:-\a, samples=200] plot (\x, {1/\x});

\draw[green!50, thin, domain=\a:7, samples=200] plot (\x, {1/(2*\x)});
\draw[green!50, thin, domain=-7:-\a, samples=200] plot (\x, {1/(2*\x)});

\draw[green!50, thin, domain=\a:7, samples=200] plot (\x, {3/(2*\x)});
\draw[green!50, thin, domain=-7:-\a, samples=200] plot (\x, {3/(2*\x)});

\draw[red, thin] (\b, 0.1) -- (7, 0.1);
\draw[red, thin] (-7, -0.1) -- (-\b, -0.1);

\draw[red, thin, domain=-\b:\b] plot (\x, {\x/2});

\end{tikzpicture}
\end{figure}

Note that $g$ coincides with the Minkowski metric outside of the open set  
\begin{equation}\label{eq:defN}
N:= \bigcup_{ |x|> 2^{-1/4}} \left\{4x^2\Big(\Big(t-\frac{1}{x}\Big)^2 + y^2 + z^2\Big)< 1\right\}\supset \left\{\left(t, \frac{1}{t}, 0, 0\right)\colon |t|< 2^{1/4} \right\}.
\end{equation}
  Moreover, note that the curve $ s \mapsto (s, \frac{1}{s}, 0, 0)$ is $g$-timelike for $s \in (-\frac{1}{3}, \frac{1}{3}) \setminus \{0\}$. 
Define the achronal sets 

$$
S_1 := \left\{t=\frac{1}{2}x: |x| \leq \frac{1}{5}\right\} \quad \text{ and }\quad  S_2 = \left\{t= \mathrm{sgn}(x)\frac{1}{10}: |x|> \frac{1}{5}\right\}.
$$

\textbf{Claim 1.} $S = S_1 \cup S_2$ is acausal.

\emph{Proof of claim 1.}
Let $p \neq q\in S$ and suppose that $p < q$. The only case to consider is if $p,q$ are in different $S_i$s ($i=1,2$).
Then there exists a smooth $g$-causal curve $\gamma$ connecting them, which we assume to be parametrised by uniform $t$-speed, then it follows that $t(p) < t(q)$. 
Note that projecting \( \gamma \) onto the subspace \( \{y = z = 0\} \) yields a \( g \)-causal curve. Indeed, the time component of the velocity remains unchanged, while the metric can only decrease in the spacelike directions (as $\phi$ is increasing), and only one of the three spatial directions is retained. From now on, we work with this reduced curve.
From the definition of $S$, it follows that $t(p)<t(q)$ implies $x(p) < x(q)$. Furthermore, we may assume that $x(\gamma)$ is increasing. If not, we may consider the function $t \mapsto \min(x(q), \max\{ x(\gamma(s)),s \leq t\})$, which is piecewise smooth and $g$-causal.   
Hence, we constructed a piecewise smooth, causal curve  $\gamma:[t(p), t(q)] \to \R^4$ such that 
$$\gamma(t(p)) = (t(p), x(p), 0, 0), \quad \gamma(t(q)) = (t(q), x(q), 0, 0),\quad\text{ and } \quad P_2\circ \gamma\; \text{ is increasing}.$$

Note that $t(p)<t(q)$ implies that $t(p) < \frac{1}{10}$ or $t(q) > - \frac{1}{10}$. The definition of $S$ implies that $x(p) < \frac{1}{5}$ or $x(q)> -\frac{1}{5}$. Define furthermore
\begin{align*}
    t_p:= \max\big(\{t: x(\gamma(t)) \leq -2\} \cup \{t(p)\}\big), \ t_q:= \min\big(\{t: x(\gamma(t)) \geq 2\} \cup \{t(q)\}\big). 
\end{align*}
Note that by the definition of $S$ and the fact that $t \circ \gamma$ is strictly increasing, it holds that $t_p \geq t(p) \geq -\frac{1}{10}$ and $t_q \leq t(q) \leq \frac{1}{10}$, and 
\begin{align*}
    \frac{x\big(\gamma(t_q)\big)-x\big(\gamma(t_p)\big)}{t_q-t_p} \geq 2. 
\end{align*}
Moreover, along \( \gamma([t_p, t_q]) \subset \{|x|\leq 2, |t| \leq \frac{1}{10}\} \), the metric \( g \) coincides with the Minkowski metric. 
Since \( \gamma \) is piecewise smooth, there exists an interval \( (t_1,t_2) \subset [t_p, t_q] \) on which \( \gamma \) is smooth and such that 
\[
x(\gamma(t_2)) - x(\gamma(t_1)) \ge 2(t_2 - t_1).
\]
By the intermediate value theorem, there exists \( \tau \in (t_1,t_2) \) such that 
\[
\frac{d}{dt} \bigl(x \circ \gamma\bigr)(\tau) \ge 2.
\]
However, this contradicts the causality of \( \gamma \) at \( \tau \).
\hfill$\Box$
\medskip

\textbf{Claim 2:} $S$ is a Cauchy hypersurface for $g$.  

\emph{Proof of claim 2.}
We may assume that $\partial_t$ is future directed. 
Let \( \gamma \colon (a,b) \to \bigl(-\tfrac{1}{3},\tfrac{1}{3}\bigr) \times \mathbb{R}^3 \) be an inextendible causal and future directed curve. We may assume that \( a < 0 \) and \( b > 0 \). Since \( S \) is a Cauchy hypersurface for Minkowski space, it suffices to consider the case in which \( \gamma((a,b)) \cap N \neq \emptyset \).

We may therefore assume that \( \gamma(0) \in N \), and set \( t_0 := t(\gamma(0)) > 0 \) and \( x_0 := x(\gamma(0)) > 0 \). Suppose that \( \gamma \) does not intersect \( S \). 

\textbf{Case 1.} \( t_0 \leq \tfrac{1}{10} \). \\
We will show that $\gamma$ will meet $S$ in the future. Indeed, as $t_0>0$, the definition of the Lorentzian metric $g$ yields that there exists  $\lambda_0>0$ such that $g \prec \mathrm{diag}(-1, \lambda_0, \lambda_0, \lambda_0)$ in $\{t \geq t_0\}$. Now, as $t$ is increasing along future directed curves, we get that $\gamma([0,b)) \subset \{t \geq t_0\}$. Now, as $\{t=\frac{1}{10}\}$ is a Cauchy hypersurface for $\mathrm{diag}(-1, \lambda_0, \lambda_0, \lambda_0)$, we get that there exists a $\beta \in (0,b)$ such that $t(\gamma(\beta))> \frac{1}{10}$. However, the definition of $S$ together with the continuity of $\gamma$ implies that $\gamma$ meets $S$ at some time in $(0, \beta]$.

\textbf{Case 2.} \( t_0 > \tfrac{1}{10} \). \\
We will prove that $\gamma$ meets $S$ in the past.
Note that there exists  $\lambda_0>0$ such that $g \prec \mathrm{diag}(-1, \lambda_0, \lambda_0, \lambda_0)$ in $\{t \geq \frac{1}{20}\}$. As $\{t = \frac{1}{10}\}$ is a Cauchy hypersurface for $\mathrm{diag}(-1, \lambda_0, \lambda_0, \lambda_0)$, it follows that $t(\gamma(\alpha)) = \frac{1}{10}$ for some $\alpha \in (a, 0)$. If $\gamma(\alpha) \in S_2 \subset S$, we are done. Suppose $\gamma(\alpha) \notin S_2$. Then $x(\gamma(\alpha))< \frac{1}{5}$. If $\gamma((a, \alpha])$ intersects $\{x \geq \frac{1}{5}\}$, then it must intersect $S$. Suppose $\gamma((a, \alpha]) \subset \{x < \frac{1}{5}\}$. Now, there are two options. If $\gamma((a, \alpha])$ stays in the region where $g$ equals the Minkowski metric, then it must intersect $S$. If not, it must intersect $N \cap \{t <0\}$ at some time $\sigma \in (a, \alpha)$. If $t(\gamma(\sigma)) \leq -\frac{1}{10}$ then $\gamma$ must intersect $S$ at some time in $[\sigma, \alpha)$. 
If $t(\gamma(\sigma)) >-\frac{1}{10}$, we may argue as in case 1. 

This proves the second claim.

\hfill$\Box$

We next construct the sequence of Lorentzian metrics $(g_j)$, failing to be uniformly globally hyperbolic. On $B_j(0)$, we set $g_j$ to coincide with $g$. Let $\e_j \in (0,\frac{1}{4j^2})$ and $\delta_j \in (0, \frac{1}{2^j})$. Let $\rho_{j,1}, \rho_{j,2}$ be a partition of unity subordinate to the open cover $B_{j+\delta_j}(0), (\overline{B_j(0)})^c$ and define 
\begin{align*}
     g_j = \rho_{j,1}g + \rho_{j,2} \begin{pmatrix}
    -1 & 0 & 0 & 0 \\
    0 & \e_j & 0 & 0 \\
    0 & 0 & \e_j & 0 \\
    0 & 0 &  0  & \e_j 
    \end{pmatrix}.
\end{align*}
Then the curves $$\gamma:\left[-\frac{1}{4}, -\frac{1}{j+\delta_j}\right] \to M, \quad t \mapsto \left(t, \frac{1}{t}, 0, 0\right)$$ and $$\lambda:\left[\frac{1}{j+\delta_j}, \frac{1}{4}\right] \to M, \quad t \mapsto \left(t, \frac{1}{t}, 0, 0\right)$$ are $g_j$-causal. For $\e_j>0$ small enough, we can find a $g_j$-causal curve from $(-\frac{1}{j+\delta_j}, -(j+ \delta_j), 0, 0)$ to $(\frac{1}{j+\delta_j}, j+ \delta_j, 0, 0)$ that intersects in $B_{j}(0)^c$. 
As  all the positive eigenvalues of $g_j$ are bounded below by $\e_j$, we have that $g_j$ is globally hyperbolic. 
By construction $g_j\to g$  locally uniformly.  However,
\begin{align*}
    J^+_{g_j}\Big(\gamma\Big(-\frac{1}{4}\Big)\Big) \cap J^{-}_{g_j}\Big(\lambda\Big(\frac{1}{4}\Big)\Big) \cap (B_j(0)^c) \neq \emptyset, \quad \text{  for all $j>0$,}
\end{align*}
showing that the sequence $(g_j)$ is not uniformly globally hyperbolic.

\section{Details on the approximation of the metrics}
\subsection{The case $\Lambda=0$}\label{App:B1Flat}
In this appendix we give more details of the computations involved in the proof of Proposition \ref{prop:ApproxFlat}.

Recall the decomposition \eqref{decomposition_in_metric_and_balance_terms} and that the analysis in  the proof of Proposition \ref{prop:ApproxFlat} is split into two parts:
\begin{itemize}
    \item[(i)] The contributions in the numerator that only contain terms from $\hat{g}_\e, \hat{\Gamma}^\e$, and $D\hat{\Gamma}^\e$.
    \item[(ii)] The remaining contributions in the numerator.
\end{itemize}
For (i), we restricted our considerations to the terms in the numerator that arise from $\hat{g}_\e$ and its derivatives and analysed where terms like \eqref{problematic_convergence} occur:  

\textbf{Term 1:} $\rho_\e$ and $\rho_\e \cdot (\rho_\e * U^+)$. \\
We get the term $\rho_\e$ if we differentiate $\hat{g}^\e_{33}, \hat{g}^\e_{34}$, or $\hat{g}^\e_{44}$ twice in the direction of $U$. These derivatives occur in the coefficients $R^\e_{ijkl}$ if two of the indices is $1$ and the two other indices are in $\{3, 4\}$. 
It follows that only $R^\e_{1313}, R^\e_{1314}, R^\e_{1414}$ (and the coefficients that arise from symmetries) may carry a $\rho_\e$ term. 
Recalling the formula \eqref{ricci_coefficients_formula}, we get that the above Riemannian curvature coefficients are used to compute $\Ric^\e_{11}, \Ric_{13}^\e, \Ric_{33}^\e, \Ric_{14}^\e, \Ric_{44}^\e$, and $\Ric_{34}^\e$. Recalling \eqref{inverse_mollified_metric}, we get that the only potentially non-zero contributions are in $\Ric^\e_{11}$ through $g^{34}_\e R^\e_{3114}$, $g^{33}_\e R^\e_{3113}$, and $g^{44}_\e R^\e_{4114}$. Using \eqref{low_index_riem_curvature}, \eqref{ricci_coefficients_formula}, and \eqref{inverse_mollified_metric}, we can extract all terms of the form $f_\e \cdot \rho_\e$, where $\norm{f_\e}_{L^\infty((-\e, \e))} \notin o(\e)$. In this appendix, when extracting terms, we will denote the remaining terms that are not in the focus of our analysis simply with  ``\ldots" . 
\begin{align*}
    \Ric^\e_{11} &= 2g^{34}_\e R^\e_{3114} +g^{33}_\e R^\e_{3113} +g^{44}_\e R^\e_{4114} \\
    &= \frac{d_\e}{d_\e}\cdot \frac{-2g^\e_{34}}{d_\e} R^\e_{3114} + \frac{d_\e}{d_\e}\cdot\frac{g^\e_{44}}{d_\e} R^\e_{3113} + \frac{d_\e}{d_\e}\cdot\frac{g^\e_{33}}{d_\e} R^\e_{4114} \\
    &= \frac{-2d_\e g^\e_{34}}{(d_\e)^2} (\partial_3 \Gamma^\e_{11,4}-\partial_1 \Gamma_{13,4} + \ldots) + \frac{d_\e g^\e_{44}}{(d_\e)^2} (\partial_3 \Gamma^\e_{11,3}-\partial_1 \Gamma_{13,3} + \ldots) + \frac{d_\e g^\e_{33}}{(d_\e)^2} (\partial_4 \Gamma^\e_{11,4}-\partial_1 \Gamma_{14,4} + \ldots).
\end{align*}
We now restrict our considerations to the terms arising from $\hat{g}_\e$, and extract all terms of the form $f_\e \cdot \rho_\e$, where $\norm{f_\e}_{L^\infty((-\e, \e))} \notin o(\e)$. 
\begin{align*}
    \Ric^\e_{11} &= \frac{-2\hat{d}_\e \hat{g}^\e_{34}}{(d_\e)^2} (\partial_3 \hat{\Gamma}^\e_{11,4}-\partial_1 \hat{\Gamma}_{13,4} + \ldots) + \frac{\hat{d}_\e \hat{g}^\e_{44}}{(d_\e)^2} (\partial_3 \hat{\Gamma}^\e_{11,3}-\partial_1 \hat{\Gamma}_{13,3} + \ldots) + \frac{\hat{d}_\e \hat{g}^\e_{33}}{(d_\e)^2} (\partial_4 \hat{\Gamma}^\e_{11,4}-\partial_1 \hat{\Gamma}_{14,4} + \ldots) \\
    &= \frac{-2\hat{d}_\e \hat{g}^\e_{34}}{(d_\e)^2} ((-\partial^2_{Z, \cz}\hh) \rho_\e + ...) + \frac{\hat{d}_\e \hat{g}^\e_{44}}{(d_\e)^2}(- \partial^2_{Z,Z}\hh \rho_\e) +\frac{\hat{d}_\e \hat{g}^\e_{33}}{(d_\e)^2}(- \partial^2_{\cz,\cz}\hh \rho_\e) + \ldots \\
    &= \frac{1}{(d_\e)^2}(-2(\partial^2_{Z, \cz}\hh)+(-12(\partial^2_{Z, \cz}\hh)^2 + 4|\partial^2_{Z,Z}\hh|^2)\cdot (\rho_\e * U^+))\rho_\e  + \ldots. 
\end{align*}
On $(-\e, \e) \supset \supp \, \rho_\e$, we have that 
\begin{align*}
    (d_\e)^2 = 1+ 8\partial^2_{Z,\cz}\hh \rho_\e * U^+ + o(\e),
\end{align*}
hence
\begin{align*}
    \frac{1}{(d_\e)^2} = 1-8\partial^2_{Z,\cz}\hh \rho_\e * U^+ + o(\e).
\end{align*}
It follows that
\begin{align*}
    \Ric^\e_{11} &= \frac{1}{(d_\e)^2}(-2(\partial^2_{Z, \cz}\hh)+(-12(\partial^2_{Z, \cz}\hh)^2 + 4|\partial^2_{Z,Z}\hh|^2)\cdot (\rho_\e * U^+))\rho_\e  + \ldots\\
    &= -2 \partial^2_{Z, \cz}\hh \rho_\e +16(\partial^2_{Z, \cz}\hh)^2  (\rho_\e * U^+)\rho_\e + \frac{1}{(d_\e)^2}((-12(\partial^2_{Z, \cz}\hh)^2 + 4|\partial^2_{Z,Z}\hh|^2)\cdot (\rho_\e * U^+))\rho_\e  + \ldots\\
    &= -2 \partial^2_{Z, \cz}\hh \rho_\e +\frac{1}{(d_\e)^2}(4(\partial^2_{Z, \cz}\hh)^2 + 4|\partial^2_{Z,Z}\hh|^2)\cdot (\rho_\e * U^+)\rho_\e  + \ldots,
    \end{align*}
where in the last step, we used that $d_\e^2 = 1+ O(\e)$ in $(-\e, \e)$.

\textbf{Term 2:} $\rho_\e* \Theta - (\rho_\e * \Theta)^2$. \\
We now provide a precise argument showing that this term can only arise in the contribution of \( 2\,g^{34}_\varepsilon R^\varepsilon_{3114} \) to \( \Ric_{11} \).
We get $(\rho_\e * \Theta)$-terms from $\partial_{1} g^\e_{ij}$ or $\partial^2_{1,k}g^\e_{ij}$ for $k \in \{1,3,4\}$ and $i,j \in \{3,4\}$. The first ones of these terms appear in $\Gamma_{[ij,1]}^\e$, and the second kind of terms can only show up in $R_{[1ijk]}$, where $k \in \{1,3,4\}$ and  $i,j \in \{3,4\}$.
Recalling \eqref{low_index_riem_curvature}, it follows that the only way to get a $(\rho_\e * \Theta)^2$-term is when two $\Gamma^\e$ symbols as above are multiplied. Hence, we consider terms of the form $g_\e^{st}\Gamma^\e_{ij,s}\Gamma^\e_{kl,t}$. 
We first notice that we need $s \neq t$, otherwise $|g^{st}| \leq C \e$ on $(-\e, \e) \supset \supp\, (\rho_\e* \Theta - (\rho_\e * \Theta)^2)$, so the contribution converges to $0$ uniformly as $\e \to 0$.  Since  $\Gamma^\e_{ij,2} = 0$ for all $i,j$, it
 follows that $\{s,t\}=\{3,4\}$. Hence $\{\{i,j\}, \{k,l\}\} \subset \{\{1,3\}, \{1,4\}\}$, which can only occur in $R^\e_{[1313]}, R^\e_{[1314]}, R^\e_{[1414]}$. Using \eqref{ricci_coefficients_formula}, we observe that for \( R^\varepsilon_{[1313]} \) to contribute to \( \Ric^\varepsilon \), it must be contracted with one of the coefficients \( g^{11}_\varepsilon \), \( g^{13}_\varepsilon \), or \( g^{33}_\varepsilon \). 
All of these are \( L^\infty \)-bounded by \( C\varepsilon \) on \( \supp\bigl(\rho_\varepsilon * \Theta - (\rho_\varepsilon * \Theta)^2\bigr) \), and hence the corresponding contribution converges uniformly to \( 0 \) as \( \varepsilon \to 0 \). 
The same argument applies to \( R^\varepsilon_{[1414]} \).
For \( R^\varepsilon_{[1314]} \), it suffices to consider its contraction with \( g^{34}_\varepsilon \), which appears in the computation of \( \Ric^\varepsilon_{11} \). 
Recalling the expressions for \( g^{33}_\varepsilon \) and \( g^{44}_\varepsilon \), together with the observations above, we see that it is enough to focus on the contribution of \( 2\,g^{34}_\varepsilon R^\varepsilon_{3114} \). 
We have that
\begin{align*}
    2g^{34}_\e R^\e_{3114} &= \frac{-2d_\e g^\e_{34}}{(d_\e)^2} (\partial_3 \Gamma^\e_{11,4}-\partial_1 \Gamma^\e_{13,4}) + 2\frac{(g^\e_{34})^2}{(d_\e)^2}(\Gamma^\e_{31,3}\Gamma^\e_{14,4}+\Gamma^\e_{31,4}\Gamma^\e_{14,3}- 0) \\
    &= \frac{-2d_\e g^\e_{34}}{(d_\e)^2} (-(|\partial^2_{Z, Z}\hh|^2+(\partial^2_{Z, \cz}\hh)^2)(\rho_\e * \Theta)) +\frac{2(g^\e_{34})^2}{(d_\e)^2} ((|\partial^2_{Z, Z}\hh|^2+(\partial^2_{Z, \cz}\hh)^2)(\rho_\e * \Theta)^2) + \ldots \\
    &= \frac{2}{(d_\e)^2}(|\partial^2_{Z, Z}\hh|^2+(\partial^2_{Z, \cz}\hh)^2)((\rho_\e * \Theta)^2 - \rho_\e * \Theta) + \ldots.
\end{align*}
This concludes the analysis of those terms in the numerator of the Ricci curvature, only involving $\hat{g_\e}$ and its derivatives. 

For the terms as in (ii), we distinguished between contributions involving second derivatives of the metric in the $U$-direction , i.e., terms of the form $\partial^2_{1,1}(g_\e - \hat{g_\e})$ and those that only involve first derivatives in the $U$-direction of $(g_\e - \hat{g_\e})$.

\textbf{Terms involving  a factor with two derivatives in the $U$-direction:} \\
In this case, the only relevant term is \( \partial_{1,1}^2 g^\varepsilon_{34} \). 
This term appears exclusively in \( R^\varepsilon_{[1314]} \) and therefore contributes only to \( \Ric^\varepsilon_{11} \). 
Consequently, it suffices to analyze
\begin{align*}
    \frac{-2d_\e g^\e_{34}}{(d_\e)^2}(-\partial_1 (\Gamma^\e_{13,4}-\hat{\Gamma}^\e_{13,4})) = \frac{2d_\e g^\e_{34}}{(d_\e)^2} (2((\partial^2_{Z, \cz}\hh)^2 + |\partial^2_{Z,Z}\hh|^2)\rho_\e(\rho_\e * U^+) + ((\partial^2_{Z, \cz}\hh)^2 + |\partial^2_{Z,Z}\hh|^2)\sigma_\e).
\end{align*}
We note that $ \supp\, \sigma_\e \cup \supp \, \rho_\e(\rho_\e * U^+) \subset \{-\e<U<\e\}$ and both functions are bounded in $L^\infty$ by a constant $C>0$ independent of $\e$. Moreover, we can write
\begin{align*}
    d_\e g^\e_{34} = -1 + (\rho_\e * U^+) F^\e_1 + (\rho_\e * (U^+)^2) F^\e_2 + \zeta_\e F^\e_3 + \kappa_\e F^\e_4.
\end{align*}
where for all $i$, $ F^\e_i = F^\e_i(U, V, Z, \cz)$ is uniformly bounded in $L^\infty(\mathcal{K})$ by a constant $C>0$ independent of $\e$. 
Now, in $ \{-\e<U<\e\}$,  both $(\rho_\e * U^+)$ and $(\rho_\e * (U^+)^2)$ are bounded in $L^\infty$ by $C\e$, where $C>0$ does not depend on $\e$. Then, the above term can be written as 
\begin{align*}
     \frac{-2d_\e g^\e_{34}}{(d_\e)^2}(-\partial_1 (\Gamma^\e_{13,4}-\hat{\Gamma}^\e_{13,4})) = & \frac{1}{(d_\e)^2} (-4((\partial^2_{Z, \cz}\hh)^2 + |\partial^2_{Z,Z}\hh|^2)\rho_\e(\rho_\e * U^+) \\
     & -2 ((\partial^2_{Z, \cz}\hh)^2 + |\partial^2_{Z,Z}\hh|^2)\sigma_\e) + F_\e,
\end{align*}
where $\supp\, F_\e \in (-\e, \e)$ and $F_\e \to 0$ uniformly as $\e \to 0$.

\textbf{Terms involving at most first derivatives in the $U$-direction:} \\
We first note that the $\Gamma^\e$ and $g_\e$ are locally uniformly bounded independently of $\e$. 
Moreover, 
\begin{align}
    &|D\Gamma^\e| + |D\hat{\Gamma}^\e| \leq C\e^{-1}\ \mathrm{in}\; \{-\e \leq U \leq \e \} \nonumber \\
     &|D\Gamma^\e| + |D\hat{\Gamma}^\e| \leq C\ \mathrm{in}\; \{|U| >\e \},
\end{align}
where $C> 0$ that does not depend on $\e$. 
We note that when computing the Ricci curvature coefficients along the lines of \eqref{connection_first_kind}, \eqref{low_index_riem_curvature}, and \eqref{ricci_coefficients_formula}, all terms involving a multiplication with $g_\e^{21}$ or $g_\e^{12}$ vanish, as $\Gamma^\e_{[2i,j]} = 0$ for all $i,j$. Then
\begin{align*}
    \Ric_{jk}^\e &= g_\e^{il}R^\e_{ijkl} = g_\e^{il}(\partial_i \Gamma^\e_{jk, l} -\partial_j \Gamma^\e_{ik, l} + g_\e^{st}(\Gamma^\e_{ik, s}\Gamma^\e_{jl, t}-\Gamma^\e_{il, s}\Gamma^\e_{jk, t})) \\
    &= \frac{1}{(d_\e)^2}g^\e_{i'l'}(d_\e(\partial_i \Gamma^\e_{jk, l} -\partial_j \Gamma^\e_{ik, l}) + g_{s't'}(\Gamma^\e_{ik, s}\Gamma^\e_{jl, t}-\Gamma^\e_{il, s}\Gamma^\e_{jk, t})),
\end{align*}
where we define $3'=4$, $4'=3$, $1'=2$, and $2'=1$ (this formula is valid because the \(12\)- and \(21\)-terms do not contribute). Using \eqref{bound_on_g_and_gamma_eps_to_hat}, \eqref{bound_on_det_difference}, \eqref{bounds_on_dgamma_eps}, and \eqref{relaxation_terms}, we get that for $j=k=1$:
\begin{align*}
    \Ric_{jk}^\e &- \frac{1}{(d_\e)^2}\hat{g}^\e_{i'l'}(\hat{d}_\e(\partial_i \hat{\Gamma}^\e_{jk, l} -\partial_j \hat{\Gamma}^\e_{ik, l}) + \hat{g}^\e_{s't'}(\hat{\Gamma}^\e_{ik, s}\hat{\Gamma}^\e_{jl, t}-\hat{\Gamma}^\e_{il, s}\hat{\Gamma}^\e_{jk, t})) \\
    &= \frac{1}{(d_\e)^2} (-4((\partial^2_{Z, \cz}\hh)^2 + |\partial^2_{Z,Z}\hh|^2)\rho_\e(\rho_\e * U^+) -2 ((\partial^2_{Z, \cz}\hh)^2 + |\partial^2_{Z,Z}\hh|^2)\sigma_\e) + F_\e,
\end{align*}
for some function $F_\e$ such that $F_\e \to 0$ uniformly as $\e \to 0$. For any $(j,k) \neq (1,1)$, we get that 
\begin{align*}
    \Ric_{jk}^\e &- \frac{1}{(d_\e)^2}\hat{g}^\e_{i'l'}(\hat{d}_\e(\partial_i \hat{\Gamma}^\e_{jk, l} -\partial_j \hat{\Gamma}^\e_{ik, l}) + \hat{g}^\e_{s't'}(\hat{\Gamma}^\e_{ik, s}\hat{\Gamma}^\e_{jl, t}-\hat{\Gamma}^\e_{il, s}\hat{\Gamma}^\e_{jk, t})) = F_\e,
\end{align*}
for some function $F_\e$ such that $F_\e \to 0$ uniformly as $\e \to 0$.

We now get that 
\begin{align*}
    \Ric^\e_{ij} = -2\partial^2_{Z, \cz}\hh \rho_\e \delta^1_i\delta^1_j + \frac{1}{(d_\e)^2}\Tilde{\Ric}^\e_{ij} + r_{ij}^\e,
\end{align*}
where $r^\e_{ij}$ converges to $0$ uniformly on $\mathcal{K}$ as $\e \to 0$ and $\Tilde{\Ric}^\e_{ij}$ consists of those terms of $(\hat{d}_\e)^2(\hat{\Ric}^\e_{ij} +2\delta^1_i\delta^1_j\partial^2_{Z, \cz}\hh \rho_\e)$ that occur in \eqref{good_convergence}. From \eqref{good_convergence} and the non-smooth case, we know that $\Tilde{\Ric}^\e_{ij}$ uniformly converges to $0$ as $\e \to 0$.
We finally note that  the uniform convergence  $g_\e\to g$ on $\mathcal{K}$ implies that $d_\e$ is bounded away from zero, and $d_\e(0, V, Z, \cz) \to 1$ uniformly as $\e \to 0$, which proves that $\Ric^\e_{ij} +2\partial^2_{Z, \cz}\hh \rho_\e \delta^1_i\delta^1_j$ uniformly converges to $0$ as $\e \to 0$.

\subsection{Illustration of the approach on a special case}
In order to illustrate the main estimates, in this subsection we consider the special case 
$$\mathcal{H}(Z, \Bar{Z}) = a |Z|^2 + \frac{1}{2}b(Z^2 + \Bar{Z}^2),  \quad \text{for some } a, b \in \R.$$
 In this case, we have:
\begin{align*}
    \partial^2_{Z,\Bar{Z}}\mathcal{H} &= a \\
    \partial^2_{Z, Z}\hh = \partial^2_{\cz, \cz}\hh &= b.
\end{align*}
Then the metric is of the form
\begin{align*}
    g^\e_{34} = g^\e_{43} &= 1+ 2a \rho_\e*U^+ +(a^2+b^2)\rho_\e*(U^+)^2 + 4(a^2 + b^2)\kappa_\e + 2(a^2 + b^2)\zeta_\e \\
    g^\e_{33} = g^\e_{44} &= 2b \rho_\e * U^+ + 2ab \rho_\e * (U^+)^2.
\end{align*}
The only non-zero derivatives of $g^\e$ are in the direction of $U$. We get that 
\begin{align*}
    \partial_1 g^\e_{34} =\partial_1 g^\e_{43} &=  2a \rho_\e*\Theta +2(a^2+b^2)\rho_\e*(U^+) + 4(a^2 + b^2)\eta_\e + 2(a^2 + b^2)\xi_\e \\
    \partial_1g^\e_{33} = \partial_1g^\e_{44} &= 2b \rho_\e * \Theta + 4ab \rho_\e * U^+.
\end{align*}
Then 
\begin{align*}
    \Gamma^\e_{13,3} &= b \rho_\e * \Theta + 2ab \rho_\e * U^+ , \\
    \Gamma^\e_{13,4} &= a \rho_\e*\Theta +(a^2+b^2)\rho_\e*(U^+) + 2(a^2 + b^2)\eta_\e + (a^2 + b^2)\xi_\e, 
\end{align*}
and $\Gamma^\e_{13,3} =\Gamma^\e_{14,4} =-\Gamma^\e_{33,1} =-\Gamma^\e_{44,1}$ and  $\Gamma^\e_{13,4} =\Gamma^\e_{14,3} =-\Gamma^\e_{34,1} =-\Gamma^\e_{43,1}$. 

For any $i,j,k,l$, we have that if $R^\e_{ijkl} \neq 0$, then  $\partial_i \Gamma^\e_{jk,l} - \partial_j \Gamma^\e_{ik,l} \neq 0$ or $g^{st}(\Gamma^\e_{ik,s}\Gamma^\e_{jl,t}-\Gamma^\e_{il,s}\Gamma^\e_{jk,t}) \neq 0$. For the first term, we need two indices of $i,j,k,l$ to be $1$ and the other two to be in $\{3,4\}$.
In the second case we further get that $g^{st}(\Gamma^\e_{ik,s}\Gamma^\e_{jl,t}) \neq 0$ or $g^{st}(\Gamma^\e_{il,s}\Gamma^\e_{jk,t}) \neq 0$. As $\Gamma^\e_{ij,2} = 0$, for all $i,j$, it follows that in both cases $s,t \in \{3,4\}$ to get a non-zero contribution. 
For symmetry reasons it suffices to find all indices $i,j,k,l$ such that $\Gamma^\e_{ik,3}, \Gamma^\e_{jl,3}, \Gamma^\e_{ik,4}$, or $\Gamma^\e_{jl,4}$ are potentially non-zero. For that to hold, we need one out of $i,k$ and one out of $j,l$ to be $1$, as well as the other two indices to be on $\{3,4\}$. Hence, the only potentially non-zero curvature coefficients are $R^\e_{[1313]}, R^\e_{[1314]}$ and $R^\e_{[1414]}$. Since $g^{ij}_\e = 0$ unless $i,j \in \{3,4\}$ or $(i,j) = (1,2), (2,1)$, we get that the only way for these terms to contribute to a Ricci curvature coefficients (along \eqref{ricci_coefficients_formula}) is in 
\begin{align*}
    \Ric^\e_{11} &= 2g_\e^{34}R^\e_{3114} + g_\e^{33}R^\e_{3113} + g_\e^{44}R^\e_{4114}
     \\
     &= 2g_\e^{34}R^\e_{3114} + 2g_\e^{33}R^\e_{3113}.
\end{align*}
Observe that 
\begin{align*}
    d_\e = \det g_\e[3,4] &= g^\e_{33} g^\e_{44} - (g^\e_{34})^2. 
\end{align*}
Now we can compute 
\begin{align*}
    R^\e_{3114} &= \partial_3 \Gamma^\e_{11,4} - \partial_1 \Gamma^\e_{13,4} + g_\e^{st}(\Gamma^\e_{13,s}\Gamma^\e_{14,t}-\Gamma^\e_{11,s}\Gamma^\e_{34,t}) \\
    &= - \partial_1 \Gamma^\e_{13,4} + g_\e^{33}\Gamma^\e_{13,3}\Gamma^\e_{14,3} +g_\e^{44}\Gamma^\e_{13,4}\Gamma^\e_{14,4}+g_\e^{34}\Gamma^\e_{13,3}\Gamma^\e_{14,4}+g_\e^{43}\Gamma^\e_{13,4}\Gamma^\e_{14,3} \\
    &= - \partial_1 \Gamma^\e_{13,4} + 2g_\e^{33}\Gamma^\e_{13,3}\Gamma^\e_{14,3} + g^{34}_\e ((\Gamma^\e_{13,3})^2 + (\Gamma^\e_{13,4})^2) \\
    &= - \partial_1 \Gamma^\e_{13,4} + \frac{1}{d_\e}(2g^\e_{33}\Gamma^\e_{13,3}\Gamma^\e_{14,3} - g_{34}^\e ((\Gamma^\e_{13,3})^2 + (\Gamma^\e_{13,4})^2)).
\end{align*}
We get that 
\begin{align*}
    2g^\e_{33}&\Gamma^\e_{13,3}\Gamma^\e_{14,3} \\
    =& (4b \rho_\e * U^+ + 4ab \rho_\e * (U^+)^2)(b \rho_\e * \Theta + 2ab \rho_\e * U^+)(a \rho_\e*\Theta +(a^2+b^2)\rho_\e*(U^+) + (a^2 + b^2)(2\eta_\e +\xi_\e)) \\
    =& (4b^2 (\rho_\e * U^+)(\rho_\e * \Theta) + 8ab^2 (\rho_\e * U^+)^2 + 4ab^2(\rho_\e * \Theta)(\rho_\e * (U^+)^2) + 8a^2b^2 (\rho_\e * U^+)(\rho_\e * (U^+)^2)) \\
    & \cdot (a \rho_\e*\Theta +(a^2+b^2)\rho_\e*(U^+) + (a^2 + b^2)(2\eta_\e +\xi_\e)) \\
    =& 4ab^2 (\rho_\e * U^+)(\rho_\e * \Theta)^2 + 8a^2b^2 (\rho_\e*\Theta) (\rho_\e * U^+)^2 + 4a^2b^2(\rho_\e * \Theta)^2(\rho_\e * (U^+)^2) \\
    &+  8a^3b^2 (\rho_\e*\Theta)(\rho_\e * U^+)(\rho_\e * (U^+)^2) +   4a^2b^2 (\rho_\e * U^+)^2(\rho_\e * \Theta) +  8a^3b^2 (\rho_\e * U^+)^3 +  \\
    &+ 4a^3b^2(\rho_\e * \Theta)(\rho_\e*U^+)(\rho_\e * (U^+)^2) + 8a^4b^2 (\rho_\e * U^+)^2(\rho_\e * (U^+)^2) +  4b^4 (\rho_\e * U^+)^2(\rho_\e * \Theta) \\
    &+ 8ab^4 (\rho_\e * U^+)^3 + 4ab^4(\rho_\e * \Theta)(\rho_\e*U^+)(\rho_\e * (U^+)^2) + 8a^2b^4 (\rho_\e * U^+)^2(\rho_\e * (U^+)^2) \\
    &+ (a^2 + b^2)(2\eta_\e +\xi_\e)O(1) \\
    =& 4ab^2 (\rho_\e * U^+)(\rho_\e * \Theta)^2 \\
    &+ 8a^2b^2 (\rho_\e*\Theta) (\rho_\e * U^+)^2 + 4a^2b^2(\rho_\e * \Theta)^2(\rho_\e * (U^+)^2) +4a^2b^2 (\rho_\e * U^+)^2(\rho_\e * \Theta) +4b^4 (\rho_\e * U^+)^2(\rho_\e * \Theta) \\
    &+  8a^3b^2 (\rho_\e*\Theta)(\rho_\e * U^+)(\rho_\e * (U^+)^2) +  8a^3b^2 (\rho_\e * U^+)^3 +  + 4a^3b^2(\rho_\e * \Theta)(\rho_\e*U^+)(\rho_\e * (U^+)^2) \\
    & + 8ab^4 (\rho_\e * U^+)^3 + 4ab^4(\rho_\e * \Theta)(\rho_\e*U^+)(\rho_\e * (U^+)^2) \\
    &+ 8a^4b^2 (\rho_\e * U^+)^2(\rho_\e * (U^+)^2) + 8a^2b^4 (\rho_\e * U^+)^2(\rho_\e * (U^+)^2) + (a^2 + b^2)(2\eta_\e +\xi_\e)O(1) \\
    =& 4ab^2 (\rho_\e * U^+)(\rho_\e * \Theta)^2 \\
    &+ (8a^2b^2+4a^2b^2+4a^2b^2 +4b^4)(\rho_\e *U^+)^2 + O(\e) \\
    &+ (8a^3b^2+  8a^3b^2 + 4a^3b^2 + 8ab^4 + 4ab^4) (\rho_\e *U^+)^3 + O(\e) \\
    &+ (8a^4b^2 + 8a^2b^4) (\rho_\e *U^+)^4 + O(\e).
\end{align*}
In the last step we used \eqref{good_convergence}.
Similarly,
\begin{align*}
     - g_{34}^\e& ((\Gamma^\e_{13,3})^2 + (\Gamma^\e_{13,4})^2)) \\
     =& -(1+ 2a \rho_\e*U^+ +(a^2+b^2)\rho_\e*(U^+)^2 + O(\e^2))\\
     &\cdot ((b \rho_\e * \Theta + 2ab \rho_\e * U^+ )^2 + (a \rho_\e*\Theta +(a^2+b^2)\rho_\e*(U^+) + O(\e))^2)\\
     =& -(1+ 2a \rho_\e*U^+ +(a^2+b^2)\rho_\e*(U^+)^2 + O(\e^2)) \\
     &\cdot (b^2 (\rho_\e * \Theta)^2 + 4ab^2(\rho_\e * \Theta)(\rho_\e * U^+) + 4a^2b^2(\rho_\e * U^+)^2 \\
     &+ a^2 (\rho_\e * \Theta)^2 +(2a^3+2ab^2)  (\rho_\e * \Theta)(\rho_\e * U^+) +(a^2+b^2)^2 (\rho_\e * U^+)^2 + O(\e)) \\
     =&-(a^2+b^2)(\rho_\e * \Theta)^2 \\
     &- 4ab^2(\rho_\e * \Theta)(\rho_\e * U^+) - (2a^3+2ab^2)  (\rho_\e * \Theta)(\rho_\e * U^+) - 2a(a^2+b^2) (\rho_\e * \Theta)^2(\rho_\e * U^+)\\
     &-4a^2b^2(\rho_\e * U^+)^2 - (a^2+b^2)^2 (\rho_\e * U^+)^2 - 8a^2b^2(\rho_\e * \Theta)(\rho_\e * U^+)^2 \\
     &-4(a^4+a^2b^2)  (\rho_\e * \Theta)(\rho_\e * U^+)^2 - (a^2+b^2)^2(\rho_\e * \Theta)(\rho_\e*(U^+)^2) \\
     &- 8a^3b^2(\rho_\e * U^+)^3 -2a(a^2+b^2)^2 (\rho_\e * U^+)^3 \\
     &-  4(a^2 + b^2)ab^2(\rho_\e * \Theta)(\rho_\e * U^+)(\rho_\e * (U^+)^2) - (a^2 + b^2)(2a^3+2ab^2)  (\rho_\e * \Theta)(\rho_\e * U^+)(\rho_\e * (U^+)^2)\\
     &- 4a^2b^2(a^2 + b^2)(\rho_\e * U^+)^2(\rho_\e * (U^+)^2) - (a^2+b^2)^3 (\rho_\e * U^+)^2(\rho_\e * (U^+)^2) + O(\e) \\
     =& -(a^2+b^2)(\rho_\e * \Theta)^2 \\
     &- (4ab^2 +(2a^3+2ab^2) +2a(a^2+b^2))(\rho_\e * U^+)\\
     &- (4a^2b^2 + (a^2+b^2)^2 + 8a^2b^2 + 4(a^4+ a^2b^2) + (a^2+b^2)^2) (\rho_\e * U^+)^2\\
     &- (8a^3b^2 +2a(a^2+b^2)^2 + 4ab^2(a^2 + b^2) + 2a(a^2+b^2)^2) (\rho_\e * U^+)^3\\
     &- (4a^2b^2(a^2 + b^2) + (a^2+b^2)^3) (\rho_\e * U^+)^4 + O(\e).
\end{align*}
Finally,
\begin{align*}
    - \partial_1 \Gamma^\e_{13,4} =& -a \rho_\e - \frac{d_\e}{d_\e}((a^2+b^2)(\rho_\e * \Theta) + (a^2+b^2)(2 \rho_\e (\rho_\e *U^+) + \sigma_\e )) + O(\e) \\
    =& -a \rho_\e + O(\e) -\frac{1}{d_\e}((a^2+b^2)(\rho_\e * \Theta)+ (a^2+b^2)(2 \rho_\e (\rho_\e *U^+) + \sigma_\e ))\\
    & \cdot ((2b \rho_\e * U^+ + 2ab \rho_\e * (U^+)^2)^2 -( 1+ 2a \rho_\e*U^+ +(a^2+b^2)\rho_\e*(U^+)^2+ O(\e^2))^2).
\end{align*}
It follows that 
\begin{align*}
    - d_\e&(\partial_1 \Gamma^\e_{13,4} - a\rho_\e) \\
    =&(a^2+b^2)(\rho_\e * \Theta)+ (a^2+b^2)(2 \rho_\e (\rho_\e *U^+) + \sigma_\e ) \\
    & +4a(a^2+b^2)(\rho_\e * \Theta)(\rho_\e *U^+)\\
    & +(a^2 + b^2)(6a^2 -2b^2)(\rho_\e *U^+)^2 + O(\e) \\
    & +(a^2+b^2)(4a(a^2 + b^2)-8ab^2)(\rho_\e *U^+)^3 + O(\e) \\
    & + (a^2+b^2)((a^2 + b^2)^2-(2ab)^2) (\rho_\e *U^+)^4 + O(\e) \\
    & + (a^2+b^2)(2 \rho_\e (\rho_\e *U^+) + \sigma_\e)\cdot (1-d_\e).
\end{align*}
Collecting the above formulas and using that $|d_\e-1| + O(\e)$ in $\{-\e<U< \e\}$, gives that
\begin{align*}
    R^\e_{3114} &= -a\rho_\e + \frac{1}{d_\e}((a^2+b^2)(\rho_\e * \Theta)-(a^2+b^2)(\rho_\e * \Theta)^2 +(a^2+b^2)(2 \rho_\e (\rho_\e *U^+)) + \sigma_\e)  +O(\e)\\
    &= -a\rho_\e + \frac{1}{d_\e}(a^2+b^2)(2 \rho_\e (\rho_\e *U^+))  +O(\e).
\end{align*}
Following the same steps for $R^\e_{3113}$, we obtain that 
\begin{align*}
    R^\e_{3113} &= - \partial_1 \Gamma^\e_{13,3} + \frac{1}{d_\e}(-2g^\e_{34}\Gamma^\e_{13,3}\Gamma^\e_{13,4} + g_{33}^\e ((\Gamma^\e_{13,3})^2 + (\Gamma^\e_{13,4})^2)) \\
    &= -b\rho_\e + \frac{1}{d_\e}(2ab \rho_\e * \Theta -2ab (\rho_\e * \Theta)^2) + O(\e).
\end{align*}
Finally,
\begin{align*}
    \Ric_{11}^\e &= 2g_\e^{34}R^\e_{3114} + 2g_\e^{33}R^\e_{3113} \\
    &= \frac{1}{d_\e}(-2g^\e_{34}R^\e_{3114} + 2g^\e_{33}R^\e_{3113}).
\end{align*}
Note that 
\begin{align*}
    \frac{1}{d_\e}2g^\e_{33}R^\e_{3113}= -\frac{1}{d_\e}4b^2\rho_\e(\rho_\e * U^+) + O(\e),
\end{align*}
and
\begin{align*}
    \frac{-1}{d_\e}2g^\e_{34}R^\e_{3114}&= \frac{1+ 2a(\rho_\e *U^+)}{d_\e}2a\rho_\e - \frac{4}{(d_\e)^2}(a^2+b^2)(\rho_\e (\rho_\e *U^+)) + O(\e) \\
    &= -2a\rho_\e + 4a^2U\rho_\e(\rho_\e * U^+) + \frac{1}{d_\e}(a^2+b^2)(4\rho_\e (\rho_\e *U^+)) + O(\e).
\end{align*}
In the last step we used that $d_\e = -1 -4a\rho_\e*U^+ + O(\e^2)$ on $(-\e, \e)$. Combining the above estimates, we conclude that:
\begin{align*}
     \Ric_{11}^\e &= -2a\rho_\e +  \frac{1}{d_\e}(-4a^2U\rho_\e(\rho_\e * U^+) + (a^2+b^2)(4\rho_\e (\rho_\e *U^+))- 4b^2\rho_\e(\rho_\e * U^+)) + O(\e) \\
     &=-2a\rho_\e + O(\e).
\end{align*}
\subsection{The case of non-zero cosmological constant}
In this appendix we give more details of the computations involved in the proof of Proposition \ref{prop:ApproxNonFlat}. 
As in the $\Lambda=0$ case, we compute the Ricci curvature using the formulas \eqref{connection_first_kind}, \eqref{low_index_riem_curvature}, and \eqref{ricci_coefficients_formula}. The strategy is to isolate the \( \rho_\varepsilon \)-terms and rewrite the remaining expression as a quotient with denominator \( P_\varepsilon^2 \tilde{d}_\varepsilon^2 \). 
Once this is achieved, we employ the decomposition in \eqref{decomposition_in_metric_and_balance_terms_noflat} to split the analysis into two separate contributions:
\begin{itemize}
    \item [(i)] The terms in the numerator that only involve $\hat{g}_\e, D\hat{g}_\e, D^2\hat{g}_\e$ and $P_\e$. 
    \item [(ii)] The remaining terms in the numerator.
\end{itemize}

We start with (i). The terms that converge to \( 0 \) only in the distributional sense, but not uniformly, are 
\[
(\rho_\varepsilon * U^+)\,\rho_\varepsilon, \qquad U \cdot \rho_\varepsilon, \qquad (\rho_\varepsilon * \Theta)^2 - (\rho_\varepsilon * \Theta).
\]
We now analyze the circumstances under which these terms arise.

\textbf{Step 1.} Products with $\rho_\e$, whose $L^\infty$-norm is bounded below by a constant $C>0$. \\
We begin by identifying where terms involving \( \rho_\varepsilon \) can arise. 
Such contributions occur when computing \( \partial^2_{1,1} P_\varepsilon \) and \( \partial^2_{1,1} \hat{g}^\varepsilon_{ij} \) for \( i,j \in \{3,4\} \). 
In other words, they appear when a non-vanishing component of \( g_\varepsilon \) is differentiated twice in the \( U \)-direction.
Now $\partial^2_{1,1}g^\e_{12} = \partial^2_{1,1}g^\e_{21}$ can only occur in $R^\e_{1211}$ or $R^\e_{1121}$, which are zero by the symmetries of the curvature tensor. 
Hence, terms involving \( \rho_\varepsilon \) can only arise in those curvature components containing \( \partial^2_{1,1} g^\varepsilon_{ij} \) for \( i,j \in \{3,4\} \), namely in \( R^\varepsilon_{[1i1j]} \) with \( i,j \in \{3,4\} \). 
\\For such terms to contribute to the Ricci tensor, they must be contracted with non-vanishing components of the inverse metric. This occurs only when \( g^{ij}_\varepsilon \) is paired with \( R^\varepsilon_{i11j} \) for \( i,j \in \{3,4\} \). Consequently, these contributions appear exclusively in \( \Ric^\varepsilon_{11} \).
Recall that 
\begin{align*}
    \Ric^\e_{11} &= 2g_\e^{12}R^\e_{1112} + 2g_\e^{34}R^\e_{3114} +g_\e^{33}R^\e_{3113}+g_\e^{44}R^\e_{4114} \\
    &= 2g_\e^{34}R^\e_{3114} +g_\e^{33}R^\e_{3113}+g_\e^{44}R^\e_{4114} \\
    &= P_\e^2(2\Tilde{g}_\e^{34}R^\e_{3114} +\Tilde{g}_\e^{33}R^\e_{3113}+\Tilde{g}_\e^{44}R^\e_{4114})  \\
    &= \frac{P_\e^2}{\Tilde{d_\e}}(-2\Tilde{g}^\e_{34}R^\e_{3114} +\Tilde{g}^\e_{44}R^\e_{3113}+\Tilde{g}^\e_{33}R^\e_{4114}).
\end{align*}
Next, we  analyse the $\rho_\e$-contributions from the curvature coefficients. We have that 
\begin{align*}
    R^\e_{3114} &= \partial_3 \Gamma^\e_{11,4} - \partial_1 \Gamma^\e_{13,4} + \ldots \\
    &= -\frac{1}{2}\partial^2_{1,1}g^\e_{34}+ \ldots \\
    &= -\frac{P_\e^2 \partial^2_{1,1}\Tilde{g}^\e_{34}-4P_\e \partial_1\Tilde{g}^\e_{34} \partial_1P_\e -2P_\e\Tilde{g}_{34}^\e \partial^2_{1,1}P_\e + 6\Tilde{g}_{34}^\e \partial_1P_\e \partial_1 P_\e}{2P_\e^4} + \ldots \\
    &= \frac{-P_\e^2 \partial^2_{1,1}\hat{g}_{34}^\e +2P_\e\hat{g}_{34}^\e \partial^2_{1,1}P_\e}{2P_\e^4} + \ldots \\
    &= \frac{-P_\e^2 \cdot  \partial^2_{Z, \cz}\hh \rho_\e -P_\e\hat{g}_{34}^\e \cdot \frac{\Lambda}{6}G \rho_\e}{P_\e^4} + \ldots \\
    &= \frac{-  \partial^2_{Z, \cz}\hh \rho_\e }{P_\e^2} - \frac{ \hat{g}_{34}^\e \frac{\Lambda}{6}G \rho_\e}{P_\e^3} + \ldots
\end{align*}
Moreover,
\begin{align*}
    R^\e_{3113} &= \partial_3 \Gamma^\e_{11,3} - \partial_1 \Gamma^\e_{13,3} + \ldots \\
    &= -\frac{1}{2}\partial^2_{1,1}g^\e_{33}+ \ldots \\
    &= -\frac{P_\e^2 \partial^2_{1,1}\Tilde{g}^\e_{33}-4P_\e \partial_1\Tilde{g}^\e_{33} \partial_1P_\e -2P_\e\Tilde{g}_{33}^\e \partial^2_{1,1}P_\e + 6\Tilde{g}_{33}^\e \partial_1P_\e \partial_1 P_\e}{2P_\e^4} + \ldots \\
    &= \frac{-P_\e^2 \partial^2_{1,1}\hat{g}_{33}^\e +2P_\e\hat{g}_{33}^\e \partial^2_{1,1}P_\e}{2P_\e^4} + \ldots \\
    &= \frac{-P_\e^2 \cdot  \partial^2_{Z, Z}\hh \rho_\e -P_\e\hat{g}_{33}^\e \cdot \frac{\Lambda}{6}G \rho_\e}{P_\e^4} + \ldots \\
    &= \frac{-  \partial^2_{Z, Z}\hh \rho_\e }{P_\e^2} - \frac{ \hat{g}_{33}^\e \frac{\Lambda}{6}G \rho_\e}{P_\e^3} + \ldots , 
\end{align*}
and similarly
\begin{align*}
    R^\e_{4114} &= \frac{-  \partial^2_{\cz, \cz}\hh \rho_\e }{P_\e^2} - \frac{ \hat{g}_{44}^\e \frac{\Lambda}{6}G \rho_\e}{P_\e^3} + \ldots\,  .
\end{align*}
Hence, 
\begin{align*}
    \Ric^\e_{11} &=  \frac{P_\e^2}{\Tilde{d_\e}}\Big(-2\Tilde{g}^\e_{34}\cdot \Big(\frac{-  \partial^2_{Z, \cz}\hh \rho_\e }{P_\e^2} - \frac{ \hat{g}_{34}^\e \frac{\Lambda}{6}G \rho_\e}{P_\e^3}\Big) +\Tilde{g}^\e_{44}\cdot \Big(\frac{-  \partial^2_{Z, Z}\hh \rho_\e }{P_\e^2} - \frac{ \hat{g}_{33}^\e \frac{\Lambda}{6}G \rho_\e}{P_\e^3}\Big) \\
    &\quad +\Tilde{g}^\e_{33}\Big( \frac{-  \partial^2_{\cz, \cz}\hh \rho_\e }{P_\e^2} - \frac{ \hat{g}_{44}^\e \frac{\Lambda}{6}G \rho_\e}{P_\e^3}\Big)\Big) + \ldots\\
    &= -2\hat{g}^\e_{34}\cdot \Big(\frac{- \partial^2_{Z, \cz}\hh \rho_\e }{\Tilde{d_\e}} - \frac{ \hat{g}_{34}^\e \frac{\Lambda}{6}G \rho_\e}{\Tilde{d_\e}P_\e}\Big) +\hat{g}^\e_{44}\cdot \Big(\frac{-  \partial^2_{Z, Z}\hh \rho_\e }{\Tilde{d_\e}} - \frac{ \hat{g}_{33}^\e \frac{\Lambda}{6}G \rho_\e}{\Tilde{d_\e}P_\e}\Big)\\
    &\quad +\hat{g}^\e_{33}\Big( \frac{- \partial^2_{\cz, \cz}\hh \rho_\e }{\Tilde{d_\e}} - \frac{ \hat{g}_{44}^\e \frac{\Lambda}{6}G \rho_\e}{\Tilde{d_\e}P_\e}\Big) + \ldots
\end{align*}

Recall that on $\{-\e<U<\e\} \supset \supp \, \rho_\e$, it holds that $\hat{g}_{44}^\e, \hat{g}_{33}^\e = O(\e)$. Therefore:
\begin{align*}
    \Ric^\e_{11} &= 2\hat{g}^\e_{34}\cdot \Big(\frac{\partial^2_{Z, \cz}\hh \rho_\e }{\Tilde{d_\e}} +\frac{ \hat{g}_{34}^\e \frac{\Lambda}{6}G \rho_\e}{\Tilde{d_\e}P_\e}\Big) -\hat{g}^\e_{44}\frac{\partial^2_{Z, Z}\hh \rho_\e }{\Tilde{d_\e}} -\hat{g}^\e_{33} \frac{\partial^2_{\cz, \cz}\hh \rho_\e }{\Tilde{d_\e}}  + \ldots \\
    &= 2\hat{g}^\e_{34}\cdot \Big(\frac{\partial^2_{Z, \cz}\hh \rho_\e }{\Tilde{d_\e}} +\frac{\frac{\Lambda}{6}G \rho_\e + \frac{\Lambda}{3}G \partial_{Z, \cz}^2\hh \rho_\e(\rho_\e * U^+)}{\Tilde{d_\e}P_\e}\Big) -\frac{4|\partial^2_{\cz, \cz}\hh|^2 \rho_\e(\rho_\e *U^+) }{\Tilde{d_\e}}  + \ldots \\
    &= 2\hat{g}^\e_{34}\cdot \Big(\frac{\partial^2_{Z, \cz}\hh \rho_\e }{\Tilde{d_\e}} +\frac{\frac{\Lambda}{6}G \rho_\e}{\Tilde{d_\e}P_\e}\Big) +\frac{\frac{2\Lambda}{3}G \partial_{Z, \cz}^2\hh \rho_\e(\rho_\e * U^+)}{\Tilde{d_\e}P_\e} -\frac{4|\partial^2_{\cz, \cz}\hh|^2 \rho_\e(\rho_\e *U^+) }{\Tilde{d_\e}}  + \ldots \\
    &= \frac{2\partial^2_{Z, \cz}\hh \rho_\e }{\Tilde{d_\e}} +\frac{\frac{\Lambda}{3}G \rho_\e}{\Tilde{d_\e}P_\e} +  \frac{4(\partial^2_{Z, \cz}\hh)^2 \rho_\e(\rho_\e * U^+) }{\Tilde{d_\e}} +\frac{\frac{2\Lambda}{3}G \partial_{Z, \cz}^2\hh \rho_\e(\rho_\e*U^+)}{\Tilde{d_\e}P_\e}\\
    &\quad + \frac{\frac{2\Lambda}{3}G \partial_{Z, \cz}^2\hh \rho_\e(\rho_\e * U^+)}{\Tilde{d_\e}P_\e} -\frac{4|\partial^2_{\cz, \cz}\hh|^2 \rho_\e(\rho_\e *U^+) }{\Tilde{d_\e}}  + \ldots 
\end{align*}
Observe that, for any $x_0>-1$ and $y> -1-x_0$:
\begin{align}
    \frac{1}{1+x_0+y} = \frac{1}{1+x_0} - \frac{y}{(1+x_0)^2} + \frac{y^2}{(1+x_0+ty)^3}
\end{align}
for some $t \in (0, 1)$. Moreover, we note that in $\{-\e<U< \e\}$:
\begin{align*}
    \Tilde{d}_\e &= -1 -4\partial^2_{Z, \cz}\hh \rho_\e*U^+ + O(\e^2) \\
    \Tilde{d}_\e P_\e &= -(1 + \frac{1}{6} \Lambda(Z \cz - UV - (\rho_\e* U^+)\cdot G)) -4\partial^2_{Z, \cz}\hh \rho_\e*U^+(1 + \frac{1}{6} \Lambda |Z|^2 ) + O_{\mathcal{K}}(\e^2) \\
     &= -((1 + \frac{1}{6} \Lambda|Z|^2) -\frac{1}{6} \Lambda( UV + (\rho_\e* U^+)\cdot G)) -4\partial^2_{Z, \cz}\hh \rho_\e*U^+(1 + \frac{1}{6} \Lambda |Z|^2 ) + O_{\mathcal{K}}(\e^2).
\end{align*}
Combining the above considerations gives that 
\begin{align*}
    \Ric^\e_{11} =& -2\partial^2_{Z, \cz}\hh \rho_\e  -\frac{\Lambda}{3+ \frac{\Lambda |Z|^2}{2}}G \rho_\e+  8(\partial^2_{Z, \cz}\hh)^2 \rho_\e(\rho_\e * U^+) \\
    &+\frac{\frac{\Lambda}{3}G \rho_\e(-\frac{\Lambda}{6}(UV + (\rho_\e*U^+)G) + 4 \partial^2_{Z, \cz}\hh\rho_\e * U^+(1+\frac{\Lambda |Z|^2}{6}))}{(1+\frac{\Lambda |Z|^2}{6})^2}   \\
    &+\frac{4(\partial^2_{Z, \cz}\hh)^2 \rho_\e(\rho_\e * U^+) }{\Tilde{d_\e}} +\frac{\frac{4\Lambda}{3}G \partial_{Z, \cz}^2\hh \rho_\e(\rho_\e*U^+)}{\Tilde{d_\e}P_\e}\\
    &-\frac{4|\partial^2_{\cz, \cz}\hh|^2 \rho_\e(\rho_\e *U^+) }{\Tilde{d_\e}}  + \ldots 
\end{align*}
The right hand side can be rewritten as 
\begin{align}\label{a_rho_e_contributions}
    & -2\partial^2_{Z, \cz}\hh \rho_\e  -\frac{\Lambda}{3+ \frac{\Lambda |Z|^2}{2}}G \rho_\e - \frac{\frac{\Lambda^2}{18}GUV\rho_\e}{P_\e^2 \Tilde{d}_\e^2} \nonumber \\
    &+\frac{\frac{\Lambda}{3}G \rho_\e(-\frac{\Lambda}{6} (\rho_\e*U^+)G + 4 \partial^2_{Z, \cz}\hh\rho_\e * U^+(1+\frac{\Lambda |Z|^2}{6}))}{P_\e^2\Tilde{d}_\e^2} \nonumber  \\
    & -\frac{\frac{4\Lambda}{3}G(1+\frac{\Lambda}{6}|Z|^2) \partial_{Z, \cz}^2\hh \rho_\e(\rho_\e*U^+)}{\Tilde{d_\e}^2P^2_\e} \nonumber\\
    & +\frac{4(1+\frac{\Lambda}{6}|Z|^2)^2((\partial^2_{Z, \cz}\hh)^2 +|\partial^2_{\cz, \cz}\hh|^2) \rho_\e(\rho_\e *U^+) }{P_\e^2\Tilde{d_\e}^2}  + O(\e).
\end{align}
Now define 
\begin{align}\label{a_def_E}
    E_0(V,Z, \Bar{Z}):= -\frac{\Lambda^2}{18}G(Z, \cz)V,
\end{align}
and 
\begin{align}\label{a_def_A}
    A_0(Z, \cz) :=& \frac{\Lambda}{3}G (-\frac{\Lambda}{6} G + 4 \partial^2_{Z, \cz}\hh(1+\frac{\Lambda |Z|^2}{6})) - \frac{4\Lambda}{3}G(1+\frac{\Lambda}{6}|Z|^2) \partial_{Z, \cz}^2\hh \nonumber \\
    &+4(1+\frac{\Lambda}{6}|Z|^2)^2((\partial^2_{Z, \cz}\hh)^2 +|\partial^2_{\cz, \cz}\hh|^2).
\end{align}
\textbf{Step 2.} $(\rho_\e *\Theta)^2 -(\rho_\e *\Theta)$.\\
We begin by analyzing the terms of order one (with respect to \( U \)) that arise at the level of the connection coefficients. 
These may consist either of terms of the form \( (\rho_\varepsilon * \Theta) \), or of constant functions and products involving the remaining coordinates. 
\\To identify contributions involving \( (\rho_\varepsilon * \Theta) \) at the level of the connection coefficients, it suffices to examine derivatives of the form \( \partial_U g_\varepsilon \). For $i,j$ such that $g^\e_{ij} \neq 0$, we have that 
\begin{align*}
    \partial_{1}g^\e_{ij} &= \frac{P_\e \partial_1 \Tilde{g}^\e_{ij}-2\Tilde{g}^\e_{ij} \partial_1 P_\e}{P_\e^3}.
\end{align*}
Hence
\begin{align}\label{a_relevant_u_derivatives_of_metric}
    \partial_{1}g^\e_{12} &= \frac{2\partial_1 P_\e}{P_\e^3}= \frac{-\frac{\Lambda}{3}(V+ G(\rho_\e*\Theta))}{P_\e^3} + \ldots \nonumber \\
    \partial_{1}g^\e_{34} &= \frac{P_\e \partial_1 \hat{g}^\e_{34}-2\hat{g}^\e_{34} \partial_1 P_\e}{P_\e^3} + \ldots \nonumber \\
     &= \frac{2(1+ \frac{\Lambda}{6}|Z|^2)\partial^2_{Z, \cz}\hh \rho_\e * \Theta+\frac{\Lambda}{3}(V+ G(\rho_\e*\Theta))}{P_\e^3} + \ldots 
 \nonumber\\
      \partial_{1}g^\e_{33} &= \frac{P_\e \partial_1 \hat{g}^\e_{33}-2\hat{g}^\e_{33} \partial_1 P_\e}{P_\e^3} + \ldots \nonumber\\
     &=\frac{2(1+ \frac{\Lambda}{6}|Z|^2)\partial^2_{Z, Z}\hh \rho_\e * \Theta}{P_\e^3} + \ldots \nonumber \\
     \partial_{1}g^\e_{44} &=\frac{2(1+ \frac{\Lambda}{6}|Z|^2)\partial^2_{\cz, \cz}\hh \rho_\e * \Theta}{P_\e^3} + \ldots.
\end{align}
We note that the derivatives \( \partial_1 g^\varepsilon_{12} \) and \( \partial_1 g^\varepsilon_{34} \) also contain the term \( V \), which is of order one. 
For \( k > 1 \), however, \( \partial_k \hat{g}_\varepsilon \) does not involve any terms of order one in \( U \). 
Moreover, we observe that
\begin{align*}
    \partial_2 P_\e &= -\frac{\Lambda}{6}U, \\
    \partial_3 P_\e&=  \frac{\Lambda}{6} (\cz + (Z\partial^2_{Z,Z}\hh + \cz \partial^2_{Z, \cz}\hh)(\rho_\e *U^+)) \\
     \partial_4 P_\e&=  \frac{\Lambda}{6} (Z + (Z\partial^2_{Z,\cz}\hh + \cz \partial^2_{\cz, \cz}\hh)(\rho_\e *U^+)).
\end{align*}
Using \eqref{quotient_derivatives}, we deduce that the only terms of order one --- apart from \( \rho_\varepsilon * \Theta \) --- that can appear at the level of the connection coefficients are \( V \), \( Z \), and \( \cz \). 
From \eqref{relevant_u_derivatives_of_metric}, it follows that the only connection coefficients which may contain a \( \rho_\varepsilon * \Theta \)-term are 
\[
\Gamma^\varepsilon_{[11,2]}, \quad \Gamma^\varepsilon_{[13,4]}, \quad \Gamma^\varepsilon_{[13,3]}, \quad \Gamma^\varepsilon_{[14,4]}.
\]
A term of the form \( (\rho_\varepsilon * \Theta)^2 \) can therefore arise only when two such coefficients are multiplied in the computation of a curvature component. 
Such products occur in expressions of the form \( g_\varepsilon^{st} \Gamma^\varepsilon_{ik,s} \Gamma^\varepsilon_{jl,t} \). 
In order for these contributions to remain of order one on \( \{-\e<U<\e\} \), it is necessary that \( \{s,t\} \in \{\{1,2\}, \{3,4\}\} \).

First consider the case \( \{s,t\} = \{1,2\} \). By symmetry, it suffices to treat \( s = 1 \), \( t = 2 \), which yields terms of the form \( \Gamma^\varepsilon_{ik,1} \Gamma^\varepsilon_{jl,2} \). 
To obtain a \( (\rho_\varepsilon * \Theta)^2 \)-term, one must have \( j = l = 1 \) and either \( \{i,k\} = \{1,2\} \) or \( i,k \in \{3,4\} \). 
In the former case, the product appears only in \( R^\varepsilon_{[1112]} \), which vanishes by the symmetries of the curvature tensor. 
In the latter case, the terms contribute to \( R^\varepsilon_{[1313]} \), \( R^\varepsilon_{[1414]} \), or \( R^\varepsilon_{[1314]} \). 
Among these, only the contribution of \( R^\varepsilon_{[1314]} \) to \( \Ric^\varepsilon_{11} \) is of order one, since \( g^{11}_\varepsilon = g^{13}_\varepsilon = g^{14}_\varepsilon = 0 \) and \( g^{33}_\varepsilon, g^{44}_\varepsilon \le C\varepsilon \) on \( \{-\e<U<\e\} \).

Next consider the case \( \{s,t\} = \{3,4\} \), which by symmetry reduces to \( s = 3 \), \( t = 4 \). This leads to terms of the form \( \Gamma^\varepsilon_{ik,3} \Gamma^\varepsilon_{jl,4} \). 
A \( (\rho_\varepsilon * \Theta)^2 \)-term can arise only if \( \{i,k\}, \{j,l\} \in \{\{1,3\}, \{1,4\}\} \), which again contribute only to \( R^\varepsilon_{[1314]} \), \( R^\varepsilon_{[1313]} \), or \( R^\varepsilon_{[1414]} \). 
As before, the only contribution of order one on \( \{-\e<U<\e\} \) to the Ricci tensor occurs in \( \Ric^\varepsilon_{11} \) through the term \( g^{34}_\varepsilon R^\varepsilon_{3114} \).
\\We now proceed to compute all contributions involving a \( (\rho_\varepsilon * \Theta)^2 \)-term. 
As established above, such terms arise exclusively from products of connection coefficients, and hence
\begin{align*}
    R^\e_{3114} &= g_\e^{st}(\Gamma^\e_{13,s}\Gamma^\e_{14,t}-\Gamma^\e_{11,s}\Gamma^\e_{34,t}) + \ldots \\
    &= g_\e^{34}(\Gamma^\e_{13,3}\Gamma^\e_{14,4} + \Gamma^\e_{13,4}\Gamma^\e_{14,3}) -g_\e^{12}\Gamma^\e_{11,2}\Gamma^\e_{34,1} + \ldots .
\end{align*}
Using that 
\begin{align*}
    \Gamma^\e_{11,2} &= \partial_1g^\e_{12}, \\
    \Gamma^\e_{13,3} &= \frac{1}{2}\partial_1g^\e_{33}, \\
    \Gamma^\e_{14,4} &= \frac{1}{2}\partial_1g^\e_{44}, \\
    \Gamma^\e_{13,4} &= \frac{1}{2}\partial_1g^\e_{34} = -\Gamma^\e_{34,1}, 
\end{align*}
we infer that
\begin{align*}
    R^\e_{3114} &= -P_\e^2\frac{\hat{g}^\e_{34}}{\Tilde{d}_\e}(\Gamma^\e_{13,3}\Gamma^\e_{14,4} + \Gamma^\e_{13,4}\Gamma^\e_{14,3}) +P_\e^2\Gamma^\e_{11,2}\Gamma^\e_{34,1} + \ldots \nonumber \\
    &= \frac{1}{P_\e^4}\Big((1+ \frac{\Lambda}{6}|Z|^2)^2|\partial^2_{Z, Z}\hh|^2 (\rho_\e * \Theta)^2 + \Big((1+ \frac{\Lambda}{6}|Z|^2)\partial^2_{Z, \cz}\hh \rho_\e * \Theta+\frac{\Lambda}{6}(V+ G(\rho_\e*\Theta))\Big)^2 \nonumber \\
    &\quad +\Big((1+ \frac{\Lambda}{6}|Z|^2)\partial^2_{Z, \cz}\hh \rho_\e * \Theta+\frac{\Lambda}{6}(V+ G(\rho_\e*\Theta))\Big) \cdot \frac{\Lambda}{3}(V+ G(\rho_\e*\Theta)) \Big) + \ldots \nonumber \\
    &=  \frac{1}{P_\e^4}\Big((1+ \frac{\Lambda}{6}|Z|^2)^2|\partial^2_{Z, Z}\hh|^2 (\rho_\e * \Theta)^2 + \Big((1+ \frac{\Lambda}{6}|Z|^2)\partial^2_{Z, \cz}\hh +\frac{\Lambda}{6}G\Big)^2(\rho_\e*\Theta)^2 \nonumber \\
    &\quad +\Big((1+ \frac{\Lambda}{6}|Z|^2)\partial^2_{Z, \cz}\hh +\frac{\Lambda}{6}G\Big) \cdot \frac{\Lambda}{3} G (\rho_\e*\Theta)^2 \Big) + \ldots 
\end{align*}
It follows that
\begin{align*}
    \Ric^\e_{11} &= 2g_\e^{34}R^\e_{3114} + \ldots = -2P_\e^2\frac{\hat{g}^\e_{34}}{\Tilde{d}_\e}R^\e_{3114} + \ldots \nonumber \\
    &= \frac{2}{P_\e^2\Tilde{d}_\e^2} \Big((1+ \frac{\Lambda}{6}|Z|^2)^2|\partial^2_{Z, Z}\hh|^2 (\rho_\e * \Theta)^2 + \Big((1+ \frac{\Lambda}{6}|Z|^2)\partial^2_{Z, \cz}\hh +\frac{\Lambda}{6}G\Big)^2(\rho_\e*\Theta)^2 \nonumber \\
    &\quad +\Big((1+ \frac{\Lambda}{6}|Z|^2)\partial^2_{Z, \cz}\hh +\frac{\Lambda}{6}G\Big) \cdot \frac{\Lambda}{3} G (\rho_\e*\Theta)^2 \Big) + \ldots 
\end{align*}
Define 
\begin{align*} 
    B_0(Z, \cz) := 2\Big((1+ \frac{\Lambda}{6}|Z|^2)^2|\partial^2_{Z, Z}\hh|^2 + \Big((1+ \frac{\Lambda}{6}|Z|^2)\partial^2_{Z, \cz}\hh +\frac{\Lambda}{6}G\Big)^2 +\frac{\Lambda}{3} G \Big((1+ \frac{\Lambda}{6}|Z|^2)\partial^2_{Z, \cz}\hh +\frac{\Lambda}{6}G\Big)\Big).
\end{align*}

We can now turn to the remaining terms, i.e., (ii). We will again distinguish between contributions involving second derivatives of the metric in $U$-direction , i.e., terms of the form $\partial^2_{1,1}(\Tilde{g}_\e - \hat{g_\e})$ and those that only involve at most one derivative in $U$-direction of $(\Tilde{g}_\e - \hat{g_\e})$.

\textbf{Terms involving  second derivatives in the $U$-direction:} \\
In this case, the only relevant term is \( \partial^2_{1,1} g^\varepsilon_{34} \). 
It appears exclusively in \( R^\varepsilon_{[1314]} \) and therefore contributes only to \( \Ric^\varepsilon_{11} \). 
Accordingly, it suffices to analyze
\begin{align*}
    \Ric_{11}^\e &=2g_\e^{34} R^\e_{3114} + \ldots = 2g_\e^{34} (-\partial_1\Gamma^\e_{13,4})  + \ldots  =2g_\e^{34}\Big(\frac{-\frac{1}{2}\partial^2_{1,1}(\Tilde{g}^\e_{34}-\hat{g}^\e_{34})}{P_\e^2}\Big) + \ldots\\
    &= \frac{-P_\e^2\Tilde{g_{34}^\e}}{\Tilde{d}_\e} \cdot \frac{-P_\e^2(A \rho_\e (\rho_\e * U^+) +B\sigma_\e + E U\rho_\e)}{P_\e^4}  \\
    &= \frac{-\Tilde{g_{34}^\e}P_\e^2}{\Tilde{d}^2_\e} \cdot \frac{-\Tilde{d_\e}(A \rho_\e (\rho_\e * U^+) +B\sigma_\e + E U\rho_\e)}{P_\e^2}. 
\end{align*}

Note that 
$$ \supp\, \sigma_\e \cup \supp \, \rho_\e(\rho_\e * U^+) \cup \supp \, \rho_\e \cdot U \subset 
\{-\e<U< \e\}$$ and that all three functions are bounded in $L^\infty$ by a constant $C>0$ independent of $\e$. In the region $\{-\e<U< \e\}$, it holds that
\begin{align*}
    &\Tilde{g}^\e_{34} = 1 + O(\e), \\
    &\Tilde{d}_\e = -1 + O(\e), \\
    & P_\e = 1+\frac{\Lambda}{6}|Z|^2 + O(\e).
\end{align*}
It follows that 
\begin{align*}
     2g_\e^{34}\Big( \frac{-\frac{1}{2}\partial^2_{1,1}(\Tilde{g}^\e_{34}-\hat{g}^\e_{34})}{P_\e^2}\Big) &=  \frac{-(1+\frac{\Lambda}{6}|Z|^2)^2(A \rho_\e (\rho_\e * U^+) +B\sigma_\e + E U\rho_\e)}{\Tilde{d}^2_\e P_\e^2} + O(\e).
\end{align*}
We now define
\begin{align*}
    A &:= \frac{1}{(1+\frac{\Lambda|Z|^2}{6})^2}A_0, \\
     B &:= \frac{1}{(1+\frac{\Lambda|Z|^2}{6})^2}B_0, \\
      E &:= \frac{1}{(1+\frac{\Lambda|Z|^2}{6})^2}E_0. 
\end{align*}

\textbf{Terms involving at most first derivatives in the $U$-direction:} \\
We first observe that \( D g_\varepsilon \), \( D \tilde{g}_\varepsilon \), \( \tilde{g}_\varepsilon \), \( g_\varepsilon \), and \( P_\varepsilon \) are uniformly bounded on \( \mathcal{K} \), independently of \( \varepsilon \). 
Recall also that \( P_\varepsilon \) is bounded away from zero. 
Moreover, we obtain that
\begin{align*}
    &|D^2g_\e| + |D^2\Tilde{g}_\e|+ |D^2\hat{g}_\e| \leq C\e^{-1}\quad \mathrm{in}\; \{-\e \leq U \leq \e \} \nonumber \\
     &|D^2g_\e| + |D^2\Tilde{g}_\e|+ |D^2\hat{g}_\e| \leq C\quad \mathrm{in} \; \{|U| >\e \},
\end{align*}
where $C> 0$ that does not depend on $\e$. 
From \eqref{compare_g_hat_tilde}, we get that for $i, j = 3,4$, it holds
\begin{align*}
  |\Tilde{g}^{ij}_\e- \hat{g}^{ij}_\e| \leq C \e^2 \quad \mathrm{in}\ \{U<\e \}, \ \mathrm{and} \nonumber \\
   |\Tilde{g}^{ij}_\e- \hat{g}^{ij}_\e| \leq C \e \quad \mathrm{in}\ \{U\geq \e\}.
\end{align*}
Using \eqref{good_convergence}, \eqref{appx_inverse_metric}, \eqref{boundedness_second_deri}, \eqref{rho_e_contributions}, \eqref{def_A}, \eqref{def_E}, \eqref{theta_contribution_constant_size}, and \eqref{def_B}, and arguing analogously to the case \( \Lambda = 0 \), we obtain that
\begin{align*}
    \Ric_{11}^\e &= -2\partial^2_{Z, \cz}\hh \rho_\e  -\frac{\Lambda}{3+ \frac{\Lambda |Z|^2}{2}}G \rho_\e+ \Lambda g^\e_{11} + F_\e,
\end{align*}
and 
\begin{align*}
    \Ric_{jk}^\e &= \Lambda g_{jk}^\e + F_\e, \quad \text{for all }\ (j,k) \neq (1,1),
\end{align*}
for some function $F_\e$ such that $F_\e \to 0$ uniformly as $\e \to 0$.
Combining the above estimates, we conclude that
\begin{align}\label{a_approximation_ricci}
    \Ric^\e_{ij} = \left(-2\partial^2_{Z, \cz}\hh  -\frac{\Lambda}{3+ \frac{\Lambda |Z|^2}{2}}G \right) \rho_\e \delta^1_i\delta^1_j + \Lambda g^\e_{ij} + r_{ij}^\e,
\end{align}
where $r^\e_{ij}\to 0$  uniformly on $E$, as $\e \to 0$.
\phantomsection
\addcontentsline{toc}{section}{References}
\bibliography{bibliographie}
\bibliographystyle{plain}
\end{document}